\documentclass[letterpaper,11pt]{article}
\usepackage[margin=1in]{geometry}
\usepackage[OT1]{fontenc}

\usepackage{amsmath,amssymb,amsthm,mathtools,bm}
\usepackage{booktabs,array,tabularx,enumitem}
\usepackage[square,sort,comma,numbers]{natbib}
\usepackage{xcolor,tikz,float}
\usepackage{pgfplots}
\pgfplotsset{compat=1.18}
\usetikzlibrary{arrows.meta,positioning,calc,fit,backgrounds}
\definecolor{citationgreen}{rgb}{.42,.56,.14}
\definecolor{inkblue}{RGB}{35,65,94}
\definecolor{softblue}{RGB}{240,244,248}
\definecolor{softgreen}{RGB}{241,246,243}
\usepackage{amsmath,amssymb,amsthm,mathtools,bm}
\usepackage{booktabs,array,tabularx,enumitem}
\usepackage[square,sort,comma,numbers]{natbib}
\usepackage{xcolor}
\definecolor{citationgreen}{rgb}{.42,.56,.14}
\usepackage[pdfusetitle,pagebackref]{hyperref}
\hypersetup{
  colorlinks=true,linkcolor=blue,citecolor=citationgreen,
  urlcolor=blue,filecolor=blue,linktoc=section,
  bookmarksnumbered=true,bookmarksopen=true,bookmarksopenlevel=1,
  pdfsubject={Annotated framework for a merged paper on unitary synthesis},
  pdfkeywords={unitary synthesis, coordinate encoding, circuit depth,
    clean ancillas, adaptive quantum circuits}}
\setlist[itemize]{itemsep=0pt,label=\(\bullet\)}
\setlist[enumerate]{itemsep=0pt}
\allowdisplaybreaks
\numberwithin{equation}{section}
\newtheorem{theorem}{Theorem}[section]
\newtheorem{lemma}[theorem]{Lemma}
\newtheorem{proposition}[theorem]{Proposition}
\newtheorem{corollary}[theorem]{Corollary}
\theoremstyle{definition}
\newtheorem{definition}[theorem]{Definition}
\newtheorem{remark}[theorem]{Remark}
\newcommand{\iu}{\mathrm{i}}
\newcommand{\vct}[1]{\bm{#1}}
\newcommand{\ket}[1]{\lvert#1\rangle}
\newcommand{\bra}[1]{\langle#1\rvert}
\newcommand{\norm}[1]{\left\lVert#1\right\rVert}
\newcommand{\Id}{I}
\newcommand{\C}{\mathbb C}
\newcommand{\R}{\mathbb R}
\newcommand{\Z}{\mathbb Z}
\newcommand{\op}{\mathrm{op}}
\newcommand{\F}{\mathrm F}
\newcommand{\diag}{\operatorname{diag}}

\newcommand{\purpose}[1]{\par\noindent\textbf{Purpose.} #1\par\medskip}
\newcommand{\deliverable}[1]{\par\medskip\noindent\textbf{Required output.} #1\par}

\newcommand{\appendixspace}{\par
\ifdim\dimexpr\pagegoal-\pagetotal\relax<12\baselineskip
\newpage\fi}

\definecolor{zwcolor}{RGB}{180,100,0}

\newcommand{\email}[1]{%
  \thanks{Email: \href{mailto:#1}{\nolinkurl{#1}}}%
}

 \title{Exact Unitary Synthesis: Simultaneously Optimal in Size and Depth-Ancilla Tradeoff}

\author{%
  Pei Yuan\textsuperscript{1,*}%
  \thanks{These authors contributed equally to this work.}%
  \email{peiyuan@tencent.com}%
  \quad
  Wei Zi\textsuperscript{2,*}%
  \footnotemark[1]%
  \email{ziwei.quantum@outlook.com}%
  \quad
  Shengyu Zhang\textsuperscript{1}%
  \email{shengyzhang@tencent.com}%
  \\[0.8em]
  {
    \textsuperscript{1}Tencent Quantum Laboratory,
    Shenzhen, Guangdong, China
  }%
  \\[0.2em]
  {
    \textsuperscript{2}T Lab, Shenzhen, Guangdong, China
  }%
}
\date{}
\begin{document}
\maketitle

\begin{abstract}
Unitary synthesis underlies universal quantum computation by translating unitary transformations into circuits of elementary gates.
We show that for any budget $m\geq 0$ of ancilla, every $n$-qubit unitary can be exactly implemented using a quantum circuit consisting of $O(4^n)$ CNOT and single-qubit gates in $O(n+4^n/(n+m))$ layers
attaining the asymptotically optimal depth--ancilla tradeoff while simultaneously achieving optimal circuit size.
{The} key idea is to encode computational basis states as wavefunctions sharing a Gaussian envelope, converting unitary action on amplitudes into coordinate transformations that admit shallow circuit implementations.
The continuous encoding is then discretized, with approximation errors removed by exact residual correction. The depth-ancilla tradeoff follows by recursively applying the quantum Shannon decomposition, synthesizing its smaller unitary factors, and reusing ancillary qubits.
\end{abstract}

\pdfbookmark[0]{Contents}{contents}
\begingroup
\setlength{\parskip}{0pt}
\tableofcontents
\endgroup

\section{Introduction}
\label{sec:intro}

Unitary synthesis connects the mathematical description of a quantum computation to its realization by elementary gates. Given an $n$-qubit unitary \(U\in\mathrm U(2^n)\), the unitary synthesis (US) problem asks
for a quantum circuit implementing the following transformation
\begin{equation}
    \ket{\vct\psi}\mapsto U\ket{\vct\psi}, \quad \text{for any $n$-qubit input state }\ket{\vct\psi}.
\end{equation}
Arbitrary single- and two-qubit gates suffice for exact synthesis \cite{Barenco1995}, and general decomposition methods provide basic tools for quantum circuit compilation \cite{ShendeBullockMarkov2006}. Unitary extensions also give a route to implementing state preparation, isometries, and quantum channels, with suitable ancillary initialization and discarding
\cite{ItenEtAl2016}. Understanding synthesis complexity therefore clarifies the fundamental cost of quantum operations and supplies building blocks for more structured computations.

The complexity of a quantum circuit is usually measured by its size, depth, and number of qubits. The size is the number of elementary gates, and the depth is the number of layers, where gates acting on disjoint qubits can be applied in parallel. Circuit depth and the number of qubits correspond to time and space, respectively. Smaller depth can shorten the computation time, which is important for quantum devices with limited coherence time, while fewer gates can reduce the accumulation of gate errors. However, reducing the depth by parallelization may increase the circuit size \cite{MooreNilsson2001}. Experimental benchmarks also consider both depth and the number of qubits, with performance affected by gate errors, connectivity, and compilation \cite{CrossEtAl2019}. It is therefore important to reduce the circuit depth while keeping the circuit size and the number of ancillary qubits small.

Ancillary qubits make this joint optimization especially important. They provide workspace for parallel execution, but additional quantum space is itself a limited resource. The relevant question is therefore how much sequential computation can be saved for each available ancilla budget, not only how small depth becomes with unrestricted workspace. This viewpoint underlies space--depth tradeoffs for CNOT circuits, state preparation, and general unitary synthesis
\cite{JiangEtAl2020,SunEtAl2023,YuanZhang2023,ZHANGETAL2022}. Although worst-case exact synthesis of \(n\)-qubit unitaries requires \(\Theta(4^n)\) gates \cite{ShendeMarkovBullock2004,ShendeBullockMarkov2006}, this does not determine its optimal depth. A tight depth--ancilla tradeoff quantifies the benefit of workspace and identifies when further ancilla cease to reduce the asymptotic depth. Our objective is to attain this tradeoff while preserving optimal gate complexity.

\subsection{Main results}
In this paper, we answer these questions by optimal and explicit circuit designs. First, we construct a family of circuits that achieves the optimal depth–ancilla tradeoff for any number of ancilla qubits, closing an exponential gap and thereby resolving a prominent open question raised in \cite{SunEtAl2023,YuanZhang2023}. The construction automatically achieves the optimal size.
\begin{theorem}[Optimal size and depth--ancilla tradeoff for unitary synthesis]
\label{thm:all-budget}
For every integer \(n\ge1\) and  \(m\ge0\), any \(U\in\mathrm U(2^n)\) can be implemented by a quantum circuit of $O(4^n)$ one- and two-qubit gates with depth
\begin{equation}
 O\!\left(n+\frac{4^n}{n+m}\right),
 \label{eq:all-budget-main}
\end{equation}
using at most $m$ ancillary qubits. Both of the depth and size are optimal.
\end{theorem}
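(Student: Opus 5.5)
The plan has two halves: lower bounds showing optimality, and a matching upper-bound construction. The construction first handles the ancilla-rich regime and then reduces every budget to it via the quantum Shannon decomposition (QSD).

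\textbf{Lower bounds.} Size $\Omega(4^n)$ follows from the standard parameter-counting argument. Circuits with $s$ one- and two-qubit gates form a semialgebraic family of dimension $O(s)$, while $\mathrm U(2^n)$ has dimension $4^n$; this is the bound of \cite{ShendeMarkovBullock2004}. For depth, a circuit of depth $d$ on $n+m$ qubits contains at most $O(d(n+m))$ gates, so $d=\Omega(4^n/(n+m))$. The $\Omega(n)$ term is a light-cone argument: a generic $U$ (for example, one mapping $\ket{0^n}$ to a state whose first qubit's reduced state depends on all inputs) needs every output qubit to depend on all $n$ inputs, and a depth-$d$ circuit of two-qubit gates has light cones of size at most $2^d$.

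\textbf{Upper bound, ancilla-rich regime.} The core step is a base synthesis: for $m\ge m^*:=\Theta(4^n/n)$ ancillas, any $U$ is implemented with $O(4^n)$ gates and depth $O(n)$. The approach is the Gaussian-envelope coordinate encoding. Basis states $\ket{x}$ are encoded as wavefunctions sharing a Gaussian envelope, so that $U$ acts as a coordinate transformation. That transformation is realized by parallel controlled rotations, or by diagonal phase layers built from copies of the input fanned out to ancillas in depth $O(n)$. The discretization errors are then removed by exact residual correction, and the ancillas are uncomputed.

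\textbf{Upper bound, intermediate budgets.} For $m$ below $m^*$ I would recurse with the QSD. One level writes an $n$-qubit $U$ as $O(1)$ uniformly controlled rotations together with four $(n-1)$-qubit unitaries. Unrolling $k$ levels leaves $4^k$ unitaries on $n-k$ qubits, plus multiplexed rotations at each level. I choose $k$ so that the base synthesis of an $(n-k)$-qubit block needs about $4^{n-k}/(n-k)\approx m$ ancillas. The blocks are executed sequentially, reusing the same ancillas, for a total depth of
\[
4^k\cdot O(n-k)=O\!\left(\frac{4^n}{m}\right).
\]
The multiplexed rotations at level $j$ act on $n-j$ qubits. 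Using the known uniformly-controlled-gate constructions with $m$ ancillas (\cite{SunEtAl2023,YuanZhang2023}), they cost depth $O(n+4^{n-j}/(n+m))$ and size $O(4^{n-j})$. Multiplying by the $4^j$ instances and summing over $j$ gives total size $O(4^n)$. The depth sums to $O(4^n/(n+m))$, up to an $O(4^k n)$ term that is already absorbed.

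\textbf{Small budgets.} For $m\le n$ the target depth is $O(4^n/n)$, which already follows from the ancilla-free QSD-based constructions of depth $O(4^n/n)$.

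\textbf{Main obstacle.} The hard part is the base case: achieving depth $O(n)$ and size $O(4^n)$ \emph{exactly}. The Gaussian encoding must be discretized onto qubits so that the coordinate transformation costs only $O(4^n)$ gates. The residual correction must then be exact, and cheap enough to fit within the same depth and size budget. My first attempt would be to argue that the residual is itself a structured (for example, block-diagonal or small-norm) unitary. Its synthesis would then cost a geometrically smaller fraction of the budget, so it can be absorbed by iterating the correction finitely many times.
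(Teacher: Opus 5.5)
Your reduction for intermediate budgets is the paper's. You stop the quantum Shannon decomposition at blocks of $s$ qubits with $4^s/s=\Theta(m)$, run the $4^{n-s}$ leaves sequentially on one reused ancilla bank, and pay for the uniformly controlled gates with the known tradeoff. The paper takes the exact linear-depth endpoint (Theorem~\ref{thm:main}) as a black box. You instead sketch it, and the sketch fails at its crux, the exact correction.

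\textbf{The base case: exact correction.} The residual is neither a unitary nor structured. Let $M=J_a^\dagger TJ_a$ be the all-zero-ancilla block of the approximate circuit $T$; $M$ is itself not unitary, since $T$ does not return its ancillas exactly to zero. Then $\Gamma=U-M$ is an arbitrary dense $2^n\times2^n$ matrix with $\Theta(4^n)$ free parameters. So nothing makes its treatment geometrically cheaper, and correcting it approximately only produces a new residual; finitely many iterations never become exact. The missing idea is that only a bound on $\norm{\Gamma}_{\F}$ is needed, not any structure:
\begin{itemize}
\item An operator-norm error at most $2^{-n/2}$ gives $\rho=\norm{\Gamma}_{\F}\le1$.
\item $\Gamma/\rho$ can be placed \emph{exactly} in the all-zero block of a unitary. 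Prepare the row-norm state $\sum_i(s_i/\rho)\ket i$, un-prepare the conjugated normalized rows of $\Gamma$ with an inverse controlled state preparation, and swap registers. This costs depth $O(n)$, size $O(4^n)$ and $O(4^n/n)$ ancillas by Lemmas~\ref{lem:state-tradeoff} and~\ref{lem:controlled-state-tradeoff}.
\item A selector qubit in $(\ket0+\sqrt\rho\ket1)/\sqrt{1+\rho}$, choosing between $T$ and this circuit, gives the exact block $(M+\Gamma)/(1+\rho)=U/(1+\rho)$.
\item One more single-qubit rotation rescales this block to exactly $U/2$.
\item One round of oblivious amplitude amplification $-WR_0W^\dagger R_0W$ has block $\tfrac32U-4\cdot\tfrac U2\cdot\tfrac{U^\dagger}2\cdot\tfrac U2=U$. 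Since $U$ is unitary, this forces every ancilla back to zero.
\end{itemize}

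\textbf{The base case: resources.} The base must genuinely use $O(4^n)$ gates and $O(4^n/n)$ ancillas, because your leaf-size choice needs an $s$-qubit leaf to fit in $O(4^s/s)$ ancillas. Implementing every bilinear phase term directly by fanning out coordinate bits, as you sketch, costs $\Theta(4^n\log^2 n)$ of both. The paper closes this gap with a strong/weak split of the phase terms. Weak terms are grouped, and each group's phase is realized by a block-encoded truncated Taylor series followed by amplitude amplification.

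\textbf{QSD accounting.} A uniformly controlled one-qubit gate on $r$ qubits has size $O(2^r)$ and depth $O(r+2^r/(r+m))$ (Lemma~\ref{lem:tradeoff-ucg}), not $O(4^r)$ and $O(n+4^r/(n+m))$. With your costs, each of the $k$ levels contributes $\Theta(4^n)$ gates and $\Theta(4^n/(n+m))$ depth, so both totals acquire a factor $k$. Your $O(4^kn)$ term is also not absorbed when the leaf size $n-k$ is $o(n)$, as happens for $m=n^{O(1)}$. With the correct costs the sums are geometric.

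\textbf{Depth lower bound.} The light-cone observation alone only gives $2^d\ge n$, i.e.\ $d=\Omega(\log n)$. The $\Omega(n)$ term requires counting parameters inside the light cone. The action on the $n$ output data qubits is determined by the at most $n2^d$ gates in their backward light cone, so $n2^d=\Omega(4^n)$ and $d\ge 2n-\log n-O(1)$. The paper itself takes both lower bounds from \cite{SunEtAl2023,ShendeMarkovBullock2004}.
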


This result matches the known lower bounds for both size $\Omega(4^n)$ and depth $\Omega(n+4^n/(n+m))$ simultaneously \cite{SunEtAl2023}. 

The key step in our proof is the following construction, which achieves linear depth using an asymptotically optimal number of ancillary qubits.
\begin{theorem}[Linear-depth with minimal ancilla]
\label{thm:main}
Any \(U\in\mathrm U(2^n)\) can be implemented by a standard circuit of size $O(4^n)$ and depth $O(n)$, using $O(4^n/n)$ ancillary qubits.
\end{theorem}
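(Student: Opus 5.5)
We aim to synthesize \(U\) in depth \(O(n)\) with \(O(4^n)\) gates and \(O(4^n/n)\) ancillas, following the abstract: reduce the unitary action on amplitudes to a coordinate transformation, implement that shallowly, and remove discretization errors by an exact residual step.

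\textbf{Step 1: reduction to a structured primitive.} Apply one or two levels of the quantum Shannon decomposition. This writes \(U\) as a product of \(O(1)\) uniformly controlled operations: multiplexed single-qubit rotations on one target qubit, controlled by the other \(n-1\) qubits, interleaved with unitaries on \(n-1\) qubits. We do not recurse all the way down, since that costs depth \(\Omega(4^n/n)\) without ancillas. Instead, the key primitive is a ``table-lookup'' operation. With \(O(4^n/n)\) ancillas we can store, in parallel, classical descriptions (angles) of all \(2^{n}\times 2^{n}\) matrix entries. More precisely, we aim for a circuit that, controlled on a basis state \(\ket{x}\), applies the column \(U\ket{x}\) through a coordinate encoding. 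Here each basis state is mapped to a wavefunction with a common Gaussian envelope, so that \(U\) becomes a linear change of coordinates on a register of \(O(n)\) qubits.

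\textbf{Step 2: shallow coordinate transformation.} Embed the index \(x\) into a discretized Gaussian-enveloped state on an ancilla register of size \(\Theta(n)\) (or \(O(4^n/n)\) copies, one per parallel branch). A linear map of coordinates then acts as a permutation or affine arithmetic on grid points. Affine arithmetic (addition and multiplication by fixed constants) has \(O(n)\)-depth circuits using \(O(n)\) workspace. The \(4^n\) matrix parameters enter only as classical constants hardwired into these arithmetic circuits. Spreading these constants over \(\Theta(4^n/n)\) ancilla blocks, each handling \(n\) parameters, gives total size \(O(4^n)\) and depth \(O(n)\). Fan-out and fan-in of the index register to the blocks is done by CNOT trees of depth \(O(\log(4^n/n))=O(n)\).

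\textbf{Step 3: exact residual correction.} The discretized Gaussian encoding only approximately implements \(U\). Write the implemented operator as \(\tilde U=U R\), where the residual \(R\) is close to the identity. The main obstacle is showing that \(R\) has structure that can itself be synthesized exactly within the same budgets. For example, \(R\) might be block diagonal, or diagonal up to a known basis change, so that it decomposes into uniformly controlled rotations. Diagonal unitaries on \(n\) qubits are known to admit depth \(O(n)\) with \(O(2^n)\)-ish ancilla and \(O(2^n)\) size. My first attempt would be to choose the grid spacing so that the residual error lies in a subspace spanned by a few multiplexed rotations. These can be applied by parallel Gray-code phase-oracle circuits (one parity per ancilla block), again of depth \(O(n)\) and size \(O(4^n)\).

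\textbf{Step 4: uncomputation and accounting.} Uncompute all ancillas (fan-out copies, encoding registers, arithmetic workspace) in reverse, returning them to \(\ket{0}\); this doubles the cost at most. Summing the constant number of stages, each of size \(O(4^n)\), depth \(O(n)\) and ancilla \(O(4^n/n)\), yields the statement of the theorem.
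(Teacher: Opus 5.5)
Your Step 3 is where the argument breaks. Writing the approximate circuit as $\tilde U=UR$ with a unitary $R\approx\Id$ is not well-posed. The logical block $M=J_a^\dagger TJ_a$ of an approximate circuit $T$ is only a contraction, since it leaks amplitude into nonzero ancilla states. Even granting a unitary $R$, nothing forces it to be diagonal, block diagonal, or a short product of multiplexed rotations. A generic near-identity $n$-qubit unitary is as hard to synthesize as $U$ itself, and tuning the grid spacing controls the size of the error, not its structure.

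The missing idea is an \emph{additive} correction that needs no structure at all. Put $\Gamma=U-M$. If $\norm{TJ_a-J_aU}_{\op}\le 2^{-n/2}$, then $\rho=\norm{\Gamma}_{\F}\le 2^{n/2}\norm{\Gamma}_{\op}\le1$. Any matrix can be block-encoded with subnormalization $\norm{\Gamma}_{\F}$: prepare the row-norm state $\sum_i (s_i/\rho)\ket{i}$, then apply an inverse $(n,n)$-CQSP of the normalized conjugated rows. By the Yuan--Zhang state-preparation tradeoffs this costs depth $O(n)$, size $O(4^n)$ and $O(4^n/n)$ ancillas. A two-term LCU of $T$ and this circuit has logical block exactly $U/(1+\rho)$. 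A one-qubit rotation rescales this to $U/2$. One round of exact oblivious amplitude amplification, $-WR_0W^\dagger R_0W$, then has logical block $\tfrac32U-\tfrac12UU^\dagger U=U$, which forces every ancilla back to $\ket{0}$.

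Steps 1--2 also do not yield the approximate circuit. In the Gaussian encoding the amplitudes become the coefficients of the linear form $\vct c^T\vct z$. You therefore need one real coordinate per basis state: after realification and block doubling, $4\cdot2^n$ coordinate registers of $O(\log n)$ qubits each. A change of coordinates on an $O(n)$-qubit register cannot represent a $2^n\times2^n$ unitary.

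Your primitive ``controlled on $\ket{x}$, apply the column $U\ket{x}$'' is the CQSP map $\ket{x}\ket{0}\mapsto\ket{x}U\ket{x}$. Uncomputing $x$ from it is exactly the $\Theta(2^{n/2})$ Grover bottleneck of earlier constructions. The paper avoids this because the encoder $E$ is a fixed, $U$-independent isometry with $\mathcal P_{F_U}C=CU$. The circuit is $E^\dagger\,\mathcal P\,E$, so no index register survives.

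A general orthogonal map does not permute grid points, and rounding $R_U\vct z$ to the grid is not even reversible. The paper writes $F_U=\diag(R_U,R_U^T)=T_{R_U}S_{-R_U^T}T_{R_U}J_0$. It implements each shear as per-coordinate Fourier transforms around a diagonal bilinear phase, with exponentially small error via Poisson summation.

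Finally, ``hardwired constants spread over blocks'' does not justify the budgets. The direct bilinear phase already uses $O(4^n\log^2 n)$ gates and ancillas. Reaching $O(4^n)$ and $O(4^n/n)$ requires splitting the couplings into strong and weak terms. The row groups of weak terms are then implemented by state preparation, parallel reads, and a truncated-Taylor LCU with amplitude amplification. The QSD plays no role in this theorem; the paper uses it only for the general-$m$ tradeoff.
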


Theorem~\ref{thm:main} reduces the depth of exact unitary synthesis from the previously known exponential bounds \cite{YuanZhang2023,Rosenthal2026} to the asymptotically optimal \(O(n)\). This exponential improvement is achieved while retaining optimal gate complexity and using asymptotically minimal ancillary space among linear-depth implementations.

\subsection{Prior results and the bottlenecks}
Ancillary qubits have long been used to reduce circuit depth by parallel computation \cite{MooreNilsson2001}. For CNOT circuits, the optimal tradeoff between depth and the number of ancillary qubits is known \cite{JiangEtAl2020}. For general unitary synthesis, Ref.~\cite{SunEtAl2023} constructed quantum circuits with optimal depth $O(4^n/(n+m))$ and size \(O(4^n)\) when the number of ancillary qubits satisfies \(m=O(2^n/n)\). For quantum state preparation, Yuan and Zhang \cite{YuanZhang2023} obtained the optimal depth \(O(n+2^n/(n+m))\) with \(m\) ancillary qubits, together with a corresponding result for controlled quantum state preparation. They also constructed circuits of depth \(O(n2^{n/2})\) for general unitary synthesis when sufficiently many ancillary qubits are available. Rosenthal obtained exact unitary synthesis of depth \(\widetilde O(2^{n/2})\) using \(\widetilde O(4^n)\) ancillary qubits \cite{Rosenthal2026}. However, these depth and ancillary-qubit bounds alone do not imply a circuit size of \(\widetilde O(4^n)\). The remaining challenge in the regime with many ancillary qubits is to achieve linear depth while maintaining optimal circuit size and using an asymptotically minimal number of ancillary qubits.

Previous constructions use controlled quantum state preparation (CQSP) as a subroutine for unitary synthesis. With sufficiently many ancillary qubits, the CQSP \(\lvert j\rangle\lvert0\rangle\mapsto\lvert j\rangle U\lvert j\rangle\) can be implemented in depth \(O(n)\). However, implementing \(U\) on an arbitrary input also requires uncomputing the register containing \(j\) while preserving \(U\lvert j\rangle\). Rosenthal \cite{Rosenthal2026} achieves this by exact Grover search, using \(\Theta(2^{n/2})\) sequential iterations involving the state-preparation circuit and its inverse. Yuan and Zhang \cite{YuanZhang2023} improve the controlled state-preparation circuit and combine this reduction with a truncated cosine-sine decomposition to obtain better depth bounds for different numbers of ancillary qubits. The decomposition reduces the number of target qubits in each subproblem, but introduces more sub-circuits that are applied sequentially. For sufficiently many ancillary qubits, their construction achieves depth \(O(n2^{n/2})\), with the exponential factor coming from the Grover iterations. Thus, reducing the depth of state preparation alone does not remove the sequential cost in these constructions. The remaining challenge is to implement general unitaries without this long sequence of oracle calls, while controlling the circuit size and the number of ancillary qubits.

\subsection{Technical overview}
\label{subsec:technical-overview}

We first construct a quantum circuit of depth $O(n)$ and size
$O(4^n)$, using $O(4^n/n)$ ancillary qubits. We then extend this
construction to any given number of ancillary qubits. The main idea
is to encode the input so that a transformation of coordinates
implements the desired transformation of amplitudes. We first obtain
an approximate circuit, reduce its size and ancillary requirements,
and then remove its error exactly.

\paragraph{From amplitudes to coordinate transformations.}
We associate each computational-basis state with one real coordinate.
The basis state $\ket j$ is encoded as a Gaussian wavefunction
multiplied by the corresponding coordinate variable. Thus, for any
input state $\ket\psi=\sum_j\psi_j\ket j$, the encoded wavefunction is
\[
    f_\psi(\vct x)
    =\sqrt{2}\left(\sum_j\psi_jx_j\right)G(\vct x),
\]
where $G$ is the normalized Gaussian proportional to
$e^{-\norm{\vct x}^2/2}$. The input amplitudes are now the coefficients
of a linear function of the coordinates.

An orthogonal transformation of the coordinates induces the same
transformation on these coefficients. The Gaussian remains unchanged
because it depends only on the distance from the origin. For a
general complex unitary, we use a fixed ancillary state to embed
its action into a real orthogonal transformation on a larger space.
Writing $\mathcal C$ for the resulting encoding and $\mathcal P_U$
for the corresponding coordinate operation, we obtain
\[
    \mathcal P_U\mathcal C=\mathcal C U.
\]
This identity gives a way to implement $U$: encode the input state,
transform the coordinates, and decode. 

\paragraph{A finite implementation of linear-depth approximate synthesis.}
The continuous representation explains the desired action.
To obtain a quantum circuit, we discretize both the encoding and
the coordinate operations. The product structure of the Gaussian
allows the encoding to be implemented efficiently. Each encoded
basis state differs from the common product Gaussian in only one
coordinate. We transfer the basis label into a unary encoding and
then prepare the local coordinate states in parallel. A symmetric
grid keeps the local even and odd profiles exactly orthogonal,
so the finite encoding is an exact isometry.

For the coordinate operation, we enlarge the coordinate space
once more and decompose the required orthogonal transformation
into three shears and a signed swap. A typical shear has the form
\[
    (\vct x,\vct y)\longmapsto
    (\vct x+H\vct y,\vct y),
\]
where $H$ is determined by the target unitary. It updates one
coordinate block while preserving the other. After a Fourier
transform on the block being updated, the shear becomes a diagonal
bilinear phase. The Fourier transforms on different coordinates
run in parallel. 

There are only a constant number of these stages. Gaussian decay
and Fourier estimates control the discretization error across all
coordinates, giving an exponentially accurate circuit of depth
$O(n)$. The remaining difficulty is that implementing every phase
term separately uses too many gates and ancillary qubits.

\paragraph{Reducing the size and ancilla-count of approximate synthesis.}
We improve the implementation of the diagonal phase gates by
dividing their coefficients into strong and weak terms. A term is called strong if the magnitude of its coefficient exceeds a chosen threshold, and weak otherwise.
Orthogonality bounds the total squared magnitude of the underlying
matrix entries, so sufficiently large coefficients cannot be
numerous. We implement the strong terms exactly. The weak
coefficients are individually small, but there may be many of them,
and their combined contribution cannot simply be omitted.

For the weak terms, we group the coefficients within each row.
Weighted state preparation encodes their relative magnitudes,
and reversible coordinate reads provide the corresponding
coordinate values. Together, these operations encode the weighted
contribution of an entire group into a circuit amplitude.
The grouping is chosen so that each group's total phase is small.
A fixed-degree polynomial, followed by amplitude amplification,
then approximates the corresponding phase operation accurately
enough for the total error over all groups to remain small.
This avoids implementing every weak interaction separately by
the direct construction.

The groups run in parallel, but their workspaces must also be
controlled. The state-preparation depth--ancilla tradeoff allows
each group to use fewer ancillary qubits when its preparation is
allowed depth $O(n)$. We also batch the coordinate reads so that
different groups share copies of the coordinate labels.
These choices give an approximate circuit with depth $O(n)$,
size $O(4^n)$, and $O(4^n/n)$ ancillary qubits.

\paragraph{Removing the approximation error.}
The next step corrects the error of the actual finite circuit.
Let $M$ be its matrix block corresponding to all-zero input and
output ancillary registers. The residual $U-M$ contains all
errors from discretization and from approximating the diagonal
phases. We choose the approximation accuracy so that
\[
    \norm{U-M}_{\F}\le 1.
\]
State preparation and controlled state preparation allow us to
encode the normalized residual into the corresponding block of
another unitary circuit. This correction is determined by $U$
and the chosen approximate circuit; it does not depend on learning
the input amplitudes.

We then combine the approximate circuit and the residual circuit
coherently by the linear combination of unitaries. Then we implement the block-encoding $W$ of $U$ satisfying
\[
    (\Id\otimes\bra0)\,W\,(\Id\otimes\ket0)=\frac U2,
\]
By amplitude-amplification, the unitary $U$ can be realized by using $3$ times of $W$ and $W^\dagger$. The controls and reflections in amplitude amplification also fit within the same depth, size, and ancillary bounds.

\paragraph{The tradeoff for any number of ancillary qubits.}
When fewer ancillary qubits are available, we recursively apply
the quantum Shannon decomposition to express $U$ as smaller
unitaries and uniformly controlled rotations. We stop when each
remaining unitary is small enough that its linear-depth
implementation fits within the available workspace. These
subcircuits are applied in the order specified by the decomposition.

\subsection{Organization of the manuscript}
\label{sec:organization}
The rest of this manuscript is organized as follows. Section~\ref{sec:models} introduces the notation, circuit model, and
basic circuit primitives. Section~\ref{sec:geometry} constructs a
linear-depth circuit for approximate unitary synthesis using Gaussian
encoding and finite coordinate transformations.
Section~\ref{sec:spectral} reduces its size and ancillary requirements
while preserving linear depth. Section~\ref{sec:exactification}
converts the approximate circuit into an optimal exact implementation without
changing the asymptotic resource bounds.
Finally, Section~\ref{sec:tradeoff} extends this construction to any number
of ancillary qubits and establishes the optimal depth--ancilla
tradeoff.

\paragraph{Relation to \cite{NehoranYuen2026}} We obtained all the results by early September 2026, primarily through extensive interactions with ChatGPT  (see the AI statement below). Subsequently, we focused on simplifying and rewriting the proofs to improve readability. During the final preparation of the manuscript, we became aware of a recent preprint by Nehoran and Yuen \cite{NehoranYuen2026}, posted on arXiv on September 30, 2026, which presents an \textit{approximate} synthesis using the same transform  to continuous functions. In comparison, our results achieve \textit{exact} synthesis (Section \ref{sec:exactification}) with \textit{simultaneously optimal} size and depth-ancilla tradeoff (Sections \ref{sec:spectral} and \ref{sec:tradeoff}).

\paragraph{AI statement.}
The authors acknowledge the use of OpenAI’s GPT-5.6-Sol and GPT-6-Astra for author-directed exploratory discussions, mathematical brainstorming, and manuscript preparation. We have investigated the unitary synthesis problem since 2021 \cite{SunEtAl2023}, and tried to use extensive prompts and intensive discussions with GPT-5.6-Sol in the first half of this year, but did not improve upon previous published result. On September 6, 2026, we engaged with the newly released GPT-6-Astra using the same prompt. Astra successfully proposed the crucial idea of transforming the problem via continuous functions (as detailed in Section 3.1), initially reducing the depth to $\text{poly}(n)$. Through several hours of subsequent directed interactions on the same day, we jointly refined the approach to achieve optimal linear depth. Then on September 8, 2026, we further prompted Astra to utilize the newly obtained optimal linear unitary circuit in the quantum cosine-sine decomposition framework, leading Astra to the derivation of the optimal depth-ancilla trade-off. 

All AI-generated mathematical claims, derivations, and initial drafts were independently verified, simplified, and largely rewritten by the human authors. The authors take full responsibility for the correctness, originality, and integrity of the mathematical proofs and final results presented in this paper.

\section{Preliminaries}
\label{sec:models}

\subsection{Notation}
\paragraph{Notation} All logarithms are base two. We write $\Id$ for the identity operator, with its dimension
determined by context, and $\mathrm U(N)$ for the group of $N\times N$
unitary matrices. We use $A^T$ for transpose and $A^\dagger$ for conjugate transpose. For Hermitian matrices, $A\preceq B$ means that $B-A$ is positive semidefinite.
The computational basis of an $r$-qubit register is indexed by
$x\in\{0,1\}^r$, or equivalently by integers $0\le x<2^r$.
We write $\ket{0^r}$ for its all-zero basis state, abbreviating it
as $\ket0$ when the register is clear. An orthonormal coordinate
basis is denoted by $\vct{e}_j$. For any $i,j\in\Z$, $\delta_{i,j}=1$ if $i=j$; $\delta_{i,j}=0$ otherwise. Let $\vct 1$ denote the all-one vector, with its dimension determined by context.
For a vector, $\norm{\vct{x}}$ is its $2$-norm. For a matrix $A\in\mathbb{C}^{N\times N}$, $\norm A_{\op}=\sup_{\norm{\vct{x}}=1}\norm{A\vct{x}}$ and $\norm A_{\F}=\left(\sum_{1\le i,j\le N}|A_{ij}|^2\right)^{1/2}$.
It is a standard fact that $\norm A_{\op}\le\norm A_{\F}
\le\sqrt{N}\norm A_{\op}$. For $f\in L^2(\R^\ell)$, $\norm f=\norm f_2$ denotes its norm with respect to Lebesgue measure.

The quantum fan-out gate $F_n$ is an $(n+1)$-qubit transformation satisfying
\begin{equation}
    F_n\ket{x_0}\ket{x_1,x_2,\ldots,x_n}=\ket{x_0}\ket{x_1\oplus x_0,x_2 \oplus x_0,\ldots,x_n\oplus x_0}
\end{equation}
for any $x_0,x_1,\ldots,x_n\in\{0,1\}$.

The global problem parameters are the number $n$ of input qubits
and the ancilla-count $m$. For the endpoint construction we use
\begin{equation}
 d=2^n,\qquad t=n+1,\qquad p=2d=2^t,\qquad \ell=2p=4d.
 \label{eq:construction-dimensions}
\end{equation}
Thus $d$ is the logical dimension, $p$ is the realified dimension,
$t=\log p$, and $\ell$ is the number of coordinates in the continuous representation.

\subsection{Circuit models and unitary synthesis problem}
We adopt the following circuit models in this paper.
 A \textit{standard circuit} consists of arbitrary single- and two-qubit gates. Mid-circuit measurements and classical feedforward are not allowed in this model. Ancillary qubits are allowed, which are initialized in the all-zero state and are required to be restored to all-zero state at the end of the circuit.

\begin{definition}[Unitary synthesis (US) problem]
\label{def:exact-synthesis}
Given integers $n\ge1$, $m\ge0$, and $U\in\mathrm U(2^n)$,
the unitary synthesis problem is to construct a circuit $T$
in the standard model on $n+m$ qubits that deterministically
implements
\begin{equation}
 T(\ket{\vct{\psi}}\ket{0^m})=
 (U\ket{\vct{\psi}})\ket{0^m}
 \label{eq:exact-goal}
\end{equation}
for every input state $\ket{\vct{\psi}}$, using $m$ ancillary qubits.

As mentioned in Section \ref{sec:intro}, a key objective is to bound the quantum circuit depth for a given ancilla-count $m$.
\end{definition}

We also introduce the approximate unitary synthesis problem, which serves as an intermediate step in solving the US problem.
\begin{definition}[Approximate unitary synthesis problem]
\label{def:approximate-synthesis}
Let $n\ge1$, $m\ge0$, $U\in\mathrm U(2^n)$, and $\epsilon\ge0$.
A unitary circuit $T$ on $n+m$ qubits implements $U$ with error
at most $\epsilon$ if
\begin{equation}
    \norm{TJ_m-J_mU}_{\op}\le\epsilon, \text{ where } J_m \text{ satisfies }
    J_m\ket{\vct\psi}
      =\ket{\vct\psi}\ket{0^m} \text{ for every } n\text{-qubit state } \ket{\vct \psi}.
    \label{eq:approximate-goal}
\end{equation}
\end{definition}
Intuitively, this definition implies that $T(\ket{\vct \psi}\ket{0^m}) = T(J_m\ket{\vct\psi}) \approx_\epsilon J_m U \ket{\vct\psi} = U \ket{\vct\psi}\ket{0^m}$. The error controls the entire output, including all ancillary
qubits, and retains the prescribed global phase of $U$.
The same bound holds for inputs entangled with an arbitrary reference system.
When $\epsilon=0$, this reduces to exact synthesis with every ancillary qubit restored to zero.

\subsection{Circuit primitives}
In this section, we introduce some known circuit primitives which are utilized in our circuit construction.

The optimal depth--ancilla tradeoff of quantum state preparation (QSP) and its controlled version (CQSP) are given in \cite{YuanZhang2023} as stated in the following two lemmas, where the size cost is also optimal.
\begin{lemma}[Quantum state preparation {\cite{YuanZhang2023}}]
\label{lem:state-tradeoff}
Let $r\ge1$ and $m\ge0$ be integers. Any $r$-qubit state
$\ket{\vct{\psi}}=\sum_{x=0}^{2^r-1}\alpha_x\ket x$ can be prepared by a standard circuit of size $O(2^r)$ and depth $O(r+2^r/(r+m))$, using at most
$m$ ancillary qubits.
\end{lemma}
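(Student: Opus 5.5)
The plan is to reconstruct the Yuan--Zhang construction in two regimes of the ancilla budget, glued together by a standard binary-tree (Grover--Rudolph) decomposition. Write the target state through a uniformly controlled rotation tree. For each level $k=1,\dots,r$ and each prefix $y\in\{0,1\}^{k-1}$, let $\theta_{k,y}$ be the angle for which $\cos(\theta_{k,y}/2)$ and $\sin(\theta_{k,y}/2)$ are the relative norms of the two halves of the amplitude vector under prefix $y$. The phases are absorbed in a final diagonal unitary on all $r$ qubits.

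The preparation is then:
\begin{itemize}
\item a sequence of uniformly controlled $R_y$ rotations $\Lambda_k$, each with $k-1$ controls and one target;
\item followed by a diagonal unitary $D_r$ with $2^r$ phases.
\end{itemize}
This reduces the whole lemma to implementing uniformly controlled rotations and diagonal unitaries on $k$ qubits with $m$ ancillas, in size $O(2^k)$ and depth $O(k+2^k/(k+m))$. Summing over $k=1,\dots,r$ gives size $O(2^r)$ and depth $O(r^2+2^r/(r+m))$. The spurious $r^2$ term is the thing to remove, as discussed at the end.

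\textbf{Small budget, $m\le O(2^r/r)$.} I would use the Sun et al.\ style diagonal construction. Write the diagonal as $\exp(\iu\sum_s \hat\alpha_s \langle s,x\rangle)$ in the Walsh basis. Each term $\langle s,x\rangle$ is realized as a phase $R_z$ on a qubit holding the linear function $\langle s,x\rangle$, so the task becomes enumerating all $2^k$ linear functions.
\begin{itemize}
\item Without ancillas: cover the $2^k$ functions with a Gray-code walk, using CNOT circuits of size $O(2^k)$ and depth $O(2^k/k)$. This works by keeping $\Theta(k)$ qubits that each walk through a different coset, in parallel.
\item With $m$ ancillas: copy the input into $\Theta(m/k)$ blocks, and let each block walk through $2^k/(k+m)$ functions per layer group.
\item Copying and uncopying cost $O(\log)$ depth with $O(m)$ CNOTs, so the size stays $O(2^k)$.
\end{itemize}
Uniformly controlled rotations reduce to diagonals by conjugating the target with Hadamard-type gates, since $\Lambda_k(R_y)$ is diagonal in a rotated target basis, up to a constant number of layers.

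\textbf{Large budget, $m\ge\Omega(2^r/r)$.} Here depth $O(r)$ is the target. The $r$ stages cannot be run sequentially at $O(\log)$ depth each, since that gives $O(r\log r)$ or worse. Instead I would use a one-hot (unary) encoding of prefixes: maintain a register in which the prefix $y$ at level $k$ is flagged by a single ancilla. Rotation $\theta_{k,y}$ is then applied by a gate controlled only by that flag, so all $2^{k-1}$ rotations of a level run in constant depth. Moving from level $k$ to $k+1$ splits each flag into two using one Toffoli-like gate per flag, in constant depth. Building and then erasing these unary flags across all levels uses $O(2^r)$ ancillas and gates in total, and depth $O(r)$, by processing levels as a pipeline. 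The final diagonal acts on one-hot flags as single-qubit phases, so it has depth $O(1)$. Intermediate budgets are interpolated by running the unary scheme only for the top $\log(m)$-ish levels and the small-budget scheme below.

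\textbf{Main obstacle.} The hardest step is keeping the total depth at $O(r+2^r/(r+m))$ rather than $O(r\log m+\dots)$. This requires two things:
\begin{itemize}
\item the fan-out and copying steps must have depth that is amortized or pipelined across the levels;
\item the uncomputation of the unary flags must restore all ancillas exactly without doubling the asymptotic cost.
\end{itemize}
I would first try to handle this by noting that the level-$k$ cost $2^k/(k+m)$ is geometric in $k$. The sum is then dominated by the last level, and the additive $O(k)$ overheads can be pipelined so that they telescope to $O(r)$. Since this is exactly the content of \cite{YuanZhang2023}, in the paper itself I would cite that work rather than reprove it.
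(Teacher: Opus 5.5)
The paper gives no proof of this lemma; it imports it from \cite{YuanZhang2023}, so your closing decision to cite that work is exactly what the paper does. As a reconstruction, though, your sketch has a genuine gap, and the gap sits in the only regime that needs a new idea.

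\textbf{The sequential scheme and your interpolation.} Applying Lemma~\ref{lem:tradeoff-ucg} level by level gives depth $O(r^2+2^r/(r+m))$. This is the claimed bound only for $m=O(2^r/r^2)$. Your interpolation uses unary flags for the top $\approx\log m$ levels and global uniformly controlled gates below, and it does not repair this. Every remaining level still has $\approx r$ controls, so Lemma~\ref{lem:tradeoff-ucg} charges it an additive $O(r)$. For $m=2^r/r$, the $\approx\log r$ remaining levels then give $O(r\log r)$ instead of $O(r)$.

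\textbf{The proposed remedy.} Only the $2^k/(k+m)$ term is geometric in $k$; the additive term is not. The obvious schedule also cannot overlap consecutive levels, because level $k$ needs fanned-out copies of qubit $k-1$, and these exist only after the level-$(k-1)$ rotation.

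\textbf{The ``large budget'' scheme.} It has two further problems.
\begin{enumerate}
\item It needs $\Theta(2^r)$ flags, so it covers only $m=\Omega(2^r)$, not $m=\Omega(2^r/r)$.
\item As described, it is not constant depth per level. All $2^{k-1}$ flag-controlled rotations target the same data qubit, and all splitting Toffolis use that qubit as a control. You must either serialize or fan it out at cost $O(k)$ per level, i.e.\ $O(r^2)$ overall.
\end{enumerate}
The standard fix is to perform each split as a two-qubit Givens rotation $\ket{10}\mapsto\cos(\theta_{k,y}/2)\ket{10}+\sin(\theta_{k,y}/2)\ket{01}$ on the disjoint flag pair for $y0,y1$. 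No data qubit is involved. You then convert one-hot to binary once at the end by Lemma~\ref{lem:one-hot}.

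\textbf{One way to close the regime $2^r/r^2\lesssim m\le 2^r$.} The idea is to give the bottom levels few controls, rather than trying to pipeline them.
\begin{enumerate}
\item Build $2^s=\Theta(m/r)$ unary flags for the top $s$ levels as above.
\item Give each branch $y$ a private $k$-qubit register $R_y$, with $k=r-s=O(\log r)$, and $\Theta(r)$ private ancillas.
\item In parallel over $y$, apply the flag-controlled preparation of $\psi(y,\cdot)/\norm{\psi(y,\cdot)}$ into $R_y$. By Lemma~\ref{lem:tradeoff-ucg} this costs depth $O(k^2+2^k/(k+r))=O(r+2^r/m)$ and size $O(2^k+r)$, including an $O(\log r)$-depth fan-out of the flag.
\item Each term of the superposition has exactly one nonzero $R_y$. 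So XOR all $R_y$ into an output register with CNOT trees.
\item Erase each $R_y$ with Toffolis controlled by its flag and by fanned-out copies of the output.
\item Convert the flags to binary.
\end{enumerate}
Every stage fits in depth $O(r+2^r/m)$, size $O(2^r)$, and $O(m)$ ancillas. Together with the sequential scheme for $m=O(2^r/r^2)$, this proves the lemma.
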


\begin{lemma}[Controlled quantum state preparation (CQSP) {\cite{YuanZhang2023}}]
\label{lem:controlled-state-tradeoff}
Let $k,m\ge0$ and $r\ge1$ be integers, and let
$\{\ket{\vct{\psi}_a}:a\in\{0,1\}^k\}$ be a set of $r$-qubit states. The $(k,r)$-CQSP
\[
 \ket a\ket{0^r}
 \longmapsto \ket a\ket{\vct{\psi}_a},
 \qquad a\in\{0,1\}^k
\]
can be implemented by a standard circuit of size $O(2^{k+r})$ and depth
$O(k+r+2^{k+r}/(k+r+m))$, using at most $m$ ancillary qubits.
\end{lemma}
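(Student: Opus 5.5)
The plan is to reduce $(k,r)$-CQSP to a sequence of uniformly controlled single-qubit rotations, which are diagonal unitaries up to local basis changes, and then to implement these diagonals with the known optimal depth--ancilla tradeoff for diagonal unitaries. The one step not covered by that tradeoff is removing the per-stage additive depth, which needs shared workspace across stages.

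\textbf{Binary-tree decomposition.} First I would write each $\ket{\vct\psi_a}$ in the standard binary-tree form. Qubit $j$ of the target, for $j=1,\dots,r$, is rotated by a $R_y$ rotation followed by a phase $R_z$. Both angles depend only on the control string $a$ and the already-prepared prefix $x_1\cdots x_{j-1}$. The CQSP is then a product of $r$ stages. Stage $j$ is a uniformly controlled rotation with $k+j-1$ control qubits and one target. Conjugating the target by fixed single-qubit gates turns each stage into a diagonal unitary on $q_j=k+j$ qubits. The total diagonal dimension is $\sum_{j}2^{k+j}=O(2^{k+r})$, so the size bound will follow once each stage uses $O(2^{q_j})$ gates.

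\textbf{Implementing each stage.} For a single stage I would use the Walsh/Gray-code phase-polynomial form of a diagonal unitary on $q$ qubits. With $m'$ ancillas, the parities are spread across copies of the input made by fan-out. The parity phases are then enumerated in parallel by Gray-code cycles, giving size $O(2^q)$ and depth $O(q+2^q/(q+m'))$. This is the diagonal-unitary tradeoff of \cite{SunEtAl2023}. Summed over stages, the ``bulk'' terms $\sum_j 2^{k+j}/(k+j+m)$ form a geometric series dominated by the last stage. They therefore total $O(2^{k+r}/(k+r+m))$, as required.

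\textbf{Main obstacle: the additive terms.} Summed naively, the additive terms give $\sum_j O(k+j)=O(r(k+r))$, not $O(k+r)$. I would remove this overhead in three steps.
\begin{enumerate}
\item Make the copies of the control register $a$ once, by a single fan-out tree of depth $O(\log)$. These copies are shared by all stages, and newly prepared target qubits are appended to the copies incrementally. Each stage then pays only $O(1)$ fresh copying depth plus its Gray-code walk.
\item Choose the stage-$j$ Gray-code schedule so that its start and end parity configurations coincide with those needed by stage $j+1$. The per-stage reset cost then telescopes, and the overhead amortizes to $O(k+r)$ in total.
\item Split into two regimes. When $m\lesssim 2^{k+r}/(k+r)^2$, any residual $O(r(k+r))$ overhead is already dominated by $2^{k+r}/(k+r+m)$. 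When $m$ is larger, use only the first $\Theta(2^{k+r}/(k+r)^2)$ ancillas in the bulk and the rest for the shared copies, so the target depth is $O(k+r)$ anyway.
\end{enumerate}
I expect this last step to be the delicate part. It requires showing that each stage, given an already-copied workspace, runs in depth $O(1+2^{q_j}/(q_j+m))$.

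\textbf{Cleanup.} Finally I would verify that all copies are uncomputed at the end by reversing the fan-outs, so every ancilla returns to $\ket{0}$. I would also check that the single-qubit basis changes cost only $O(r)$ depth and $O(r)$ size. Setting $k=0$ then recovers Lemma~\ref{lem:state-tradeoff} as a special case, which serves as a consistency check.
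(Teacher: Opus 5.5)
\textbf{There is a genuine gap in your large-$m$ regime.} The paper gives no proof of this lemma; it imports it from \cite{YuanZhang2023}. So your sketch has to stand on its own, and it breaks exactly at the ``delicate part'' you flag. That part is the whole content of the lemma beyond the earlier per-stage bound.

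Your reduction and your small-$m$ regime are fine. Summing the diagonal tradeoff over the $r$ stages gives $O(r(k+r)+2^{k+r}/(k+r+m))$, which matches the claim when $m\lesssim 2^{k+r}/(k+r)^2$. For larger $m$, two of your remedies fail as stated:
\begin{itemize}
\item In item~3, if only $\Theta(2^{k+r}/(k+r)^2)$ ancillas carry the rotations, the last stage alone needs $\Theta((k+r)^2)$ rotation layers. So you do not reach $O(k+r+2^{k+r}/(k+r+m))$ once $m\gg 2^{k+r}/(k+r)^2$.
\item In item~1, a freshly prepared $x_j$ cannot be appended to $M$ workspace copies at $O(1)$ cost. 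A fan-out to $M$ qubits needs at least $\log_2(M+1)$ layers of two-qubit gates. This cost is on the critical path, because $x_j$ exists only after stage $j$ ends.
\end{itemize}

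Item~2 cannot repair this, because the per-stage bound $O(1+2^{q_j}/(q_j+m))$ you need is false for the Walsh/Gray-code architecture itself. After conjugation by $W$, the phase of stage $j$ is $\mp\theta_x/2$ according to the stage's own target bit. So every Walsh term contains that bit. Half of the terms also contain $x_{j-1}$, which is defined only after stage $j-1$'s final basis change. Suppose these terms are rotated on $P$ slots in parallel. The stage then needs three sequential steps:
\begin{itemize}
\item $x_{j-1}$ must first reach $P$ qubits, costing at least $\log_2 P$ layers;
\item at least $2^{q_j-2}/P$ rotation layers follow;
\item all copies of the target must be gathered back to one qubit before $W$ is undone, since a Hadamard-type gate cannot act transversally on a repetition-encoded qubit; this costs at least another $\log_2 P$ layers.
\end{itemize}
Hence stage $j$ adds $\Omega(\min_P(\log P+2^{q_j-2}/P))=\Omega(k+j)$ to the critical path, however the copies were pre-arranged. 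The stages together cost $\Omega(r(k+r))$, which is precisely the overhead the lemma removes.

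\textbf{What is missing} is a mechanism in which level $j$ depends on the prefix only locally, so that each level costs $O(1)$ depth. For example, you could carry $(a,x_1,\dots,x_{j-1})$ in one-hot form. A level is then one layer of two-qubit gates splitting each excited qubit between its two children, and you pay a single $O(k+r)$-depth conversion back to binary at the end, as in Lemma~\ref{lem:one-hot}. You would then still have to trade this against the ancilla budget without reintroducing per-level diagonals, since the full one-hot tree uses $2^{k+r}$ qubits.
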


\begin{lemma}[Uniformly controlled one-qubit gates, \cite{SunEtAl2023}]
\label{lem:tradeoff-ucg}
Let $j\ge1$ and $m\ge0$ be integers and $\{u_x\in \mathrm{U}(2):x\in\{0,1\}^{j-1}\}$ be a set of single-qubit gates. Any $j$-qubit uniformly controlled one-qubit gate $\mathcal M_j=\sum_{x\in\{0,1\}^{j-1}}\ket{x}\bra{x}\otimes u_x$ can be implemented by a standard circuit of size $O(2^j)$ and depth $O(j+2^j/(j+m))$, using at most $m$ ancillary qubits.
\end{lemma}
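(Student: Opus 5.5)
The plan is to reduce $\mathcal M_j$ to a constant number of $j$-qubit diagonal unitaries and single-qubit gates. Then I will establish the depth--ancilla tradeoff for diagonal unitaries with a Gray-code/parity construction.

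\emph{Reduction to diagonals.} Write each $u_x$ in ZYZ form, $u_x=e^{\iu\alpha_x}R_z(\beta_x)R_y(\gamma_x)R_z(\delta_x)$. Then $\mathcal M_j$ factors as a product of four operators:
\begin{itemize}
\item the diagonal $\sum_x e^{\iu\alpha_x}\ket x\bra x\otimes\Id$ on the $j-1$ control qubits;
\item a uniformly controlled $R_z(\beta_x)$;
\item a uniformly controlled $R_y(\gamma_x)$;
\item a uniformly controlled $R_z(\delta_x)$.
\end{itemize}
A uniformly controlled $R_z$ is itself a diagonal unitary on $j$ qubits. A uniformly controlled $R_y$ becomes a uniformly controlled $R_z$ after conjugating the target by a fixed single-qubit Clifford $V$ with $V R_z V^\dagger=R_y$. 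So it suffices to show that every diagonal unitary $\Lambda=\sum_{y\in\{0,1\}^j}e^{\iu\theta(y)}\ket y\bra y$ admits a circuit of size $O(2^j)$ and depth $O(j+2^j/(j+m))$ with $m$ ancillas. We may assume $m\le 2^j$, since extra ancillas can simply be left idle.

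\emph{Diagonal unitaries via parities.} Expand the phase in the Walsh basis,
\[
\theta(y)=\sum_{s\in\{0,1\}^j}\hat\theta_s\,(s\cdot y \bmod 2),
\]
with the $s=0$ term being a global phase. Thus $\Lambda$ is the product, over all $s\neq 0$, of the phase gates "apply $e^{\iu\hat\theta_s}$ when $s\cdot y=1$." These commute, so only the scheduling matters. Let $K=\Theta(m/j)$ (take $K=1$ when $m<j$). The schedule has three stages.
\begin{enumerate}
\item Make $K$ copies of the input $y$ in ancillas, using CNOT fan-out trees. This costs depth $O(\log K)=O(j)$ and size $O(m)\le O(2^j)$.
\item Split $\{0,1\}^j\setminus\{0\}$ into $K$ blocks of size about $2^j/K$, each traversed by a Gray-code-like walk. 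Block $i$ is handled on the $i$-th copy, using one extra ancilla target qubit initialized to the parity of the block's first vector. That initialization is depth $O(\log j)$, by a CNOT tree on a few further ancillas. Each step of a walk is one CNOT followed by one $R_z$ on the target. The $K$ walks run in parallel, so this stage has depth $O(2^j/K)=O(2^j j/m)$ and size $O(2^j)$.
\item Uncompute the parity targets and the copies in reverse order.
\end{enumerate}
Together with the trivial $m=0$ bound $O(2^j)$, this gives depth $O(j+2^j/(j+m))$ and size $O(2^j)$.

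\emph{Main obstacle.} The delicate step is stage 2: partitioning the nonzero strings into $K$ blocks, each admitting a walk whose consecutive elements differ in exactly one bit, while keeping the per-step depth constant. Consecutive walk steps on one copy must not touch the same control qubit twice in a layer. I would first try fixing the top $\log K$ bits of $s$ per block, so each block is a coset $s_0\oplus\{0,1\}^{j-\log K}$. Then a standard reflected Gray code on the low bits traverses it, and each step's control is a single input qubit of that block's private copy. Consequently there are no conflicts across blocks, and each block only needs an $O(j)$-depth initialization of its top-bit parity. The accounting then uses that the number of ancillas consumed, $O(Kj)$, stays within $m$.
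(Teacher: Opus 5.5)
The paper gives no proof of this lemma; it imports it from \cite{SunEtAl2023}. Your reduction of $\mathcal M_j$ to a constant number of $j$-qubit diagonal unitaries plus fixed single-qubit gates on the target is correct, and it is the standard first step of that cited argument.

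The gap is that your diagonal construction does not reach the claimed depth. With $K=\Theta(m/j)$ private-copy walks, stage~2 has depth $\Theta(2^j/K)=\Theta(j2^j/m)$, and for $m<j$ you fall back on the sequential $O(2^j)$ bound. So your schedule proves depth $O(j+j2^j/m)$ for $m\ge j$, and $O(2^j)$ for $m<j$. The lemma asks for $O(j+2^j/(j+m))$. For every $0\le m\le 2^j/j$ your depth is larger by a factor $\Theta(j)$; for instance, $m=2^{j/2}$ gives $j2^{j/2}$ instead of $2^{j/2}$. The two bounds agree only once $m=\Omega(2^j)$. In particular, at $m=0$ the lemma asserts an ancilla-free depth of $O(2^j/j)$, which the ``trivial $m=0$ bound $O(2^j)$'' does not give. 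The last sentence of your schedule paragraph conflates $j2^j/m$ with $2^j/(j+m)$.

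The obstacle you single out is not the real one. With private copies and a reflected Gray code on each coset, the blocks are disjoint and there are no conflicts at all. The real problem is throughput. Each block occupies about $j+1$ qubits, but only its single target receives a phase per step, so you apply $\Theta(m/j)$ phases per layer. Finishing $2^j$ phases in $O(2^j/(j+m))$ layers requires $\Theta(j+m)$ phases per layer: essentially every available qubit must carry a distinct useful parity at every step.

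The missing idea, which the cited construction supplies, has three parts:
\begin{itemize}
\item Store one parity per ancilla, for example all parities of an $r$-bit prefix with $2^r=\Theta(m)$.
\item Advance all of these parities in a single CNOT layer per step, with controls drawn from $O(m)$ qubits holding copies of the suffix bits.
\item Keep that layer at depth one by having the registers that share a suffix copy follow a family of Gray codes whose flipped bit positions are pairwise distinct at every step, so no copy qubit controls two CNOTs at once.
\end{itemize}
This gives $2^{j-r}=O(2^j/m)$ constant-depth steps.

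The regime $m=O(j)$, including $m=0$, needs a separate ancilla-free argument. There the $j$ data qubits themselves, after invertible CNOT transformations, hold $\Theta(j)$ distinct parities at a time, giving depth $O(2^j/j)$.

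A minor slip: the $s=0$ parity is identically zero, so the global phase must appear as a separate constant term in your Walsh expansion.
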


\begin{lemma}[Unitary without ancillary qubits, \cite{SunEtAl2023}]
\label{lem:US-no-ancilla}
Let $n\ge1$. Any $n$-qubit unitary $U\in{\rm U}(2^n)$ can be implemented by a standard circuit of size $O(4^n)$ and depth $O(4^n/n)$, without ancillary qubits.
\end{lemma}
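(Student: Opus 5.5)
We would prove this by writing $U$ as a product of $O(2^n)$ uniformly controlled one-qubit gates, each acting on all $n$ qubits. Each of these is then implemented by Lemma~\ref{lem:tradeoff-ucg} with $m=0$. That lemma gives size $O(2^n)$ and depth $O(n+2^n/n)=O(2^n/n)$ per gate, with no ancillary qubits. Composing the factors sequentially then gives total size $O(2^n\cdot 2^n)=O(4^n)$ and depth $O(2^n\cdot 2^n/n)=O(4^n/n)$.

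The decomposition comes from the recursive cosine--sine decomposition, kept in multiplexed form rather than the quantum Shannon decomposition. The top-level CSD reads
\[
U=(A_0\oplus A_1)\,R\,(B_0\oplus B_1).
\]
Here $A_i,B_i\in\mathrm U(2^{n-1})$, and $R$ is a uniformly controlled $R_y$ rotation. Its target is the first qubit and its controls are the remaining $n-1$ qubits, so $R$ is an $n$-qubit uniformly controlled one-qubit gate. The factor $A_0\oplus A_1$ is an $(n-1)$-qubit unitary multiplexed by one qubit.

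We apply the CSD to each multiplexed block, blockwise for each value of the controls. A $k$-fold multiplexed $(n-k)$-qubit unitary then becomes two $(k+1)$-fold multiplexed $(n-k-1)$-qubit unitaries sandwiching a $k$-fold multiplexed CS rotation. That middle rotation is again a uniformly controlled one-qubit rotation on all $n$ qubits, with target qubit $k+1$. We iterate down to $n-1$ controls. There the remaining factors are $(n-1)$-fold multiplexed one-qubit unitaries, i.e.\ $n$-qubit uniformly controlled one-qubit gates.

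Let $N(k)$ be the number of $n$-qubit uniformly controlled gates produced from a $k$-fold multiplexed $(n-k)$-qubit unitary. It satisfies $N(k)=2N(k+1)+1$ with $N(n-1)=1$, so $N(0)=2^n-1$.

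The only points needing care are the following.
\begin{enumerate}
\item Lemma~\ref{lem:tradeoff-ucg} is stated with a fixed target position. A uniformly controlled gate with target qubit $k+1$ is the same object after relabeling wires, and relabeling costs nothing in the circuit model.
\item The blockwise CSD of a multiplexed unitary is itself multiplexed with one extra control. This holds because the CSD is applied independently to each block $\ket a\bra a\otimes V_a$, with the cosine--sine split on the next qubit.
\item The global phase is absorbed into one of the one-qubit gates.
\end{enumerate}

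The step we expect to be the main obstacle is bookkeeping rather than analysis. The naive quantum Shannon decomposition with recursion $D(n)=4D(n-1)+O(2^n/n)$ terminates at small unitaries of depth $\Theta(4^k)$ and only yields depth $O(4^n)$. The saving of the factor $n$ comes precisely from never collapsing the multiplexors. Every factor must remain a full-width $n$-qubit uniformly controlled gate, so that each gate enjoys the $2^n/n$ parallelization of Lemma~\ref{lem:tradeoff-ucg}. We would therefore verify carefully that every factor in the recursive CSD, including the leaves, is of exactly this form.
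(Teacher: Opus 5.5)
Your proof is correct and is essentially the argument of the cited source \cite{SunEtAl2023}. The recursive cosine--sine decomposition writes $U$ as a product of $2^n-1$ full-width $n$-qubit uniformly controlled one-qubit gates, and each is implemented ancilla-free by Lemma~\ref{lem:tradeoff-ucg} with $m=0$ in size $O(2^n)$ and depth $O(n+2^n/n)=O(2^n/n)$; the paper does not reprove this lemma but imports it by citation, so there is no separate in-paper argument to compare against.
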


The following lemma transforms one-hot encoding to binary encoding. More transform methods among one-hot, unary and binary encodings can be found in \cite{YuanZhang2025}.
\begin{lemma}[\cite{SunEtAl2023}]\label{lem:one-hot}
    The following $2^n$-qubit transformation
    \begin{equation}\label{eq:one-hot}
        \ket{\vct e_j}\mapsto\ket{j}\ket{0^{2^n-n}},\qquad 0\le j <2^n,
    \end{equation}
    can be implemented by a standard circui of depth $O(n)$ and size $O(2^n)$, using $O(2^n)$ ancillary qubits.
\end{lemma}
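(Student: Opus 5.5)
We would prove this by computing the binary label $j$ from a complete binary tree of ``subtree indicators'' built on top of the one-hot register, and then erasing the whole tree top-down using copies of the bits of $j$. Index the $2^n$ input qubits by $i\in\{0,1\}^n$ as the leaves of a complete binary tree of height $n$. For an internal node $v$, which is identified with a prefix of length $s<n$, let $a_v$ be an ancilla that should hold $[\,v\text{ is a prefix of }j\,]$. There are $2^n-1$ such ancillas, and the leaves themselves are the input qubits $e_i$.

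\textbf{Step 1 (bottom-up indicators).} Because exactly one leaf is $1$, the indicator of $v$ is the XOR of the indicators of its two children. We compute $a_v$ level by level from the leaves upward, using two CNOTs per internal node into a fresh zero ancilla. Each level has depth $O(1)$, so the total depth is $O(n)$. The number of gates is $O(2^n)$, since the levels form a geometric series.

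\textbf{Step 2 (read off the bits).} The $k$-th bit of $j$ (counting from the most significant) equals the XOR of $a_w$ over all nodes $w$ at depth $k$ that are right children. Indeed, exactly one node at that depth is an ancestor of the marked leaf, and it is a right child iff $j_k=1$. For each $k$ we XOR these $2^{k-1}$ values into an output qubit $b_k$ with a CNOT parity tree of depth $O(n)$. The node sets used for different $k$ lie at different depths and are therefore disjoint, so all $n$ parity computations run in parallel. The total depth is $O(n)$ and the total size is $\sum_k O(2^{k})=O(2^n)$. A parity tree can XOR into an output while restoring its inputs, for example by computing into scratch ancillas and then uncomputing them.

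\textbf{Step 3 (top-down erasure).} Now $a_v$ and the leaves are known Boolean functions of $b=j$, so we erase them in the reverse order of generation. First, in parallel for all $k$, we fan out $b_k$ into $2^{k}$ fresh ancilla copies with CNOT trees of depth $O(k)$. The total size is $O(2^n)$, because there are $2^{k}$ copies at depth $k$, summing to $O(2^n)$. Then we sweep from the root downward. The root satisfies $a_{\mathrm{root}}=1$ and is cleared by an $X$ gate. For a node $v$ at depth $k-1$ with children $v0,v1$, we have $a_{v1}=a_v\wedge b_k$ and $a_{v0}=a_v\wedge\neg b_k$. Hence we must clear the children before clearing $v$ itself.

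The concrete order is: first clear $a_{v1}$ by a Toffoli on $(a_v,\text{copy of }b_k)$; then clear $a_{v0}$ with an $X$-conjugated Toffoli; and clear $a_v$ itself only after both children are cleared, using the fact that $a_v=a_{v0}\oplus a_{v1}$.

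Doing it bottom-up is actually cleaner. Starting from the deepest internal level, we XOR $a_{v0}\oplus a_{v1}$ back into $a_v$, which inverts Step 1 on the internal nodes. This leaves only the leaves set. Each leaf $e_i$ equals $\bigwedge_k [b_k=i_k]$, and the path product must be recomputed. The plan is therefore:
\begin{itemize}
\item Invert Step 1 entirely, leaving the original one-hot leaves alongside $b$.
\item Recompute the indicators top-down from $b$ alone into the same tree ancillas, using one Toffoli per node with a distinct copy of $b_k$. Each level has depth $O(1)$, giving $O(n)$ levels in total.
\item XOR the recomputed leaf indicators into the original leaves, which clears them.
\item Uncompute the top-down tree and the fan-out copies.
\end{itemize}
Every stage has depth $O(n)$ and size $O(2^n)$ and uses $O(2^n)$ ancillas. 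A final relabeling of wires places $b=\ket j$ in the first $n$ positions, with zeros elsewhere.

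The main obstacle is Step 3. A naive per-level fan-out of $b_k$ inside the sweep would cost $\Theta(n^2)$ depth. The point that must be checked carefully is that all fan-outs are performed in parallel before the sweep and supply enough copies. With $2^{k-1}$ nodes at depth $k-1$, each of which uses $b_k$ once, this gives $2^{k-1}$ uses, which the $2^{k}$ copies cover. The sweep then needs only constant depth per level, which keeps the depth at $O(n)$ and the size at $O(2^n)$.
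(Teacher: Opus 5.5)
The paper does not prove this lemma. It imports it from \cite{SunEtAl2023}, and internally it only uses the lemma in the reverse, binary-to-one-hot direction: the map $\mathcal B_\ell$ in Theorem~\ref{thm:CQ}, and the enable bits in Lemma~\ref{lem:s4c-schedule}. So there is no in-paper argument to compare against. Your construction is a correct, self-contained replacement for the citation, and since your circuit is unitary, its inverse is exactly the map the paper needs.

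The key steps all check out:
\begin{itemize}
\item The bottom-up XOR tree is valid because at most one leaf is set. You only need correctness on one-hot inputs, and that is all the lemma asks for.
\item $b_k$ is the parity of the $2^{k-1}$ right-child indicators at depth $k$. These index sets are disjoint for different $k$, so the parity trees can run in parallel.
\item The erasure stage works: invert Step~1, recompute the prefix indicators from $b$ using copies of every $b_k$ fanned out in advance, XOR the leaf level into the one-hot register, then uncompute. This gives depth $O(n)$, size $O(2^n)$, and $O(2^n)$ clean ancillas.
\end{itemize}

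Three small repairs are needed.
\begin{itemize}
\item Delete the ``sweep from the root downward'' paragraph, because it contradicts itself. Clearing the root first removes the control needed at deeper levels. And once both children are cleared, $a_{v0}\oplus a_{v1}=0$ can no longer be used to clear $a_v$.
\item If you want a direct version without the invert-and-recompute detour, go leaves first. For $k=n,n-1,\ldots,1$, clear $a_{v1}$ and $a_{v0}$ with Toffolis controlled on the still-present parent $a_v$ and a copy of $b_k$; the second Toffoli is $X$-conjugated. Then apply $X$ to the root.
\item The recomputed leaf level needs either $2^n$ extra scratch qubits or last-level Toffolis that target the original leaves directly. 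Both stay within $O(2^n)$.
\item Replace the final ``relabeling'' with an explicit constant-depth layer of $n$ SWAPs. These move $b$ into the first $n$ positions of the now all-zero main register, so that the designated ancilla qubits are the ones returned to $\ket0$.
\end{itemize}
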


\begin{lemma}[Fan-out gate, \cite{LowEtAl2024}]\label{lem:fanout-standard}
    An $n$-qubit fan-out gate $F_{n-1}$ can be implemented by a standard circuit of depth $O(\log(n))$ and size $O(n)$, without using ancillary qubits.
\end{lemma}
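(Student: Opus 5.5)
The plan is to reduce the fan-out gate to a parity computation by a Hadamard conjugation, and then compute that parity in place with a balanced binary tree of CNOT gates that is uncomputed afterwards.

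\textbf{Hadamard conjugation.} The fan-out gate $F_{n-1}$ is the product of the pairwise commuting gates $\mathrm{CNOT}(0\to i)$ for $i=1,\dots,n-1$. We use the identity
\[
(H\otimes H)\,\mathrm{CNOT}(c\to t)\,(H\otimes H)=\mathrm{CNOT}(t\to c).
\]
Hence
\[
H^{\otimes n}F_{n-1}H^{\otimes n}=\prod_{i=1}^{n-1}\mathrm{CNOT}(i\to 0).
\]
This is the parity gate
\[
P:\ket{x_0}\ket{x_1,\dots,x_{n-1}}\mapsto\ket{x_0\oplus x_1\oplus\cdots\oplus x_{n-1}}\ket{x_1,\dots,x_{n-1}}.
\]
So it suffices to implement $P$ in depth $O(\log n)$ and size $O(n)$ without ancillas. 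Then $F_{n-1}=H^{\otimes n}PH^{\otimes n}$, which adds only two layers and $2n$ single-qubit gates.

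\textbf{In-place parity tree.} Index the qubits $x_1,\dots,x_{n-1}$ by $1,\dots,n-1$. For rounds $k=1,\dots,\lceil\log(n-1)\rceil$, apply in parallel $\mathrm{CNOT}(j+2^{k-1}\to j)$ for every $j\equiv 1\pmod{2^k}$ with $j+2^{k-1}\le n-1$. By induction on $k$, after round $k$ the qubit $j\equiv1\pmod{2^k}$ holds the XOR of the original bits $x_j,\dots,x_{\min(j+2^k-1,\,n-1)}$. The CNOTs within a round act on disjoint qubit pairs, so each round has depth $1$. After the last round, qubit $1$ holds $x_1\oplus\cdots\oplus x_{n-1}$. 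A single $\mathrm{CNOT}(1\to0)$ then adds this parity into $x_0$. Finally, applying the rounds in reverse order undoes the tree, since each CNOT is self-inverse and the tree never touches qubit $0$. This restores $x_1,\dots,x_{n-1}$.

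\textbf{Costs.} The total depth is $2\lceil\log(n-1)\rceil+1+2=O(\log n)$. The number of CNOTs in the tree is $n-2$ per pass, so the size is $O(n)$. No ancillary qubits are used.

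\textbf{Main obstacle.} The only step needing care is the correctness of the tree: the indexing must ensure that within each round the pairs are disjoint and that the accumulated XOR ranges tile $\{1,\dots,n-1\}$ without overlap. The induction above handles this, including the case where $n-1$ is not a power of two (the missing partners are simply skipped). The Hadamard conjugation identity is a direct check on the four computational basis states.
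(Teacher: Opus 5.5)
Your proof is correct. The paper gives no argument for this lemma; it imports it from \cite{LowEtAl2024}. So your construction is a self-contained replacement for that citation rather than a variant of an in-paper proof.

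The key checks all go through:
\begin{itemize}
\item $F_{n-1}=\prod_{i\ge1}\mathrm{CNOT}(0\to i)$ conjugates to the parity gate.
\item In round $k$ the targets ($j\equiv1\pmod{2^k}$) and controls ($j\equiv1+2^{k-1}\pmod{2^k}$) lie in different residue classes, so each round is a single layer.
\item Every qubit $j\ge2$ is a control exactly once, which gives $n-2$ CNOTs per pass.
\item The compute--CNOT--uncompute pattern works for arbitrary values of $x_1,\dots,x_{n-1}$, not only zeros.
\end{itemize}
That last feature is what the paper actually needs. In Lemma~\ref{lem:LQ}, $F_{q-1}$ acts on the nonzero register $\ket{u}$. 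For zero-initialized targets, as in Lemma~\ref{lem:diag-finite}, a plain doubling tree would already suffice.

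Two small refinements:
\begin{itemize}
\item The Hadamard layers are only a proof device. Writing your parity circuit as $g_L\cdots g_1$, we have $H^{\otimes n}(g_L\cdots g_1)H^{\otimes n}=(H^{\otimes n}g_LH^{\otimes n})\cdots(H^{\otimes n}g_1H^{\otimes n})$. So you can simply reverse the direction of every CNOT in your tree circuit. This gives a CNOT-only fan-out of depth $2\lceil\log(n-1)\rceil+1$ with $2n-3$ gates.
\item The degenerate case $n=1$, where $F_0=\Id$ and $\log(n-1)$ is undefined, should be set aside explicitly.
\end{itemize}
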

The Fourier transform on
$L^2(\mathbb R^p)$ can be specified by its action on generalized position states as follows.
\begin{equation}
    \mathcal F\ket{\vct x}
    =
    \frac{1}{(2\pi)^{p/2}}
    \int_{\mathbb R^p}
    e^{-\iu\vct\zeta^T\vct x}
    \ket{\vct\zeta}\,d\vct\zeta.
\end{equation}
The discrete Fourier transform is well known to admit an efficient quantum circuit.

\begin{lemma}[Quantum Fourier transform (QFT), \cite{MooreNilsson2001}]\label{lem:QFT-standard-circuit}
    Any $q$-qubit quantum Fourier transform $\mathcal{F}_{2^q}$ defined by
    \begin{equation}\label{eq:QFT-standard-circuit}
        \mathcal{F}_{2^q}\ket{r}=\frac{1}{\sqrt{2^q}}\sum_{k=0}^{2^q-1} e^{2\pi\iu kr/2^q}\ket{k},\qquad 0\le r <2^q,
    \end{equation}
    can be implemented by a standard circuit of depth $O(q)$ and size $O(q^2)$ without using ancillary qubtis.
\end{lemma}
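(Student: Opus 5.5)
We follow the standard Coppersmith/Nielsen--Chuang construction of $\mathcal F_{2^q}$ and then reschedule its gates into $O(q)$ layers. Write $r=r_1r_2\cdots r_q$ in binary, with $r_1$ the most significant bit. Expanding the exponent $e^{2\pi\iu kr/2^q}$ bit by bit in $k$ gives the product formula
\[
\mathcal F_{2^q}\ket r=\bigotimes_{s=1}^{q}\frac{\ket0+e^{2\pi\iu\,0.r_{q-s+1}\cdots r_q}\ket1}{\sqrt2},
\]
where the $s$-th factor is output qubit $s$.

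This yields the usual circuit. For $j=1,\dots,q$, apply a Hadamard $H_j$ to qubit $j$. Then, for each $k>j$, apply the controlled phase $CR_{k-j+1}=\diag(1,1,1,e^{2\pi\iu/2^{k-j+1}})$ on qubits $j,k$. After all these gates, qubit $j$ holds $\ket0+e^{2\pi\iu\,0.r_j\cdots r_q}\ket1$ (up to $1/\sqrt2$), which is output factor $s=q-j+1$. A final qubit-order reversal then puts every factor in the position required by the formula. This uses $q$ Hadamards, $q(q-1)/2$ two-qubit controlled phases, and $\lfloor q/2\rfloor$ swaps. All are one- or two-qubit gates, and no ancilla is used. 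The size is therefore $O(q^2)$.

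For depth, we do not run the gates in textbook order. We assign times instead: $H_j$ at time $2j$, the controlled phase on $\{j,k\}$ (with $j<k$) at time $j+k$, and the reversal as one extra final layer of disjoint swaps $(i,q+1-i)$.

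\begin{itemize}
\item \textbf{Each time step is a valid layer.} Two distinct pairs $\{j,k\}\ne\{j',k'\}$ with $j+k=j'+k'$ cannot share an element, since sharing one element and having equal sums forces the other element to be equal too. A Hadamard $H_j$ at time $2j$ cannot touch a pair $\{j,k\}$ at the same time, because that would need $k=j$. So the gates at each time act on disjoint qubits.
\item \textbf{The schedule preserves the circuit.} All controlled phases are diagonal in the computational basis, so they commute with one another. A gate on $\{j,k\}$ fails to commute only with $H_j$ and $H_k$. The textbook circuit requires $H_j$ before the pair gate, and $H_k$ after it. The schedule respects both: $2j<j+k<2k$. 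Any two gates that change relative order under the new schedule therefore commute, so the product is unchanged.
\end{itemize}

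The total depth is at most $2q+1=O(q)$.

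The only step that needs care is this commutation argument: we must check that no Hadamard is moved past a gate on its own qubit. The inequality $2j<j+k<2k$ settles this. The rest is the standard product-formula verification.
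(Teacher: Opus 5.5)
Your proof is correct: the product formula and the textbook circuit are right. Placing $H_j$ at time $2j$ and the controlled phase on $\{j,k\}$ at time $j+k$ gives valid layers, and the inequality $2j<j+k<2k$ keeps every non-commuting pair in its original order, which yields depth at most $2q+1$, size $O(q^2)$, and no ancillas. The paper gives no proof of this lemma and only cites Moore and Nilsson; your rescheduling of the standard circuit is the usual argument behind that cited linear-depth bound, so there is nothing further to compare.
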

All the ancillary qubits are restored to zeros at the end of the circuits.

\section{Linear-depth circuit for approximate unitary synthesis}
\label{sec:geometry}
In this section, we construct a linear-depth circuit for approximate unitary synthesis using exponentially many gates and ancillary qubits.

Our approach is to transfer the action of the target unitary from amplitudes to coordinates, where we can exploit parallel coordinate updates.
The key is an encoding that makes a coordinate rotation reproduce
the desired transformation of the original amplitudes.

Specifically, we associate the target unitary $U$ with a real
orthogonal matrix $R$ and choose a fixed encoding $C$ such that,
in the continuous description,
\[
    \mathcal P_R C = C U,
\]
where $\mathcal P_R$ is the action on wavefunctions induced by
the coordinate transformation $\vct z\mapsto R\vct z$.
We can therefore implement $U$ by encoding the input, transforming
the coordinates, and decoding. Three questions make this idea concrete.
\begin{enumerate}
    \item
    \textbf{How can amplitude transformations become coordinate
    transformations?}
    We encode each basis state as a Gaussian wavefunction whose
    behavior under rotations gives the required correspondence.
    Section~\ref{subsec:gaussian-coordinate} explains this mechanism;
    Section~\ref{subsec:finite-core} constructs the finite encoding
    circuit.

    \item
    \textbf{How do we represent a complex unitary by a real
    orthogonal matrix?}
    We double the coefficient space while preserving the full
    action of $U$. The coefficient embedding uses only one ancilla
    and a fixed one-qubit gate, as described in
    Section~\ref{subsec:gaussian-coordinate}.

    \item
    \textbf{How do we implement the coordinate transformation
    in small depth?}
    We decompose it into three shears and a signed swap.
    Section~\ref{subsec:coordinate-shears} presents a relatively
    simple and intuitive spectral implementation that already
    achieves the asymptotically optimal depth $O(n)$.
    Section~\ref{sec:spectral} later improves its size and
    ancilla-count requirements.
\end{enumerate}

Combining these ingredients with the error analysis in
Section~\ref{subsec:core-error} gives the following bounds.

\begin{theorem}[Linear-depth approximate unitary synthesis]
\label{thm:core-construction}
Let $n\ge2$ and $0<\epsilon\le1/2$. Any $n$-qubit unitary $U\in\mathrm{U}(2^n)$ can be $\epsilon$-approximately implemented by a circuit of 
\begin{align*}
    & \text{depth:~~ $O(n+\log \log(1/\epsilon))$},\\ 
    & \text{size,~ancilla-count:~~ $O(4^n \log^2 (n+\log(1/\epsilon))+2^n\log(1/\epsilon))$,}
\end{align*}
\end{theorem}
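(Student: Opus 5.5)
Write $d=2^n$, and follow the three-stage plan of the section: realify $U$, encode the input as a Gaussian-weighted linear wavefunction, apply the induced coordinate rotation via shears, and decode.

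\textbf{Realification.} Write $U=A+\iu B$ and set
\[
R_0=\begin{pmatrix}A&-B\\ B&A\end{pmatrix}\in O(2d).
\]
One ancilla with a fixed one-qubit gate maps $\ket\psi$ to a real coefficient vector on which $R_0$ acts as $U$. To allow the shear decomposition, embed $R_0$ into $R=R_0\oplus \Id$ or $R_0\oplus R_0^T$ on $\ell=4d$ coordinates. Any orthogonal matrix can then be written as three block shears and a signed swap. Concretely, use
\[
\begin{pmatrix}X&Y\\ Z&W\end{pmatrix}
\]
with the standard "rotation = shear--shear--shear" identity, valid whenever the relevant block is invertible. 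Embedding into the doubled space guarantees this, for example by using blocks of the form $R_0$ and $\Id$ in a $2\times 2$ rotation-by-$\pi/2$ style factorization, where the $H$ blocks are $\pm R_0,\pm R_0^T$ with operator norm $1$.

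\textbf{Encoding.} Each coordinate is a register of $q$ qubits holding a symmetric grid of spacing $h$ on $[-L,L]$. The encoded state of $\ket j$ is
\[
\bigotimes_{i\ne j}\ket{g}\otimes\ket{g_1}_j,
\]
where $g$ is the discretized even Gaussian and $g_1$ the discretized odd profile $xe^{-x^2/2}$, normalized. These are exactly orthogonal on a symmetric grid, so the map is an isometry.

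The encoding circuit proceeds as follows:
\begin{enumerate}
\item Convert binary $j$ to one-hot using the inverse of Lemma~\ref{lem:one-hot}, at depth $O(n)$.
\item In parallel, on each coordinate register, apply a controlled preparation "if the flag is set, prepare $g_1$, else $g$". This uses Lemma~\ref{lem:controlled-state-tradeoff} with $k=1$, costing depth $O(q)$ and size $O(2^q)$ per coordinate.
\item Uncompute the flags, which are recoverable from the parity of each coordinate register, again via Lemma~\ref{lem:one-hot}.
\end{enumerate}

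\textbf{Shears.} A shear $(x,y)\mapsto(x+Hy,y)$ is implemented by a QFT on each $x$ register, giving depth $O(q)$ in parallel (Lemma~\ref{lem:QFT-standard-circuit}). This is followed by the diagonal phase $e^{-\iu\zeta^T H y}$ and an inverse QFT. The phase is a product of two-register phases $e^{-\iu H_{ab}\zeta_a y_b}$, and each factor is a product of $q^2$ bit-pair controlled phases. To parallelize across the $\ell^2$ pairs $(a,b)$, fan out copies of every register bit using Lemma~\ref{lem:fanout-standard}, at depth $O(\log(\ell q))=O(n+\log q)$. All pair-phases then act on disjoint copies in depth $O(1)$, after which we unfan. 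This costs $O(\ell^2q^2)=O(4^nq^2)$ gates and ancillas. The signed swap is a register permutation with sign flips, which has constant depth.

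\textbf{Decoding and error.} Apply the inverse of the encoding circuit. Because the Gaussian is rotation-invariant, the continuous identity $\mathcal P_R\mathcal C=\mathcal C U$ holds. The finite error is bounded by the sum, over the $O(1)$ stages and the $\ell$ coordinates, of three contributions: truncation $e^{-L^2/4}$, aliasing or Fourier discretization error of order $e^{-c(\pi/h)^2}$, and grid-sampling error. Requiring $\ell\cdot(\text{per-coordinate error})\le\epsilon$ gives $L,1/h=O(\sqrt{n+\log(1/\epsilon)})$, since the shear norms are $O(1)$ and therefore $L$ grows only by a constant factor. Hence
\[
q=O(\log(n+\log(1/\epsilon))).
\]
Substituting into the cost bounds gives depth $O(n+q)=O(n+\log\log(1/\epsilon))$, and size and ancilla count $O(4^nq^2)$ plus the encoding cost $O(\ell 2^q)$.

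\textbf{Main obstacle.} The hardest part is the rigorous discretization analysis. The grid phase $e^{-\iu\zeta Hy}$ with discrete $\zeta,y$ must reproduce the continuous shear on states that are near-Gaussian after previous shears, with errors not compounding beyond a constant factor per stage. The encoding cost term $2^q$, for which the theorem records $2^n\log(1/\epsilon)$, suggests the authors prepare the local profiles more directly. My first attempt would be to pick grid parameters per stage so that each intermediate state stays exponentially concentrated, and to use Poisson summation to bound the aliasing error.
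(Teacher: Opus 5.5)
Your circuit and resource count match the paper's. This covers the realification $R_U$, the doubled matrix $\diag(R_U,R_U^T)=T_{R_U}S_{-R_U^T}T_{R_U}J_0$, exactly orthogonal even/odd profiles on a half-offset symmetric grid, and one-hot conversion followed by parallel controlled preparation. It also covers the Fourier--diagonal--inverse-Fourier shears with fanned-out bit copies, depth $O(n+q)$, and size/ancillas $O(\ell^2q^2+\ell Q)$.

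\textbf{The gap.} What is missing is the discretization error analysis. This is what justifies $B^2=\Theta(n+\log(1/\epsilon))$, and hence $q=O(\log(n+\log(1/\epsilon)))$. You list error sources and assert the scaling, but you leave open how errors from different stages combine and whether the intermediate states stay controlled. The paper's mechanism is as follows.
\begin{itemize}
\item \emph{Telescoping against ideal states.} Use sampling maps on mixed grids: position $h(k+\tfrac12)$, frequency $\widehat hk$, with $h\widehat h=2\pi/Q$. Compare each finite step $A_j$ with the continuous step applied to the \emph{ideal} wavefunction, i.e.\ bound $\norm{A_j\mathsf S_{j-1}f_{j-1}-\mathsf S_jf_j}$. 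Telescoping plus unitarity of the $A_j$ makes these local errors simply add. So nothing compounds, and one fixed grid serves every stage; no per-stage grid choice is needed.
\item \emph{Exact steps.} On matched grids, the diagonal phase is literally the continuous phase evaluated at grid points. The signed swap maps the symmetric grid to itself via $t_{-k-1}=-t_k$. Both therefore contribute \emph{zero} error.
\item \emph{Fourier steps.} Only the $6p$ one-coordinate Fourier steps err, each by at most $12e^{-B^2/256}$. The bound comes from folding the position samples into one period (a tail bound), then applying Poisson summation to the folded vector (a second tail bound; the half offset produces signs $(-1)^m$).
\item \emph{Closure lemma.} The tail bounds require that every intermediate function has the form $\alpha(\vct\beta^T\vct w)e^{-\vct w^TA\vct w/2}$, with $A$ complex symmetric and $\Id/64\preceq\operatorname{Re}A\preceq64\Id$. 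This includes functions Fourier-transformed on an arbitrary subset of one block. It follows from $\norm{M^{\pm1}}_{\op}\le8$ for the partial shear products, plus a Schur-complement bound for the partial Fourier transform.
\item \emph{Normalization.} $\mathsf Sf_{V'V\vct c}=\rho_Q\ket{\Psi_Q(\vct c)}$, with the same $\rho_Q\ge1/2$ for $\vct c$ and $U\vct c$. This converts the bound on unnormalized samples into one on normalized finite states, giving total error at most $100\ell e^{-B^2/256}$.
\end{itemize}
Your instincts (Poisson summation, bounded shear norms) point the right way, but these are the pieces that actually make the argument close.

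\textbf{Smaller corrections.}
\begin{itemize}
\item $R_0\oplus\Id$ is not an option. Forcing $T_{H_1}S_KT_{H_2}J_0=\diag(X,Y)$ gives $K=-Y$, $H_1=H_2=Y^{-1}$, and $X=Y^{-1}$. So only $R_0\oplus R_0^T$ is reachable.
\item Your flags cannot be erased by a classical bit-parity computation or by Lemma~\ref{lem:one-hot}. The flag is the eigenvalue of the involution $X^{\otimes q}$ on its register. Erase it with a Hadamard on the flag, a flag-controlled $X^{\otimes q}$ (one fan-out), and another Hadamard. The paper avoids separate flags altogether: in $\mathcal L_Q=F_{q-1}(H\otimes\Id)P$, a $(1,q-1)$-CQSP prepares half-profile magnitudes and the one-hot bit itself becomes the sign qubit.
\item There is no discrepancy in the encoding cost, so no ``more direct'' preparation is needed; the paper uses the same CQSP-based preparation. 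Indeed, $\ell2^q=\ell Q=O(2^n(n+\log(1/\epsilon)))=O(4^n+2^n\log(1/\epsilon))$ already matches the theorem.
\end{itemize}
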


In particular, for $\epsilon=2^{-cn}$ with fixed $c>0$,
these bounds simplify to
\[
    \text{depth:}\quad O(n),\qquad
    \text{size, ancilla-count:}\qquad O(4^n\log^2 n),.
\]
This simple core construction already achieves optimal worst-case
depth at exponential accuracy. Its size and ancillary count are
within factors $O(\log^2 n)$ and $O(n\log^2 n)$ of their respective
worst-case lower bounds. These gaps will be closed in Section \ref{sec:spectral}.

\subsection{From amplitudes to coordinate transformations}
\label{subsec:gaussian-coordinate}
 In this section, we represent a unitary operation as a continuous coordinate transformation acting on suitably chosen continuous-variable quantum states.
 
The encoding represents each computational-basis state by a
wavefunction. For a single qubit, consider two real coordinates
$(x,y)$ and the common Gaussian
\[
    G_2(x,y)=\pi^{-1/2}e^{-(x^2+y^2)/2}.
\]
We encode the two basis states as
\begin{equation}
    \ket0\longmapsto\phi_0(x,y)=\sqrt2\,xG_2(x,y),
    \qquad
    \ket1\longmapsto\phi_1(x,y)=\sqrt2\,yG_2(x,y).
    \label{eq:gaussian-two-coordinate-code}
\end{equation}
These functions give probability amplitudes at each coordinate;
their squared magnitudes give probability densities.
As Figure~\ref{fig:gaussian-basis-functions} illustrates,
$\phi_0$ changes sign across $x=0$, whereas $\phi_1$ changes sign
across $y=0$. Both are normalized, and their inner product vanishes
by parity. Their common Gaussian factor will allow coordinate
rotations to mix them in a particularly simple way.

\begin{figure}[htbp]
    \centering
    \pgfplotsset{gaussianbasis/.style={
        width=.46\textwidth, height=.37\textwidth,
        view={45}{40}, z buffer=sort,
        domain=-3:3, y domain=-3:3,
        samples=35, samples y=35,
        xmin=-3, xmax=3, ymin=-3, ymax=3,
        zmin=-.55, zmax=.55,
        xtick={-2,0,2}, ytick={-2,0,2}, ztick={-.4,0,.4},
        xlabel={$x$}, ylabel={$y$}, zlabel={amplitude},
        tick label style={font=\scriptsize},
        label style={font=\small}, title style={font=\small},
        point meta min=-.5, point meta max=.5,
        colormap={signedamp}{
            rgb(0cm)=(.15,.35,.75);
            rgb(1cm)=(.97,.97,.97);
            rgb(2cm)=(.80,.18,.15)
        },
        grid=major, grid style={gray!20},
    }}
    \begin{tikzpicture}
        \begin{axis}[gaussianbasis,title={$(a)\;\phi_0(x,y)$}]
            \addplot3[surf,shader=interp]
                {sqrt(2/pi)*x*exp(-(x^2+y^2)/2)};
            \addplot3[black,thin] coordinates {(0,-3,0) (0,3,0)};
        \end{axis}
    \end{tikzpicture}\hfill
    \begin{tikzpicture}
        \begin{axis}[gaussianbasis,title={$(b)\;\phi_1(x,y)$}]
            \addplot3[surf,shader=interp]
                {sqrt(2/pi)*y*exp(-(x^2+y^2)/2)};
            \addplot3[black,thin] coordinates {(-3,0,0) (3,0,0)};
        \end{axis}
    \end{tikzpicture}
    \caption{The two encoded basis wavefunctions, shown on identical
    coordinate and amplitude scales. Red and blue indicate positive
    and negative amplitudes, respectively; the black lines mark their
    zeros at $x=0$ and $y=0$. The basis label determines which
    coordinate multiplies the common Gaussian.}
    \label{fig:gaussian-basis-functions}
\end{figure}

The definition extends to $n$ qubits, for which the $d=2^n$ computational basis states are encoded as $d$-variate Gaussian multiplied by coordinate variables.
\begin{definition}[Gaussian encoder]
Let $d=2^n$ and $\vct z=(z_0,\ldots,z_{d-1})^T$ be the real
coordinates.
Define the Gaussian encoder on computational basis states by
\begin{equation}
    \mathsf C_d\ket j=\phi_j, \text{ defined by }
    \phi_j(\vct z)=\sqrt2\,z_jG_d(\vct z),
    \quad 0\le j<d,
\label{eq:gaussian-coordinate-code}
\end{equation}  
where 
\begin{equation}
     G_d(\vct z)=\pi^{-d/4}e^{-\norm{\vct z}^2/2}.
\end{equation}
\end{definition}

The density $G_d^2$ has independent centered coordinates of
variance $1/2$. Hence
\begin{equation}
    \langle\phi_i,\phi_j\rangle
    =2\int_{\R^d}z_i z_jG_d(\vct z)^2\,d\vct z
    =\delta_{ij}.
    \label{eq:gaussian-coordinate-orthogonality}
\end{equation}
Thus the encoding preserves normalization and inner products.

The encoding extends to all $n$-qubit states by linearity. More specifically, take an arbitrary input state
\[\ket{\vct\psi}=\sum_{j=0}^{d-1}c_j\ket j,\]
with coefficient vector
$\vct c = (c_0,c_1,\ldots,c_{d-1})^T\in\C^d$. All basis wavefunctions share the same coordinate
space, so their amplitudes add pointwise:
\begin{equation}
\begin{aligned}
    f_{\vct c}(\vct z)
    &=\sum_{j=0}^{d-1}c_j\phi_j(\vct z)
      =\sqrt2\sum_{j=0}^{d-1}c_jz_jG_d(\vct z)\\
    &=\sqrt2\left(c_0z_0+c_1z_1+\cdots+c_{d-1}z_{d-1}\right)
      G_d(\vct z)\\
    &=\sqrt2\,(\vct c^T\vct z)G_d(\vct z).
\end{aligned}
\label{eq:gaussian-linear-wavefunction}
\end{equation}
The input amplitudes are now the coefficients of a linear
function of the coordinates. The transpose in $\vct c^T\vct z$
preserves linearity in the complex amplitudes.

In the continuous coordinate basis, the encoded state is
\[
    \ket{\Phi_{\vct c}}
    =\mathsf C_d\ket{\vct\psi}
    =\int_{\R^d}f_{\vct c}(\vct z)\ket{\vct z}\,d\vct z.
\]
For an orthogonal matrix $R\in\R^{d\times d}$, let $\mathcal P_R$
denote the coordinate transformation
\begin{equation}\label{eq:orth-PR}
    \mathcal P_R\ket{\vct z}=\ket{R\vct z}.
\end{equation}
This moves each coordinate label while preserving its amplitude.
Consequently,
\begin{align}
    \mathcal P_R\ket{\Phi_{\vct c}}
    &=\int_{\R^d}
      \sqrt2\,(\vct c^T\vct z)G_d(\vct z)
      \ket{R\vct z}\,d\vct z
      \notag\\
    &=\int_{\R^d}
      \sqrt2\,((R\vct c)^T(R\vct z))G_d(R\vct z)
      \ket{R\vct z}\,d\vct z
      \notag\\
    &=\int_{\R^d}
      \sqrt2\,((R\vct c)^T\vct w)G_d(\vct w)
      \ket{\vct w}\,d\vct w
      =\ket{\Phi_{R\vct c}}.
    \label{eq:gaussian-coordinate-action}
\end{align}
The second equality uses $R^TR=\Id$ and the rotational invariance
of the Gaussian. The third simply relabels $R\vct z$ as $\vct w$;
orthogonality ensures that the volume element is unchanged.
Thus,
\begin{equation}
    \mathcal P_R\mathsf C_d=\mathsf C_dR.
    \label{eq:gaussian-orthogonal-intertwining}
\end{equation}
Decoding on the encoded subspace gives
$\sum_j(R\vct c)_j\ket j$: moving the coordinates has implemented
$R$ on the original amplitudes.

The intuition behind the encoding rests on three key properties:
\begin{enumerate}
    \item
    \textbf{A linear factor carries the input amplitudes.}
    Pairing each amplitude $c_j$ with the coordinate function $z_j$
    produces $\vct c^T\vct z$. Orthogonality gives
    $\vct c^T\vct z=(R\vct c)^T(R\vct z)$, allowing a coordinate
    transformation to become the same transformation of the
    amplitude vector.

    \item
    \textbf{A radial factor is unchanged by rotations.}
    The common Gaussian depends only on $\norm{\vct z}$, so it
    remains invariant under the coordinate transformation.
    The transformed wavefunction therefore stays within the
    same family of encoded states.

    \item \textbf{The encoding supports controlled discretization.} To obtain a finite-qubit implementation, we subsequently discretize the continuous encoding and coordinate transformations. Gaussian decay and smoothness control truncation and grid approximation errors, respectively. The product structure enables parallel local encoding, while parity-preserving discretization on a symmetric grid maintains exact orthogonality of the encoded basis states.
\end{enumerate}

Other suitably normalized radial profiles can provide the second
property. We choose the Gaussian for its additional product structure:
\[
    \phi_j(\vct z)=e(z_j)\prod_{\nu\ne j}g(z_\nu),
    \qquad
    g(t)=\pi^{-1/4}e^{-t^2/2},\quad e(t)=\sqrt2\,t\,g(t).
\]
This factorization enables parallel local preparation in
Section~\ref{subsec:finite-core}.
We next embed an arbitrary complex unitary into a real orthogonal
transformation, extending the correspondence to the general target $U$.

\paragraph{From complex unitaries to real orthogonal matrices.}
Write $U=A+\iu B$, with $A,B\in\R^{d\times d}$, and set $p=2d$.
Define
\begin{equation}
    R_U=\begin{pmatrix}A&-B\\ B&A\end{pmatrix}\in\R^{p\times p}.
    \label{eq:unitary-realification}
\end{equation}
The unitarity of $U$ implies $R_U^TR_U=\Id_p$, so $R_U$ is
real orthogonal.

To connect its action to $U$, prepare one ancillary qubit in
$(\ket0-\iu\ket1)/\sqrt2$. With this qubit ordered first,
the resulting encoding is
\begin{equation}
    V\ket{\vct\psi}
    =\frac{\ket0-\iu\ket1}{\sqrt2}\otimes\ket{\vct\psi},
    \qquad
    V=\frac1{\sqrt2}
      \begin{pmatrix}\Id_d\\-\iu\Id_d\end{pmatrix}.
    \label{eq:complex-coefficient-embedding}
\end{equation}
Direct multiplication gives
\begin{equation}
    R_UV
    =\frac1{\sqrt2}
      \begin{pmatrix}A+\iu B\\ B-\iu A\end{pmatrix}
    =\frac1{\sqrt2}
      \begin{pmatrix}U\\-\iu U\end{pmatrix}
    =VU.
    \label{eq:realification-intertwining}
\end{equation}
Thus $R_U$ applies $U$ to the input while leaving the ancillary
state unchanged. The preparation uses the fixed one-qubit gate
$W=S^\dagger H$, where $S=\diag(1,\iu)$; applying $W^\dagger$
afterward restores the ancilla to $\ket0$, as illustrated in
Figure~\ref{fig:unitary-realification}.

\begin{figure}[htbp]
    \centering
    \begin{tikzpicture}[
        x=1cm,y=1cm,
        gate/.style={draw,fill=white,minimum width=8mm,
                     minimum height=7mm},
        every node/.style={font=\normalsize}
    ]
        \draw (0,0) -- (6,0);
        \draw (0,-1.1) -- (6,-1.1);

        \node[left] at (0,0) {$\ket0$};
        \node[left] at (0,-1.1) {$\ket{\vct\psi}$};

        \node[gate] at (1.2,0) {$W$};
        \node[gate,minimum width=17mm,minimum height=19mm]
            at (3,-.55) {$R_U$};
        \node[gate] at (4.8,0) {$W^\dagger$};

        \node[right] at (6,0) {$\ket0$};
        \node[right] at (6,-1.1) {$U\ket{\vct\psi}$};

        \draw (.4,-1.2) -- (.55,-1);
        \node[above,font=\small] at (.48,-1) {$n$};
    \end{tikzpicture}
    \caption{One fixed ancillary state converts the action of $R_U$
    into that of $U$. The middle block is realized, at the continuous
    level, by Gaussian encoding, the coordinate transformation
    $\mathcal P_{R_U}$, and decoding.}
    \label{fig:unitary-realification}
\end{figure}

Combining $V$ with the Gaussian encoder on $p=2d$ coordinates gives
$C=\mathsf C_pV$. The two correspondences now yield
\begin{equation}
    \mathcal P_{R_U}C
    =\mathsf C_pR_UV
    =\mathsf C_pVU
    =CU.
    \label{eq:gaussian-unitary-intertwining}
\end{equation}
Thus the fixed encoding $C$ turns the action of an arbitrary $U$,
including its prescribed global phase, into a real orthogonal
transformation of coordinates.

\paragraph{From real orthogonal matrices to shears.}
Although $R_U$ is orthogonal, directly rounding its action on a finite grid may destroy reversibility. We instead use an alternative real orthogonal matrix
\begin{equation}
    F_U=\operatorname{diag}(R_U,R_U^T),
\end{equation}
which preserves the logical action of $R_U$ and decomposes into three shears and a signed swap. Each shear preserves its source block,
so rounded modular translations remain exactly reversible. This yields a finite-grid approximation implementable by a quantum circuit.

Recall that $d=2^n$ and $p=2d$. For any $\vct x,\vct y\in \R^p$, define the shears and signed swap
\begin{align}
    T_H(\vct x,\vct y) & =(\vct x+H\vct y,\vct y), \qquad H \in\R^{p\times p};\label{eq:shear-T}\\
    S_K(\vct x,\vct y) & =(\vct x,\vct y+K\vct x), \qquad K \in\R^{p\times p};\label{eq:shear-S}\\
    J_0(\vct x,\vct y) & =(-\vct y,\vct x).\label{eq:signed-swap}
\end{align}
By the definitions of shears and signed swap, we can verify that for any $\vct v, \vct y \in\R^{p}$,
\[
\begin{aligned}
   (\vct x,\vct y) &
    \stackrel{J_0}{\longmapsto}(-\vct y,\vct x)
     \stackrel{T_{R_U}}{\longmapsto}(R_U\vct x-\vct y,\vct x)
    \stackrel{S_{-R_U^T}}{\longmapsto}(R_U\vct x-\vct y,R_U^T\vct y)
     \stackrel{T_{R_U}}{\longmapsto}(R_U\vct x,R_U^T\vct y)=F_U(\vct x, \vct y).
\end{aligned}
\]
Then, the orthogonal matrix $F_U$ can be decomposed as three shears and a signed swap, i.e.,
    \begin{equation}
    F_U = T_{R_U}S_{-R_U^T}T_{R_U}J_0.
    \label{eq:shear-factorization}
\end{equation}

Based on the above discussion, the following Proposition \ref{prop:transform} shows that any unitary $U$ can be implemented by a coordinate transformation defined by $F_U$ on the encoded subspace.

Define the zero-padding isometry
\[
    V':\C^p\longrightarrow\C^\ell,\qquad
    V'\vct u=(\vct u,\vct 0_p).
\]
Thus the dimensionally typed continuous encoding relevant to the
two-block shear construction is
\[
  \vct c\in\C^d
  \xrightarrow{\;V\;}V\vct c\in\C^p
  \xrightarrow{\;V'\;}(V\vct c,\vct 0_p)\in\C^\ell
  \xrightarrow{\;\mathsf C_\ell\;}
  \mathsf C_\ell V' V\vct c\in L^2(\R^\ell).
\]
The zero-padding isometry $V'$ can be implemented by adding an ancillary qubit $\ket{0}$ as the most significant bit of the address register.
\begin{proposition}\label{prop:transform}
    Let $\ell=4d=2^{n+2}$. For any $U\in \mathrm{U}(d)$, there exist an isometry $C:\mathbb{C}^d\to L^2(\R^\ell)$ such that
    \begin{equation}
        \mathcal{P}_{F_U} C=CU, \qquad\text{with}\quad F_U=\diag(R_U,R_U^T), \qquad C=\mathsf C_\ell V' V,
    \end{equation}
    where $\mathsf C_\ell$ $(Eq.~\eqref{eq:gaussian-coordinate-code})$ is the Gaussian encoder, the coordinate transformation $\mathcal{P}_{F_U}$ $(Eq.~\eqref{eq:orth-PR})$ and real orthogonal matrix $R_U$ $(Eq.~\eqref{eq:unitary-realification})$ are defined by unitary $U$. 
\end{proposition}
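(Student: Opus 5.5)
The proof chains the two intertwining identities already established, after one check that the zero padding $V'$ is compatible with $F_U$. First, $C=\mathsf C_\ell V'V$ is an isometry: $V$ is an isometry because $V^\dagger V=\tfrac12(\Id_d+\Id_d)=\Id_d$. $V'$ is an isometry since zero padding preserves inner products. $\mathsf C_\ell$ is an isometry by the orthogonality relation \eqref{eq:gaussian-coordinate-orthogonality} applied with $d$ replaced by $\ell$, extended by linearity. A composition of isometries is an isometry.

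Next, I will apply the general intertwining relation \eqref{eq:gaussian-orthogonal-intertwining} in dimension $\ell$. The derivation \eqref{eq:gaussian-coordinate-action} used only that the matrix is real orthogonal and that $G_\ell$ is radial, so it holds for any real orthogonal $\ell\times\ell$ matrix. It applies here because $F_U=\diag(R_U,R_U^T)$ is real orthogonal: $R_U$ is orthogonal by \eqref{eq:unitary-realification}, hence so is $R_U^T$, and a block diagonal matrix of orthogonal blocks is orthogonal. Thus
\[
\mathcal P_{F_U}\mathsf C_\ell=\mathsf C_\ell F_U .
\]

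The remaining step is the block identity $F_UV'=V'R_U$. For $\vct u\in\C^p$,
\[
F_U(\vct u,\vct 0_p)=(R_U\vct u,R_U^T\vct 0_p)=(R_U\vct u,\vct 0_p),
\]
so the second block $R_U^T$ never acts on the encoded subspace. Combining this with \eqref{eq:realification-intertwining}, $R_UV=VU$, gives
\[
\mathcal P_{F_U}C=\mathsf C_\ell F_UV'V=\mathsf C_\ell V'R_UV=\mathsf C_\ell V'VU=CU .
\]

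I expect no genuine obstacle. The only point needing care is that the $\ell$-dimensional encoder acts on complex coefficient vectors, while $F_U$ is real and acts through the transpose pairing $\vct c^T\vct z$. The computation \eqref{eq:gaussian-coordinate-action} already handles this, since $(F_U\vct c)^T(F_U\vct z)=\vct c^T\vct z$ holds for complex $\vct c$ whenever $F_U$ is real orthogonal.
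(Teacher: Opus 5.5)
Your proposal is correct and follows the paper's proof. Both arguments chain three facts: the Gaussian intertwining $\mathcal P_{F_U}\mathsf C_\ell=\mathsf C_\ell F_U$, the fact that $F_U$ maps $(V\vct c,\vct 0_p)$ to $(R_UV\vct c,\vct 0_p)$, and $R_UV=VU$. The paper writes this pointwise on the wavefunctions $f_{\vct b}$, whereas you write it as operator identities. Your explicit check that $C=\mathsf C_\ell V'V$ is an isometry, via $V^\dagger V=\Id_d$ and the orthonormality of the $\phi_j$, supplies a point the paper's proof leaves implicit.
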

\begin{proof}
 Let $\ket{\vct \psi}=\sum_{j=0}^{d-1}c_j\ket{j}$ and
$\vct c=(c_0,\ldots,c_{d-1})^T$.
The encoding $C$ maps state $\ket{\vct\psi}$ to
\[
  C\ket{\vct \psi}=\mathsf C_\ell V' V \ket{\vct\psi}=\int_{\R^\ell} f_{(V\vct c,\vct0_p)}(\vct z)\ket{\vct z}d{\vct z}.
\]
By Eqs.~\eqref{eq:gaussian-coordinate-action} and \eqref{eq:realification-intertwining}, we obtain
\[
\begin{aligned}
  \mathcal P_{F_U} C \ket{\vct\psi}=&\mathcal P_{F_U} \int_{\R^\ell} f_{(V\vct c,\vct0_p)}(\vct z)\ket{\vct z}d{\vct z}= \int_{\R^\ell}f_{F_U(V\vct c,\vct0_p)} (\vct z)\ket{\vct z} d \vct z\\
  =&\int_{\R^\ell}  f_{(R_UV\vct c,\vct0_p)} (\vct z)\ket{\vct z} d\vct z=\int_{\R^\ell}f_{(VU\vct c,\vct0_p)} (\vct z)\ket{\vct z} d\vct z\\
   =&\mathsf C_\ell V' V U\ket{\vct\psi}=CU\ket{\vct \psi},
\end{aligned}
\]
where $U\vct c$ is the amplitude vector of $U\ket{\vct\psi}$.
Thus, the coordinate transformation implements $U$
on the encoded subspace.
\end{proof}

We next discretize the coordinates and construct the Gaussian
encoding on finite qubit registers. Its product structure allows
the local encoding operations to be implemented in parallel.

\subsection{Implementing the Gaussian encoding on a finite grid}
\label{subsec:finite-core}

The product structure of the Gaussian reduces the encoding to a simple task: put one coordinate in the excited profile and all others in the ground profile. We implement this task by storing each coordinate in a finite register, converting the input label to a single excitation, and encoding all coordinates in parallel.

\paragraph{Finite coordinates and local profiles.}
To construct the finite Gaussian encoding, we first introduce the parameters in Table~\ref{tab:finite-grid-notation}.

\begin{table}[htbp]
    \centering
    \caption{Notation for the finite coordinate grid.}
    \label{tab:finite-grid-notation}
    \begin{tabularx}{\linewidth}{@{}lXl@{}}
        \toprule
        Symbol & Meaning & Definition or relation \\
        \midrule
        $B$
        & Half-width of the truncated coordinate interval $[-B,B)$.
        & A power of two. \\

        $h$
        & Grid spacing between adjacent coordinate values.
        & $h=B^{-1}$ \\

        $Q$
        & Number of grid points and dimension of one coordinate register.
        & $Q=2B/h=2B^2$ \\

        $q$
        & Number of qubits in one coordinate register.
        & $q=\log_2 Q$ \\

        $\mathcal I$
        & Set of signed integer labels for the grid points.
        & $\{-Q/2,\ldots,Q/2-1\}$ \\

        $k$
        & Integer label of a grid point, stored as a $q$-bit
          two's-complement word.
        & $k\in\mathcal I$ \\

        $t_k$
        & Coordinate value represented by the label $k$.
        & $t_k=h(k+\tfrac12)$ \\
        \bottomrule
    \end{tabularx}
\end{table}

A coordinate register contains $q$ qubits and represents the $Q$ equally spaced points in $[-B,B)$.
Here $k$ is an integer label, $\ket k$ is its computational-basis
representation, and $t_k$ is the coordinate value it represents.
We store $k$ in two's complement: a bit string
$b_{q-1}\cdots b_0\in\{0,1\}^q$ represents
\[
    k=-2^{q-1}b_{q-1}+\sum_{\alpha=0}^{q-2}2^\alpha b_\alpha.
\]
Then, an integer $k\in\mathcal{I}$ can be encoded as
\begin{equation}\label{eq:k-binary}
    \ket{k}:=\ket{b_{q-1},b_{q-2},\ldots,b_0}, \qquad b_{q-1},b_{q-2},\ldots,b_0\in\{0,1\}.
\end{equation}
In particular,
\begin{equation}
    t_{-k-1}=-t_k,
    \qquad
    \ket k\longmapsto\ket{-k-1}=X^{\otimes q}\ket k.
    \label{eq:finite-coordinate-negation}
\end{equation}
Thus coordinate negation requires only one layer of $X$ gates.

The following lemma defines finite-grid analogues of the two states below.
\begin{align}
    \ket{g}=&\int_{\R} g(t)\ket{t}dt,\qquad g(t)=\pi^{-1/4}e^{-t^2/2},\\
    \ket{e}=&\int_{\R} e(t)\ket{t}dt,\qquad e(t)=\sqrt{2}t\pi^{-1/4}e^{-t^2/2}.
\end{align}
\begin{lemma}
Let $t_k=h(k+\frac{1}{2})$ for $k\in\mathcal{I}$, $Z_0=\sum_{k\in\mathcal{I}}e^{-t_k^2}$ and $Z_1=\sum_{k\in\mathcal{I}}t_k^2e^{-t_k^2}$. The two $q$-qubit finite coordinate states
\begin{equation}
 \ket{\vct g_Q}
   =\sum_{k\in\mathcal I}\frac{e^{-t_k^2/2}}{\sqrt{Z_0}}\ket k,
 \qquad
 \ket{\vct e_Q}
   =\sum_{k\in\mathcal I}\frac{t_ke^{-t_k^2/2}}{\sqrt{Z_1}}\ket k
\end{equation}
are normalized and orthogonal.  Moreover, 
\[
 \begin{aligned}
 X^{\otimes q}\ket{\vct g_Q}&=\ket{\vct g_Q},&
 X^{\otimes q}\ket{\vct e_Q}&=-\ket{\vct e_Q}.
 \end{aligned}
\]
\end{lemma}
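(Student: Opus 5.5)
The whole lemma follows from the reflection symmetry of the grid under the pairing $k\mapsto -k-1$. So I would first establish that symmetry and then read off each claim from it.

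\textbf{The involution.} By Eq.~\eqref{eq:finite-coordinate-negation}, $k\mapsto -k-1$ is a bijection of $\mathcal I=\{-Q/2,\ldots,Q/2-1\}$ onto itself. It satisfies $t_{-k-1}=h(-k-1+\tfrac12)=-t_k$ and is realized on computational basis states by $X^{\otimes q}$. To double-check the last point, note that flipping all $q$ bits of a two's-complement word sends $k$ to $-k-1$, since $k+\bar k=-1$ in $q$-bit two's complement. Because $Q\ge2$, this involution has no fixed points: $-k-1=k$ would force $k=-1/2$.

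\textbf{Normalization.} This is immediate from the definitions, since $Z_0$ and $Z_1$ are exactly the squared norms of the unnormalized vectors $\sum_k e^{-t_k^2/2}\ket k$ and $\sum_k t_k e^{-t_k^2/2}\ket k$. I need $Z_0,Z_1>0$. Every term of $Z_0$ is positive. For $Z_1$, no $t_k$ vanishes, because $k+\tfrac12$ is never zero, so every term of $Z_1$ is positive as well.

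\textbf{Action of $X^{\otimes q}$.} I would compute
\[
X^{\otimes q}\ket{\vct g_Q}=\sum_{k}\frac{e^{-t_k^2/2}}{\sqrt{Z_0}}\ket{-k-1}.
\]
Reindexing with $k'=-k-1$ and using $t_{k}^2=t_{k'}^2$ returns $\ket{\vct g_Q}$. The same reindexing for $\ket{\vct e_Q}$ replaces $t_k$ by $t_{-k'-1}=-t_{k'}$, which gives $-\ket{\vct e_Q}$.

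\textbf{Orthogonality.} The states are real vectors, so their inner product is $\frac{1}{\sqrt{Z_0Z_1}}\sum_k t_k e^{-t_k^2}$. This sum vanishes because its terms cancel in pairs $\{k,-k-1\}$, as $t_{-k-1}=-t_k$. Alternatively, $X^{\otimes q}$ is Hermitian and unitary, and the two states are eigenvectors with distinct eigenvalues $\pm1$, so they are orthogonal.

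\textbf{Main obstacle.} There is essentially none. The only point needing care is confirming that the bitwise complement in two's complement is exactly $k\mapsto-k-1$ and that the half-integer offset in $t_k$ makes the grid symmetric about $0$ with no zero point. Both follow from the table definitions.
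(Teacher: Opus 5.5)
Your proposal is correct and follows essentially the same route as the paper: the pairing $k\mapsto -k-1$ with $t_{-k-1}=-t_k$, realized on basis states by $X^{\otimes q}$, makes $\sum_k t_k e^{-t_k^2}$ vanish and gives both eigen-relations by reindexing. Your extra checks, namely $Z_1>0$ because no $t_k$ vanishes and the two's-complement identity $\bar k=-k-1$, are details the paper leaves implicit.
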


\begin{proof}
If \(k\in\mathcal I\), then \(-k-1\in\mathcal I\) and $t_{-k-1}=-t_k$.
Then we can obtain that
\[
 \langle \vct g_Q | \vct g_Q\rangle=\langle \vct e_Q | \vct e_Q\rangle=1,\qquad\langle\vct g_Q\mid\vct e_Q\rangle
   =\frac{\sum_{k\in\mathcal I}t_ke^{-t_k^2}}
          {\sqrt{Z_0Z_1}}=0.
\]
Since $\ket{-k-1}=X^{\otimes q}\ket{k}$, we have $ X^{\otimes q}\ket{\vct g_Q}=\ket{\vct g_Q}$, and $
 X^{\otimes q}\ket{\vct e_Q}=-\ket{\vct e_Q}$.
\end{proof}


\paragraph{From the input label to coordinate registers.}
Recall $d=2^n$ and $p=2d$. For the shear construction detailed in
Section~\ref{subsec:coordinate-shears}, we use two blocks of $p$
coordinates, denoted $x$ and $y$, so that $\ell=2p=4d$.
The $x$ block carries the encoded input; the additional $y$ block
starts in the ground profile in every coordinate. At the continuous
level, this simply appends the factor $G_p(\vct y)$ to the
wavefunction constructed above.

\begin{lemma}[Exact finite local profiles]\label{lem:LQ}
The transformation \(\mathcal L_Q\) satisfying
\begin{equation}\label{eq:LQ}
 \begin{aligned}
 \mathcal L_Q\ket0\ket{0^{q-1}}
   =\ket{\vct g_Q},\qquad \mathcal L_Q\ket1\ket{0^{q-1}}
   =\ket{\vct e_Q}
 \end{aligned}
\end{equation}
can be implemented by a standard circuit of depth \(O(q)\) and size \(O(Q)\), using \(O(Q)\) clean ancillary
qubits.
\end{lemma}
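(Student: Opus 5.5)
Both target states are eigenvectors of $X^{\otimes q}$, so we build $\mathcal L_Q$ from a single state preparation plus a parity-controlled relabelling. Write a $q$-bit label as $\ket{b_{q-1}}\ket{b'}$ with $b'\in\{0,1\}^{q-1}$. The reflection $k\mapsto -k-1$ is $X^{\otimes q}$, and it pairs each label with $b_{q-1}=0$ (so $k\ge0$) with a label having $b_{q-1}=1$ (so $k<0$). Hence it suffices to prepare, on the lower $q-1$ qubits, the state
\[
 \ket{\vct\chi}=\sum_{k=0}^{Q/2-1}\frac{|t_k|e^{-t_k^2/2}}{\sqrt{Z_1/2}}\ket k\quad\text{or}\quad \ket{\vct\gamma}=\sum_{k=0}^{Q/2-1}\frac{e^{-t_k^2/2}}{\sqrt{Z_0/2}}\ket k,
\]
depending on whether the top qubit is $\ket1$ or $\ket0$. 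After that, we symmetrize or antisymmetrize using the top qubit.

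The construction has three steps. First, apply a $(1,q-1)$-CQSP (Lemma~\ref{lem:controlled-state-tradeoff}) controlled by the first qubit $a$. It maps $\ket a\ket{0^{q-1}}$ to $\ket0\ket{\vct\gamma}$ or $\ket1\ket{\vct\chi}$, where on the nonnegative half $t_k>0$, so $\ket{\vct\chi}$ has real amplitudes. We run it with $m=O(Q)$ ancillas; the depth is $O(q+Q/(q+Q))=O(q)$ and the size is $O(Q)$.

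Second, rotate the top qubit to $(\ket0+(-1)^a\ket1)/\sqrt2$, which is a Hadamard on $\ket a$. This gives $\tfrac1{\sqrt2}\bigl(\ket0\ket{\vct\gamma}\pm\ket1\ket{\vct\gamma \text{ or } \vct\chi}\bigr)$.

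Third, map the branch with top bit $1$ onto the negative labels. For $k\ge0$ written as $\ket0\ket{b'}$, we need $\ket1\ket{b'}\mapsto\ket{-k-1}=\ket1\ket{\overline{b'}}$. This is a layer of CNOTs from the top qubit onto each lower qubit, i.e.\ the fan-out gate $F_{q-1}$, which by Lemma~\ref{lem:fanout-standard} has depth $O(\log q)$ and size $O(q)$.

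After the third step the state is
\[
\tfrac1{\sqrt2}\sum_{k\ge0}\alpha_k\bigl(\ket k\pm\ket{-k-1}\bigr).
\]
For $a=0$ we take $\alpha_k\propto e^{-t_k^2/2}$; since $t_{-k-1}=-t_k$, this equals $\ket{\vct g_Q}$ exactly. For $a=1$ the minus sign produces $t_k e^{-t_k^2/2}$ on $k\ge0$ and $-|t_k|e^{-t_k^2/2}=t_{-k-1}e^{-t_{-k-1}^2/2}$ on the reflected label, which is $\ket{\vct e_Q}$. The normalizations agree because $\sum_{k\ge0}e^{-t_k^2}=Z_0/2$ and $\sum_{k\ge0}t_k^2e^{-t_k^2}=Z_1/2$, both by symmetry of the grid.

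The step needing care is sign and ordering bookkeeping. We must ensure the antisymmetric branch attaches the sign so that positive $t_k$ carry the positive coefficient. We must also check that the two's-complement convention makes the top bit exactly the sign bit, with complementation of the low bits realizing $k\mapsto -k-1$ inside the negative half. If the sign comes out reversed, a $Z$ on the top qubit fixes it at no asymptotic cost.

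Resource totals: depth $O(q)+O(1)+O(\log q)=O(q)$, size $O(Q)$, and $O(Q)$ ancillas, all restored to zero by Lemma~\ref{lem:controlled-state-tradeoff}. As an alternative to CQSP, one could apply Lemma~\ref{lem:state-tradeoff} twice, with each application controlled on the top qubit. CQSP packages this directly within the same bounds.
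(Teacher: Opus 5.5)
Your proposal is correct and is essentially the paper's proof: the paper also writes $\mathcal L_Q=F_{q-1}\cdot(H\otimes I^{\otimes(q-1)})\cdot P$, with $P$ a $(1,q-1)$-CQSP that prepares the half-grid profiles normalized by $\sqrt{Z_b/2}$, followed by a Hadamard on the top qubit and a fan-out that realizes $k\mapsto -k-1$ on the negative half, and it counts resources the same way. One slip in your intermediate state after the Hadamard: for $a=1$ both branches carry $\ket{\vct\chi}$, not $\ket{\vct\gamma}$ on the $\ket0$ branch; your final computation already treats this correctly.
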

\begin{proof}
For integer $u\in\{0,1,\ldots,Q/2-1\}$, let $\tau_u=h(u+1/2)>0$. Define $(q-1)$-qubit quantum states
\begin{equation}
    \ket{\vct v_0}
      =\sum_{u=0}^{Q/2-1}\frac{\sqrt2 e^{-\tau_u^2/2}}{\sqrt{Z_0}}\ket u,
    \qquad
    \ket{\vct v_1}
      =\sum_{u=0}^{Q/2-1}
         \frac{\sqrt2\tau_u e^{-\tau_u^2/2}}{\sqrt{Z_1}}\ket u.
    \label{eq:finite-magnitude-profiles}
\end{equation}
Both $\vct v_0$ and $\vct v_1$ are normalized.
Define a $(1,q-1)$-CQSP $P$
\begin{equation}
    P\ket{b}\ket{0^{q-1}}=\ket{b}\ket{\vct v_b},\qquad b\in\{0,1\}.
\end{equation}

We next establish a decomposition of \(\mathcal L_Q\).
First, apply the $(1,q-1)$-CQSP $P$ on the initial state $\ket{b}\ket{0^{q-1}}$ for any $b\in\{0,1\}$ and obtain
\begin{equation}
    P\ket{b}\ket{0^{q-1}}=\ket{b}\ket{\vct v_b}.
\end{equation}
Second, apply a Hadamard gate $H$ to the first qubit $\ket{b}$:
\begin{align*}
\frac{1}{\sqrt{2}}(\ket{0}\ket{\vct v_b}+(-1)^b\ket{1}\ket{\vct v_b}) = \sum_{u=0}^{Q/2-1}\frac{(\tau_u)^b e^{-\tau_u^2/2}}{\sqrt{Z_b}}(\ket {0,u}+(-1)^b\ket{1,u})
\end{align*}
Finally, apply a fan-out gate $F_{q-1}$.  Conditioned on the first qubit being $\ket{1}$, $F_{q-1}$ maps $\ket{u}$ to $\ket{\overline{u}}$ by flipping all \(q-1\) qubits; if the first qubit is $\ket{0}$, it leaves $\ket{u}$ unchanged. Then we have
\begin{equation}\label{eq:afterF_q-1}
    \sum_{u=0}^{Q/2-1}\frac{(\tau_u)^b e^{-\tau_u^2/2}}{\sqrt{Z_b}}(\ket {0,u}+(-1)^b\ket{1,\overline{u}}).
\end{equation}
For $u\in\{0,1,\ldots,Q/2-1\}$ and $k\in\mathcal{I}$, according to ~Eq.~\eqref{eq:finite-coordinate-negation}, we have
\begin{align*}
&\ket{k}=\ket{0,u},\quad \text{if }k=u;\\
&\ket{k}=\ket{1,\overline{u}},\quad \text{if }k=-u-1.
\end{align*}
Therefore, Eq.~\eqref{eq:afterF_q-1} can be represented as
\begin{align} 
     \sum_{k=0}^{Q/2-1}\frac{e^{-\tau_k^2/2}}{\sqrt{Z_0}}
      \bigl(\ket k+\ket{-k-1}\bigr)
    =&\ket{\vct g_Q},     \quad \text{if }b=0;\\
     \sum_{k=0}^{Q/2-1}\frac{\tau_k e^{-\tau_k^2/2}}{\sqrt{Z_1}}
      \bigl(\ket k-\ket{-k-1}\bigr)
    =&\ket{\vct e_Q},    \quad\text{if }b=1.
\end{align}
Based on the above discussion, $\mathcal{L}_Q$ can be decomposed as 
\begin{equation}
    \mathcal{L}_Q=F_{q-1}\cdot (H\otimes I^{\otimes (q-1)})\cdot P.
\end{equation}

According to Lemmas \ref{lem:controlled-state-tradeoff} and \ref{lem:fanout-standard}, $P$ can be implemented in depth $O(q)$ and size $O(Q)$, using $O(Q)$ ancillary qubits, and $F_{q-1}$ can be implemented in depth $O(\log(q))$ and size $O(q)$ without ancillary qubits. Hence, the total depth, size and ancilla-count of $\mathcal{L}_Q$ are $O(q)+1+O(\log(q))=O(q)$, $O(Q)+1+O(q)=O(Q)$ and $O(Q)$, respectively.
\end{proof}

We now combine the local profiles with a one-hot representation of the input label to construct the finite Gaussian encoder.
\begin{theorem}\label{thm:CQ}
  Let $n\ge 2$, $d=2^n$, $p=2d$ and $\ell=2p$. Define the $(q\ell)$-qubit transformation $C_Q$ by
  \begin{equation}\label{eq:CQ-encode}
      C_Q\ket{j,0^{q\ell-n}} =\frac{1}{\sqrt2}(\ket{\vct\Phi_{j,Q}} -\iu\ket{\vct\Phi_{d+j,Q}}),
 \qquad 0\le j<d ,
  \end{equation}
  where 
  \begin{equation}
      \ket{\vct\Phi_{t,Q}} =\ket{\vct e_Q}_t \bigotimes_{\substack{0\le\nu<\ell\\ \nu\ne t}} \ket{\vct g_Q}_\nu, \qquad 0\le t<\ell.
  \end{equation}
  Then $C_Q$ can be implemented by a standard circuit of depth $O(q+n)$ and size $O(Qd)$, using $O(Qd)$ ancillary qubits.
\end{theorem}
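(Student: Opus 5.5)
The plan is to build $C_Q$ in three stages: first turn the $n$-bit label $j$ into a one-hot pattern over the $\ell$ coordinate slots, then apply $\mathcal L_Q$ to every coordinate register in parallel. The target state is
\[
\tfrac1{\sqrt2}\bigl(\ket{\vct\Phi_{j,Q}}-\iu\ket{\vct\Phi_{d+j,Q}}\bigr).
\]
Its excitation lies in slot $j$ or slot $d+j$ with relative amplitude $-\iu$, and slots $p,\ldots,\ell-1$ are never excited. So before the local profiles are applied, we want the first qubit of each of the $\ell$ coordinate registers to hold the one-hot state
\[
\tfrac1{\sqrt2}\bigl(\ket{\vct e_j}-\iu\ket{\vct e_{d+j}}\bigr)
\]
on the first $p$ slots, with all other qubits zero.

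\textbf{Step 1 (complex embedding).} Adjoin one ancilla in $\ket0$ as the most significant bit and apply $W=S^\dagger H$. This produces
\[
\tfrac1{\sqrt2}\bigl(\ket{0,j}-\iu\ket{1,j}\bigr)=\tfrac1{\sqrt2}\bigl(\ket{j}-\iu\ket{d+j}\bigr)
\]
as a $t=(n+1)$-bit label, which is the action of $V$. A further zero qubit in front gives a $\log\ell$-bit label $a\in\{0,\ldots,p-1\}\subset\{0,\ldots,\ell-1\}$; this is the padding $V'$ and costs nothing.

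\textbf{Step 2 (binary to one-hot).} Apply the inverse of the circuit in Lemma~\ref{lem:one-hot}, with $n$ replaced by $\log\ell=n+2$. This maps $\ket{a}\ket{0^{\ell-\log\ell}}\mapsto\ket{\vct e_a}$ on $\ell$ qubits, with depth $O(n)$, size $O(\ell)=O(d)$ and $O(d)$ ancillas. It acts linearly, so the superposition from Step~1 becomes $\tfrac1{\sqrt2}(\ket{\vct e_j}-\iu\ket{\vct e_{d+j}})$.

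The main bookkeeping obstacle is the wiring. The one-hot bits must sit in the designated first qubit of each coordinate register $\nu$, and the input/ancilla layout must match $\ket{j,0^{q\ell-n}}$ in Eq.~\eqref{eq:CQ-encode}. Since the registers can be relabeled freely, take qubit $0$ of register $\nu$ to be the $\nu$-th one-hot output. The initial $n+2$ label qubits and the conversion workspace are drawn from the $q\ell$ register qubits or from separate clean ancillas, and Lemma~\ref{lem:one-hot} returns all of them to zero. SWAPs are not needed; the placement is chosen by the circuit layout, or at worst one layer of SWAPs suffices.

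\textbf{Step 3 (parallel local profiles).} Apply $\mathcal L_Q$ of Lemma~\ref{lem:LQ} to all $\ell$ registers simultaneously. By linearity, a one-hot basis pattern $\vct e_t$ is sent to
\[
\ket{\vct e_Q}_t\bigotimes_{\nu\ne t}\ket{\vct g_Q}_\nu=\ket{\vct\Phi_{t,Q}},
\]
which gives Eq.~\eqref{eq:CQ-encode}. This layer has depth $O(q)$, size $\ell\cdot O(Q)=O(Qd)$, and $\ell\cdot O(Q)=O(Qd)$ ancillas, each of which Lemma~\ref{lem:LQ} leaves clean.

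Summing the stages gives depth $O(1)+O(n)+O(q)=O(n+q)$, size $O(d)+O(Qd)=O(Qd)$, and ancilla count $O(d+Qd)=O(Qd)$. These are the bounds claimed in Theorem~\ref{thm:CQ}.
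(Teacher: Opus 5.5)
Your proposal is correct and follows the paper's proof essentially step for step. You apply $W=S^\dagger H$ on a fresh zero qubit to realize $V$, run the reversed circuit of Lemma~\ref{lem:one-hot} to get $\tfrac{1}{\sqrt2}(\ket{\vct e_j}-\iu\ket{\vct e_{d+j}})$, place each one-hot bit at the head of its coordinate register by a fixed qubit permutation, and apply $\mathcal L_Q^{\otimes\ell}$ in parallel, with the same resource count. Your explicit leading zero bit, which makes the label $\log\ell=n+2$ bits long, is slightly more careful than the paper's write-up. One small correction: an arbitrary fixed qubit permutation can need two layers of disjoint SWAPs rather than one, though this still costs only $O(1)$ depth.
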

\begin{proof}
    First, for the $q\ell$ input qubits $\ket{j,0^{q\ell-n}}$, we swap the state of first $n$ qubits $\ket{j}$ and the $(n+1)$th qubit $\ket{0}$ by an $(n+1)$-qubit swap network $\mathsf{SWAP}_{n+1}$:
    \begin{equation}
        (\mathsf{SWAP}_{n+1}\otimes I^{\otimes (q\ell-n-1)})\ket{j,0^{q\ell-n}}=\ket{0,j,0^{q\ell-n-1}},\qquad 0\le j <d.
    \end{equation}
    Let the first $\ell$ qubits $\ket{0,j,0^{\ell-n-1}}$ be the register $\mathsf A$.
    Second, apply $W=S^{\dagger}H$ to the first qubit of register $\mathsf A$. Then we have
    \begin{align}
        (W\otimes I^{\otimes (q\ell-1)})\ket{0,j,0^{\ell-n-1}}_{\mathsf A}\ket{0^{q\ell-\ell}}=&\frac{\ket{0,j,0^{\ell-n-1}}_{\mathsf{A}}-\iu\ket{1,j,0^{\ell-n-1}}_{\mathsf{A}}}{\sqrt{2}}\ket{0^{q\ell-\ell}},\label{eq:after-W1}\\
        =& \frac{\ket{j,0^{\ell-n-1}}_{\mathsf{A}}-\iu\ket{j+d,0^{\ell-n-1}}_{\mathsf{A}}}{\sqrt{2}}\ket{0^{q\ell-\ell}},\qquad 0\le j <d.\label{eq:after-W2}
    \end{align}
    In Eqs.~\eqref{eq:after-W1} and \eqref{eq:after-W2}, $\ket{j}$ denotes the $n$- and $(n+1)$-bit binary representations of $j$, respectively.
    Third, a binary-to-unary transformation $\mathcal{B}_\ell$ on register $\mathsf A$ implements
    \begin{equation}
        \frac{\mathcal{B}_\ell(\ket{j,0^{\ell-n-1}}_{\mathsf{A}}-\iu\ket{j+d,0^{\ell-n-1}}_{\mathsf{A}})}{\sqrt{2}}\ket{0^{q\ell-\ell}}=\frac{\ket{\vct e_j}_{\mathsf{A}}-\iu\ket{\vct e_{j+d}}_{\mathsf{A}}}{\sqrt{2}}\ket{0^{q\ell-\ell}},\qquad 0\le j <d,
    \end{equation}
    Next, we divide $\ket{0^{q\ell-\ell}}$ into $\ell$ parts, each of which consists of $q-1$ qubits. For $1\le k\le \ell$, we use a $q\ell$-qubit swap network $\mathsf{SWAP}_{q\ell}$ to move the $k$-th part immediately after the $k$-th qubit of register $\mathsf A$, and we can obtain that
    \begin{align}
        & \mathsf{SWAP}_{q\ell}\frac{\ket{\vct e_j}_{\mathsf{A}}-\iu\ket{\vct e_{j+d}}_{\mathsf{A}}}{\sqrt{2}}\ket{0^{q\ell-\ell}}\\
        =&\frac{1}{\sqrt{2}}\Big(\ket{10^{q-1}}_j\bigotimes_{\substack{0\le\nu<\ell\\ \nu\ne j}} \ket{0^q}_{\nu}-\iu \ket{10^{q-1}}_{d+j}\bigotimes_{\substack{0\le\nu<\ell\\ \nu\ne j+d}} \ket{0^q}_\nu \Big),\qquad 0\le j <d.
    \end{align}
    Finally, apply $(\mathcal{L}_Q)^{\otimes \ell}$ (Eq.~\eqref{eq:LQ}) and we have
    \begin{align*}
        & \frac{1}{\sqrt{2}}(\mathcal{L}_Q)^{\otimes \ell}\Big(\ket{10^{q-1}}_j\bigotimes_{\substack{0\le\nu<\ell\\ \nu\ne j}} \ket{0^q}_{\nu}-\iu \ket{10^{q-1}}_{d+j}\bigotimes_{\substack{0\le\nu<\ell\\ \nu\ne j+d}} \ket{0^q}_\nu \Big)\\
    =   &\frac{1}{\sqrt{2}}\Big(\ket{\vct e_Q}_j\bigotimes_{\substack{0\le\nu<\ell\\ \nu\ne j}} \ket{\vct g_Q}_{\nu}-\iu \ket{\vct e_Q}_{d+j}\bigotimes_{\substack{0\le\nu<\ell\\ \nu\ne j+d}} \ket{\vct g_Q}_\nu \Big)\\
    =&\frac{1}{\sqrt2}(\ket{\vct\Phi_{j,Q}} -\iu\ket{\vct\Phi_{d+j,Q}}),\qquad 0\le j <d.
    \end{align*}
    Based on the above discussion, $C_Q$ can be decomposed as follows:
    \begin{equation*}
        C_Q= (\mathcal{L}_Q)^{\otimes \ell}\cdot(\mathsf{SWAP}_{q\ell})\cdot (\mathcal{B}_\ell\otimes I^{\otimes (q\ell-\ell)})\cdot (W\otimes I^{\otimes (q\ell-1)})\cdot (\mathsf{SWAP}_{n+1}\otimes I^{\otimes(q\ell-n-1)}).
    \end{equation*}
    Since any qubit permutation can be implemented using at most two layers of disjoint SWAP gates, the networks $\mathsf{SWAP}_{n+1}$ and $\mathsf{SWAP}_{q\ell}$ admit standard-circuit implementations without ancillary qubits, of depth $O(1)$ and sizes $O(n)$ and $O(q\ell)$, respectively. $W$ is a single-qubit gate. The $\ell$-qubit binary-to-unary transformation $\mathcal{B}_\ell$ is an inverse transformation of Eq.~\eqref{eq:one-hot}. Then $\mathcal{B}_\ell$ can be implemented by a standard circuit of depth $O(n)$ and size $O(\ell)$, using $O(\ell)$ ancillary qubits by Lemma \ref{lem:one-hot}. Lemma \ref{lem:LQ} shows that $(\mathcal{L}_Q)^{\otimes \ell}$ can be implemented by a standard circuit of depth \(O(q)\) and size \(O(\ell Q)\), using \(O(\ell Q)\) clean ancillary qubits. In summary, the depth, size and ancilla-count of $C_Q$ are
    \begin{align*}
        &\text{depth:}\quad O(1)+1+O(n)+O(1)+O(q)=O(q+n),\\
        &\text{size:}\quad O(n)+1+O(\ell)+O(q\ell)+O(\ell Q)=O(Qd),\\
        &\text{ancilla-count:}\quad O(\ell Q)=O(Qd).
    \end{align*}
\end{proof}

\subsection{Implementing the coordinate shears on a finite grid}
\label{subsec:coordinate-shears}
In this section, we discretize the shears $T_H,S_K$ and the signed swap $J_0$, and construct quantum circuits implementing their discrete counterparts. Recall that these operations combined give $\mathcal{P}_{F_U}$ by Eq.~\eqref{eq:shear-factorization}.
 
\paragraph{Implementation of $T_H$ and $S_K$.}
Let $\mathcal F$ denote the unitary Fourier transform on
$L^2(\mathbb R^p)$, whose action on generalized position
states is
\begin{equation}
    \mathcal F\ket{\vct x}
    =
    \frac{1}{(2\pi)^{p/2}}
    \int_{\mathbb R^p}
    e^{-\iu\vct\zeta^T\vct x}
    \ket{\vct\zeta}\,d\vct\zeta.
\end{equation}
At the continuous-variable level, the shear
$T_H(\vct x,\vct y)=(\vct x+H\vct y,\vct y)$,
where $H\in\mathbb R^{p\times p}$, is implemented on
position states by the following three operations:
\begin{enumerate}
    \item Apply $\mathcal F$ to the first register:
    \begin{equation}\label{eq:TH-step1}
        (\mathcal F\otimes I)\ket{\vct x,\vct y}
        =
        \frac{1}{(2\pi)^{p/2}}
        \int_{\mathbb R^p}
        e^{-\iu\vct\zeta^T\vct x}
        \ket{\vct\zeta,\vct y}\,d\vct\zeta.
    \end{equation}

    \item Apply the diagonal unitary specified by
    \begin{equation}\label{eq:TH-step2}
        \ket{\vct\zeta,\vct y}
        \longmapsto
        e^{-\iu\vct\zeta^TH\vct y}
        \ket{\vct\zeta,\vct y}.
    \end{equation}
    By linearity, the resulting state is
    \begin{align}
        &\frac{1}{(2\pi)^{p/2}}
        \int_{\mathbb R^p}
        e^{-\iu\vct\zeta^T\vct x}
        e^{-\iu\vct\zeta^TH\vct y}
        \ket{\vct\zeta,\vct y}\,d\vct\zeta
        =
        \frac{1}{(2\pi)^{p/2}}
        \int_{\mathbb R^p}
        e^{-\iu\vct\zeta^T(\vct x+H\vct y)}
        \ket{\vct\zeta,\vct y}\,d\vct\zeta.
    \end{align}

    \item Apply $\mathcal F^\dagger$ to the first register:
    \begin{equation}\label{eq:TH-step3}
        (\mathcal F^\dagger\otimes I)
        \left(
        \frac{1}{(2\pi)^{p/2}}
        \int_{\mathbb R^p}
        e^{-\iu\vct\zeta^T(\vct x+H\vct y)}
        \ket{\vct\zeta,\vct y}\,d\vct\zeta
        \right)
        =
        \ket{\vct x+H\vct y,\vct y}.
    \end{equation}
\end{enumerate}
The second register remains in the coordinate basis
throughout. Similarly, the lower shear
$S_K(\vct x,\vct y)=(\vct x,\vct y+K\vct x)$
is implemented by applying $\mathcal F$ to the second
register, applying the diagonal phase operation
\begin{equation}
    \ket{\vct x,\vct\zeta}
    \longmapsto
    e^{-\iu\vct\zeta^TK\vct x}
    \ket{\vct x,\vct\zeta},
\end{equation}
and then applying $\mathcal F^\dagger$ to the second register.

\paragraph{Implementation of finite grid version of $T_H$ and $S_K$.}
To obtain a finite-qubit realization of the above construction, we discretize both the Fourier transform and the diagonal
unitary using the coordinate grid specified in
Table~\ref{tab:finite-grid-notation}.
Recall that $Q=2^q=2B^2$, $h=B^{-1}$, and
$\mathcal I=\{-Q/2,\ldots,Q/2-1\}$.
The coordinate and frequency labels are identified with
\begin{equation}
    \vct x=h(\vct r+\tfrac12\vct1),
    \qquad
    \vct y=h(\vct \ell+\tfrac12\vct1),
    \qquad
    \vct\zeta=\widehat h\,\vct k,
\end{equation}
where $\vct r,\vct \ell,\vct k\in\mathcal I^p$ and
\begin{equation}
    \widehat h=\frac{2\pi}{Qh}=\frac{\pi}{B}.
\end{equation}
Each component label is stored in a $q$-qubit register
using two's-complement representation.

First, the one-coordinate Fourier transform is discretized
as the unitary matrix $D$ defined by
\begin{equation}\label{eq:one-coordinateFT}
    D\ket r
    =
    \frac{1}{\sqrt Q}
    \sum_{k\in\mathcal I}
    e^{-2\pi\iu k(r+1/2)/Q}\ket k,
    \qquad r\in\mathcal I.
\end{equation}
Its phase is obtained by evaluating the continuous Fourier
kernel at the coordinate and frequency grid points, since
$\widehat h k\,h(r+1/2)=2\pi k(r+1/2)/Q$.
The normalization satisfies
$Q^{-1/2}=(h\widehat h/(2\pi))^{1/2}$.
Thus, the $p$-coordinate transform $\mathcal F$ is replaced
by 
\begin{equation}
    D^{\otimes p} 
\end{equation}
acting on the $pq$ qubits of the first
coordinate block.

Second, evaluating the continuous phase
$e^{-\iu\vct\zeta^TH\vct y}$ on the same grids gives the
finite diagonal unitary
\begin{align}\label{eq:Delta-H}
    \Delta^{\uparrow
}_H\ket{\vct k,\vct \ell}
    =&
    \exp\!\left[
        -\frac{2\pi\iu}{Q}
        \vct k^TH(\vct \ell+\tfrac12\vct1)
    \right]
    \ket{\vct k,\vct \ell}.
\end{align}
Here the first block contains frequency labels, while
the second block retains coordinate labels.
Applying $D^{\otimes p}$ to the first block, followed by
$\Delta^{\uparrow}_H$ and then $(D^\dagger)^{\otimes p}$, yields the
finite-grid Fourier shear 
\begin{equation}\label{eq:PH}
     {\widehat{T}_H}=(D^{\otimes p}\otimes I^{\otimes p})^\dagger\Delta^\uparrow_H(D^{\otimes p}\otimes I^{\otimes p}).
\end{equation}
The lower shear
is obtained analogously
by Fourier transforming the second block and using the
finite diagonal unitary
\begin{align}\label{eq:Delta-K}
\Delta^{\downarrow}_K\ket{\vct \ell,\vct k}
 =&\exp\!\left[
        -\frac{2\pi\iu}{Q}
        \vct k^TK(\vct \ell+\tfrac12\vct1)
    \right]
    \ket{\vct \ell,\vct k}.
\end{align}
Then $ {\widehat{S}_K}$ can be implemented by
\begin{equation}\label{eq:PK}
     {\widehat{S}_K}=(I^{\otimes p}\otimes D^{\otimes p})^\dagger\Delta_K^\downarrow (I^{\otimes p}\otimes D^{\otimes p}).
\end{equation}

The transformations $D$, $\Delta^\uparrow_H$ and $\Delta_K^\downarrow$ are unitary on the finite
registers. Their composition is a finite-grid analogue of
the continuous shear, rather than an exact coordinate
permutation for arbitrary real displacements.
Its approximation to the continuous transformation on
Gaussian-encoded states is analyzed in
Section~\ref{subsec:core-error}.
 

\begin{lemma}
\label{lem:fourier}
The $q$-qubit unitaries $D,D^\dagger\in\C^{Q\times Q}$ (Eq. \eqref{eq:one-coordinateFT}) can be implemented exactly
by standard circuits of depth $O(q)$ and size $O(q^2)$, without ancillary qubits.
\end{lemma}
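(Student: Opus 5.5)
We relate $D$ to the standard QFT of Lemma~\ref{lem:QFT-standard-circuit} through two relabelings and two diagonal phase layers, each of constant depth. Write the labels as $r,k\in\mathcal I$ in two's complement, and let $\bar r=r\bmod Q\in\{0,\dots,Q-1\}$. The unsigned value $\bar r$ is exactly the bit string $b_{q-1}\cdots b_0$ read as an unsigned integer. Hence the computational basis state $\ket r$ coincides with $\ket{\bar r}$ in the QFT's indexing, and no relabeling circuit is needed on the input.

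\textbf{Factoring the phase.} Expand
\[
e^{-2\pi\iu k(r+1/2)/Q}=e^{-2\pi\iu kr/Q}\,e^{-\pi\iu k/Q}.
\]
The first factor depends only on $kr\bmod Q$, so it equals $e^{-2\pi\iu \bar k\bar r/Q}$. The second factor $e^{-\pi\iu k/Q}$ is a diagonal phase in the output label $k$. It has to be evaluated with the \emph{signed} value $k$, not $\bar k$; this is the one point where the two's-complement convention matters. Since
\[
k=-2^{q-1}b_{q-1}+\sum_{\alpha<q-1}2^\alpha b_\alpha
\]
is linear in the bits, the phase factorizes into single-qubit phases $\diag(1,e^{-\pi\iu 2^\alpha/Q})$ on bits $\alpha<q-1$ and $\diag(1,e^{\pi\iu 2^{q-1}/Q})=\diag(1,\iu)$ on the sign bit. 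This is one layer of $q$ one-qubit gates.

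\textbf{Assembling $D$.} With these factors,
\[
D=\Lambda\,\mathcal F_{Q}^{\dagger},\qquad \Lambda=\diag_k\bigl(e^{-\pi\iu k/Q}\bigr),
\]
where $\mathcal F_Q^\dagger$ is the inverse of the QFT of Lemma~\ref{lem:QFT-standard-circuit}, with kernel $Q^{-1/2}e^{-2\pi\iu\bar k\bar r/Q}$. The inverse QFT is the reversed circuit of the QFT with conjugated gates, so it has depth $O(q)$ and size $O(q^2)$ without ancillas. The QFT's standard bit-reversal convention is already absorbed into Lemma~\ref{lem:QFT-standard-circuit}'s statement; if not, it costs one layer of $O(q)$ swaps. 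Then $D^\dagger=\mathcal F_Q\Lambda^\dagger$ has the same bounds.

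\textbf{Expected obstacle.} The only delicate step is the sign bookkeeping: checking that $e^{-2\pi\iu kr/Q}$ really depends only on residues mod $Q$, and that the half-shift phase uses the signed $k$. I will verify both directly from the bit expansion above. The resulting total is depth $O(q)+1$ and size $O(q^2)+q$, with no ancillas.
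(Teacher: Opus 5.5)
Your proposal is correct and follows the paper's proof: both use that the two's-complement word $\ket r$ is the unsigned representation of $r\bmod Q$, so $\mathcal F_Q^\dagger$ supplies $e^{-2\pi\iu kr/Q}$, and then write $D=\Gamma_Q\mathcal F_Q^\dagger$, implementing the half-shift phase $e^{-\pi\iu k/Q}$ with the signed $k$ as one layer of single-qubit phases, with $\diag(1,\iu)$ on the sign bit. One small inconsistency: your opening sentence promises two relabelings and two diagonal layers, but the argument you actually give uses no relabeling circuit and only one phase layer, which is all that is needed.
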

\begin{proof}
The two's-complement word $\ket r$ is the ordinary unsigned binary representation of $r\bmod Q$. Apply the inverse quantum Fourier transform $\mathcal F_Q^\dagger$:
\begin{align}
   \mathcal{F}^\dagger_Q\ket{r}=&\left\{ \begin{array}{ll}
     \frac1{\sqrt Q}(\sum_{k=0}^{Q/2-1}
 e^{-2\pi\iu kr/Q}\ket{k}+\sum_{k=-Q/2}^{-1}
 e^{-2\pi\iu (k+Q)r/Q}\ket{k}),   & \text{if }r\ge 0;  \\
   \frac1{\sqrt Q}(\sum_{k=0}^{Q/2-1}
 e^{-2\pi\iu k(r+Q)/Q}\ket{k}+\sum_{k=-Q/2}^{-1}
 e^{-2\pi\iu (k+Q)(r+Q)r/Q}\ket{k}), &  \text{if }r<0.
    \end{array}\right.\\
    = & \frac{1}{\sqrt{Q}}\sum_{k\in\mathcal{I}} e^{-2\pi\iu k/Q}\ket{k},\qquad r\in\mathcal{I}.
\end{align}
Second, apply the $q$-qubit diagonal unitary
\[
 \Gamma_Q\ket k=e^{-\pi\iu k/Q}\ket k,\qquad k\in\mathcal{I}.
\]
The resulting coefficient is
$Q^{-1/2}e^{-2\pi\iu k(r+1/2)/Q}$, so $D=\Gamma_Q\mathcal{F}_Q^\dagger$.

To implement $\Gamma_Q$, write $k=\sum_{\alpha=0}^{q-1}w_\alpha b_\alpha$,
where $b_\alpha\in\{0,1\}$ and
\[
 w_\alpha=2^\alpha\quad(0\le\alpha<q-1),\qquad
 w_{q-1}=-2^{q-1}.
\]
Apply $\operatorname{Ph}(-\pi w_\alpha/Q)$ to the $\alpha$-th qubit, where $\operatorname{Ph}(\theta)=\diag(1,e^{\iu\theta})$.
These $q$ gates form one parallel layer. According to Lemma \ref{lem:QFT-standard-circuit}, $\mathcal{F}_Q^\dagger$ can be implemented by a standard circuit of depth $O(q)$ and size $O(q^2)$, without ancillary qubits. This completes the proof.

\end{proof}

\begin{lemma}
\label{lem:diag-finite}
Let $H,K\in\R^{p\times p}$. The finite 
$(q\ell)$-qubit diagonal unitaries $\Delta^{\uparrow}_H$ and $\Delta^{\downarrow}_K$ (Eqs.~\eqref{eq:Delta-H} and \eqref{eq:Delta-K}) can be implemented by standard circuits of depth $O(\log(pq))$ and size $O(p^2q^2)$, using $O(p^2q^2)$ ancillary qubits.
\end{lemma}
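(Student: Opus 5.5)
The plan is to write the phase as a sum of bilinear terms in the bits of the labels, and realise all of them in parallel using copies of the bits made by fan-out. Expanding $\vct k^TH(\vct\ell+\tfrac12\vct1)$ gives
\[
\vct k^TH\vct\ell+\tfrac12\vct k^TH\vct 1
=\sum_{i,j<p}H_{ij}k_i\ell_j+\sum_{i<p}\tfrac12(H\vct 1)_i k_i .
\]
Write each label in two's complement, $k_i=\sum_\alpha w_\alpha b^{(i)}_\alpha$ and $\ell_j=\sum_\beta w_\beta c^{(j)}_\beta$, with the weights $w_\alpha$ from the proof of Lemma~\ref{lem:fourier}. Then
\[
\Delta^\uparrow_H=\prod_{i,j,\alpha,\beta}\exp\!\Bigl[-\tfrac{2\pi\iu}{Q}H_{ij}w_\alpha w_\beta\, b^{(i)}_\alpha c^{(j)}_\beta\Bigr]\cdot\prod_{i,\alpha}\exp\!\Bigl[-\tfrac{\pi\iu}{Q}(H\vct1)_iw_\alpha b^{(i)}_\alpha\Bigr].
\]
The first product is a set of $p^2q^2$ commuting controlled-phase gates $\mathrm{CPh}(\theta)$, one per pair of qubits. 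The second is a single layer of $pq$ one-qubit phase gates $\operatorname{Ph}(\cdot)$, as in Lemma~\ref{lem:fourier}.

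These gates commute, but each qubit $b^{(i)}_\alpha$ takes part in $pq$ of them, so applying them directly would give depth $pq$. To avoid this, I would use $O(p^2q^2)$ clean ancillas arranged as a $pq\times pq$ array with two qubits per cell $(i\alpha,j\beta)$.

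1. For each input qubit $b^{(i)}_\alpha$, apply a fan-out $F_{pq}$ that copies it into the first slot of the $pq$ cells $(i\alpha,\cdot)$.
2. For each $c^{(j)}_\beta$, apply a fan-out that copies it into the second slot of the $pq$ cells $(\cdot,j\beta)$.

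All these fan-outs act on disjoint qubits. By Lemma~\ref{lem:fanout-standard} each $F_{pq}$ has depth $O(\log(pq))$ and size $O(pq)$. There are $2pq$ of them, so the total size is $O(p^2q^2)$.

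3. Apply every controlled phase $\mathrm{CPh}(-2\pi H_{ij}w_\alpha w_\beta/Q)$ within its own cell. This is one parallel layer of $p^2q^2$ two-qubit gates.
4. Apply the linear-term layer of $\operatorname{Ph}$ gates.
5. Undo the fan-outs with the same circuits, since fan-out is self-inverse. This restores every ancilla to $\ket 0$ exactly.

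The copied values are computational-basis copies of the control bits, so the phases act correctly on superpositions. Because the copies are removed exactly afterwards, the net operation is precisely $\Delta^\uparrow_H$. $\Delta^\downarrow_K$ is handled identically with the roles of the two blocks exchanged. Its phase $\vct k^TK(\vct\ell+\tfrac12\vct1)$ has the same bilinear-plus-linear form, with $\vct\ell$ now being the first block.

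The total cost is depth $O(\log(pq))+O(1)$, size $O(p^2q^2)$, and $O(p^2q^2)$ ancillas, which matches the claimed bounds. Only $H$ enters the gate angles; the circuit does not depend on the unused coordinates, which are left idle.

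The only step needing care is the bookkeeping. It must be checked that the $O(\log(pq))$-depth fan-out of Lemma~\ref{lem:fanout-standard} applies with its targets being fresh ancillas, that is, copying into $\ket0$ registers, which is exactly what $F_{pq}$ does. It must also be checked that the sign of the two's-complement top bit is correctly absorbed into $w_{q-1}$. I expect this to be the main obstacle, and it is only a minor one; otherwise the argument is routine.
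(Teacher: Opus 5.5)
Your proposal is correct and follows essentially the same route as the paper. Both proofs put the linear part of the exponent into one layer of single-qubit phases and the bilinear part into $p^2q^2$ controlled phases. Both fan out each of the $2pq$ coordinate bits into $pq$ clean copies, using $2p^2q^2$ ancillas, apply all controlled phases in a single layer on disjoint copies, and then undo the fan-outs.

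The only difference is bookkeeping. You give each pair its own cell $(i\alpha,j\beta)$. The paper instead makes $pq$ full copies of the register and, in copy $(u,v)$, pairs $\mathsf X_{i,\alpha}$ with $\mathsf Y_{i+u,\alpha+v}$ via cyclic shifts. That is the same assignment of gates to ancilla qubits up to relabeling.
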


\begin{proof}
For any $\vct k,\vct \ell\in\mathcal{I}^p$, let registers $\mathsf X=(\mathsf X_0,\ldots,\mathsf X_{p-1})$ and $\mathsf Y=(\mathsf Y_0,\ldots,\mathsf Y_{p-1})$ denote the first and second $pq$ qubits of the input state $\ket{\vct k,\vct\ell}$, respectively. Each register $\mathsf X_j$ and $\mathsf Y_j$ contains $q$ qubits. Write $k_i=\sum_\alpha w_\alpha\kappa_{i\alpha}$ and
$\ell_j=\sum_\beta w_\beta\lambda_{j\beta}$, with bits
$\kappa_{i\alpha},\lambda_{j\beta}\in\{0,1\}$, where
$0\le i,j<p$ and $0\le\alpha,\beta<q$.
The phase exponent separates into linear and bilinear terms, giving the following product of one-qubit and two-qubit phase gates.
\begin{equation}\label{eq:circuit-Delta-H1}
 \begin{aligned}
 \Delta^\uparrow_H&=
 \left(\prod_{i,\alpha}(\operatorname{Ph}
       (\theta_{i\alpha}))_{\mathsf X_{i,\alpha}}\right)
 \left(\prod_{i,j,\alpha,\beta}
       (\operatorname{CP}
       (\theta_{ij\alpha\beta}))_{\mathsf X_{i,\alpha},\mathsf Y_{j,\beta}}\right),
 \end{aligned}
\end{equation}
where
\begin{equation}
    \theta_{i\alpha}=-\frac{\pi w_\alpha}{Q}\sum_{j=0}^{p-1}H_{ij},\qquad
 \theta_{ij\alpha\beta}=-\frac{2\pi}{Q}H_{ij}w_\alpha w_\beta,
\end{equation}
and $\mathsf X_{i,\alpha}$ is the qubit $\alpha$ of $\mathsf X_i$, and similarly for $\mathsf Y_{j,\beta}$.

The first term of Eq.~\eqref{eq:circuit-Delta-H1} consists of single-qubit gates, which can be implemented by a circuit of depth $O(1)$ and size $O(pq)$. Now we reduce the depth of the second term of Eq.~\eqref{eq:circuit-Delta-H1} by fan-out gates and ancillary qubits. For each of the $q\ell=2pq$ original qubits, we first create $pq$ copies by fan-out gates $(F_{pq})^{\otimes 2pq}$ using $2p^2q^2$ ancillary qubits:
\begin{equation}
    (F_{pq})^{\otimes 2pq}\ket{\vct k,\vct \ell}\underbrace{\ket{0^{2pq}}\cdots\ket{0^{2pq}}}_{pq~\text{copies}}=\ket{\vct k,\vct \ell}\underbrace{\ket{\vct k,\vct \ell}\cdots\ket{\vct k,\vct \ell}}_{pq~\text{copies}}
\end{equation}
For $0\le u<p$ and $0\le v<q$, apply the $1$-depth circuit $\bigotimes_{u,v}L_{u,v}$ on the $pq$ copies of $\ket{\vct k,\vct \ell}$, where
\begin{equation}
    L_{u,v}:=\prod_{0\le i<p \atop 0\le \alpha<q}\left(\operatorname{CP}(\theta_{i,j_i,\alpha,\beta_\alpha})\right)_{\mathsf X_{i,\alpha},\mathsf Y_{j,\beta_\alpha}},\quad j_i=(i+u)\mod p,\text{~and~} \beta_\alpha=(\alpha+v)\mod p
\end{equation}
and $L_{u,v}$ acts on the $(uv)$-th copy of $\ket{\vct k,\vct\ell}$.
Finally, restore the ancillary qubits by $(F_{pq})^{\otimes 2pq}$. The phase gates $\operatorname{Ph}$ and $\operatorname{CP}$ leave the states of the computational basis unchanged, so all ancillary qubits are restored to $\ket{0}$. Since $\bigotimes_{u,v}L_{u,v}$ consists of the two-qubit phase gates in Eq.~\eqref{eq:circuit-Delta-H1}, it introduces the same phase as $\Delta_H^\uparrow$.
By Lemma~\ref{lem:fanout-standard}, $\Delta_H^\uparrow$ has the stated resource bounds. The same argument applies to $\Delta_K^\downarrow$.
\end{proof}

The following lemma shows the circuit implementations of $ {\widehat{T}_H}$ and $ {\widehat{S}_K}$ (Eqs.~\eqref{eq:PH} and \eqref{eq:PK}), the discretized counterparts of $T_H$ and $S_K$.
\begin{lemma}[Finite coordinate shears]
\label{lem:shear}
For any $H,K\in \R^{p\times p}$, $ {\widehat{T}_H}$ and $ {\widehat{S}_K}$ can be implemented by standard circuits of depth $O(q+\log(pq))$ and size $O(p^2q^2)$, using $O(p^2q^2)$ ancillary qubits. 
\end{lemma}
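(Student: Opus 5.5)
The plan is to read $\widehat T_H$ and $\widehat S_K$ directly from their factorizations in Eqs.~\eqref{eq:PH} and \eqref{eq:PK}, cost each factor separately, and add up depths, sizes and ancilla counts. Take $\widehat T_H$ first:
\[
\widehat T_H=(D^{\otimes p}\otimes I^{\otimes p})^\dagger\,\Delta^\uparrow_H\,(D^{\otimes p}\otimes I^{\otimes p}),
\]
so the circuit runs in three stages: a layer of one-coordinate Fourier transforms on the first block, then the diagonal phase $\Delta^\uparrow_H$, then the inverse layer.

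For the outer stages, Lemma~\ref{lem:fourier} gives each $D$ (and $D^\dagger$) an exact standard circuit of depth $O(q)$ and size $O(q^2)$ with no ancillas. The $p$ copies of $D$ in $D^{\otimes p}$ act on disjoint $q$-qubit registers $\mathsf X_0,\ldots,\mathsf X_{p-1}$, so they run in parallel. Each outer stage therefore has depth $O(q)$, size $O(pq^2)$, and uses no ancillas.

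For the middle stage, Lemma~\ref{lem:diag-finite} implements $\Delta^\uparrow_H$ with depth $O(\log(pq))$, size $O(p^2q^2)$ and $O(p^2q^2)$ ancillas. That lemma returns all its ancillas to $\ket0$ and works for arbitrary real $H$. Consequently the three stages compose with no interference, and the same workspace serves as the total ancilla budget.

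Adding up, the depth is $2\cdot O(q)+O(\log(pq))=O(q+\log(pq))$ and the size is $O(pq^2)+O(p^2q^2)=O(p^2q^2)$, with $O(p^2q^2)$ ancillas. The argument for $\widehat S_K$ is the same, with the Fourier layers moved to the second block $\mathsf Y$ and $\Delta^\downarrow_K$ from Lemma~\ref{lem:diag-finite} in the middle.

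There is no real obstacle: the work was done in Lemmas~\ref{lem:fourier} and~\ref{lem:diag-finite}. The only points to check are that the Fourier factors act on disjoint registers, so the parallel-depth bound holds, and that the ancillas of the diagonal step come back clean, so the composition is a valid standard circuit.
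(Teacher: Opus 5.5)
Your proposal is correct and matches the paper's proof: it also composes the parallel layer of $p$ one-coordinate transforms from Lemma~\ref{lem:fourier} with the diagonal-phase circuit of Lemma~\ref{lem:diag-finite}. Summing the costs gives depth $O(q+\log(pq))$, size $p\cdot O(q^2)+O(p^2q^2)=O(p^2q^2)$, and $O(p^2q^2)$ ancillas, for both $\widehat T_H$ and $\widehat S_K$.
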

\begin{proof}
Recall that $ {\widehat{T}_H}=(D^{\otimes p}\otimes I^{\otimes p})^\dagger\Delta^\uparrow_H(D^{\otimes p}\otimes I^{\otimes p})$ and $ {\widehat{S}_K}=(I^{\otimes p}\otimes D^{\otimes p})^\dagger\Delta_K^\downarrow (I^{\otimes p}\otimes D^{\otimes p})$. Combining with Lemmas~\ref{lem:fourier} and \ref{lem:diag-finite}, the total depth, size and ancilla-count are $O(q)+O(\log(pq))=O(q+\log(pq))$, $p\cdot O(q^2)+O(p^2q^2)=O(p^2q^2)$ and $O(p^2q^2)$. 
\end{proof}

\paragraph{Implementation of $J_0$.} Recall that the signed swap $J_0(\vct x,\vct y)=(-\vct y,\vct x)$. We discretize $J_0$ using the coordinate grid introduced in Table \ref{tab:finite-grid-notation}. The coordinate labels are $\vct x=h(\vct r +\vct1/2)$ and $\vct y=h(\vct \ell +\vct1/2)$.  According to Eq.~\eqref{eq:finite-coordinate-negation}, $-\vct y=h(\vct\ell +\vct1/2)=-h((-\vct\ell-\vct 1)+\vct1/2)$. Then the discretized counterpart of $J_0$ is defined by
\begin{equation}
     {\widehat J_0}\ket{\vct r,\vct \ell}
       =\ket{-\vct \ell-\vct1,\vct r},\qquad \vct r,\vct\ell\in\mathcal{I}^p.
    \label{eq:shear-finite-signed-swap}
\end{equation}
The following lemma shows the circuit complexity of $\widehat J_0$.
\begin{lemma}\label{lem:finite-signed-swap}
 The $2pq$-qubit finite signed swap $\widehat J_0$ (Eq. \eqref{eq:shear-finite-signed-swap}) can be implemented by a standard circuit of depth $O(1)$ and size $O(pq)$, without ancillary qubits.
\end{lemma}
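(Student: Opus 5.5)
The map $\widehat J_0$ factors into a negation on one block followed by a block swap. On basis states,
\[
  \ket{\vct r,\vct\ell}\xrightarrow{\;I^{\otimes pq}\otimes X^{\otimes pq}\;}\ket{\vct r,-\vct\ell-\vct1}\xrightarrow{\;\mathsf{SWAP}\;}\ket{-\vct\ell-\vct1,\vct r},
\]
where $\mathsf{SWAP}$ exchanges the first $pq$ qubits (block $\vct x$) with the last $pq$ qubits (block $\vct y$). The first arrow uses Eq.~\eqref{eq:finite-coordinate-negation}. That identity says the two's-complement word of $-k-1$ is the bitwise complement of the word of $k$, and that $-k-1\in\mathcal I$ whenever $k\in\mathcal I$. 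So the label $-\vct\ell-\vct1$ is again a valid grid label, and one layer of $pq$ parallel $X$ gates on the second block realizes the coordinatewise negation $t_{\ell_j}\mapsto t_{-\ell_j-1}=-t_{\ell_j}$ exactly. Both factors permute computational basis states, and the composition agrees with Eq.~\eqref{eq:shear-finite-signed-swap} on every basis state. By linearity, $\widehat J_0=\mathsf{SWAP}\cdot(I^{\otimes pq}\otimes X^{\otimes pq})$ as unitaries.

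For the swap, I pair qubit $(j,\alpha)$ of block $\vct x$, meaning the $\alpha$-th qubit of coordinate register $j$, with qubit $(j,\alpha)$ of block $\vct y$. This gives $pq$ disjoint two-qubit SWAP gates, all applied in a single layer. If the gate set counted only CNOTs and one-qubit gates, each SWAP would become three CNOTs, so the depth would be at most $3$ and the size $3pq$. In the standard model, arbitrary two-qubit gates are allowed, so one layer suffices. The same argument as in the proof of Theorem~\ref{thm:CQ} also applies: any fixed qubit permutation takes at most two layers of disjoint SWAPs.

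Adding up, the circuit has depth $1+O(1)=O(1)$ and size $pq+O(pq)=O(pq)$, and it uses no ancillary qubits. There is no real obstacle here. The one point to check is that negation is exact on the finite grid, which holds because the grid is symmetric, $t_k=h(k+\tfrac12)$, so bit complementation maps $\mathcal I$ onto itself. This is why no modular-arithmetic circuit, which would need carry propagation, is required.
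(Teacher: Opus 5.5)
Your proposal is correct and is essentially the paper's argument: the paper swaps the two $pq$-qubit blocks in constant depth and then applies $X^{\otimes pq}$ to the first block. You instead apply the $X$ layer to the second block before the swap, which is the same circuit with the $X$ layer commuted through the swap, and your remark that bit complementation maps $\mathcal I$ onto itself is exactly what Eq.~\eqref{eq:finite-coordinate-negation} supplies.
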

\begin{proof}
For any input state $\ket{\vct r,\vct \ell}$, swap the two $pq$-qubit blocks, using a circuit of depth $O(1)$ and size $O(pq)$. Then apply $X^{\otimes pq}$ to the first block, as in Eq.~\eqref{eq:finite-coordinate-negation}.. By Eq.~\eqref{eq:finite-coordinate-negation}, we apply $X^{\otimes pq}$ on the first $pq$ qubits and obtain $(X^{\otimes pq}\otimes I^{\otimes pq})\ket{\vct \ell,\vct r}=\ket{-\vct\ell-\vct1,\vct r}$.
The total depth and size are $O(1)$ and $O(pq)$.
\end{proof}

Recall that $F_{U}=T_{R_U}S_{-R_U^T}T_{R_U}J_0$. Then the discretized counterpart of $F_U$ is 
\begin{equation}\label{eq:dir-FU}
    \mathcal{P}_{\rm  {discrete}}= {\widehat{T}_{R_U}\widehat{S}_{-R_U^T}\widehat{T}_{R_U}\widehat J_0} 
\end{equation}
\begin{theorem}\label{thm:finiyr-coor-operation}
    The finite coordinate operation $\mathcal{P}_{\rm  {discrete}}$ (Eq.~\eqref{eq:dir-FU}) can be implemented by a standard circuit of depth $O(q+\log(pq))$ and size $O(p^2q^2)$, using $O(p^2q^2)$ ancillary qubits.
\end{theorem}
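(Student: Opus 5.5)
The bound follows by composing the four factors of Eq.~\eqref{eq:dir-FU} sequentially and adding up the costs already established for each. The matrices involved are $H=R_U$ for the two upper shears and $K=-R_U^T$ for the lower shear; both lie in $\R^{p\times p}$. Lemma~\ref{lem:shear} therefore applies directly to $\widehat T_{R_U}$ (used twice) and to $\widehat S_{-R_U^T}$. Each of these three circuits has depth $O(q+\log(pq))$, size $O(p^2q^2)$, and uses $O(p^2q^2)$ clean ancillary qubits. Lemma~\ref{lem:finite-signed-swap} implements $\widehat J_0$ in depth $O(1)$ and size $O(pq)$ with no ancillas.

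The plan is to run these four circuits one after another on the same $2pq$ data qubits, in the order $\widehat J_0$, then $\widehat T_{R_U}$, then $\widehat S_{-R_U^T}$, then $\widehat T_{R_U}$. This is the right-to-left operator order of Eq.~\eqref{eq:dir-FU}. Adding the costs gives:
\begin{itemize}
\item depth $O(1)+3\,O(q+\log(pq))=O(q+\log(pq))$;
\item size $O(pq)+3\,O(p^2q^2)=O(p^2q^2)$.
\end{itemize}

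The only point that needs care is the ancilla count. Every subcircuit from Lemma~\ref{lem:shear} returns its ancillary qubits to $\ket0$. In Lemma~\ref{lem:diag-finite} this holds because the fan-out copies are uncomputed by a second application of $(F_{pq})^{\otimes 2pq}$, and the Fourier layers use no ancillas at all. Since the ancillas are clean after each stage, one pool of $O(p^2q^2)$ qubits can be reused by all three shears. The total ancilla count is then the maximum over the stages, $O(p^2q^2)$, not a sum that would grow with the number of stages.

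I expect no real obstacle. The one step to check explicitly is that the ancillas are restored exactly, not just approximately, so that reusing them introduces no error and each stage leaves the data registers in the intended state. This holds because every gate acting on the copies is diagonal in the computational basis.
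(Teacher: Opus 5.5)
Your proposal is correct and matches the paper's proof, which likewise just composes the costs from Lemmas~\ref{lem:shear} and~\ref{lem:finite-signed-swap} over the four factors of Eq.~\eqref{eq:dir-FU}. Your ancilla-reuse argument is valid but not needed: there are only a constant number of stages, so even summing their separate ancilla counts already gives $O(p^2q^2)$.
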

\begin{proof}
    The proof follows directly from Lemmas \ref{lem:shear} and \ref{lem:finite-signed-swap}.
\end{proof}

\subsection{Error analysis and resource bounds for approximate unitary synthesis}
\label{subsec:core-error}
Let
\begin{equation}
    B^2\ge2^{20}(n+20).
\end{equation}
Recall
\[
    \ell=4\cdot2^n, \quad
    Q=2B^2,\quad h=B^{-1},\quad \widehat h=\frac{2\pi}{Qh}=\frac{\pi}{B}.
\]

We first establish two bounds for the error analysis. The proofs are deferred to Appendix \ref{append:core-lattice}.
\begin{lemma}
\label{lem:core-lattice}
{ Let $A\in\mathbb C^{\ell\times\ell}$ be symmetric, with
$G=\operatorname{Re}A$ satisfying $\frac1{64}\Id\preceq G\preceq64\Id$ and let $\vct b\in\mathbb C^\ell\setminus\{\vct0\}$.
Define
\[
    f(\vct z)
       =\alpha(\vct b^T\vct z)e^{-\vct z^TA\vct z/2},
    \qquad \vct z\in\mathbb R^\ell,
\]
where $\alpha\in\mathbb C$ is a normalization constant such that $\int_{\mathbb R^\ell}|f(\vct z)|^2\,d\vct z=1$.
For each $0\le j<\ell$, choose a spacing
$a_j\in\{h,\widehat h\}$ and an offset $\sigma_j\in\mathbb R$.
Define the infinite grid points by
\[
    \vct z_{\vct k}
       =\bigl(a_0(k_0+\sigma_0),\ldots,
               a_{\ell-1}(k_{\ell-1}+\sigma_{\ell-1})\bigr)^T,
    \qquad \vct k\in\mathbb Z^\ell.
\]
Then the following bounds hold:
\begin{align}
    & \left|
        \left(\prod_{j=0}^{\ell-1}a_j\right)
        \sum_{\vct k\in\mathbb Z^\ell}
             |f(\vct z_{\vct k})|^2
        -1
        \right|
        \le18\ell e^{-B^2/128}.\label{eq:dir-norm-estimate}\\
        & \left(\prod_{j=0}^{\ell-1}a_j\right)
        \sum_{\substack{\vct k\in\mathbb Z^\ell\\
                        |z_{\vct k,i}|\ge R}}
             |f(\vct z_{\vct k})|^2
        \le6e^{-R^2/128},\qquad i\in\{0,1\ldots,\ell\},~\forall R\ge0.
        \label{eq:core-sampled-tail}
\end{align}}
\end{lemma}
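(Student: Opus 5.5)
The plan is to reduce both bounds to one-dimensional Poisson-summation and Gaussian-tail estimates. The quadratic form is diagonalized only partially; the lattice structure forces us to work coordinatewise, so we rely on tensor-product-like control rather than exact factorization.

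\textbf{Step 1: normalization and pointwise envelope.} Write $|f(\vct z)|^2=|\alpha|^2|\vct b^T\vct z|^2e^{-\vct z^TG\vct z}$. Since $G\succeq\frac1{64}\Id$, we have $|f(\vct z)|^2\le|\alpha|^2\norm{\vct b}^2\norm{\vct z}^2e^{-\norm{\vct z}^2/64}$. We also compute $|\alpha|^{-2}=\int|\vct b^T\vct z|^2e^{-\vct z^TG\vct z}d\vct z=\pi^{\ell/2}\det(G)^{-1/2}\cdot\tfrac12\,\vct b^{*}G^{-1}\vct b$, where the last factor comes from the Gaussian second moments of $\operatorname{Re}$ and $\operatorname{Im}$ parts. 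By $G\preceq 64\Id$ this is at least $\tfrac1{128}\norm{\vct b}^2\pi^{\ell/2}\det(G)^{-1/2}$. This bounds the prefactor $|\alpha|^2\norm{\vct b}^2$ relative to the Gaussian normalization $\pi^{-\ell/2}\det(G)^{1/2}$.

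\textbf{Step 2: Poisson summation for \eqref{eq:dir-norm-estimate}.} Set $F=|f|^2$, a Schwartz function whose integral is $1$. Poisson summation on the lattice $\prod_j a_j(\Z+\sigma_j)$ gives
\[
\Bigl(\prod_j a_j\Bigr)\sum_{\vct k}F(\vct z_{\vct k})=\sum_{\vct m\in\Z^\ell}\widehat F(2\pi\vct m/\vct a)e^{\iu\phi_{\vct m}},
\]
so the error equals $\sum_{\vct m\ne0}|\widehat F(2\pi\vct m/\vct a)|$. Because $F$ is a polynomial of degree $2$ times $e^{-\vct z^TG\vct z}$, its Fourier transform is a degree-$2$ polynomial times $e^{-\vct\xi^TG^{-1}\vct\xi/4}$. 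With $G^{-1}\succeq\frac1{64}\Id$, this is at most $C(1+\norm{\vct\xi}^2)e^{-\norm{\vct\xi}^2/256}$, normalized using Step 1. The dual spacings satisfy $2\pi/a_j\ge 2\pi/\max(h,\widehat h)=2B$ (recall $h=1/B$ and $\widehat h=\pi/B$), so every nonzero frequency has $\norm{\vct\xi}^2\ge 4B^2\norm{\vct m}^2$. Summing over $\vct m\ne0$ with $\norm{\vct m}_0=s$ yields about $\binom{\ell}{s}(2\sum_{k\ge1}e^{-B^2k^2/64}\cdot\mathrm{poly})^s$. The $s=1$ term dominates, giving $O(\ell e^{-B^2/128})$; the hypothesis $B^2\ge2^{20}(n+20)$ kills the combinatorial factor $\ell^s$ and the polynomial factors. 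Tracking constants then produces the $18$.

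\textbf{Step 3: tail bound \eqref{eq:core-sampled-tail}.} Use $e^{-\vct z^TG\vct z}\le e^{-z_i^2/128}e^{-\vct z^T(G/2)\vct z}$, which holds since $G/2\succeq\frac1{128}\Id$. On $\{|z_i|\ge R\}$ this gives the sampled tail mass $\le e^{-R^2/128}\cdot(\prod a_j)\sum_{\vct k}|\alpha|^2|\vct b^T\vct z_{\vct k}|^2e^{-\vct z_{\vct k}^T(G/2)\vct z_{\vct k}}$. The remaining sum is the lattice mass of the function of Step 2 with $G$ replaced by $G/2$ (still $\succeq\frac1{128}\Id$), and it is renormalized by the ratio of normalizations. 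That ratio is $2^{\ell/2}$ from the determinants times $2$ from the second moment, which is far too large.

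So the halving must be done only in the $i$-th direction. Instead use $e^{-\vct z^TG\vct z}\le e^{-z_i^2\lambda/2}e^{-\vct z^TG'\vct z}$ with $G'=G-\frac{\lambda}{2}\vct e_i\vct e_i^T$ and $\lambda=1/64$. Then $G'\succeq \frac{1}{128}\Id$, and $\det G'/\det G\ge 1/2$ by the rank-one determinant formula, since $(G^{-1})_{ii}\lambda/2\le 1/2$. Applying Step 2 to $G'$ bounds its lattice mass by $(1+\text{tiny})$ times its integral. That integral is at most $\sqrt2$ times a second-moment ratio, which is at most $\le 4$ after comparing $\vct b^*G'^{-1}\vct b$ with $\vct b^*G^{-1}\vct b$ via Sherman–Morrison. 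This yields a constant around $6$.

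\textbf{Main obstacle.} I expect the main obstacle to be this constant-sensitive rank-one comparison, together with making the Poisson error uniform in $\ell$, which is exponentially large. The step I would check most carefully is whether the $\vct m$-sum in Step 2 genuinely factorizes. Because $G$ is not diagonal, I would bound $e^{-\vct\xi^TG^{-1}\vct\xi/4}\le\prod_j e^{-\xi_j^2/256}$ to force a product structure before summing.
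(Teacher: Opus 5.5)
Your proposal is correct and follows essentially the same route as the paper. The first bound comes from Poisson summation, with a Gaussian bound on the Fourier transform of $|f|^2$ forced into product form via $2\pi/a_j\ge 2B$, giving $3\bigl[(1+3e^{-B^2/128})^\ell-1\bigr]\le 18\ell e^{-B^2/128}$. The tail bound uses exactly the rank-one modification $G'=G-\tfrac1{128}\vct e_i\vct e_i^T$, the matrix determinant lemma ($\det G/\det G'\le 2$), a quadratic-form ratio bound, and a reapplication of the lattice-mass estimate.

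The differences are cosmetic:
\begin{itemize}
\item The paper computes the characteristic function exactly as $\bigl(1-\tfrac{|\vct b^TG^{-1}\vct t|^2}{2\vct b^\dagger G^{-1}\vct b}\bigr)e^{-\vct t^TG^{-1}\vct t/4}$, instead of bounding its polynomial prefactor through your Step~1.
\item The paper gets the ratio $\vct b^\dagger (G')^{-1}\vct b/\vct b^\dagger G^{-1}\vct b\le 2$ from $G'\succeq G/2$. Your Sherman--Morrison computation also gives $2$, not merely $4$.
\item The Poisson error is bounded by $\sum_{\vct m\ne\vct 0}|\widehat F|$; it is not equal to it.
\end{itemize}
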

{Note that the same tail bound in Eq.~\eqref{eq:core-sampled-tail} also holds if, for any $j\ne i$, the index
    $k_j$ is restricted to a finite subset of $\mathbb Z$, and $k_i$ still ranges over $\mathbb Z$.} 

Continuous transformations act on wavefunctions defined on a continuous domain, whereas finite-qubit circuits act on finite-dimensional amplitude vectors. Their outputs therefore belong to different spaces and cannot be compared directly. We introduce the sampling map $\mathsf S_{\vct a,\vct\sigma}$ to associate each smooth wavefunction with a finite-dimensional vector on a specified grid.
\begin{definition}[Sampling map]\label{def:sampling-map}
For spacing vector $\vct a=(a_0,\ldots,a_{\ell-1})^T\in\{h,\widehat h\}^{\ell}$ and offset vector $\vct \sigma=(\sigma_0,\ldots,\sigma_{\ell-1})^T\in\R^\ell$, define the sampling map
\begin{equation}
    \mathsf S_{\vct a,\vct\sigma}f
       =\left(\prod_{j=0}^{\ell-1}\sqrt{a_j}\right)
         \sum_{\vct k\in\mathcal I^\ell}
         f\bigl(\vct a\mathbin\odot(\vct k+\vct\sigma)\bigr)
         \ket{\vct k}
    \label{eq:core-mixed-samples}
\end{equation}
where $\odot$ denotes the entrywise multiplication.
\end{definition}

This allows us to compare the vector obtained by sampling first and then applying the circuit with the vector obtained by applying the continuous transformation first and then sampling on the corresponding output grid in the following Lemma \ref{lem:core-one-fourier}. The norm of their difference quantifies the discretization error.


Recall that the discrete and continuous one-coordinate quantum Fourier transform:
\begin{align}
    D\ket{r}&=\frac{1}{\sqrt{Q}}\sum_{k\in\mathcal{I}}e^{-2\pi\iu k(r+1/2)/Q}\ket{k},\qquad r\in\mathcal{I},\\
    \mathcal F\ket{ x}&=
    \frac{1}{(2\pi)^{1/2}}
    \int_{\mathbb R}
    e^{-\iu \zeta x}
    \ket{\zeta}\,d \zeta,\qquad x\in\R.
\end{align}
\begin{lemma}[Error of one-coordinate finite Fourier transform]
\label{lem:core-one-fourier}
Let $D_i$ and $\mathcal{F}_i$ denote the discrete and continuous one-coordinate Fourier transform acting on the coordinate $i$. Suppose $f$ and $\mathcal F_i f$ satisfy Lemma~\ref{lem:core-lattice}.
Changing coordinate $i$ from $(h,1/2)$ to $(\widehat h,0)$ gives
\begin{equation}
    \norm{D_i\mathsf S_{\vct a,\vct\sigma}f
          -\mathsf S_{\vct a',\vct\sigma'}\mathcal F_i f}
       \le12e^{-B^2/256}.
    \label{eq:core-one-fourier-error}
\end{equation}
where $\vct a,\vct a',\vct \sigma, \vct \sigma'$ satisfy
\begin{equation}
    (a_i,\sigma_i)=(h,1/2),
    \qquad
    (a_i',\sigma_i')=(\widehat h,0),
    \qquad
    (a_j',\sigma_j')=(a_j,\sigma_j)
    \quad \forall j\ne i.
\end{equation}
The analogous bound
holds for $D_i^\dagger$ and the inverse continuous transform.
\end{lemma}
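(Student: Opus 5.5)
The estimate is essentially a Poisson-summation statement in the single coordinate $i$, with all other coordinates treated as fixed parameters. Write $\vct z=(z_i,\vct z_{-i})$ and fix the sample points of the other coordinates. Both vectors in \eqref{eq:core-one-fourier-error} then factor as $\prod_{j\ne i}\sqrt{a_j}$ times a sum over $\vct k_{-i}\in\mathcal I^{\ell-1}$. The problem therefore reduces to comparing, for each fixed $\vct k_{-i}$, the one-dimensional vectors
\[
u=\sqrt h\sum_{r\in\mathcal I} g(h(r+\tfrac12))\ket r,\qquad g(x)=f(x,\vct z_{-i}),
\]
after applying $D$, against $v=\sqrt{\widehat h}\sum_{k\in\mathcal I}\widehat g(\widehat hk)\ket k$, where $\widehat g$ is the one-dimensional continuous Fourier transform of $g$. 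The squared error is then summed over $\vct k_{-i}$ with the weights $\prod_{j\ne i}a_j$.

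\textbf{Step 1 (exact periodized identity).} Using $h\widehat h=2\pi/Q$, compute
\[
(Du)_k=\sqrt{\widehat h}\,\frac{h}{\sqrt{2\pi}}\sum_{r\in\mathcal I}g(h(r+\tfrac12))e^{-\iu\widehat hk\,h(r+1/2)}.
\]
If $r$ ranged over all of $\Z$, Poisson summation would give
\[
h\sum_{r\in\Z}g(h(r+\tfrac12))e^{-\iu\zeta h(r+1/2)}=\sqrt{2\pi}\sum_{m\in\Z}(-1)^m\widehat g(\zeta+2\pi m/h).
\]
Since $2\pi/h=Q\widehat h$, this yields $(Du)_k=\sqrt{\widehat h}\sum_m(-1)^m\widehat g(\widehat h(k+mQ))$. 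The error $Du-v$ thus splits into two pieces: (a) the spatial truncation $r\notin\mathcal I$, i.e.\ samples with $|z_i|\ge B-h$; and (b) the aliasing terms $m\neq0$, i.e.\ samples of $\widehat g$ at frequencies with $|\zeta_i|\ge \widehat hQ/2=\pi B$.

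\textbf{Step 2 (bounding the pieces).} For (a), $D$ is unitary, so the $\ell_2$ norm of the truncated tail is at most the square root of the sampled tail mass of $f$ in coordinate $i$ at radius $R\approx B-1$. Summed over $\vct k_{-i}$, the remark following Lemma~\ref{lem:core-lattice} (with finite ranges in the other coordinates) gives at most $6e^{-(B-1)^2/128}$. For (b), I will use the hypothesis that $\mathcal F_if$ also satisfies Lemma~\ref{lem:core-lattice}; it is again a linear factor times a Gaussian with $\frac1{64}\Id\preceq\operatorname{Re}A'\preceq 64\Id$. The aliasing sum for fixed $k$ is a sum over shifted frequencies $\widehat h(k+mQ)$ with $|\cdot|\ge\pi B/2$ or so. Bounding its $\ell_2$ norm over $k$ by Cauchy--Schwarz, with weights $(1+|m|)^{-2}$ so the $m$-sum converges, reduces it to the sampled tail of $\mathcal F_if$ on the grid $\widehat h\Z$ beyond radius $\sim \pi B/2$. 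Lemma~\ref{lem:core-lattice} then gives $O(e^{-B^2/128\cdot c})$. Adding the two norms should give the target $12e^{-B^2/256}$, with the square root converting $e^{-B^2/128}$ into $e^{-B^2/256}$ after absorbing constants via $B^2\ge2^{20}(n+20)$.

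\textbf{Main obstacle.} The delicate step is the aliasing piece (b). Poisson summation needs pointwise decay of $\widehat g$ for fixed $\vct z_{-i}$, but Lemma~\ref{lem:core-lattice} only provides summed tail masses. I will therefore keep everything in summed form: expand $\norm{\sum_{m\ne0}(\cdots)}^2$, apply Cauchy--Schwarz in $m$, and note that for each $m$ the set $\{k+mQ:k\in\mathcal I\}$ is a disjoint piece of $\Z$ outside $\mathcal I$. The total is then controlled by the full sampled tail of $\mathcal F_if$ in coordinate $i$ beyond $\pi B$, at the cost of a constant factor $\sum (1+|m|)^{-2}\cdot\max(1+|m|)^2$-type weight; the Gaussian decay at radius $|m|\pi B$ dominates that weight. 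The inverse case uses the same argument with $D^\dagger$ and conjugated phases.
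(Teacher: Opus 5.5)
Your overall route is the paper's. You periodize the samples in coordinate $i$ (the paper's $\vct u^{\mathrm{fold}}=\sum_{m\in\Z}\vct u^{(m)}$) and use the exact Poisson identity $D_i\vct u^{\mathrm{fold}}=\sum_m(-1)^m\vct v^{(m)}$ slice by slice in $\vct z'$. The spatial folding error and the frequency aliasing error are then controlled by the sampled tail bound of Lemma~\ref{lem:core-lattice} (for $f$ and for $\mathcal F_if$, with the other labels restricted to $\mathcal I$), and unitarity of $D_i$ joins the two pieces. Your Step~1 identity is correct. Your weighted Cauchy--Schwarz treatment of (b) works once you use per-shift tail bounds. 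In fact every aliased frequency satisfies $|\widehat h(k+mQ)|\ge(2|m|-1)\pi B$, so this piece is far smaller than needed.

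The problem is step (a). After $D$ is applied, the spatial tail does not enter as the vector $(u_s)_{s\notin\mathcal I}$. It enters folded onto $\mathcal I$, as $\sum_{m\ne0}\vct u^{(m)}$ with $u^{(m)}_{r,\vct k'}=\gamma\sqrt h\,f(h(r+\tfrac12+mQ),\vct z'_{\vct k'})$. These copies all live on the same index set and can add constructively; folding is not an isometry. So unitarity of $D$ does not let you bound $\norm{\sum_{m\ne0}\vct u^{(m)}}$ by the square root of the total tail mass $(\sum_{m\ne0}\norm{\vct u^{(m)}}^2)^{1/2}$.

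You need the same per-shift argument you use in (b). For $r\in\mathcal I$ one has $|h(r+\tfrac12+mQ)|\ge hQ(|m|-\tfrac12)=(2|m|-1)B$, so $\norm{\vct u^{(m)}}\le\sqrt6\,e^{-(2|m|-1)^2B^2/256}$. Summing with $\sum_{m\ge1}e^{-(2m-1)^2c}\le e^{-c}/(1-e^{-8c})$, where $c=B^2/256\ge1$, gives $6e^{-c}$. This is exactly what the paper does for both pieces, obtaining $6e^{-c}+6e^{-c}$.

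Relatedly, the first tail radius is $B$, not $\approx B-1$, since $h=1/B$ and $hQ/2=B$. With $R=B-1$ your bound becomes $\sqrt6\,e^{(2B-1)/256}e^{-B^2/256}$. The prefactor grows with $B$, so no choice of constants turns this into $O(e^{-B^2/256})$, and the lower bound on $B^2$ only makes it worse.

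With both corrections, even your weights $(1+|m|)^{-2}$ give about $7.9\,e^{-B^2/256}$ for (a), plus a negligible (b). That is within the stated $12e^{-B^2/256}$, though the plain triangle inequality is simpler.
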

The proof is shown in Appendix \ref{append:core-one-fourier}.
\begin{remark}[Why use different sampling maps for $f$ and $\mathcal{F}_if$?]
Consider a one-dimensional function $f$, sampled at
$x_r=h(r+\tfrac12)$ for $r\in\mathcal I$.
For the finite Fourier representation, we use Fourier modes with period $L=Qh$. Their frequencies satisfy
\[
e^{\iu\zeta(x+L)}=e^{\iu\zeta x}
\quad\Longleftrightarrow\quad
\zeta L\in2\pi\mathbb Z.
\]
Accordingly, we sample $\mathcal Ff$ at
\[
\zeta_k=\frac{2\pi k}{L}=\widehat h k,
\qquad
\widehat h=\frac{2\pi}{Qh},
\qquad k\in\mathcal I.
\]
The frequency grid therefore has spacing $\widehat h$
and zero offset. The half-cell offset of the position grid
is retained in the Fourier phase:
\[
D_{k,r}
=\frac{1}{\sqrt Q}e^{-\iu\zeta_kx_r}
=\frac{1}{\sqrt Q}
 e^{-2\pi\iu kr/Q}e^{-\pi\iu k/Q}.
\]
\end{remark}

Define sampling map $\mathsf S$ by
\begin{equation}\label{eq:samp-map-S}
    \mathsf S f=h^{\ell/2}\sum_{\vct k\in\mathcal{I}^\ell}f(h(\vct k+\frac{\vct 1}{2}))\ket{\vct k}.
\end{equation}
\begin{lemma}[Normalization of the sampled encoder]
\label{lem:red34-normalization}
For any $n$-qubit state $\ket{\psi}=\sum_{j=0}^{d-1}c_j\ket{j}$ with $\vct c=(c_0,c_1,\ldots,c_{d-1})^T\in\C^d$, define
\begin{equation}
       \ket{\Psi_Q(\vct c)} =\frac{1}{\sqrt2}\sum_{j=0}^{d-1}c_j(\ket{\vct\Phi_{j,Q}} -\iu\ket{\vct\Phi_{d+j,Q}}),
  \end{equation}
  where 
  \begin{equation}
      \ket{\vct\Phi_{t,Q}} =\ket{\vct e_Q}_t \bigotimes_{\substack{0\le\nu<\ell\\ \nu\ne t}} \ket{\vct g_Q}_\nu, \qquad 0\le t<\ell.
  \end{equation}
Let $f_{\vct b}$ be the wavefunction defined in Eq.~\eqref{eq:gaussian-linear-wavefunction}. Then 
\begin{equation}
 \mathsf S f_{V'V\vct c}
      =\rho_Q\ket{\Psi_Q(\vct c)},\qquad
 \rho_Q^2=2h^\ell\pi^{-\ell/2}Z_1Z_0^{\ell-1},
\label{eq:red34-normalization}
\end{equation}
where $\rho_Q>0$  satisfies
\begin{equation}
 \rho_Q^2\ge1-24\ell e^{-B^2/128}\ge\tfrac14.
\label{eq:red34-normalization-lower}
\end{equation}
\end{lemma}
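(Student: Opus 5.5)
The proof has two parts: an exact identity obtained by factorising the sampled wavefunction, and a lattice-sum estimate obtained from Lemma~\ref{lem:core-lattice}.

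\textbf{Step 1: the identity \eqref{eq:red34-normalization}.}
Put $\vct b=V'V\vct c=\frac1{\sqrt2}(\vct c,-\iu\vct c,\vct 0_p)\in\C^\ell$. Then $b_j=c_j/\sqrt2$ and $b_{d+j}=-\iu c_j/\sqrt2$ for $0\le j<d$, and all other entries vanish. By \eqref{eq:gaussian-linear-wavefunction},
\[
f_{\vct b}(\vct z)=\sqrt2\sum_{t}b_t z_t\,\pi^{-\ell/4}\prod_{\nu}e^{-z_\nu^2/2}.
\]
Sample at $\vct z=h(\vct k+\tfrac12\vct1)$, so that $z_\nu=t_{k_\nu}$. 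Each term then factorises into a product over coordinates. The factor for coordinate $t$ is $t_{k_t}e^{-t_{k_t}^2/2}=\sqrt{Z_1}\,\langle k_t|\vct e_Q\rangle$, and the factor for each $\nu\ne t$ is $e^{-t_{k_\nu}^2/2}=\sqrt{Z_0}\,\langle k_\nu|\vct g_Q\rangle$.

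Summing over $\vct k\in\mathcal I^\ell$ with the prefactor $h^{\ell/2}$ gives
\[
\mathsf S f_{\vct b}=\sqrt2\,h^{\ell/2}\pi^{-\ell/4}\sqrt{Z_1Z_0^{\ell-1}}\sum_t b_t\ket{\vct\Phi_{t,Q}}.
\]
Substituting the entries of $\vct b$ shows that $\sum_t b_t\ket{\vct\Phi_{t,Q}}=\ket{\Psi_Q(\vct c)}$. This is exactly \eqref{eq:red34-normalization} with $\rho_Q^2=2h^\ell\pi^{-\ell/2}Z_1Z_0^{\ell-1}$. In particular, $\rho_Q$ does not depend on $\vct c$.

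\textbf{Step 2: the lower bound \eqref{eq:red34-normalization-lower}.}
I first note that $\ket{\Psi_Q(\vct c)}$ is a unit vector. The states $\ket{\vct\Phi_{t,Q}}$ are orthonormal, because $\ket{\vct g_Q}\perp\ket{\vct e_Q}$, so $\norm{\Psi_Q(\vct c)}^2=\norm{\vct c}^2=1$. Hence
\[
\rho_Q^2=\norm{\mathsf S f_{\vct b}}^2=h^\ell\sum_{\vct k\in\mathcal I^\ell}|f_{\vct b}(\vct z_{\vct k})|^2 .
\]
Next I compare this finite sum with the full lattice sum over $\mathbb Z^\ell$. Lemma~\ref{lem:core-lattice} applies with $A=\Id$, all $a_j=h$, all $\sigma_j=1/2$, and $\alpha$ equal to the normalisation, since $f_{\vct b}$ is normalised by \eqref{eq:gaussian-coordinate-orthogonality}. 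Bound \eqref{eq:dir-norm-estimate} then gives a full sum at least $1-18\ell e^{-B^2/128}$.

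The remaining task is to control the points of $\mathbb Z^\ell$ outside $\mathcal I^\ell$. Any such point has some coordinate $i$ with $|z_{\vct k,i}|\ge B-h/2$. By a union bound over the $\ell$ coordinates together with the tail bound \eqref{eq:core-sampled-tail}, this part is at most $6\ell e^{-(B-h/2)^2/128}$.

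\textbf{Main obstacle.}
The only delicate point is getting the constants to match. I need $6\ell e^{-(B-h/2)^2/128}\le 6\ell e^{-B^2/128}\cdot O(1)$. Since $h=1/B$, we have $(B-h/2)^2\ge B^2-1$, so this costs only a factor $e^{1/128}$, which the slack $18+6\cdot e^{1/128}\le 24$ absorbs. Alternatively, I could apply the tail bound with $R=B-h/2$ directly and absorb the loss the same way. This yields $\rho_Q^2\ge1-24\ell e^{-B^2/128}$.

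Finally, $B^2\ge 2^{20}(n+20)$ and $\ell=2^{n+2}$ give $24\ell e^{-B^2/128}\le 3/4$, because $B^2/128\ge 2^{13}(n+20)$ dominates $\ln(32\ell)$. This proves \eqref{eq:red34-normalization-lower}.
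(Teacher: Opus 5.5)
Your Step 1 and the overall structure of Step 2 match the paper's proof. The paper runs the lattice estimates on a single basis function $\phi_j$ rather than on $f_{\vct b}$, which is immaterial because $\rho_Q$ does not depend on $\vct c$.

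The one step that fails is the constant bookkeeping in your ``main obstacle'' paragraph. Since $e^{1/128}>1$, we have $18+6e^{1/128}>24$, not $\le 24$. With your radius $R=B-h/2$ you therefore only get $\rho_Q^2\ge 1-(18+6e^{1/128})\ell e^{-B^2/128}$. This is strictly weaker than the inequality with constant $24$ claimed in the lemma. The final bound $\rho_Q^2\ge\tfrac14$ would still survive because of the large slack, but the stated intermediate inequality does not.

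The fix is that your radius errs in the unfavourable direction. $B-h/2$ is the largest $|t_k|$ attained \emph{inside} the grid, not the smallest outside it. If $\vct k\notin\mathcal I^\ell$, some coordinate has $k_i\ge Q/2$ or $k_i\le -Q/2-1$. Since $Qh=2B$, this gives $|t_{k_i}|=|h(k_i+\tfrac12)|\ge B+\tfrac h2\ge B$. Applying the tail bound with $R=B$ and a union bound over the $\ell$ coordinates then gives exactly $6\ell e^{-B^2/128}$ for the omitted mass. The constants then add as $18+6=24$ with no absorption needed, which is how the paper argues.
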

The proof is deferred to Appendix \ref{append:red34-normalization}.

 Recall that $F_U$ can be decomposed as follows (Eq.~\eqref{eq:shear-factorization}):
\[F_U=T_{R_U}S_{-R_U^T}T_{R_U}J_0.\]
Moreover, each of $T_{R_U}$ and $S_{-R_U^T}$ decomposes into a Fourier transform, a diagonal phase operation, and the inverse Fourier transform; see Eqs.~\eqref{eq:TH-step1}--\eqref{eq:TH-step3}.

The following lemma shows that, in the continuous implementation described above, the wavefunctions before and after each transformation satisfy the hypotheses of Lemma~\ref{lem:core-lattice}, allowing us to apply that lemma in the error analysis.

{\begin{lemma}
\label{lem:red34-gaussian-closure}
Let $\vct c\in\mathbb C^d$ satisfy
$\norm{\vct c}=1$, and let $f_{\vct b}(\vct z)=\sqrt2\,(\vct b^T\vct z)G_\ell(\vct z)$ be the input wavefunction of $F_U$ for $\vct b=V'V\vct c$.
For each $M\in \mathcal{S}:=\{ \Id,\, J_0,\, T_{R_U}J_0,\,
S_{-R_U^T}T_{R_U}J_0,\, F_U \}$,
define
\[
f_M(\vct z) =f_{\vct b}(M^{-1}\vct z).
\]
Define a partial Fourier transform
\[
\mathcal F_S=\prod_{i\in S}\mathcal F_i,
\qquad
\mathcal F_{\varnothing}=\Id,\qquad S\subseteq\{0,\ldots,p-1\} \quad \text{or}\quad S\subseteq\{p,\ldots,\ell-1\}. 
\]
Then there exist $\alpha\in\mathbb C\setminus\{0\}$,
$\vct\beta\in\mathbb C^\ell\setminus\{0\}$, and
$A\in\mathbb C^{\ell\times\ell}$ such that
\[
(\mathcal F_Sf_M)(\vct w)
=
\alpha(\vct\beta^T\vct w)
e^{-\vct w^TA\vct w/2},
\]
where $\vct w=(\vct\zeta_S,\vct z_{T})$ and
$T=\{0,\ldots,\ell-1\}\setminus S$.
Moreover,
\[
\norm{\mathcal F_Sf_M}=1,
\qquad
A^T=A,
\qquad
\Id/64\preceq\operatorname{Re}A\preceq64\Id.
\]
\end{lemma}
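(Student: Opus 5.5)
We track each wavefunction as a Gaussian of the form $\alpha(\vct\beta^T\vct w)e^{-\vct w^TA\vct w/2}$ and check that the class is closed under the operations that occur. There are two cases.

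\textbf{Case $S=\varnothing$.} Each $M\in\mathcal S$ is a product of the orthogonal-or-unimodular maps $J_0$, $T_{R_U}$, $S_{-R_U^T}$. Since $M$ is invertible and $\det M=\pm1$, we get
\[
f_M(\vct z)=\sqrt2\,\bigl((M^{-T}\vct b)^T\vct z\bigr)\pi^{-\ell/4}e^{-\vct z^TM^{-T}M^{-1}\vct z/2}.
\]
So $A_M=(MM^T)^{-1}$ is real symmetric, $\vct\beta=M^{-T}\vct b\neq0$, and $\norm{f_M}=1$ because change of variables preserves the $L^2$ norm. The spectral bounds reduce to $\norm M_{\op},\norm{M^{-1}}_{\op}\le 8$. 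Write $M$ as a word in $J_0$ (orthogonal), $T_{R_U}=\begin{pmatrix}I&R_U\\0&I\end{pmatrix}$ and $S_{-R_U^T}$. Each shear has operator norm $\le(1+\sqrt5)/2$, since $\norm{R_U}_{\op}=1$, and the same holds for its inverse. The longest word uses two $T$'s and one $S$, giving norm $\le\phi^3\approx4.24\le8$. In fact $F_U$ is orthogonal, so its bound is $1$. Hence $\Id/64\preceq A_M\preceq 64\Id$.

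\textbf{Case $S\neq\varnothing$.} Here $S$ lies within a single block, and we apply $\mathcal F_S$ to $f_M$. We write $A_M$ in block form with respect to $(S,T)$:
\[
A_M=\begin{pmatrix}P&C\\ C^T&E\end{pmatrix}.
\]
The Fourier transform of a Gaussian times a linear form, in the variables $\vct z_S$, gives again a Gaussian times a linear form. The new quadratic form is the symmetric matrix
\[
A'=\begin{pmatrix}P^{-1}&-\iu P^{-1}C\\ -\iu C^TP^{-1}&E-C^TP^{-1}C\end{pmatrix}.
\]
The new linear form is obtained by moving $\vct\beta_S$ through $-\iu P^{-1}$ together with the shift. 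We compute this by completing the square and differentiating the Gaussian integral with respect to the source term. The constant $\alpha'$ is nonzero. Unitarity of $\mathcal F_S$ gives $\norm{\mathcal F_Sf_M}=1$. We check $\vct\beta'\neq0$ from injectivity of $\mathcal F_S$, since a zero linear factor would give the zero function.

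Next we bound $\operatorname{Re}A'=\operatorname{diag}(P^{-1},\,E-C^TP^{-1}C)$. The block $P^{-1}$ lies between $1/64$ and $64$ because $P$ is a principal submatrix of $A_M$. The Schur complement $E-C^TP^{-1}C=((A_M^{-1})_{TT})^{-1}$ obeys the same bounds, since it is the inverse of a principal submatrix of $A_M^{-1}$. This proves the claim for every $M$ and every admissible $S$.

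\textbf{Main obstacle.} The delicate step is the norm bound on the intermediate products. If the crude estimate $\phi^3$ exceeds the budget for some combination, the fallback is to compute $MM^T$ explicitly for $M=T_{R_U}J_0$ and $M=S_{-R_U^T}T_{R_U}J_0$. After conjugating by $\operatorname{diag}(R_U,I)$, these become block matrices in $I$ only. For example, $T_{R_U}J_0$ is orthogonally equivalent to $\begin{pmatrix}I&-I\\ I&0\end{pmatrix}$, whose singular values are $\phi^{\pm1}$. This yields the explicit constants $\phi^{\pm2}$, well inside $[1/64,64]$.
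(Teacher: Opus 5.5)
Your proposal is correct and follows the paper's proof essentially step by step. It uses the same representation $A=M^{-T}M^{-1}$, $\vct\beta=M^{-T}\vct b$, controlled by operator-norm bounds on the shear products. It then uses the same completed-square partial-Fourier formula, with $\operatorname{Re}A'=\operatorname{diag}(A_{SS}^{-1},E)$ bounded via principal submatrices of $A$ and $A^{-1}$. The differences are cosmetic. You use the sharp shear norm $(1+\sqrt5)/2$ where the paper uses the cruder bound $2$; either gives $\norm{M^{\pm1}}_{\op}\le 8$. You also get $\vct\beta'\neq0$ from $\norm{\mathcal F_Sf_M}=1$, whereas the paper uses the invertibility of $A_{SS}$.
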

The proof is deferred to Appendix \ref{append:red34-gaussian-closure}.
\begin{remark}
    According to Eqs.~\eqref{eq:TH-step1}, \eqref{eq:TH-step2} and \eqref{eq:TH-step3}, $T_{R_U}$ and $S_{-R_U^T}$ can be implemented by applying a partial Fourier transform, followed by a phase gate and then the inverse partial Fourier transform. Therefore, an intermediate state in the implementation of $T_{R_U}$ and $S_{-R_U^T}$ can be obtained by applying a partial Fourier transform to \(f_M\) for some \(M\in \mathcal{S}\). By the discussion above, such intermediate states satisfy the assumption of Lemma \ref{lem:core-lattice}.
\end{remark}

\begin{lemma}[Finite coordinate error on Gaussian-encoded states]
\label{lem:red34-coordinate-error}
Let $U\in\mathrm U(d)$ and $\vct c\in\mathbb C^d$ satisfy
$\norm{\vct c}=1$.  The finite coordinate operation $\mathcal{P}_{\rm  {discrete}}$ (Eq.~\eqref{eq:dir-FU}) satisfies
\begin{equation}
\norm{
\mathcal P_{\rm  {discrete}}\ket{\Psi_Q(\vct c)}
-\ket{\Psi_Q(U\vct c)}
}
\le72\ell e^{-B^2/256}
\le 100\ell e^{-B^2/256}.
\label{eq:red34-coordinate-error}
\end{equation}
The bound is uniform in $U$ and in the unit vector $\vct c$.
\end{lemma}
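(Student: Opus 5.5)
The plan is a telescoping (hybrid) argument along the factorization $\mathcal P_{\rm discrete}=\widehat T_{R_U}\widehat S_{-R_U^T}\widehat T_{R_U}\widehat J_0$. At every intermediate stage we compare the actual finite vector with the sample, under the appropriate mixed sampling map $\mathsf S_{\vct a,\vct\sigma}$, of the corresponding continuous wavefunction.

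\textbf{Reduction to sampled wavefunctions.} Let $\vct b=V'V\vct c$. By Lemma~\ref{lem:red34-normalization} we have $\mathsf S f_{\vct b}=\rho_Q\ket{\Psi_Q(\vct c)}$. Since $F_U\vct b=V'VU\vct c$ by Proposition~\ref{prop:transform}, we also have $\mathsf S f_{F_U}=\mathsf S f_{V'VU\vct c}=\rho_Q\ket{\Psi_Q(U\vct c)}$, where $f_M(\vct z)=f_{\vct b}(M^{-1}\vct z)$ is as in Lemma~\ref{lem:red34-gaussian-closure}. It therefore suffices to prove
\[
\norm{\mathcal P_{\rm discrete}\,\mathsf S f_{\vct b}-\mathsf S f_{F_U}}\le 36\ell e^{-B^2/256}.
\]
Dividing by $\rho_Q\ge 1/2$ from \eqref{eq:red34-normalization-lower} then gives the claimed $72\ell e^{-B^2/256}$. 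Throughout, we use the elementary hybrid inequality for a unitary $W$:
\[
\norm{Wx-y}\le\norm{x-x'}+\norm{Wx'-y}.
\]
This lets the errors of successive stages add up without being amplified.

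\textbf{Exact stages.} Two kinds of stages introduce no error.
\begin{itemize}
\item \emph{The signed swap.} The grid $\{h(k+\tfrac12)\}_{k\in\mathcal I}$ is symmetric, and $t_{-k-1}=-t_k$. Hence $\widehat J_0\,\mathsf S f_{\vct b}=\mathsf S f_{J_0}$ exactly, by \eqref{eq:shear-finite-signed-swap} and \eqref{eq:finite-coordinate-negation}.
\item \emph{The diagonal phase steps.} After the first block has been moved to frequency coordinates $(\widehat h,0)$, the finite phase $\Delta^\uparrow_H$ in \eqref{eq:Delta-H} is exactly the continuous multiplier $e^{-\iu\vct\zeta^TH\vct y}$ evaluated at the grid points $\vct\zeta=\widehat h\vct k$, $\vct y=h(\vct\ell+\tfrac12\vct1)$. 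Pointwise multiplication commutes with sampling, so this step is exact. The same holds for $\Delta^\downarrow_K$.
\end{itemize}

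\textbf{Fourier stages.} The only errors come from the $D$ and $D^\dagger$ layers. I would apply $D^{\otimes p}$ as $p$ successive single-coordinate transforms $D_i$, changing coordinate $i$ from $(h,1/2)$ to $(\widehat h,0)$ one at a time. Each step costs at most $12e^{-B^2/256}$ by Lemma~\ref{lem:core-one-fourier}, and the same bound holds for $D_i^\dagger$ going back.

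The hypotheses of Lemma~\ref{lem:core-one-fourier} must hold at every intermediate step. The relevant wavefunctions have the form $\mathcal F_S f_M$ with $S$ contained in a single block and $M\in\mathcal S$. By Lemma~\ref{lem:red34-gaussian-closure} (see also the remark after it), each of them is a normalized linear-times-Gaussian function satisfying $\Id/64\preceq\operatorname{Re}A\preceq64\Id$. This is exactly what Lemma~\ref{lem:core-lattice} requires.

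Collecting the costs: each shear contributes $2p$ single-coordinate transforms, hence error at most $24p\,e^{-B^2/256}$. The three shears give $72p=18\ell$ in units of $e^{-B^2/256}$, which is within the target $36\ell$ (so there is slack). Finally, the continuous identities \eqref{eq:TH-step1}--\eqref{eq:TH-step3} confirm that the stage outputs are successively $f_{T_{R_U}J_0}$, $f_{S_{-R_U^T}T_{R_U}J_0}$ and $f_{F_U}$.

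\textbf{Main obstacle.} The delicate part is bookkeeping rather than analysis. At every hybrid step the spacing and offset vectors $(\vct a,\vct\sigma)$ of the finite vector must match those used for the continuous function. In addition, the continuous function at that step must be of the exact form covered by Lemma~\ref{lem:red34-gaussian-closure}, including the partially transformed states in the middle of a $D^{\otimes p}$ layer. I would handle this by inducting over coordinates $i=0,\dots,p-1$ within each block, recording $(\vct a,\vct\sigma)$ explicitly at each stage. All bounds used are uniform in $U$ and in the unit vector $\vct c$, so the final estimate is uniform as well.
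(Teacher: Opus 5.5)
Your proposal is correct and follows essentially the same argument as the paper: a telescoping comparison over the $6p$ single-coordinate Fourier steps (each bounded by $12e^{-B^2/256}$ via Lemma~\ref{lem:core-one-fourier} and Lemma~\ref{lem:red34-gaussian-closure}), with the diagonal phases and the signed swap contributing zero error, followed by division by $\rho_Q\ge 1/2$. One small arithmetic slip: since $\ell=2p$, you have $72p=36\ell$, not $18\ell$, so your intermediate bound meets the $36\ell e^{-B^2/256}$ target exactly rather than with slack; the final bound $72\ell e^{-B^2/256}$ is unaffected.
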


\begin{proof}
First, we specify the continuous and finite operations
used in the comparison.
Split each block Fourier transform into its $p$
one-coordinate transforms.
Treat each diagonal phase block and the signed swap
as a single operation.
This gives $m=6p+4$ operations in chronological order.
Let $A_j$ denote the $j$th finite operation and
$\mathcal B_j$ its continuous counterpart, so that
\[
\mathcal P_{\rm  {discrete}}=A_m\cdots A_1.
\]
Define the ideal continuous wavefunctions by
\[
f_0=f_{V'V\vct c},
\qquad
f_j=\mathcal B_jf_{j-1},
\qquad 1\le j\le m.
\]
Let $\mathsf S_j$ be the sampling map for the position
and frequency variables of $f_j$.
Both endpoints use the position grid, so $\mathsf S_0=\mathsf S_m=\mathsf S$, where $\mathsf S$ is defined in Eq.~\eqref{eq:samp-map-S}.
The continuous identity $\mathcal P_{F_U}C=CU$ gives
\[
f_m=f_{V'VU\vct c}.
\]

Second, we show that the diagonal phases and signed swap
have zero sampling error.
For the $T_H$ with $H=R_U$, the mixed-grid points are $\vct\zeta=\widehat h\vct k$ and $\vct y=h(\vct \ell+\tfrac12\vct1)$.
Using $h\widehat h=2\pi/Q$, we obtain
\begin{align}
e^{-\iu\vct\zeta^TH\vct y}= e^{-\iu h\widehat h \vct k^TH(\vct \ell+\tfrac12\vct1)}=\exp\left[
-\frac{2\pi\iu}{Q}
\vct k^TH(\vct \ell+\tfrac12\vct1)
\right].
\end{align}
This is exactly the multiplier defining $\Delta_H^\uparrow$ (Eq.~\eqref{eq:Delta-H}). For the $S_K$ with $K=-R_U^T$, the mixed-grid points are $\vct x=h(\vct r+\tfrac12\vct1),$ and $
\vct\eta=\widehat h\vct k.$
Thus
\begin{equation}
e^{-\iu\vct\eta^TK\vct x} = \exp\left[ -\frac{2\pi\iu}{Q}
\vct k^TK(\vct r+\tfrac12\vct1) \right],    
\end{equation}
which is the multiplier defining $\Delta_K^\downarrow$ (Eq.~\eqref{eq:Delta-K}).
The input and output grids of each phase operation are
identical. Multiplying the sampled amplitudes therefore
agrees exactly with sampling the multiplied wavefunction.

For the signed swap, recall that $\vct t_{\vct r}=h(\vct r+\tfrac12\vct1)$ for $r\in\mathcal{I}^p$.
Then we have
\begin{align}
     {\widehat{J}_0}\mathsf S f=& {\widehat{J}_0} h^p\sum_{\vct r,\vct \ell\in\mathcal I^p}f(\vct t_{\vct r},\vct t_{\vct \ell}) \ket{\vct r,\vct \ell}\\
      = & h^p\sum_{\vct r,\vct \ell\in\mathcal I^p} f(\vct t_{\vct r},\vct t_{\vct \ell})\ket{-\vct \ell-\vct1,\vct r}= h^p\sum_{\vct r',\vct \ell'\in\mathcal I^p} f(\vct t_{\vct \ell'},-\vct t_{\vct r'}) \ket{\vct r',\vct \ell'}\\
      = & \mathsf S(f\circ J_0^{-1}),
\end{align}

Since $\ell=2p$, position sampling gives
\[
\mathsf S f
=
h^p\sum_{\vct r,\vct \ell\in\mathcal I^p}
f(\vct t_{\vct r},\vct t_{\vct \ell})
\ket{\vct r,\vct \ell}.
\]
Applying $ {\widehat{J}_0}$ yields
\[
 {\widehat{J}_0}\mathsf S f
=
h^p\sum_{\vct r,\vct \ell\in\mathcal I^p}
f(\vct t_{\vct r},\vct t_{\vct \ell})
\ket{-\vct \ell-\vct1,\vct r}.
\]
The substitution
\[
\vct r'=-\vct \ell-\vct1,
\qquad
\vct \ell'=\vct r
\]
is a bijection on $\mathcal I^p\times\mathcal I^p$.
Using $\vct t_{-\vct r'-\vct1}=-\vct t_{\vct r'}$,
we obtain
\[
\begin{aligned}
 {\widehat{J}_0}\mathsf S f=
h^p\sum_{\vct r',\vct \ell'\in\mathcal I^p}
f(\vct t_{\vct \ell'},-\vct t_{\vct r'})
\ket{\vct r',\vct \ell'}
=\mathsf S(f\circ J_0^{-1}),
\end{aligned}
\]
because $J_0^{-1}(\vct x,\vct y)=(\vct y,-\vct x)$.
Hence the finite signed swap also agrees exactly with
sampling its continuous action.

Consequently, whenever step $j$ is a diagonal phase
or the signed swap,
\[
A_j\mathsf S_{j-1}f_{j-1}
=\mathsf S_jf_j.
\]

Third, consider a one-coordinate Fourier step.
Lemma~\ref{lem:red34-gaussian-closure} shows that
$f_{j-1}$ and $f_j$ satisfy the assumptions required by
Lemma~\ref{lem:core-lattice}.
The latter lemma therefore gives
\[
\norm{
A_j\mathsf S_{j-1}f_{j-1}
-\mathsf S_jf_j
}
\le12e^{-B^2/256}.
\]
This estimate applies to both forward and inverse transforms.
Each shear contains $p$ forward and $p$ inverse transforms.
There are therefore $6p$ Fourier steps in total.

Fourth, we accumulate these local errors.
For each $j$, adding and subtracting
$A_j\mathsf S_{j-1}f_{j-1}$ gives
\begin{equation}
A_j\cdots A_1\mathsf S_0f_0-\mathsf S_jf_j=
A_j\bigl(A_{j-1}\cdots A_1\mathsf S_0f_0
-\mathsf S_{j-1}f_{j-1}\bigr)+\bigl(A_j\mathsf S_{j-1}f_{j-1}-\mathsf S_jf_j
\bigr).
\end{equation}
Every empty product is interpreted as the identity.
Iterating this equality yields
\begin{equation}
A_m\cdots A_1\mathsf S_0f_0-\mathsf S_mf_m=\sum_{j=1}^m A_m\cdots A_{j+1}
\bigl(A_j\mathsf S_{j-1}f_{j-1}-\mathsf S_jf_j\bigr).
\label{eq:red34-telescope}
\end{equation}
By the triangle inequality,
\begin{align}
\norm{A_m\cdots A_1\mathsf S_0f_0-\mathsf S_mf_m} &\le
\sum_{j=1}^m\norm{A_m\cdots A_{j+1}\bigl(A_j\mathsf S_{j-1}f_{j-1}-\mathsf S_jf_j\bigr)}\\
&= \sum_{j=1}^m \norm{A_j\mathsf S_{j-1}f_{j-1}-\mathsf S_jf_j}\\
&\le6p\cdot12e^{-B^2/256}\\
&=72p\,e^{-B^2/256}.
\end{align}
The last inequality uses the $6p$ Fourier-step estimates;
the diagonal phases and signed swap contribute zero.
Then we have
\begin{equation}
  \norm{
\mathcal P_{\rm  {discrete}}\mathsf S f_{V'V\vct c}
-\mathsf S f_{V'VU\vct c}
}
\le72p\,e^{-B^2/256}.  
\end{equation}
Each local difference is evaluated on the ideal wavefunction
$f_{j-1}$, rather than on an already perturbed circuit state.
The sampled vectors $\mathsf S_jf_j$ need not be normalized.

Finally, since $U$ is unitary,
$\norm{U\vct c}=\norm{\vct c}=1$.
Lemma~\ref{lem:red34-normalization} therefore gives
\[
\mathsf S f_{V'V\vct c}
=\rho_Q\ket{\Psi_Q(\vct c)},
\qquad
\mathsf S f_{V'VU\vct c}
=\rho_Q\ket{\Psi_Q(U\vct c)},
\qquad
\rho_Q\ge\frac12.
\]
The same factor occurs at both endpoints. Hence
\[
\begin{aligned}
&\rho_Q
\norm{
\mathcal P_{\rm  {discrete}}\ket{\Psi_Q(\vct c)}
-\ket{\Psi_Q(U\vct c)}
}=\norm{ \mathcal P_{\rm  {discrete}}\mathsf S f_{V'V\vct c} -\mathsf S f_{V'VU\vct c}
} \le72p\,e^{-B^2/256}.
\end{aligned}
\]
Dividing by $\rho_Q$ and using $\ell=2p$, we obtain
\begin{equation}
\norm{\mathcal P_{\rm  {discrete}}\ket{\Psi_Q(\vct c)}
-\ket{\Psi_Q(U\vct c)}}\le\frac{72p}{\rho_Q}e^{-B^2/256} \le100\ell e^{-B^2/256}.
\end{equation}
This proves Eq.~\eqref{eq:red34-coordinate-error}.
\end{proof}

\paragraph{Full-output accuracy and quantum resources.}
We combine the finite Gaussian encoder with the circuit implementing the finite coordinate operation. The Gaussian profiles and the elementary gates are implemented exactly. The approximation error therefore comes only from the finite coordinate grid.

\begin{theorem}[Restatement of Theorem \ref{thm:core-construction}]
\label{thm:core-full-resources}
Let $n\ge2$ and $0<\epsilon\le1/2$. Any $n$-qubit unitary $U\in\mathrm{U}(2^n)$ can be $\epsilon$-approximately implemented by a circuit of 
\begin{align*}
    & \text{depth:~~ $O(n+\log \log(1/\epsilon))$},\\ 
    & \text{size:~~ $O(4^n \log^2 (n+\log(1/\epsilon))+2^n\log(1/\epsilon))$,}
\end{align*}
 using $m=O(4^n \log^2 (n+\log(1/\epsilon))+2^n\log(1/\epsilon))$ ancillary qubits. 

More specifically, there is a circuit $T_{\rm  {discrete}}$ on $n+m$ qubits satisfying
\begin{equation}
    \norm{T_{\rm  {discrete}}J_{\rm  {discrete}}-J_{\rm  {discrete}}U}_{\op}
       < \epsilon,\quad \text{where}\qquad J_{\rm  {discrete}}\ket{\vct \psi}
       =\ket{\vct \psi}\ket{0^m},
\label{eq:core-strong-accuracy}
\end{equation}
\end{theorem}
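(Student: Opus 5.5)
The circuit will be $T_{\rm discrete}=C_Q^\dagger\,\mathcal P_{\rm discrete}\,C_Q$, acting on the $n$ input qubits together with the $q\ell-n$ coordinate qubits and the clean workspace of Theorems~\ref{thm:CQ} and \ref{thm:finiyr-coor-operation}. The encoder $C_Q$ is implemented exactly. Its restriction to inputs $\ket{\vct\psi}\ket{0}$ is the isometry $\ket{\vct\psi}\mapsto\ket{\Psi_Q(\vct c)}$. By linearity in $\vct c$, Lemma~\ref{lem:red34-coordinate-error} applied to unit vectors gives the operator-norm bound
\[
\norm{\mathcal P_{\rm discrete}C_QJ-C_QJU}_{\op}\le100\ell e^{-B^2/256}.
\]
Since $C_Q^\dagger$ is unitary and $C_Q^\dagger C_QJ U=JU$, the circuit satisfies the same bound:
\[
\norm{T_{\rm discrete}J_{\rm discrete}-J_{\rm discrete}U}_{\op}\le 100\ell e^{-B^2/256}.
\]
All workspace ancillas used inside $C_Q$ and $\mathcal P_{\rm discrete}$ are returned to zero exactly, because those subcircuits are exact. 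The only error is therefore the grid error, and it is measured on the full output including the coordinate register.

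Next we choose parameters. Take $B$ to be the smallest power of two with
\[
B^2\ge \max\{2^{20}(n+20),\;256\ln(400\cdot2^n/\epsilon)\},
\]
so that $100\ell e^{-B^2/256}<\epsilon$. This gives $B^2=O(n+\log(1/\epsilon))$. Hence $Q=2B^2=O(n+\log(1/\epsilon))$ and $q=\log Q=O(\log(n+\log(1/\epsilon)))$.

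Then we add up the resources. For depth, Theorem~\ref{thm:CQ} gives $O(q+n)$ for $C_Q$ and $C_Q^\dagger$, and Theorem~\ref{thm:finiyr-coor-operation} gives $O(q+\log(pq))=O(q+n)$. The total is $O(n+\log\log(1/\epsilon))$, since $q=O(\log n+\log\log(1/\epsilon))$. For size, the encoder costs $O(Qd)=O(2^n(n+\log(1/\epsilon)))$ and the coordinate operation costs $O(p^2q^2)=O(4^n\log^2(n+\log(1/\epsilon)))$. The term $2^n n$ is absorbed into $4^n$, so the total is $O(4^n\log^2(n+\log(1/\epsilon))+2^n\log(1/\epsilon))$. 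The ancilla count is the same expression: the $q\ell-n$ coordinate qubits plus both workspaces, each bounded the same way.

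The main obstacle is making the lift from Lemma~\ref{lem:red34-coordinate-error}, a statement about unit vectors, to an operator-norm statement on the whole encoded subspace, and checking that nothing is lost there. The uniformity in $\vct c$ is what is needed: for any input state, entangled with a reference or not, the error vector is a fixed linear map applied to $\vct c$, so its norm is at most the stated constant. A second point to check is that $C_Q$ maps $\ket j\ket0$ exactly onto $\ket{\Psi_Q(\vct e_j)}$, the normalized finite state, and not onto the sampled state $\mathsf S f$. This holds because Lemma~\ref{lem:red34-normalization} shows that the two differ only by the common factor $\rho_Q$, which the earlier lemma already divided out. The strict inequality $<\epsilon$ follows from the choice of $B$.
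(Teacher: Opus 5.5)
Your proposal is correct and follows the paper's proof essentially step for step. It uses the same circuit $E^\dagger\mathsf P_{\rm discrete}E$, the same passage from Lemma~\ref{lem:red34-coordinate-error} to the operator-norm bound (a supremum over unit $\vct c$), and the same resource count with $Q=O(n+\log(1/\epsilon))$ and $q=O(\log(n+\log(1/\epsilon)))$.

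The only difference is in how $B$ is chosen. The paper uses the single threshold $B^2\ge2^{20}(n+\lceil\log(1/\epsilon)\rceil+20)$ instead of your maximum. Note also that your threshold by itself only gives $100\ell e^{-B^2/256}\le\epsilon$. The strict inequality therefore comes from the sharper constant $72\ell$ in that lemma, or you could take the threshold strictly.
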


\begin{proof}
First, we construct the complete circuit.
Let $E$ denote the unitary circuit implementing $C_Q$ (Theorem \ref{thm:CQ}). By Eq.~\eqref{eq:CQ-encode} and linearity,
\begin{equation}
    EJ_{\rm  {discrete}}\ket{\vct \psi}
       =\ket{\Psi_Q(\vct c)}\ket{0^w},\qquad \text{where}\quad\ket{\vct\psi}=\sum_{j=0}^{d-1}c_j\ket j,\quad \text{with}\quad \norm{\vct c}=1
\label{eq:core-literal-encoding}
\end{equation}
Let $\mathsf P_{\rm  {discrete}}$ denote the circuit implementing
$\mathcal P_{\rm  {discrete}}$.
The phase circuits in Lemma~\ref{lem:diag-finite} restore
their helpers to zero for every coordinate input.
Define
\[
    T_{\rm  {discrete}}=E^\dagger\mathsf P_{\rm  {discrete}}E.
\]

Second, we bound the full error.
For any input state $\ket{\vct\psi}$, we have
\begin{align}
\norm{T_{\rm  {discrete}}J_{\rm  {discrete}}\ket{\vct \psi}
                 -J_{\rm  {discrete}}U\ket{\vct \psi}}=&
       \norm{\mathsf P_{\rm  {discrete}}EJ_{\rm  {discrete}}\ket{\vct \psi}
                 -EJ_{\rm  {discrete}}U\ket{\vct \psi}}\\
                 = &\norm{\bigl(\mathcal P_{\rm  {discrete}}\ket{\Psi_Q(\vct c)}
                 -\ket{\Psi_Q(U\vct c)}\bigr)\otimes\ket{0^w}}\\
                 = &\norm{\mathcal P_{\rm  {discrete}}\ket{\Psi_Q(\vct c)}
                 -\ket{\Psi_Q(U\vct c)}}.
\end{align}
Then the total error bound is
\begin{equation}
    \norm{T_{\rm  {discrete}}J_{\rm  {discrete}}-J_{\rm  {discrete}}U}_{\op}=\sup_{\norm{\vct c}=1}\norm{\mathcal P_{\rm  {discrete}}\ket{\Psi_Q(\vct c)}-\ket{\Psi_Q(U\vct c)}}\le 100\ell e^{-B^2/256}.
\label{eq:core-full-output-transfer}
\end{equation}
The inequality follows from
Lemma~\ref{lem:red34-coordinate-error}. The circuit complexity and error bound of $T_{\rm  {discrete}}$ are determined by the parameter $B$. Choose $B=2^{\lceil\frac12\log_2(2^{20}r_0)\rceil}$ with $r_0=n+\lceil\log(1/\epsilon)\rceil+20$.
The definition of $B$ gives $2^{20}r_0\le B^2<2^{22}r_0$, which yields
\begin{equation}
    \text{Eq.~\eqref{eq:core-full-output-transfer}}\le100\cdot2^{n+2}e^{-4096r_0}\le2^{n+9-4096r_0}\le2^{-40-\lceil\log(1/\epsilon)\rceil-n/2}<\epsilon.
\end{equation}
This proves Eq.~\eqref{eq:core-strong-accuracy}.

Third, we count the quantum resources.
Section~\ref{subsec:finite-core} implements the encoder in
depth $O(n+q)$ and size $O(\ell Q)$, using
$O(\ell Q)$ preparation-workspace qubits.
Reversing the circuit gives the decoder with the same costs.
Recall that $d=2^n,~p=2d,~\ell=2p$. According to Theorems \ref{thm:CQ} and \ref{thm:finiyr-coor-operation}, the depth, size and ancilla-count to implement $U\in\mathrm{U}(d)$ are 
\begin{align*}
    \text{depth:}\quad   &2O(n+q)+O(q+\log(pq))
         =O(n+q),\\
    \text{size:}\quad
       &2O(Qd)+O(p^2q^2)
         =O(d Q+p^2q^2),\\
      \text{ancilla-count:}\quad &(q\ell-n)+O(Qd)+O(p^2q^2)
         =O(d Q+p^2q^2).
\end{align*}
Finally, the bounds on $B^2$ imply $Q=2B^2=O(r_0)$ and $q=\log Q=O(\log r_0)$. Substituting $Q$ and $q$ into above circuit complexities complete the proof.
\end{proof}
\begin{remark}
With $\Pi=J_{\rm  {discrete}}J_{\rm  {discrete}}^\dagger$, we can verify that
\begin{align}
    \norm{(\Id-\Pi)T_{\rm  {discrete}}J_{\rm  {discrete}}}_{\op}
       &=
       \norm{(\Id-\Pi)
       (T_{\rm  {discrete}}J_{\rm  {discrete}}-J_{\rm  {discrete}}U)}_{\op}\\&\le
       \norm{T_{\rm  {discrete}}J_{\rm  {discrete}}-J_{\rm  {discrete}}U}_{\op}
        \le\epsilon.
\end{align}
Consequently, the probability of a nonzero ancillary output
is at most $\epsilon^2$ for every input state.
\end{remark}

\section{Reducing the size and ancilla-count for linear-depth approximate unitary synthesis }
\label{sec:spectral}
In this section, we reduce the size and ancilla-count of the unitary approximation circuit to optimal while keeping the circuit depth linear with respect to $n$.
We improve the circuits for $\Delta_H^\uparrow$ and
$\Delta_K^\downarrow$ without changing the Gaussian encoder, the
Fourier transforms, and the signed swap from Section \ref{sec:geometry}.

To reduce circuit size and ancilla-count, we approximately implement \(\Delta_H^\uparrow\) and \(\Delta_K^\downarrow\) as opposed to their exact implementation in Section \ref{sec:geometry}. We divide their phase terms into strong and weak ones according to their coefficient magnitudes $|\theta_{ij}|$. The orthogonality of $H$ or $K$ bounds the number of strong terms, which we implement exactly using phase and controlled-phase gates. Weak terms are grouped by row. State preparation encodes each group’s normalized phase sum as an amplitude, and a fixed-degree polynomial approximates the corresponding phase factor at the target accuracy. The groups run in parallel, while the state-preparation depth–ancilla tradeoff reduces their ancilla-count. With the grid and accuracy used in this section, the resulting circuit has depth \(O(n)\), size \(O(4^n)\), and ancillary count \(O(4^n/n)\), which are all optimal.

Recall that $d=2^n$, $t=n+1$, $p=2d=2^t$, and $\ell=2p$.
Each coordinate register has $q$ qubits, with $Q=2^q$ and labels
in $\mathcal I$. For an upper shear, let
$\mathsf X_0,\ldots,\mathsf X_{p-1}$ store the frequency labels
$\vct k$ and let $\mathsf Y_0,\ldots,\mathsf Y_{p-1}$ store the coordinate labels
$\vct l$. Define
\begin{equation}
 a(k)=\frac{2k}{Q},\qquad b(l)=\frac{2l+1}{Q},\qquad
 \theta_{ij}=-\frac{\pi Q}{2}H_{ij}.
 \label{eq:s4c-couplings}
\end{equation}
Then $|a(k)|,|b(l)|\le1$, and the required action is
\begin{equation}
 \Delta_H^\uparrow\ket{\vct k,\vct l}
 =\exp\!\left(\iu\sum_{i,j=0}^{p-1}
       \theta_{ij}a(k_i)b(l_j)\right)\ket{\vct k,\vct l}.
 \label{eq:s4c-phase-target}
\end{equation}
This is equivalent to Eq.~\eqref{eq:Delta-H}, since
$\theta_{ij}a(k_i)b(l_j)=-2\pi H_{ij}k_i(l_j+1/2)/Q$.
For $\Delta_H^\downarrow$, exchange the roles of the two coordinate
blocks. 

\subsection{Reducing the size and ancilla-count for diagonal phase gates}

Call the $(i,j)$ entry \textit{strong} if $|\theta_{ij}|>p^{-1/4}/2$, and weak
otherwise. 

\paragraph{Implementation of the strong terms.} Let $N_{\rm str}$ be the number of strong entries.
Since $H$ is orthogonal,
\begin{align}
    &\sum_{i,j}|H_{ij}|^2=p,\\
    &N_{\rm str}\frac{p^{-1/2}}{\pi^2Q^2} \le\sum_{i,j}|H_{ij}|^2.
\end{align}
Consequently,
\begin{equation}
 N_{\rm str}\le\min\{p^2,\pi^2p^{3/2}Q^2\}.
 \label{eq:s4c-strong-count}
\end{equation}

\begin{lemma}[Circuit for the strong terms]
\label{lem:s4c-strong}
Define the $2pq$-qubit diagonal unitary
\begin{equation}\label{eq:DeltaH-str}
 \Delta_{H,\rm str}^\uparrow\ket{\vct k,\vct l}
 =\exp\!\left(\iu\sum_{(i,j)\,\mathrm{strong}}
      \theta_{ij}a(k_i)b(l_j)\right)\ket{\vct k,\vct l}.
      \end{equation}
It can be implemented exactly by a standard circuit of depth
$O(\log(pq))$ and size $O(N_{\rm str}q^2)$, using
$O(N_{\rm str}q^2)$ ancillary qubits.
\end{lemma}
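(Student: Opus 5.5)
We follow the proof of Lemma~\ref{lem:diag-finite}, restricted to the strong pairs. The first step is to expand the phase bitwise. With two's-complement bits $k_i=\sum_\alpha w_\alpha\kappa_{i\alpha}$ and $l_j=\sum_\beta w_\beta\lambda_{j\beta}$ we have
\[
a(k_i)b(l_j)=\frac{2}{Q^2}\sum_{\alpha,\beta}w_\alpha w_\beta\,\kappa_{i\alpha}\lambda_{j\beta}\cdot 2+\frac{2}{Q^2}\sum_\alpha w_\alpha\kappa_{i\alpha}.
\]
So for each strong pair $(i,j)$ the factor $e^{\iu\theta_{ij}a(k_i)b(l_j)}$ becomes a product of $q^2$ two-qubit controlled-phase gates $\operatorname{CP}(4\theta_{ij}w_\alpha w_\beta/Q^2)$ on qubits $(\mathsf X_{i,\alpha},\mathsf Y_{j,\beta})$, together with $q$ single-qubit phase gates on $\mathsf X_{i,\alpha}$. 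We merge all single-qubit phases acting on the same qubit into one gate, so they form one layer of at most $pq$ gates. All gates are diagonal and commute, so $\Delta^\uparrow_{H,\rm str}$ is implemented exactly by $O(N_{\rm str}q^2)$ controlled-phase gates plus one layer of $O(pq)$ single-qubit phases.

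The main obstacle is parallelizing these gates while keeping the copy count proportional to $N_{\rm str}$ rather than $p^2$. A single qubit $\mathsf X_{i,\alpha}$ may appear in $r_i q$ gates, where $r_i$ is the number of strong entries in row $i$, and similarly for columns. For each strong pair $(i,j)$ and each $(\alpha,\beta)$ we create a private copy of bit $\kappa_{i\alpha}$ and a private copy of bit $\lambda_{j\beta}$. This gives $2N_{\rm str}q^2$ ancillas in total: qubit $\mathsf X_{i,\alpha}$ needs $r_iq$ copies, and qubit $\mathsf Y_{j,\beta}$ needs $c_jq$ copies, where $c_j$ is the column count. All copies are made by one parallel layer of fan-out gates $F_{r_iq}$ and $F_{c_jq}$ on disjoint qubit sets. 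By Lemma~\ref{lem:fanout-standard}, a fan-out with at most $pq$ targets costs depth $O(\log(pq))$ and size linear in its targets, so this layer has total size $O(N_{\rm str}q^2+pq)$. Each controlled-phase gate then acts on its own pair of copies, so all of them form a single layer. Finally, we apply the same fan-out layer again to uncompute the copies. This works because fan-out is self-inverse and the diagonal gates leave computational-basis labels unchanged, so every ancilla returns to $\ket0$.

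It remains to collect the costs. The depth is $O(\log(pq))+1+1+O(\log(pq))=O(\log(pq))$. The size and the ancilla count are $O(N_{\rm str}q^2+pq)$. The additive $pq$ term is absorbed once $N_{\rm str}\ge p/q$, or trivially in the paper's setting where the circuit already touches all $2pq$ register qubits. If needed, we restrict the single-qubit phase layer to rows having a strong entry, since the linear term $\sum_\alpha w_\alpha\kappa_{i\alpha}$ arises only from strong pairs. This gives at most $N_{\rm str}q$ single-qubit gates, and hence the claimed size $O(N_{\rm str}q^2)$ with $O(N_{\rm str}q^2)$ ancillas.
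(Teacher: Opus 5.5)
Your proposal is correct and follows essentially the same route as the paper's proof. Both expand each strong term bitwise into $q$ single-qubit phases and $q^2$ controlled-phase gates, give every controlled-phase gate private copies of its two bits via parallel fan-out trees of depth $O(\log(pq))$, apply all phases in one layer, and clean up by reversing the fan-outs. The only difference is cosmetic: you merge the single-qubit phases directly on the data qubits instead of assigning each its own copy, which gives $2N_{\rm str}q^2$ ancillas rather than the paper's $N_{\rm str}(q+2q^2)$ and leaves the stated bounds unchanged.
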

\begin{proof}
First, write the signed coordinate labels in two's complement:
\[
k_i=\sum_{\alpha=0}^{q-1}w_\alpha x_{i,\alpha},
\qquad
l_j=\sum_{\beta=0}^{q-1}w_\beta y_{j,\beta},
\]
where $x_{i,\alpha},y_{j,\beta}\in\{0,1\}$ and
\[
w_\alpha=
\begin{cases}
2^\alpha,&0\le\alpha<q-1,\\
-2^{q-1},&\alpha=q-1.
\end{cases}
\]
For each strong pair $(i,j)$, we have
\begin{equation}
\theta_{ij}a(k_i)b(l_j)
=
\sum_{\alpha=0}^{q-1}
\frac{2\theta_{ij}w_\alpha}{Q^2}x_{i,\alpha}
+
\sum_{\alpha,\beta=0}^{q-1}
\frac{4\theta_{ij}w_\alpha w_\beta}{Q^2}
x_{i,\alpha}y_{j,\beta}.
\label{eq:s4c-strong-bits}
\end{equation}
Thus its phase can be implemented using $q$ single-qubit
phase gates and $q^2$ controlled-phase gates.

Second, assign ancillary qubit only to the coordinate qubits used
by these gates. For each strong pair $(i,j)$, assign one ancillary qubit $\ket{0}$ to each of its $q$ single-qubit phase gates and $\ket{00}$ to each of its $q^2$ controlled-phase gates.
An ancillary qubit stores only the value of the coordinate bit used by its assigned gate.
For example, a controlled-phase gate on
$(\mathsf X_{i,\alpha},\mathsf Y_{j,\beta})$ uses $\ket{00}$ initialized by
\[
\ket{x_{i,\alpha},y_{j,\beta}}\ket{0,0}
\longmapsto
\ket{x_{i,\alpha},y_{j,\beta}}
\ket{x_{i,\alpha},y_{j,\beta}}.
\]
The total number of ancillary qubits is therefore
\[
N_{\rm str}(q+2q^2).
\]
The fan-out gate can be unitilized to make copies of a qubit.
For each strong pair involving row $i$, the qubit
$\mathsf X_{i,\alpha}$ participates in one single-qubit phase gate
and $q$ controlled-phase gates. Hence it has at most $p(q+1)$ assigned ancillary qubits. Similarly, each $\mathsf Y_{j,\beta}$ has at most $pq$ assigned ancillary qubits.
The fan-out gates for different original bits act on disjoint
qubits and run in parallel. According to Lemma \ref{lem:fanout-standard}, Their total depth and size of these fan-out gates are
$O(\log(pq))$ and $N_{\rm str}(q+2q^2)=O(N_{\rm str}q^2)$, respectively. 

Third, apply each phase gate in
Eq.~\eqref{eq:s4c-strong-bits} to its assigned ancillary qubits.
Different gates use disjoint ancillary qubits, so all these phase
gates can be applied in one layer.
Their combined phase on a coordinate basis state is
\[
\exp\!\left(
\iu\sum_{(i,j)\,\mathrm{strong}}
\theta_{ij}a(k_i)b(l_j)
\right).
\]
Finally, reverse all fan-out gates.
The fan-out gates and their inverse each use
$O(N_{\rm str}q^2)$ gates. The phase layer has depth $1$ and uses $N_{\rm str}(q+q^2)$ gates. This completes the proof.
\end{proof}

\paragraph{Implementation of the weak terms.} Choose
\begin{equation}
 L=2^{\lfloor\min\{t/16,4\log_2(t+1)\}\rfloor},\qquad
 G=p^2/L,\qquad \kappa=\log_2(2L).
 \label{eq:s4c-groups}
\end{equation}
For each row $i$ of the coefficient matrix $(\theta_{ij})$,
partition its column indices into consecutive blocks
$\mathcal B$ of $L$ indices. Namely, the size of $\mathcal{B}$ is $L$. There are $G$ row-block pairs. A group $g=(i,\mathcal B)$ contains only the weak entries of this row block. In formulas for
that group, set the strong coefficients to zero. Define
\begin{equation}
 f_g(\vct k,\vct l)
   =a(k_i)\sum_{j\in\mathcal B}\theta_{ij}b(l_j),\qquad
 \tau_g=\sum_{j\in\mathcal B}|\theta_{ij}|,\qquad
 A_g(\vct k,\vct l)=f_g(\vct k,\vct l)/\tau_g.
 \label{eq:s4c-group-functions}
\end{equation}
Omit groups with $\tau_g=0$. Define the corresponding operators $f_g$ and
$A_g$ satisfying
\begin{equation}
    f_g\ket{\vct k,\vct l}=f_g(\vct k,\vct l)\ket{\vct k,\vct l}, \qquad A_g\ket{\vct k,\vct l}=A_g(\vct k,\vct l)\ket{\vct k,\vct l}\qquad \vct k,\vct l\in \mathcal{I}^p.
\end{equation}
Neither operator is unitary.
Since $|a|,|b|\le1$ and $L\le p^{1/16}$,
\begin{equation}
 |A_g(\vct k,\vct l)|\le1,\qquad
 \tau_g\le\tfrac12Lp^{-1/4}\le\tfrac12p^{-3/16}.
 \label{eq:s4c-group-small}
\end{equation}
The required diagonal phase is therefore
\begin{equation}
 \Delta_H^\uparrow
   =\Delta_{H,\rm str}^\uparrow\prod_g e^{\iu f_g}.
 \label{eq:s4c-phase-product}
\end{equation}
We next give the gates needed to implement each factor
$e^{\iu f_g}$ and show how the groups run in parallel.

\begin{lemma}[Read operator]
\label{lem:s4c-read}
Let $L$ be a power of two and $\kappa=\log_2(2L)$.
Let $\mathsf Y=(\mathsf Y_0,\ldots,\mathsf Y_{L-1})$, $\mathsf Z=(\mathsf Z_0,\ldots,\mathsf Z_{L-1})$ and $\mathsf A=(\mathsf A_0,\ldots,\mathsf A_{L-1})$ be registers for data, address and workspace, respectively.
For $0\le j,r<L$, let $\mathsf Y_j,\mathsf Z_r$ be $q$-qubit
registers and let $\mathsf A_r$ contain $\log_2L$ qubits.
Define the transformation $\mathcal R$ by
\begin{equation}
 \mathcal R\ket{\vct Y}_{\mathsf Y}\ket{\vct A}_{\mathsf A}\ket{\vct Z}_{\mathsf Z}
  =\ket{\vct Y}_{\mathsf Y}\ket{\vct A}_{\mathsf A}\bigotimes_{r=0}^{L-1}\ket{Z_r\mathbin\oplus Y_{A_r}}.
 \label{eq:s4c-read}
\end{equation}
Then $\mathcal R$ can be implemented by a standard circuit
of depth $O((q+\kappa)\kappa^2)$ and size $O(L(q+\kappa)\kappa^2)$, using $O(L(q+\kappa)\kappa^2)$ ancillary qubits.
\end{lemma}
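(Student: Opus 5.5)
We would realize $\mathcal R$ as $L$ parallel table lookups. The idea is to route the data to the readers through a reversible sorting network, which avoids the $L^2q$ fan-out that naive parallel multiplexing would need. The exponent $\kappa^2$ in the depth bound suggests a bitonic (or odd–even merge) sorting network on $2L=2^\kappa$ items. Such a network has $O(\kappa^2)$ layers of comparators. Each comparator acts on words of $O(q+\kappa)$ bits and can be implemented with a ripple-carry comparator and controlled swaps of depth $O(q+\kappa)$. This gives the target depth $O((q+\kappa)\kappa^2)$. Each layer costs size $O(L(q+\kappa))$, so the total size is $O(L(q+\kappa)\kappa^2)$.

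\textbf{Step 1: build the records.} We form $2L$ records, each of $O(q+\kappa)$ qubits, using fan-out (Lemma~\ref{lem:fanout-standard}) to copy the needed bits in depth $O(\log(q+\kappa))$.
\begin{itemize}
\item Each data index $j$ gives a data record $(j,0,j,Y_j)$, with key $(j,0)$.
\item Each reader $r$ gives a query record $(A_r,1,r,0^q)$, with key $(A_r,1)$.
\end{itemize}
Ties are broken by the third field, so all keys are distinct. As a result, the sorted order, and hence every comparator outcome, is a deterministic function of the basis input.

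\textbf{Step 2: sort by key.} We sort the records with the network. Each comparator writes its comparison outcome into a fresh ancilla bit and then performs a controlled swap. This is reversible and uses $O(L\kappa^2)$ outcome bits, plus $O(q+\kappa)$ comparator workspace per comparator, which is uncomputed immediately. After sorting, every data record $Y_j$ is followed by exactly the queries with $A_r=j$.

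\textbf{Step 3: segmented copy.} A segmented prefix ``copy-forward'' writes $Y_j$ into the empty payload of each of these queries. We implement it by $\kappa$ rounds of pointer doubling. Each round XORs the payloads at distance $2^s$ into fresh copies, controlled on a ``same segment, not yet filled'' flag. This costs depth $O(\kappa(q+\kappa))$ and size $O(L(q+\kappa)\kappa)$.

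\textbf{Step 4: write out and uncompute.} We run the sorting network backwards, using its stored outcome bits, which returns every query to position $r$. We then XOR its payload into $\mathsf Z_r$ with one layer of CNOTs. Finally, Bennett-style uncomputation reverses Steps 3, 2 and 1 in turn. This restores all ancillas to $\ket0$ and leaves $\mathsf Y$ and $\mathsf A$ unchanged. Correctness on basis states, $Z_r\mapsto Z_r\oplus Y_{A_r}$, then extends by linearity.

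\textbf{Main obstacle: keeping the ancillas clean.} The hard part is reversibility with clean ancillas inside the comparators and the doubling rounds. Duplicate addresses must not create ambiguous comparator outcomes, which is handled by the tie-breaking in Step 1. In addition, every intermediate flag must be a function of the basis input only, so that it is erased exactly when the computation is reversed. We would check first that the segment flags, computed by comparing adjacent keys, satisfy this. Only then would we tally the ancillas, which total $O(L(q+\kappa)\kappa^2)$ and are dominated by the stored comparator bits and the per-layer workspace.
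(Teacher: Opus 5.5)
Your proposal is correct and follows essentially the same route as the paper. Both use the same records $(j,0,j,Y_j)$ and $(A_r,1,\cdot,0^q)$, a bitonic sort of the $2L=2^\kappa$ records with retained switch bits, $\kappa$ rounds of pointer-doubling copy-forward, unsorting of the resulting words with the saved switches, one layer of CNOTs into $\mathsf Z$, and full reversal. The remaining differences are minor: you erase comparator workspace immediately and add same-segment flags, which the paper avoids by propagating the last preceding data word. Also, your uncomputation must re-sort the unsorted payloads before reversing Step~3, which the paper handles by unsorting a fresh copy $\mathsf B$ and then inverting the entire read circuit.
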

\begin{proof}
If $L=1$, apply $q$ parallel CNOTs from $\mathsf Y_0$ to
$\mathsf Z_0$. Suppose $L\ge2$ and write $N=2L$.

First, use $X$ and CNOT gates to initialize $N$ additional
registers as
\[
 \mathsf R_j=(j,0,j,Y_j),\qquad
 \mathsf R_{L+r}=(A_r,1,L+r,0^q),\qquad 0\le j,r<L.
\]
The four entries store an address, one qubit distinguishing a
coordinate word from a query, an original position, and a
$q$-bit word. The original $\mathsf Y,\mathsf A,\mathsf Z$
registers are unchanged. Each $\mathsf R_i$ uses $O(q+\kappa)$
qubits.

Second, sort the $\mathsf R_i$ registers by their first three
entries, in lexicographic order. A coordinate word with address
$j$ now precedes all queries with address $j$. The position
entries ensure distinct comparison keys. Use a bitonic sorting
circuit: on $2^m$ registers, sort its two halves in opposite
directions and merge them in $m$ layers. Recursion gives
$m(m+1)/2$ layers of disjoint comparisons, here with $m=\kappa$.
For each comparison, scan the $O(\kappa)$ key bits, retain
the comparison results, and conditionally swap the two entire
registers. Computing each Boolean step into fresh ancillary qubits and
serializing the swaps gives size, depth, and space
$O(q+\kappa)$ per comparison. All comparison ancillas and swap
flags are kept until the sorting circuit is reversed.

Third, compute the coordinate word belonging to each sorted
query. At sorted position $i$, initialize $f_i^{(0)}=1$ for
a coordinate-word register and $f_i^{(0)}=0$ for a query;
let $v_i^{(0)}$ be its $q$-bit fourth entry. At step
$k=0,\ldots,\kappa-1$, for $i\ge2^k$ compute
\[
 \begin{aligned}
 f_i^{(k+1)}&=f_i^{(k)}\lor f_{i-2^k}^{(k)},\\
 v_i^{(k+1)}&=
 \begin{cases}
 v_i^{(k)},&f_i^{(k)}=1,\\
 v_{i-2^k}^{(k)},&f_i^{(k)}=0.
 \end{cases}
 \end{aligned}
\]
For $i<2^k$, copy the preceding values. All outputs are written
into new ancillas and all earlier values are retained.
After step $k$, these values identify the last coordinate word
among the preceding $2^{k+1}$ positions, including position $i$.
Thus $v_i^{(\kappa)}$ is the correct word for every query at
position $i$. Each update costs $O(q+1)$ gates, depth, and
ancillas. At one step, the pairs $i-2^k,i$ form disjoint paths,
so their updates can be divided into two rounds of disjoint
pairs, followed by one round of copies. This gives total size
and workspace $O(N(q+1)\kappa)$ and depth $O((q+1)\kappa)$.

Fourth, copy the $v_i^{(\kappa)}$ into $N$ new $q$-qubit
registers $\mathsf B_i$. Apply the saved sorting switches in
reverse order to these $\mathsf B_i$ alone. They now follow
the original order of the $\mathsf R_i$, so
\[
 \mathsf B_{L+r}=Y_{A_r},\qquad 0\le r<L.
\]
Call the four steps so far $\mathcal U_{\rm read}$. They have
the form
\[
 \ket{\vct Y,\vct A,\vct Z,0_W}
 \longmapsto
 \ket{\vct Y,\vct A,\vct Z,w(\vct Y,\vct A),
                    \vct B(\vct Y,\vct A)},
\]
where $w$ contains all retained values and comparison ancillas.
Neither $w$ nor $\vct B$ depends on $\vct Z$.

Finally, apply CNOTs from $\mathsf B_{L+r}$ to $\mathsf Z_r$
and reverse $\mathcal U_{\rm read}$. All additional registers
return to zero, whereas $Z_r$ has changed to $Z_r\oplus Y_{A_r}$.
In particular,
\begin{equation}
 \mathcal R=\mathcal U_{\rm read}^{\dagger}
   \left(\prod_{r=0}^{L-1}\prod_{\alpha=0}^{q-1}
       \operatorname{CNOT}_{\mathsf B_{L+r,\alpha},\mathsf Z_{r,\alpha}}
   \right)\mathcal U_{\rm read}.
\end{equation}
When reversing a comparison, undo its register swap before
erasing its comparison ancillas. The original comparison inputs
are then restored, so every ancilla is erased exactly.

Initialization and copying cost $O(N(q+\kappa))$ size and
constant depth. Sorting costs $O(N(q+\kappa)\kappa^2)$ size
and work and $O((q+\kappa)\kappa^2)$ depth. Routing the answer
words costs $O(Nq\kappa^2)$ size and $O(q\kappa^2)$ depth.
Including the computations above and their inverses gives
\begin{align*}
 \text{depth:}\quad
 &2O((q+\kappa)\kappa^2)+1=O((q+\kappa)\kappa^2),\\
 \text{size:}\quad
 &2O(N(q+\kappa)\kappa^2)+Lq=O(L(q+\kappa)\kappa^2),\\
 \text{ancilla-count:}\quad&O(L(q+\kappa)\kappa^2).
\end{align*} 
\end{proof}

For $u\in[-1,1]$, define the single-qubit gate
\begin{equation}
 B_u=\begin{pmatrix}u&\sqrt{1-u^2}\\
                    \sqrt{1-u^2}&-u\end{pmatrix}.
 \label{eq:s4c-scalar-gate}
\end{equation}

\begin{lemma}[Circuit for a weighted coordinate sum]
\label{lem:s4c-sum}
Let $g=(i,\mathcal B)$ be a group with $\tau_g>0$.
Then there exists a quantum circuit $\mathcal V_g$ satisfying
\begin{equation}
 (\bra{\vct k,\vct l}\otimes\bra{0_{\rm aux}})
       \mathcal V_g
       (\ket{\vct k,\vct l}\otimes\ket{0_{\rm aux}})
       =A_g(\vct k,\vct l),
 \label{eq:s4c-sum-amplitude}
\end{equation}
where 
\begin{equation}\label{eq:Vg-aux}
\ket{0_{\rm aux}} := \ket{ 0^{\log (L)}}_{\mathsf I_g} \ket0_{\mathsf S_g} 
\ket0_{\mathsf u_g}\ket0_{\mathsf v_g}\ket{0^{m_g}}_{\mathsf H_g}.
\end{equation}
Here $m_g$, the size of register $\mathsf H_g$, can be chosen arbitrarily.
When a control qubit disables $\mathcal V_g$, the circuit is the identity on every input of
these auxiliary registers.
\end{lemma}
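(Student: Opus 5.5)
We build $\mathcal V_g$ as a standard ``linear combination of diagonal scalars'' block encoding: prepare weights, read the selected coordinate, apply a scalar gate controlled by that coordinate, and unprepare. Write $\mathcal B=\{j_0,\ldots,j_{L-1}\}$ and $s_r=\operatorname{sgn}(\theta_{i j_r})$, with zero weight for strong entries.

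\textbf{Weight preparation.} By Lemma~\ref{lem:state-tradeoff}, prepare on $\mathsf I_g$ the state $\ket{\omega_g}=\sum_{r<L}\sqrt{|\theta_{ij_r}|/\tau_g}\,\ket r$, with unitary $P_g$ satisfying $P_g\ket0=\ket{\omega_g}$. Its workspace is part of $\mathsf H_g$; this is why $m_g$ is free, since the preparation may use any ancilla budget.

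\textbf{Reading the coordinate.} Using the address $r$ in $\mathsf I_g$, copy $l_{j_r}$ into a $q$-qubit scratch word inside $\mathsf H_g$. This is the read operator of Lemma~\ref{lem:s4c-read} with $L'=1$ query; alternatively, use a multiplexed CNOT controlled on $r$.

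\textbf{Scalar factors.} We must realize the diagonal scalar $a(k_i)\,s_r\,b(l_{j_r})$ as the $\bra0\cdot\ket0$ amplitude of a single-qubit gate. Apply on $\mathsf u_g$ the gate $B_{a(k_i)}$, uniformly controlled on the $q$ qubits of $\mathsf X_i$. Apply on $\mathsf v_g$ the gate $B_{s_r b(l_{j_r})}$, uniformly controlled on $(r,\text{scratch word})$. Here $|a|,|b|\le1$, and $\bra0B_u\ket0=u$.

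\textbf{Assembly.} Set
\[
\mathcal V_g=P_g^\dagger\,\mathrm{Read}^\dagger\,(\text{controlled }B\text{'s})\,\mathrm{Read}\,P_g .
\]
The scalar gates commute with the read, and the read copies into a scratch word that the uncomputation restores. The amplitude on $\ket{0_{\rm aux}}$ is therefore
\[
\sum_r \frac{|\theta_{ij_r}|}{\tau_g}\,a(k_i)\,s_r\,b(l_{j_r})=A_g(\vct k,\vct l),
\]
because the $\mathsf I_g$ projection gives $\bra{\omega_g}\,\mathrm{diag}\,\ket{\omega_g}$.

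The qubit $\mathsf S_g$ is reserved as a flag, for example a sign or success qubit. Alternatively it can hold the product $a\cdot b$: compute it via a controlled-$B$ on $\mathsf S_g$ whose angle depends on both words, which avoids needing two separate qubits. Either way the identity holds exactly because all intermediate states are computational basis states on $\mathsf X,\mathsf Y$.

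\textbf{Controlled version.} When a control qubit disables $\mathcal V_g$, we control every nontrivial gate of $P_g$, the read, and the $B$'s on that qubit. If the pieces are controlled, each reduces to the identity on every auxiliary input, and the uncontrolled compute and uncompute pairs cancel regardless of the auxiliary state. We therefore control only the innermost $B$ gates and $P_g$. We leave the Read uncontrolled, since it is a permutation followed by its inverse.

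\textbf{Main obstacle.} The delicate step is the requirement that the disabled circuit be the identity on \emph{every} auxiliary input, not just on $\ket0$. Relying on $\ket0$ alone would break if $P_g$ were only defined on $\ket0$. Controlling $P_g$ and $P_g^\dagger$ fully, or noting that an uncontrolled pair $P_g^\dagger(\cdot)P_g$ with a disabled middle is the identity, resolves this. We also verify that the reads use only clean workspace that is returned.
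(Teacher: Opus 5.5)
Your construction is essentially the paper's: prepare the weight state, coherently read $l_j$ into a scratch word, apply the $B$-gates on $\mathsf u_g,\mathsf v_g$, then uncompute the read and the preparation. You obtain the disabled-control identity by enabling only the middle scalar layer, so the uncontrolled compute/uncompute pairs cancel as full unitaries; this is how Lemma~\ref{lem:s4c-schedule} uses its enable bits.

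The deviations are minor. First, you fold the sign into the $\mathsf v_g$ rotation, controlled on the address $r$ as well as on the read word. The paper instead prepares the sign bit in $\mathsf S_g$ together with the weights and applies a single $Z$. That keeps each uniformly controlled gate at $q$ controls, which matters for the $O(n)$ depth and $O(4^n)$ size later on, but not for the present amplitude identity. Second, placing the scratch word inside $\mathsf H_g$ forces $m_g\ge q$. To allow arbitrary $m_g$, use a separate register returned to zero, as the paper does with $\mathsf T$.
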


\begin{proof}
Let $j_0=\min\mathcal B$, so that $\mathcal B=\{j_0,j_0+1,\ldots,j_0+L-1\}$.
Fix a computational-basis input $\ket{\vct k,\vct l}_{\mathsf X\mathsf Y}=\left(\bigotimes_{\nu=0}^{p-1}\ket{k_\nu}_{\mathsf X_\nu}\right)\otimes\left(\bigotimes_{\nu=0}^{p-1}\ket{l_\nu}_{\mathsf Y_\nu}\right)$ for any $\vct k,\vct l\in\mathcal I^p$.

For the fixed group $g=(i,\mathcal B)$, the required value is
$A_g(\vct k,\vct l)=\frac{a(k_i)}{\tau_g} \sum_{j\in\mathcal B}\theta_{ij}b(l_j)$, with $\tau_g=\sum_{j\in\mathcal B}|\theta_{ij}|>0$.
As in the definition of the group, strong coefficients are
set to zero in these sums.
Only $\mathsf X_i$ and the registers
$(\mathsf Y_j)_{j\in\mathcal B}$ are needed; all other
coordinate registers undergo the identity.

The ancillary registers retained in the definition of
$\mathcal V_g$ are $\mathsf I_g,\mathsf S_g,\mathsf u_g,\mathsf v_g$ and $\mathsf H_g$.
Their joint initial state is
\begin{equation}
\ket{0_{\rm aux}} = \ket{0^{\log (L)}}_{\mathsf I_g} \ket0_{\mathsf S_g} 
\ket0_{\mathsf u_g}\ket0_{\mathsf v_g}\ket{0^{m_g}}_{\mathsf H_g}.
\end{equation}
Thus $\ket{0_{\rm aux}}$ contains
$\log_2L+3+m_g$ qubits.
Additional registers used for coordinate reading are
initialized and returned to zero, and are not included in
$\ket{0_{\rm aux}}$.
All unspecified tensor factors below are identities.

First, choose an exact state-preparation circuit $P_g$
on $\mathsf I_g \mathsf S_g \mathsf H_g$ satisfying
\begin{equation}
\begin{aligned}
&P_g \ket{0^{\log(L)}}_{\mathsf I_g} \ket0_{\mathsf S_g}\ket{0^{m_g}}_{\mathsf H_g}=
\sum_{j\in\mathcal B} \sqrt{\frac{|\theta_{ij}|}{\tau_g}}\, \ket{j -j_0}_{\mathsf I_g} \ket{s_{ij}}_{\mathsf S_g}\ket{0^{m_g}}_{\mathsf H_g},
\end{aligned}
\label{eq:s4c-weight-preparation}
\end{equation}
where $ (-1)^{s_{ij}}=\operatorname{sgn}(\theta_{ij})$ if $\theta_{ij}\ne0$, and $s_{ij}=0$ if $\theta_{ij}=0$.
The latter choice does not affect the state because its coefficient is zero.
The state in Eq.~\eqref{eq:s4c-weight-preparation} is
normalized since $\sum_{j\in\mathcal B}\frac{|\theta_{ij}|}{\tau_g}=1$.
Here $j-j_0$ is the zero-based index of column $j$ within $\mathcal B$.
If $L=1$, the index register is empty. Apply $P_g$ to $\mathsf I_g\mathsf S_g\mathsf H_g$, leaving
$\mathsf X,\mathsf Y,\mathsf u_g, \mathsf v_g$ unchanged.

Second, we define precisely the coordinate-controlled gates.
For $x\in[-1,1]$, $B_x$ is a single-qubit gate defined in Eq.~\eqref{eq:s4c-scalar-gate}.
When its parameter depends on a coordinate register,
the corresponding operation is the following
$(q+1)$-qubit uniformly controlled one-qubit gate:
\begin{equation}
    B_{a(\mathsf X_i)}^{(\mathsf u_g)}
    :=
    \sum_{r\in\mathcal I}
    \ket r\bra r_{\mathsf X_i}
    \otimes (B_{a(r)})_{\mathsf u_g}.
\end{equation}
Apply $Z$ to $\mathsf S_g$ and
$B_{a(\mathsf X_i)}^{(\mathsf u_g)}$ to $\mathsf X_i \mathsf u_g$.
For the branch indexed by $j$, these operations contribute
the factor $(-1)^{s_{ij}}$ and replace
$\ket0_{\mathsf u_g}$ by
\[
    a(k_i)\ket0_{\mathsf u_g}
    +\sqrt{1-a(k_i)^2}\ket1_{\mathsf u_g}.
\]

Third, we construct the read operation required to select
the coordinate word $l_j$.
Write
\begin{equation}\label{eq:YB-register}
    \mathsf Y_{\mathcal B}
    =(\mathsf Y_{j_0},\ldots,\mathsf Y_{j_0+L-1}),
    \qquad
    \ket{\vct l_{\mathcal B}}_{\mathsf Y_{\mathcal B}}
    =
    \bigotimes_{j\in\mathcal B}
    \ket{l_j}_{\mathsf Y_j}.
\end{equation}
Introduce a $q$-qubit all-zero register $\mathsf T$. We use read operator $\mathcal R$ (Eq.~\eqref{eq:s4c-read}) to read the coordinate word selected by $\mathsf I_g$ from $\mathsf Y_{\mathcal B}$ into $\mathsf T$. Specifically, we set $\mathsf A_0=\mathsf I_g$ and $\mathsf Z_0=\mathsf T$,
and initialize all remaining address and target registers
to zero. Therefore, we have 
\begin{equation}
\mathcal{R}\ket{\vct l_{\mathcal B}}_{\mathsf Y_{\mathcal B}}
 \ket{j-j_0}_{\mathsf I_g}\ket{0^q}_{\mathsf T}  =\ket{\vct l_{\mathcal B}}_{\mathsf Y_{\mathcal B}}
 \ket{j-j_0}_{\mathsf I_g}\ket{l_j}_{\mathsf T}
\end{equation}
Consequently, its depth is $O((q+\kappa)\kappa^2)$, and its size and ancilla-count are $O(L(q+\kappa)\kappa^2)$ by Lemma \ref{lem:s4c-read}.
Now apply the strictly defined word-controlled unitary
\begin{equation}
    B_{b(\mathsf T)}^{(\mathsf v_g)} :=
    \sum_{r\in\mathcal I}
    \ket r\bra r_{\mathsf T}\otimes (B_{b(r)})_{\mathsf v_g}.
\end{equation}
Then its action is
\begin{equation}
    \ket{l_j}_{\mathsf T}\ket 0 _{\mathsf v_g}
    \longmapsto
    \ket{l_j}_{\mathsf T} \bigl(b(l_j)\ket0+\sqrt{1-b(l_j)^2}\ket1\bigr)_{\mathsf v_g}.
\end{equation}
This is a $(q+1)$-qubit unitary, implemented using
Lemma~\ref{lem:tradeoff-ucg}.
Apply $\mathcal R^{\dagger}$ and the register $\mathsf T$ are restored to all-zero state.

At this point, the complete state is
\begin{multline}
\ket{\vct k,\vct l}_{\mathsf X\mathsf Y}
\otimes
\sum_{j\in\mathcal B}
\sqrt{\frac{|\theta_{ij}|}{\tau_g}}
(-1)^{s_{ij}}
\ket{j-j_0}_{\mathsf I_g}
\ket{s_{ij}}_{\mathsf S_g}
\\
\otimes
\bigl(a(k_i)\ket0+\sqrt{1-a(k_i)^2}\ket1\bigr)_{\mathsf u_g}
\otimes
\bigl(b(l_j)\ket0+\sqrt{1-b(l_j)^2}\ket1\bigr)_{\mathsf v_g}.
\label{eq:s4c-sum-state}
\end{multline}
Here we omit the ancillary registers which are all-zeros.

Finally, apply $P_g^\dagger$ to $\mathsf I_g\mathsf S_g\mathsf H_g$.
Eq.~\eqref{eq:s4c-weight-preparation} implies
\begin{equation}
\bra{0^{\log(L)+1+m_g}}
P_g^\dagger
\ket{j-j_0,s_{ij},0^{m_g}}
=
\sqrt{\frac{|\theta_{ij}|}{\tau_g}}.
\end{equation}
Therefore, the coefficient of
$\ket{\vct k,\vct l}\ket{0_{\rm aux}}$
in the final state is
\begin{align}
&\sum_{j\in\mathcal B}
\sqrt{\frac{|\theta_{ij}|}{\tau_g}}
(-1)^{s_{ij}}
a(k_i)b(l_j)
\sqrt{\frac{|\theta_{ij}|}{\tau_g}}=
\frac{a(k_i)}{\tau_g}
\sum_{j\in\mathcal B}
|\theta_{ij}|(-1)^{s_{ij}}b(l_j)=
\frac{a(k_i)}{\tau_g}
\sum_{j\in\mathcal B}\theta_{ij}b(l_j)
=
A_g(\vct k,\vct l).
\end{align}
This proves Eq.~\eqref{eq:s4c-sum-amplitude}.
\end{proof}

The next lemma gives the additional gates that approximate $e^{\iu f_g}$ and bounds the error of the entire output.}

\begin{lemma}
\label{lem:s4c-phase}
Let $g=(i,\mathcal{B})$ be a nonzero weak group and let $K\ge1$ be an integer. There is a circuit $\mathcal G_g$ such that, for any $2pq$-qubit state $\ket{\vct \psi}$,
\begin{equation}
 \left\|\mathcal G_g(\ket{\vct \psi}\ket{\vct 0_{\rm aux}})
       -(e^{\iu f_g}\ket{\vct \psi})\ket{\vct 0_{\rm aux}}\right\|
 \le\left(\frac{10e^{\tau_g}\tau_g^{K+1}}{(K+1)!}\right)^{1/2}.
 \label{eq:s4c-local-error}
\end{equation}
The ancillary register $\ket{\vct 0_{\rm aux}}$ consists of $K$ separate copies of the register in Eq.~\eqref{eq:Vg-aux}, a $\lceil \log(K+1)\rceil$-qubit degree register together with the workspace of its state preparation on it,
and one additional qubit.
\end{lemma}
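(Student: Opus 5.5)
We build $\mathcal G_g$ as an LCU of products of $\mathcal V_g$ followed by one round of oblivious amplitude amplification. The bound has the form $\sqrt{10 e^{\tau_g}\tau_g^{K+1}/(K+1)!}$, which suggests using the truncated Taylor series of $e^{\iu\tau_g x}$ at $x=A_g(\vct k,\vct l)$. Write
\[
S_K(x)=\sum_{r=0}^{K}\frac{(\iu\tau_g x)^r}{r!},\qquad |e^{\iu\tau_g x}-S_K(x)|\le \frac{e^{\tau_g}\tau_g^{K+1}}{(K+1)!}\quad(|x|\le 1).
\]

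\textbf{Block-encoding the monomials.} By Lemma~\ref{lem:s4c-sum}, $\mathcal V_g$ block-encodes the diagonal operator $A_g$ with the projector onto $\ket{0_{\rm aux}}$. To get $A_g^r$, we apply $r$ copies of $\mathcal V_g$, each acting on its own fresh copy of the auxiliary register. The copies act on disjoint ancillas and are all diagonal in $\ket{\vct k,\vct l}$. Hence the amplitude of the all-zero ancilla state, for basis input $\ket{\vct k,\vct l}$, is the product $A_g(\vct k,\vct l)^r$. Cross terms cannot appear: each copy's ancilla must return to zero separately, and the data register is only read, never altered, in the zero-ancilla projection.

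\textbf{Selecting the degree.} The degree register is prepared in $\sum_{r}\sqrt{w_r/s}\,\ket r$, with $w_r=\tau_g^r/r!$ and $s=\sum_{r\le K}w_r\le e^{\tau_g}$, using the state preparation of Lemma~\ref{lem:state-tradeoff}. Conditioned on the degree register holding $\ge r$, we enable copy $r$; this uses the identity-when-disabled clause of Lemma~\ref{lem:s4c-sum}. We then insert the phase $\iu^r$ by single-qubit phase gates on the degree bits and unprepare. The resulting unitary $W$ satisfies $\Pi W\Pi=S_K(A_g)/s$.

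\textbf{Amplitude amplification.} Since $\tau_g\le \frac12 p^{-3/16}$ is small, $s$ lies within a constant of $1$. We use the one additional qubit to rescale by a fixed rotation so that the block becomes $S_K(A_g)/2$ up to the common normalization. For each basis input, $S_K(A_g)$ is a near-unitary scalar. One round $-W R W^\dagger R W$ of oblivious amplitude amplification then maps a block of $\tfrac12 V$ to $V$ exactly when $V$ is unitary, with a perturbation linear in $\|V-e^{\iu f_g}\|$ otherwise. Applying this with $V=e^{\iu f_g}$ and perturbation $S_K-e^{\iu\tau_g A_g}$, and tracking the leaked component orthogonal to $\ket{0_{\rm aux}}$ as in the remark after Theorem~\ref{thm:core-full-resources}, should give the squared-norm bound with the constant $10$. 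Because everything is diagonal in $\ket{\vct k,\vct l}$, the error bound for basis states transfers to arbitrary $\ket{\vct\psi}$ by orthogonality of the outputs across different basis inputs. This orthogonality is also why we can bound the error per basis input and take a square root, rather than summing amplitudes.

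\textbf{Main obstacle.} The delicate step is the rescaling to exactly $1/2$ and the exact error accounting through the amplification, which requires $s$ to be known precisely and $\tau_g$ to be small. I would first carry out the scalar computation for a single basis input, which reduces everything to a $2\times2$ problem, and then lift it using diagonality.
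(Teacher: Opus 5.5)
Your proposal matches the paper's proof step for step. The paper uses the same ingredients:
\begin{itemize}
\item degree-register weights $\tau_g^r/r!$, normalized by $\alpha_g=\sum_{r\le K}\tau_g^r/r!<2$;
\item $K$ separate auxiliary copies of $\mathcal V_g$, the copy for $s$ enabled by $[r\ge s]$;
\item the phase $\iu^r$, followed by rescaling with $B_{\alpha_g/2}$ to get the block $P_K(z)/2$, where $z=f_g=\tau_gA_g$, so that $P_K(z)=S_K(A_g)$;
\item one round $-W_gR_gW_g^\dagger R_gW_g$;
\item a scalar analysis for each basis input, lifted to superpositions by orthogonality of the preserved coordinate labels.
\end{itemize}
The step you leave as ``should give'' closes exactly as in the paper. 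The all-zero amplitude is $\tfrac32P_K(z)-\tfrac12|P_K(z)|^2P_K(z)$. Using $|P_K(z)|\le1+\eta_g$ with $\eta_g=e^{\tau_g}\tau_g^{K+1}/(K+1)!$, this amplitude is within $5\eta_g$ of $e^{\iu z}$. Both output vectors are normalized, so their squared distance is $2-2\operatorname{Re}$ of $e^{-\iu z}$ times that amplitude, which is at most $10\eta_g$.
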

\begin{proof}
First, define
\[
 \alpha_g=\sum_{r=0}^K\frac{\tau_g^r}{r!},\qquad
 P_K(z)=\sum_{r=0}^K\frac{(\iu z)^r}{r!}.
\]
Since $0<\tau_g\le1/2$, we have $1\le\alpha_g<2$.
Prepare quantum state on the $\lceil\log (K+1)\rceil$-qubit degree register
\begin{equation}
 \sum_{r=0}^K\sqrt{\frac{\tau_g^r}{\alpha_g r!}}\ket r.
 \label{eq:s4c-degree-state}
\end{equation}
Its unused basis values have amplitude zero.

Second, for each $s=1,\ldots,K$, apply $\mathcal V_g$
to the coordinate registers $\mathsf X,\mathsf Y$ and
the $s$-th ancillary copy defined in Eq.~\eqref{eq:Vg-aux}, conditioned on $r\ge s$.
The other copies are unchanged. To implement this condition, reversibly compute
$[r\ge s]$ into an ancillary qubit initialized to $\ket0$, use this qubit to control $\mathcal V_g$, and reverse the condition computation. The ancillary qubit and
the comparison workspace return to zero and can be reused for the next value of $s$.
All applications share the same coordinate registers but use separate auxiliary copies.

Next, apply the phase $\iu^r$ to $\ket r$.
Writing $r=r_0+2r_1+4r_2+\cdots$, where $r_b\in\{0,1\}$, gives $\iu^r=\iu^{r_0}(-1)^{r_1}$. Hence apply $S$ to the least significant qubit and, when present, $Z$ to the second least significant qubit. If the register has only one qubit, only $S$ is needed.

Fix a coordinate input $\ket{\vct k,\vct l}$ and write
$z=f_g(\vct k,\vct l)$.
Since $\mathcal V_g$ preserves the coordinate labels,
Eq.~\eqref{eq:s4c-sum-amplitude} gives the same
all-zero output amplitude for every auxiliary copy:
\[
A_g(\vct k,\vct l)=\frac{z}{\tau_g}.
\]
In a branch $0\le r\le K$, the first $r$ auxiliary
copies are acted on separately, while the remaining
$K-r$ copies stay in their all-zero states.
Before applying the phase $\iu^r$, the joint all-zero
output amplitude of these copies is therefore
\[
\prod_{s=1}^{r}A_g(\vct k,\vct l)
=
A_g(\vct k,\vct l)^r
=
\left(\frac{z}{\tau_g}\right)^r,
\]
where the empty product for $r=0$ equals one.
Including the phase, this step contributes the factor $\left(\frac{\iu z}{\tau_g}\right)^r$,
to the all-zero ancillary component of the $r$ branch. After these operations, projecting the $K$ auxiliary copies onto their all-zero states leaves the following unnormalized vector on the degree register.
\begin{equation}
    \sum_{r=0}^K\sqrt{\frac{\tau_g^r}{\alpha_g r!}}\left(\frac{\iu z}{\tau_g}\right)^r\ket r.
\end{equation}

Third, undo the preparation in Eq.~\eqref{eq:s4c-degree-state}.
The coefficient of the all-zero state of this register, its
preparation workspace, and the $K$ copies is
\begin{equation}
 \sum_{r=0}^K\frac{\tau_g^r}{\alpha_g r!}
           \iu^r(z/\tau_g)^r
       =\frac{P_K(z)}{\alpha_g}.
 \label{eq:s4c-polynomial-amplitude}
\end{equation}
Apply $B_{\alpha_g/2}$ to the additional qubit, initially zero.
By Eq.~\eqref{eq:s4c-scalar-gate}, this multiplies the all-zero
coefficient by $\alpha_g/2$. Denote this complete circuit by
$W_g$. Its all-zero coefficient is $P_K(z)/2$. Namely
\begin{equation}\label{eq:Wg-block-encoding}
    (\bra{\vct k,\vct l}\otimes \bra{\vct 0_{\rm aux}})W_g(\ket{\vct k,\vct l}\otimes \ket{\vct 0_{\rm aux}})=\frac{P_K(z)}{2},
\end{equation}
where $\frac{P_K(z)}{2}\approx \frac{e^{\iu z}}{2}$.

Next, let $R_g$ change the sign exactly when all the ancillary
registers of $W_g$ are zero:
\begin{equation}
 R_g\ket{\vct k,\vct l}\ket{\vct b}
 =\begin{cases}
   -\ket{\vct k,\vct l}\ket{\vct b},&\vct b=\vct0,\\
   \phantom{-}\ket{\vct k,\vct l}\ket{\vct b},&\vct b\ne\vct0.
  \end{cases}
 \label{eq:s4c-zero-sign}
\end{equation}
The tested bits include all state-preparation workspace, the
integer register, the $K$ ancillary copies, and the additional
qubit. Read and control-computation helpers are cleared before
this operation. Computing the all-zero condition, applying a
phase $-1$, and reversing that computation implements $R_g$.

Finally, apply $W_g$, $R_g$, $W_g^\dagger$, $R_g$, and $W_g$
in this order, followed by a single-qubit $-I$ gate. Thus
\begin{equation}
 \mathcal G_g=-W_gR_gW_g^\dagger R_gW_g.
 \label{eq:s4c-group-phase-circuit}
\end{equation}
For the fixed coordinate input, put
$\Pi=\ket{\vct 0_{\rm aux}}\bra{\vct 0_{\rm aux}}$, so that $R_g=I-2\Pi$.
Expanding the two factors $I-2\Pi$ gives
\[
 -W_gR_gW_g^\dagger R_gW_g
 =-W_g+2W_g\Pi+2\Pi W_g-4W_g\Pi W_g^\dagger\Pi W_g.
\]
Taking its all-zero matrix element and using
Eq.~\eqref{eq:Wg-block-encoding}, we obtain
\begin{equation}
 (\bra{\vct k,\vct l}\otimes \bra{\vct 0_{\rm aux}})\mathcal G_g(\ket{\vct k,\vct l}\otimes\ket{\vct 0_{\rm aux}})
 =\frac32P_K(z)-\frac12|P_K(z)|^2P_K(z).
 \label{eq:s4c-phase-amplitude}
\end{equation}
This identity holds for each fixed coordinate input $\ket{\vct k,\vct l}$.

We now bound the error. Since $|z|\le\tau_g$, the exponential
series gives
\begin{align*}
 |P_K(z)-e^{\iu z}|
 &\le\sum_{r=K+1}^{\infty}\frac{\tau_g^r}{r!}\le\frac{e^{\tau_g}\tau_g^{K+1}}{(K+1)!}
   =:\eta_g.
\end{align*}
The second inequality uses
$(K+1+s)!\ge(K+1)!s!$. Also $\eta_g\le e^{1/2}/8<1$,
and hence $|P_K(z)|\le1+\eta_g$. Subtracting
$e^{\iu z}\overline{e^{\iu z}}e^{\iu z}=e^{\iu z}$
from the cubic product in Eq.~\eqref{eq:s4c-phase-amplitude}
one factor at a time gives
\begin{align*}
 \left|\frac32P_K(z)-\frac12|P_K(z)|^2P_K(z)-e^{\iu z}\right|
 &\le\frac32\eta_g
  +\frac12\eta_g\bigl((1+\eta_g)^2+(1+\eta_g)+1\bigr)\le5\eta_g.
\end{align*}
Both $\mathcal G_g\ket{\vct k,\vct l,\vct 0_{\rm aux}}$ and
$e^{\iu z}\ket{\vct k,\vct l,\vct 0_{\rm aux}}$ are normalized.
Their squared distance is therefore
\begin{align*}
 &\left\|\mathcal G_g\ket{\vct k,\vct l,\vct 0_{\rm aux}}
       -e^{\iu z}\ket{\vct k,\vct l,\vct 0_{\rm aux}}\right\|^2=2-2\operatorname{Re}\!\left[
      e^{-\iu z}\left(\frac32P_K(z)-\frac12|P_K(z)|^2P_K(z)\right)\right]
       \le10\eta_g.
\end{align*}
This expression includes every auxiliary output, not just its
all-zero component. Different coordinate inputs remain in
orthogonal coordinate basis states. For a superposition, its
squared error is consequently the weighted sum of these squared
errors, and is at most $10\eta_g$. Taking square roots proves
Eq.~\eqref{eq:s4c-local-error}.

The decomposition in Eq.~\eqref{eq:s4c-group-phase-circuit}
uses $3K$ controlled calls to $\mathcal V_g$ or its inverse,
six applications of the state preparation or its inverse,
three single-qubit gates $B_{\alpha_g/2}$ or their inverses,
and two gates $R_g$.
\end{proof}

{
Recall that
\begin{equation}
 L=2^{\lfloor\min\{t/16,4\log_2(t+1)\}\rfloor},\qquad
 G=p^2/L,\qquad \kappa=\log(2L).
\end{equation}
\begin{lemma}[Circuit for the weak terms]
\label{lem:s4c-schedule}
Let $m\ge 0$ be some integer. Define a transformation $\mathcal{G}_{\rm all}$ satisfying
\[
 \mathcal G_{\rm all}
    \bigl(\ket{\vct X,\vct Y}\bigr)\ket{0^m}_{\mathsf A}
 =\left[(\prod_g\mathcal G_g)
       \ket{\vct X,\vct Y}\ket{0^m}_{\mathsf A}\right]
\]
Then $\mathcal{G}_{\rm all}$ can be implemented by a standard circuit of 
\begin{align}
 \text{depth:}\qquad&O\bigl(t+Q+\log(K+1)+(q+\kappa)\kappa^2\bigr),\notag\\
 \text{size:}\qquad&O(Kp^2+KGF),\\
 \text{ancilla-count:}\qquad &m=O(Kp^2/t+KGF),
 \label{eq:s4c-schedule}
\end{align}
where $F=Q+(q+\kappa)\kappa^2$.
\end{lemma}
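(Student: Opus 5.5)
The plan is to run all $G$ group circuits $\mathcal G_g$ in parallel. Since every $\mathcal G_g$ is diagonal in the coordinate basis of $\mathsf X,\mathsf Y$ and restores its auxiliaries to zero on the ideal branch, the order of the product does not matter. The only obstruction to parallelism is that different groups read the same coordinate registers: row $i$ appears in the $p/L$ groups $(i,\mathcal B)$, and each block $\mathsf Y_{\mathcal B}$ appears in $p$ groups. I will therefore first make enough read-only copies of $\mathsf X,\mathsf Y$ with fan-out gates (Lemma~\ref{lem:fanout-standard}), so that each group, and each of its $K$ calls to $\mathcal V_g$, acts on private copies. Because $\mathcal V_g$ only reads the coordinate labels as controls and never changes them, copying and later un-copying by the inverse fan-out is exact. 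This holds because fan-out commutes with any operation that is diagonal (block-controlled) in the computational basis of the copied bits.

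\textbf{Step 1: schedule within one group.} Each group needs a copy of $\mathsf X_i$ and of $\mathsf Y_{\mathcal B}$, i.e.\ $O(Lq)$ qubits. Giving each of its $K$ controlled calls its own copy costs $O(KLq)$ qubits per group and $O(KGLq)=O(Kp^2q)$ in total. This is too much, since the target ancilla budget is $O(Kp^2/t+KGF)$. Instead I will apply the $K$ calls $s=1,\dots,K$ sequentially on one copy per group, or, where depth permits, reuse copies. Sequential calls would cost depth $K$ times the depth of $\mathcal V_g$. The claimed depth has no factor $K$, which forces the $K$ calls to run in parallel. I will therefore keep $K$ copies per group but share them across groups, as the batching in the technical overview suggests. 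Concretely, the $p/L$ groups of row $i$ all need $\mathsf X_i$ ($q$ qubits), and each block $\mathsf Y_{\mathcal B}$ is needed by $p$ groups. The total number of coordinate bits needed simultaneously is $K\cdot G\cdot(L+1)q$. The $Kp^2/t$ term appears only if the $B_{a}$/$B_b$ controlled gates are realized through Lemma~\ref{lem:tradeoff-ucg} with its ancilla tradeoff and depth $O(t)$, rather than by literal copies. So I will implement the uniformly controlled gates $B_{a(\mathsf X_i)}$, $B_{b(\mathsf T)}$ by Lemma~\ref{lem:tradeoff-ucg} with depth $O(q+Q)$, giving the $Q$ in $F$ and in the depth. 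The read operator $\mathcal R$ of Lemma~\ref{lem:s4c-read} works on a per-group copy of $\mathsf Y_{\mathcal B}$ and contributes the $L(q+\kappa)\kappa^2$ term. For the state preparations $P_g$ on $L$-dimensional weight registers, I will use Lemma~\ref{lem:state-tradeoff} with $O(L/t)$ ancillas and depth $O(t)$, giving size $O(KGL)=O(Kp^2)$ and ancillas $O(Kp^2/t)$. The degree-register preparations use Lemma~\ref{lem:state-tradeoff} in depth $O(\log(K+1))$, and the comparators $[r\ge s]$ run in parallel over $s$ on fanned-out copies of $r$.

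\textbf{Step 2: accounting.} Per group and per call, the cost is size and workspace $O(Q+L(q+\kappa)\kappa^2)=O(LF)$, since $Q$ arises from the uniformly controlled gate on $q+1$ qubits. The depth is $O(t+Q+(q+\kappa)\kappa^2)$. The fan-out copying adds depth $O(\log(pqK))=O(t+\log K)$, which is absorbed. Summing over $3K$ calls and $G$ groups gives size $O(Kp^2+KGF)$, up to reconciling $GLF$ with $GF$. Here I expect the main obstacle: the read workspace of Lemma~\ref{lem:s4c-read} scales as $L(q+\kappa)\kappa^2$ per call, which would give $KGLF$, not $KGF$. I would first try to charge it using $GL=p^2$ together with $(q+\kappa)\kappa^2=O(\mathrm{polylog}\,t)$ against $Kp^2$. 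Failing that, I would try batching the reads, so that all groups sharing the block $\mathcal B$ are served by one sorting network of width $O(p)$ queries, as in Lemma~\ref{lem:s4c-read} with $L$ replaced by the number of queries. That amortizes the read cost to $O(F)$ per group.

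\textbf{Step 3: correctness.} Finally, I will verify that after all uncomputation every copy and helper returns to $\ket0$. The action on $\ket{\vct X,\vct Y}\ket{0^m}$ then equals $\prod_g\mathcal G_g$ applied with the group auxiliaries as in Lemma~\ref{lem:s4c-phase}. This follows because each piece is either diagonal-controlled by coordinate copies or acts only on private auxiliary registers.
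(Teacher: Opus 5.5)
There is a genuine gap, located in the step you flag yourself as the main obstacle. Your skeleton matches the paper: fan-out copies of the coordinate registers, $P_g$ via Lemma~\ref{lem:state-tradeoff} with $\lceil L/t\rceil$ workspace, the $B_a,B_b$ gates via Lemma~\ref{lem:tradeoff-ucg}, and an $O(\log(K+1))$-depth degree register. (Your first attribution of the $Kp^2/t$ term to Lemma~\ref{lem:tradeoff-ucg} is wrong. It comes from the $KG$ preparation workspaces, as you say correctly a few sentences later.)

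The obstacle is how to share coordinate copies and reads among groups. You never resolve it, and both of your fallbacks fail.
\begin{itemize}
\item Charging $KGL(q+\kappa)\kappa^2=Kp^2(q+\kappa)\kappa^2$ against $Kp^2$ is impossible, because $(q+\kappa)\kappa^2=\Theta(\log^3 t)$ is not $O(1)$.
\item Serving all $p$ queries on a block $\mathcal B$ with one sorting network puts $\Theta(p)$ registers into Lemma~\ref{lem:s4c-read}, so its $\kappa$ becomes $\log_2(2p)=\Theta(t)$. Bitonic sorting then needs $\Theta(t^2)$ comparator layers of depth $O(q+t)$ each, i.e.\ depth $\Theta(t^2(q+t))$. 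That is far above the claimed $O(t+Q+\log(K+1)+(q+\kappa)\kappa^2)$. The per-query cost is also $\Theta((q+t)t^2)$, not $O(F)$.
\item Your Step~1 count of $KG(L+1)q=\Theta(Kp^2q)$ simultaneously needed coordinate qubits already exceeds the ancilla budget $O(Kp^2/t+KGF)$. You never specify a sharing scheme that brings it down.
\end{itemize}

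The missing idea is to batch at granularity exactly $L$. Fix a block $\mathcal B$ and one of the $K$ call indices. Split the $p$ queries from the groups $(i,\mathcal B)$, $0\le i<p$, into $p/L$ batches of $L$ queries. Give each batch its own fan-out copy of $\mathsf Y_{\mathcal B}$, and serve it with one instance of $\mathcal R$ of width $2L$. In addition, give each group and call index a private copy of $\mathsf X_i$. The counts are then as follows.
\begin{itemize}
\item Each $\mathsf Y_{\mathcal B}$ and each $\mathsf X_i$ needs $Kp/L$ copies. They cost $(p/L)(Kp/L)Lq=KGq$ and $p(Kp/L)q=KGq$ qubits, with fan-out depth $O(\log(Kp/L))=O(t+\log(K+1))$.
\item There are $KG/L$ read instances. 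Each has size and workspace $O(L(q+\kappa)\kappa^2)$ and depth $O((q+\kappa)\kappa^2)$, with $\kappa=\log_2(2L)$ unchanged.
\end{itemize}
This amortizes copies to $O(q)$ and reads to $O((q+\kappa)\kappa^2)$ per query without widening the sorting network. That is exactly what produces the $KGF$ terms while keeping the depth within the bound.
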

\begin{proof}
First, implement each $P_g$ (Eq.~\eqref{eq:s4c-weight-preparation}) on its $\kappa$-qubit address and
sign registers with $m_g=\lceil L/t\rceil$ ancillary qubits.
Lemma~\ref{lem:state-tradeoff} gives size $O(L)$ and depth
$O(\kappa+2L/(\kappa+m_g))=O(t)$.
Including its two scalar qubits, one $\mathcal{V}_g$
has $O(L/t+\kappa)$ qubits. Since there are $G$ groups and each group use $K$ times of $\mathcal{V}_g$ in one $\mathcal{G}_g$. Thus all $KG$ preparations use
$O(L)\cdot KG= O(Kp^2)$ gates and $KG\cdot O(L/t+\kappa)= O(Kp^2/t)$ qubits. Here the prescribed
$L$ satisfies $t(\kappa+1)/L=O(1)$: on its exponential branch,
writing $r=\lfloor t/16\rfloor$ bounds this ratio by
$16(r+1)(r+2)/2^r$; on its polynomial branch use
$L\ge(t+1)^4/2$. These preparation ancillary qubits remain part of the group registers throughout the phase circuit.

{Second, construct the coordinate copies required for
the parallel reads.
Recall that $\mathsf Y_{\mathcal B}$ contains the
$L$ coordinate registers indexed by $\mathcal B$,
and hence consists of $Lq$ qubits.

For a fixed column block $\mathcal B$ and a fixed
auxiliary-copy index among the $K$ copies, the groups
$(i,\mathcal B)$, with $0\le i<p$, produce $p$ queries.
Partition them into $p/L$ batches of $L$ queries.
Assign each batch a separate copy of
$\mathsf Y_{\mathcal B}$ (Eq.~\eqref{eq:YB-register}), so that its $L$ queries can be handled together by one instance of
$\mathcal R$. Across all $K$ ancillary-copy indices, each
$\mathsf Y_{\mathcal B}$ therefore requires $Kp/L$
copies. Using CNOT-tree fan-out on zero-initialized
targets, implement
\begin{equation}
\ket{\vct l_{\mathcal B}}_{\mathsf Y_{\mathcal B}}
\bigl(\ket{0^{Lq}}\bigr)^{\otimes Kp/L}
\longmapsto
\ket{\vct l_{\mathcal B}}_{\mathsf Y_{\mathcal B}}
\bigl(\ket{\vct l_{\mathcal B}}\bigr)^{\otimes Kp/L},
\end{equation}
where the output copies occupy distinct registers.
In addition, assign a separate copy of the row
coordinate register $\mathsf X_i$ to each group
$(i,\mathcal B)$ and each auxiliary-copy index.
Each $\mathsf X_i$ also requires $Kp/L$ copies,
because there are $p/L$ column blocks.

Let $\mathcal C$ denote all these fan-out gates.
The column-coordinate copies use 
$ 
\frac pL\cdot\frac{Kp}{L}\cdot Lq
=
KGq
$ 
ancillary qubits, and the row-coordinate copies use
$ 
p\cdot\frac{Kp}{L}\cdot q
=
KGq.
$ 
Thus $\mathcal C$ has size $O(KGq)$ and uses
$O(KGq)$ ancillary qubits.
A fan-out gate make $s$ copies of one qubit has
depth $O(\log (s))$ by Lemma \ref{lem:fanout-standard}.
Since the trees for distinct coordinate bits
run in parallel, the depth of $\mathcal C$ is
\[
O\!\left(\log_2\!(Kp/L)\right)
=
O\bigl(t+\log_2(K+1)\bigr),
\qquad p=2^t.
\]
After the subsequent operations, which preserve
the computational-basis values of these coordinate
registers, applying $\mathcal C^\dagger$ restores
all copy registers to zero.}

Third, apply the parallel read circuit $\mathcal R_{\rm all}$
from Lemma~\ref{lem:s4c-read}. For $j\in\mathcal B$, the stored
address is $j-\min\mathcal B$. Define $e\in\{0,1\}$ as the enable bit for this copy before using $B_{b(l_j)}^e$. The following sequence reads,
uses, and erases the selected coordinate word:
\[
 \begin{aligned}
 \ket{e}\ket{j-\min\mathcal B,0^q}\ket v
 &\longmapsto\ket{e}\ket{j-\min\mathcal B,l_j}\ket v\\
 &\longmapsto\ket{e}\ket{j-\min\mathcal B,l_j}B_{b(l_j)}^e\ket v\\
 &\longmapsto\ket{e}\ket{j-\min\mathcal B,0^q}B_{b(l_j)}^e\ket v.
 \end{aligned}
\]
The middle step is a uniformly controlled one-qubit gate define in Lemma \ref{lem:tradeoff-ucg}; the last is the
inverse read. Apply $B_{a(X_i)}^e$ to the other scalar qubit
and the enabled sign gate to the sign qubit. Call all these
parallel scalar and sign gates $\mathcal L$, and put
$\mathcal P=\bigotimes_g P_g^{\otimes K}$, one preparation for each group and each power factor. A complete parallel implementation of the copies of $\mathcal V_g$ is
\begin{equation}
 \mathcal C^\dagger\mathcal P^\dagger
 \mathcal R_{\rm all}^\dagger\mathcal L
 \mathcal R_{\rm all}\mathcal P\mathcal C.
 \label{eq:s4c-block-schedule}
\end{equation}
The reads cost $O(KG(q+\kappa)\kappa^2)$ gates and ancillas
and depth $O((q+\kappa)\kappa^2)$ by Lemma \ref{lem:s4c-read}. The scalar gates cost
$O(KGQ)$ gates and depth $O(Q+q)$, with no extra ancillas by Lemma \ref{lem:tradeoff-ucg}.
The read words, addresses, and coordinate copies retain their
values during $\mathcal L$, so the inverse reads and fan-out gates clear their ancillas for every input of the group registers.

Fourth, implement the controls and the gates $R_g$ in
$\mathcal G_g$. From the stored integer $r$, compute the $K$ enable
bits by
\[
 \ket{r,0^K}\longmapsto
 \ket{r,\mathbf1[1\le r],\ldots,\mathbf1[K\le r]}.
\]
Copy the register storing $r$ and apply the binary-to-one-hot conversion
of Lemma~\ref{lem:one-hot}, padding its dimension to a power
of two. Computing cumulative ORs on a balanced binary tree
then gives these enables with $O(K)$ gates and ancillas
and $O(\log(K))$ depth. Reverse this computation after the
enabled circuits. These circuits leave $r$ unchanged.
Preparing the register storing $r$ also has size and workspace $O(K)$
and depth $O(\log(K))$ by Lemma~\ref{lem:state-tradeoff}.
Its preparation ancillas, unlike the ancillary enable ancillas,
are retained and included in the group's zero test.
To reflect a group of $b$ auxiliary qubits, compute its all-zero
predicate with a balanced Toffoli tree, apply its sign, and
reverse the tree. This costs $O(b)$ gates and ancillary ancillas
and $O(\log(b))$ depth. Test every private group register,
integer register, preparation ancilla, and additional qubit required by
$\mathcal G_g$, and perform the tests separately for each $g$.

The complete operations preserve the coordinate words
and act on separate auxiliary copies. Thus they commute
and run in parallel using the copies above, including on
nonzero or entangled group-register inputs. The complete
phase circuits of distinct groups similarly run in parallel.
Each phase circuit uses only a constant
number of the parallel stages in Eq.~\eqref{eq:s4c-block-schedule},
their literal inverses, and group reflections. Consequently
\begin{align*}
 \text{depth:}\quad
 &O(t)+O(Q+q)+O((q+\kappa)\kappa^2)
           +O(\log(K+1)),\\
 \text{size:}\quad
 &O(Kp^2)+O(KG(q+\kappa)\kappa^2)+O(KGQ)+O(Kp^2/t),\\
 \text{ancilla-count:}\quad
 &O(Kp^2/t)+O(KG(q+\kappa)\kappa^2)+O(KGq).
\end{align*} 
\end{proof}

\paragraph{Implementation of the diagonal phase gate.} Now we approximately implement the diagonal phase gates.
\begin{theorem}
\label{thm:s4c-diagonal}
Let $H\in\R^{p \times p}$ be an orthogonal matrix and $K\ge1$ be an integer.
There are unitary circuits $\widetilde\Delta_H^\uparrow$ such that, for any $\ket\varphi$ of the coordinate registers,
\begin{equation}
\norm{\widetilde\Delta_H^\uparrow
          (\ket\varphi\ket{0^m})
       -(\Delta_H^\uparrow\ket\varphi)\ket{0^m}}
\le\sum_g\sqrt{\frac{10e^{\tau_g}\tau_g^{K+1}}{(K+1)!}},
\label{eq:s4c-diagonal-error}
\end{equation}
and the same statement holds for the $\Delta_H^\downarrow$.
Here $m$ includes every ancillary qubit of the circuit.
Each circuit has
\begin{align}
\text{depth:}\quad
 &O\bigl(t+Q+\log(K+1)+(q+\kappa)\kappa^2\bigr),\notag\\
\text{size:}\quad
 &O\bigl(N_{\rm str}(q+1)^2+Kp^2
           +KG[Q+(q+\kappa)\kappa^2]\bigr),\notag\\
\text{ancilla-count:}\quad
 &O\bigl(N_{\rm str}(q+1)^2+Kp^2/t
           +KG[Q+(q+\kappa)\kappa^2]\bigr).
\label{eq:s4c-diagonal-resources}
\end{align}
\end{theorem}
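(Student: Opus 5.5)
We compose the two pieces already built: set
\[
\widetilde\Delta_H^\uparrow=\Delta_{H,\rm str}^\uparrow\cdot\mathcal G_{\rm all},
\]
with $\Delta_{H,\rm str}^\uparrow$ from Lemma~\ref{lem:s4c-strong} and $\mathcal G_{\rm all}$ from Lemma~\ref{lem:s4c-schedule}. The ancillary registers of the two parts can be kept disjoint; alternatively, the strong part reuses zeroed workspace after $\mathcal G_{\rm all}$ has finished. Since $\Delta_{H,\rm str}^\uparrow$ is implemented exactly and restores its helpers to zero on every coordinate basis input, it contributes no error. By Eq.~\eqref{eq:s4c-phase-product}, the ideal target is $\Delta_{H,\rm str}^\uparrow\prod_g e^{\iu f_g}$. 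The lower shear $\Delta_K^\downarrow$ is handled identically with the two coordinate blocks exchanged.

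For the error bound, we use a telescoping (hybrid) argument over the groups, in the same style as Eq.~\eqref{eq:red34-telescope}. Order the groups $g_1,\dots,g_G$. Replace $\mathcal G_{g_s}$ by the ideal $e^{\iu f_{g_s}}\otimes\Id$ one group at a time. Each $\mathcal G_g$ acts on its own auxiliary registers, and the ideal operators $e^{\iu f_g}$ are diagonal in the coordinate basis and act trivially on auxiliaries. Hence after the ideal replacements the intermediate state still has all auxiliaries of the remaining groups equal to zero. Lemma~\ref{lem:s4c-phase} then applies to an arbitrary coordinate state, including states entangled with other groups' zeroed auxiliaries, because its bound holds for every $2pq$-qubit input $\ket{\vct\psi}$ and so extends to states that are tensored with untouched registers. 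Each replacement therefore costs at most $\left(10e^{\tau_g}\tau_g^{K+1}/(K+1)!\right)^{1/2}$, and unitarity of the remaining factors preserves norms. The triangle inequality gives Eq.~\eqref{eq:s4c-diagonal-error}. Groups with $\tau_g=0$ are omitted, since for them $f_g\equiv0$.

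For the resources, depth adds: $O(\log(pq))$ from Lemma~\ref{lem:s4c-strong} is absorbed into $O(t+(q+\kappa)\kappa^2)$, because $\log(pq)=t+\log q$. Size and ancilla-count also add. Lemma~\ref{lem:s4c-strong} gives $O(N_{\rm str}q^2)\subseteq O(N_{\rm str}(q+1)^2)$, and Lemma~\ref{lem:s4c-schedule} gives the remaining terms with $F=Q+(q+\kappa)\kappa^2$. The $+1$ in $(q+1)^2$ covers the case $q$ small.

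The main obstacle is justifying the parallel product. We must confirm that $\mathcal G_{\rm all}$ really equals the product $\prod_g\mathcal G_g$ with separate auxiliaries, and that every shared helper (coordinate copies, read workspace, enable bits) returns to zero so that the per-group bound from Lemma~\ref{lem:s4c-phase} is legitimately applied in sequence. Lemma~\ref{lem:s4c-schedule} already asserts this equality. The remaining check is that the shared copy registers are reset exactly, which holds because every operation preserves the coordinate basis values. Given that, the telescoping argument goes through.
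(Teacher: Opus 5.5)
Your proposal is correct and follows the paper's proof: compose the exact strong-term circuit $\Delta_{H,\rm str}^\uparrow$ with the product of the group circuits $\mathcal G_g$, sum the per-group errors from Lemma~\ref{lem:s4c-phase}, and add the resources of Lemmas~\ref{lem:s4c-strong} and~\ref{lem:s4c-schedule}. The paper applies the strong part first rather than last, which makes no difference because every factor preserves the coordinate basis labels; your telescoping hybrid argument simply makes explicit the error summation that the paper states in one line.
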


\begin{proof}
We only construct the circuit for $\Delta_H^\uparrow$.
By Eq.~\eqref{eq:s4c-phase-product}, $\Delta_H^\uparrow$ has the following decomposition
\[
\Delta_H^\uparrow
 =\left(\prod_g e^{\iu f_g}\right)
       \Delta_{H,\rm str}^\uparrow.
\]
Then the circuit for $\Delta_H^\uparrow$ is
\begin{equation}
\widetilde\Delta_H^\uparrow
 =\left(\prod_g\mathcal G_g\right)
       \Delta_{H,\rm str}^\uparrow,
\label{eq:s4c-diagonal-circuit}
\end{equation}
The circuit $\Delta_{H,\rm str}^\uparrow$ does not introduce error. According to Lemma \ref{lem:s4c-phase}, each $\mathcal{G}_g$ introduce error at most $\sum_g\sqrt{\frac{10e^{\tau_g}\tau_g^{K+1}}{(K+1)!}},
$ which gives Eq.~\eqref{eq:s4c-diagonal-error}.
By Lemmas~\ref{lem:s4c-strong} and \ref{lem:s4c-schedule},
implies \eqref{eq:s4c-diagonal-resources} directly. The same results holds for $\Delta_H^\downarrow$.
\end{proof}

\subsection{Reducing the circuit size and ancilla-count}

\begin{theorem}[Implementation of $U$]
\label{thm:s4c-native}
Let $n\ge2$, $d=2^n$, $p=2d$, and $\ell=2p$.
For every $U\in\mathrm U(d)$, there is a unitary circuit
$T_U$ on $n+m$ qubits such that
\begin{equation}
\norm{T_UJ_T-J_TU}_{\op}<2^{-20}2^{-n/2},
\qquad
J_T\ket\psi=\ket\psi\ket{0^m}.
\label{eq:s4c-native-accuracy}
\end{equation}
The circuit has depth $O(n)$ and size $O(4^n)$, using
$m=O(4^n/n)$ ancillary qubits. 
\end{theorem}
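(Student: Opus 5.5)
The plan is to reuse the circuit $T_{\rm discrete}=E^\dagger\mathsf P_{\rm discrete}E$ of Theorem~\ref{thm:core-full-resources} with a single change. In the three shears of $\mathcal P_{\rm discrete}=\widehat T_{R_U}\widehat S_{-R_U^T}\widehat T_{R_U}\widehat J_0$, each exact diagonal phase $\Delta^\uparrow_{R_U}$ or $\Delta^\downarrow_{-R_U^T}$ is replaced by the approximate circuit of Theorem~\ref{thm:s4c-diagonal}. The encoder $E$ (Theorem~\ref{thm:CQ}), the Fourier layers $D^{\otimes p}$ (Lemma~\ref{lem:fourier}) and the signed swap (Lemma~\ref{lem:finite-signed-swap}) are kept unchanged. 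Both $R_U$ and $-R_U^T$ are orthogonal, so the strong/weak analysis of Section~\ref{sec:spectral} applies to each shear. The parameters are fixed as follows:
\begin{itemize}
\item $B$ is the smallest power of two with $B^2\ge2^{20}(n+20)$, so that $Q=2B^2=\Theta(n)$ and $q=O(\log n)$;
\item the truncation degree is the constant $K=31$;
\item $L,G,\kappa$ are as in Eq.~\eqref{eq:s4c-groups}.
\end{itemize}
Let $T_U$ denote the resulting circuit.

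\textbf{Error.} I would use a hybrid argument with the isometry $J$ that appends all ancillas in the zero state. Each approximate phase circuit $\widetilde\Delta$ is a unitary on the full space and satisfies $\norm{\widetilde\Delta J-J\Delta}_{\op}\le\epsilon_\Delta$ by Eq.~\eqref{eq:s4c-diagonal-error}. The other gates act exactly on the coordinate registers and commute with $J$. Replacing the three $\Delta$'s one at a time therefore gives
\[
\norm{T_UJ_T-J_TU}_{\op}\le 100\ell e^{-B^2/256}+3\max\epsilon_\Delta ,
\]
where the first term comes from Lemma~\ref{lem:red34-coordinate-error} as in Theorem~\ref{thm:core-full-resources}. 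The grid term is at most $2^{n+9}e^{-4096(n+20)}$, which is negligible. For the phase term, Eq.~\eqref{eq:s4c-group-small} gives $\tau_g\le\frac12p^{-3/16}$. With $K+1=32$, each group therefore contributes at most $\sqrt{10e^{1/2}/32!}\,2^{-16}p^{-3}$. There are at most $G\le p^2$ groups, so $\epsilon_\Delta\le c\,p^{-1}$ with $c$ a tiny absolute constant. Since $p^{-1}=2^{-n-1}$, this lies well below $2^{-20}2^{-n/2}$.

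\textbf{Resources.} Depth is bounded term by term:
\begin{itemize}
\item the encoder and decoder contribute $O(n+q)$;
\item the Fourier layers contribute $O(q)$;
\item the strong phases contribute $O(\log(pq))$;
\item the weak schedule contributes $O(t+Q+\log(K+1)+(q+\kappa)\kappa^2)$.
\end{itemize}
Here $t=n+1$, $Q=O(n)$, and $q,\kappa=O(\log n)$, so the total depth is $O(n)$.

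Size and ancilla-count are bounded similarly:
\begin{itemize}
\item the encoder costs $O(Qd)=O(n2^n)$, and the Fourier layers cost $O(pq^2)$;
\item the strong part costs $O(N_{\rm str}q^2)$, which by Eq.~\eqref{eq:s4c-strong-count} is $O(p^{3/2}n^2\log^2 n)=o(4^n/n)$;
\item the weak part, by Lemma~\ref{lem:s4c-schedule}, costs size $O(Kp^2+KGF)$ and ancillas $O(Kp^2/t+KGF)$, where $F=Q+(q+\kappa)\kappa^2=O(n)$.
\end{itemize}
It remains to check that $KGF=Kp^2F/L=O(4^n/n)$, which requires $F/L=O(1/t)$. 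For large $n$ the branch $L\ge(t+1)^4/2$ gives $F/L=O(n^{-3})$. For bounded $n$ everything is $O(1)$, and the constants are absorbed. Altogether the size is $O(Kp^2)=O(4^n)$ and the ancilla-count is $O(Kp^2/t)=O(4^n/n)$.

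\textbf{Main obstacle.} The step I expect to require the most care is the error composition. The weak-phase circuits $\mathcal G_g$ leave their ancillas only approximately clean, and later gates, including the decoder $E^\dagger$, act on those ancilla registers. The hybrid argument stated in terms of $\norm{\widetilde\Delta J-J\Delta}_{\op}$ handles this correctly. I must, however, verify that the ancillas of distinct $\mathcal G_g$ are disjoint, so that Lemma~\ref{lem:s4c-phase} errors add linearly inside Theorem~\ref{thm:s4c-diagonal}. I must also verify that no later stage reuses a possibly dirty ancilla as if it were clean. If necessary, I would give each of the three phase stages fresh ancilla registers; this costs only a constant factor in the ancilla-count.
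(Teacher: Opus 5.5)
Your proposal is correct and follows essentially the same route as the paper. The circuit is the same $E^\dagger\widetilde{\mathcal P}_UE$, with the same grid and $K=31$, and you bound the error by the grid term of Lemma~\ref{lem:red34-coordinate-error} plus three phase-stage errors, with the same resource bookkeeping. Your phase-error estimate, $\tau_g\le\frac12p^{-3/16}$ with $G\le p^2$, is slightly cruder than the paper's (it gives $O(p^{-1})$ rather than $O(p^{-17/16})$), but it is still far below $2^{-20}2^{-n/2}$.

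The ancilla-reuse issue you flag as the main obstacle is already settled by your hybrid argument. Each $\widetilde\Delta$ is unitary, so the stage errors add even if registers are shared. The paper in any case gives the two upper-shear occurrences separate auxiliary registers.
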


\begin{proof}
Choose
\[
r_0=n+20,\qquad
B=2^{\lceil\frac12\log_2(2^{20}r_0)\rceil},
\qquad Q=2B^2,\qquad q=\log_2Q,
\]
and take $K=31$ in Theorem~\ref{thm:s4c-diagonal}.

First, apply the exact encoder $E$ from
Theorem~\ref{thm:CQ}. For
$\ket\psi=\sum_{j=0}^{d-1}c_j\ket j$ with
$\norm{\vct c}=1$, its output is
\[
EJ_T\ket\psi
=\ket{\Psi_Q(\vct c)}\ket{0^w},
\qquad w=n+m-q\ell.
\]

Second, apply $ {\widehat{J}_0}$, followed by the three finite shears
in Eq.~\eqref{eq:dir-FU}. Replace only their diagonal
phases by Theorem~\ref{thm:s4c-diagonal}. More explicitly,
\begin{align*}
\widetilde{ {\widehat{T}}}_H
 &=(D^{\otimes p}\otimes I)^\dagger
        \widetilde\Delta_H^\uparrow
        (D^{\otimes p}\otimes I),\\
\widetilde{ {\widehat S}}_H
 &=(I\otimes D^{\otimes p})^\dagger
        \widetilde\Delta_H^\downarrow
        (I\otimes D^{\otimes p}).
\end{align*}

The complete coordinate circuit is
\[
\widetilde{\mathcal P}_U
 =\widetilde{ {\widehat{T}}}_{R_U}^{(2)}
   \widetilde{ {\widehat{S}}}_{-R_U^T}
   \widetilde{ {\widehat{T}}}_{R_U}^{(1)} {\widehat{J}_0},
\]
where $(1)$ and $(2)$ distinguish the two occurrences
of the upper shear and their separate auxiliary registers.

Finally, apply $E^\dagger$. Thus
\begin{equation}
T_U=E^\dagger
   \widetilde{ {\widehat T}}_{R_U}^{(2)}
   \widetilde{ {\widehat S}}_{-R_U^T}
   \widetilde{ {\widehat T}}_{R_U}^{(1)} {\widehat{J}_0} E
   =E^\dagger\widetilde{\mathcal P}_U E.
\label{eq:s4c-U-circuit}
\end{equation}
All circuits in this equation act on the same $n+m$ qubits,
by identity extension on registers that they do not use.

We next prove the error bound. For $K=31$,
Lemma~\ref{lem:s4c-phase} bounds the error of one group by
$\sqrt{20/32!}\,\tau_g^{16}$. There are at most $G=p^2/L$
groups in each of the three shears, and
$\tau_g\le Lp^{-1/4}/2$. Consequently, for every unit vector
$\ket\varphi$ of the coordinate registers,
\begin{align*}
\norm{\widetilde{\mathcal P}_U(\ket\varphi\ket{0^w})
       -(\mathcal P_{\rm  {discrete}}\ket\varphi)\ket{0^w}}&\le3G\sqrt{\frac{20}{32!}}
                    \left(\frac{Lp^{-1/4}}2\right)^{16}\\
&=3\sqrt{\frac{20}{32!}}\,2^{-16}L^{15}p^{-2}\\
&\le3\sqrt{\frac{20}{32!}}\,2^{-16}p^{-17/16}
 <2^{-71}p^{-17/16}.
\end{align*}
The second inequality uses $L\le p^{1/16}$;
the last uses $180\cdot2^{110}<32!$.
The exact Fourier transforms and $ {\widehat{J}_0}$ introduce no
additional implementation error.

Also, $B^2\ge2^{20}(n+20)$ and $\ell=2^{n+2}$ imply
\begin{align*}
\log \bigl(100\ell e^{-B^2/256}\bigr)
&\le\log (400)+n-4096(n+20)\log e<-40-n/2.
\end{align*}
Using the encoder identity for both $\vct c$ and $U\vct c$,
unitarity of $E$, and Lemma~\ref{lem:red34-coordinate-error},
we obtain
\begin{align*}
\norm{T_UJ_T\ket\psi-J_TU\ket\psi}
&=\norm{\widetilde{\mathcal P}_U
       (\ket{\Psi_Q(\vct c)}\ket{0^w})
       -\ket{\Psi_Q(U\vct c)}\ket{0^w}}\\
&\le2^{-71}p^{-17/16}
     +\norm{\mathcal P_{\rm  {discrete}}\ket{\Psi_Q(\vct c)}
                       -\ket{\Psi_Q(U\vct c)}}\\
&<2^{-71}p^{-17/16}+2^{-40}d^{-1/2}\\
&\le(2^{-71}+2^{-40})d^{-1/2}
 <2^{-20}d^{-1/2}.
\end{align*}
Taking the supremum over $\norm{\vct c}=1$ proves
Eq.~\eqref{eq:s4c-native-accuracy}.

It remains to count the resources. Since $t=n+1$,
$Q=\Theta(t)$, $q=O(\log(t+1))$, and $K=31$,
Theorem~\ref{thm:s4c-diagonal} gives depth $O(t)$
for each diagonal phase. Furthermore,
\[
N_{\rm str}(q+1)^2
 =O\bigl(p^{3/2}t^2\log^2(t+1)\bigr)=o(p^2/t),
\]
and, since $L=\Theta(t^4)$ and $\kappa=O(\log(t+1))$,
\[
G[Q+(q+\kappa)\kappa^2]
 =O\left(\frac{p^2}{t^4}
                 [t+\log^3(t+1)]\right)=o(p^2/t).
\]
Thus one diagonal phase has size $O(p^2)$ and
ancilla-count $O(p^2/t)$. The complete circuit has
\begin{align*}
\text{depth:}\quad
 &2O(n+q)+6O(q)+3O(t)+O(1)=O(n),\\
\text{size:}\quad
 &2O(pQ)+6O(pq^2)+3O(p^2)+O(pq)=O(p^2),\\
\text{ancilla-count:}\quad
 &O(pQ)+3O(p^2/t)+O(pq)=O(p^2/t).
\end{align*}
The terms correspond to the encoder and decoder, Fourier
transforms, three diagonal phases, and signed swap.
The ancillary count also includes the coordinate padding
$q\ell-n$. The finitely many smaller $n$ are absorbed
in the constants. Finally, $p^2=\Theta(4^n)$ and
$t=\Theta(n)$ give the claimed bounds.
\end{proof}

\begin{corollary}[Arbitrary requested accuracy]
\label{cor:s4c-general}
Let $0<\epsilon\le1/2$. Then the circuit $T_U$
in Eq.~\eqref{eq:s4c-U-circuit} satisfies
\[
\norm{T_UJ_T-J_TU}_{\op}<\epsilon,
\]
which has resources
\begin{align*}
\text{depth:}\quad
 &O\left(n+\log(1/\epsilon)\right),\\
\text{size:}\quad
 &O\left(
 4^n\left[
 1+\frac{\log(1/\epsilon)}{n}
   +\frac{\log^2(1/\epsilon)}{n^5}
 \right]\right),\\
\text{ancilla-count:}\quad
 &O\left(
 \frac{4^n}{n}\left[
 1+\frac{\log(1/\epsilon)}{n}
   +\frac{\log^2(1/\epsilon)}{n^4}
 \right]\right).
\end{align*}
In particular, for $\epsilon=2^{-cn}$ with any fixed $c>0$,
the depth, size, and ancilla-count are
$O_c(n)$, $O_c(4^n)$, and $O_c(4^n/n)$, respectively.
\end{corollary}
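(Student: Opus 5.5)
The plan is to rerun the proof of Theorem~\ref{thm:s4c-native} with the two accuracy parameters, the grid half-width $B$ and the polynomial degree $K$, now depending on $\epsilon$. We keep $L,G,\kappa$ as in Eq.~\eqref{eq:s4c-groups}, since they depend only on $t$. For the grid we follow the proof of Theorem~\ref{thm:core-full-resources}: set $r_0=n+\lceil\log(1/\epsilon)\rceil+20$ and $B=2^{\lceil\frac12\log_2(2^{20}r_0)\rceil}$, so that $Q=2B^2=\Theta(r_0)$ and $q=O(\log r_0)$. That same computation gives
\[
\sup_{\norm{\vct c}=1}\norm{\mathcal P_{\rm discrete}\ket{\Psi_Q(\vct c)}-\ket{\Psi_Q(U\vct c)}}\le 2^{-40-\lceil\log(1/\epsilon)\rceil-n/2}<\epsilon/2
\]
via Lemma~\ref{lem:red34-coordinate-error}. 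The encoder $E$ of Theorem~\ref{thm:CQ}, the exact Fourier transforms (Lemma~\ref{lem:fourier}) and $\widehat J_0$ are unchanged in form. The overall error is then split by the triangle inequality exactly as in the last display of the proof of Theorem~\ref{thm:s4c-native}.

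For the weak-phase error we use Theorem~\ref{thm:s4c-diagonal} and Eq.~\eqref{eq:s4c-group-small}. The latter gives $\tau_g\le\frac12p^{-3/16}$, and it does not depend on $Q$ because the strong/weak threshold is stated for $\theta_{ij}$ itself. Hence the total weak error over the three diagonal phases is at most
\[
3G\sqrt{\tfrac{10e^{1/2}}{(K+1)!}}\,\bigl(\tfrac12p^{-3/16}\bigr)^{(K+1)/2}\le 3p^2\,2^{-(K+1)/2}p^{-3(K+1)/32}.
\]
We choose $K=31+\lceil 32\log(1/\epsilon)/(3t)\rceil$. This makes the right-hand side at most $\epsilon/2$, since $p^{-3(K+1)/32}\le p^{-3}\epsilon$ and $p=2^t$, while keeping $K=O(1+\log(1/\epsilon)/n)$.

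Next we substitute into Eq.~\eqref{eq:s4c-diagonal-resources} and the encoder and Fourier costs, writing $\lambda=\log(1/\epsilon)$ and using $L=\Theta(t^4)$, $G=\Theta(p^2/t^4)$, $F=Q+(q+\kappa)\kappa^2=O(n+\lambda)$.
\begin{itemize}
\item Depth: the dominant terms are $O(t+Q)$ from the uniformly controlled gates and $O(n+q)$ from the encoder, plus $O(\log(K+1))$. This gives $O(n+\lambda)$.
\item Size: $Kp^2=O(4^n(1+\lambda/n))$. The read/scalar term is $KGF=O\bigl(4^n(1+\lambda/n)(n+\lambda)/n^4\bigr)$, which lies within $4^n[1+\lambda/n+\lambda^2/n^5]$. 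The encoder and Fourier terms $O(2^n(n+\lambda)+2^nq^2)$ are dominated.
\item Ancilla-count: the same calculation with $Kp^2/t=O(\frac{4^n}{n}(1+\lambda/n))$ and $KGF\le\frac{4^n}{n}[\lambda/n+\lambda^2/n^4+\dots]$ gives the stated bound.
\end{itemize}

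The step I expect to need the most care is the strong-term count $N_{\rm str}(q+1)^2$, because Eq.~\eqref{eq:s4c-strong-count} now grows with $Q$. I would split into two cases.
\begin{itemize}
\item If $Q\le p^{1/4}$, use $N_{\rm str}\le\pi^2p^{3/2}Q^2$. Then $N_{\rm str}q^2=O(p^{3/2}(n+\lambda)^2\log^2(n+\lambda))$. Comparing with $p^2$ (times $\lambda^2/n^5$ when $\lambda$ is large) shows it is $o(4^n/n)$ plus a term absorbed by $4^n\lambda^2/n^5$, using $p^{1/2}\gg n^5\log^2$.
\item If $Q>p^{1/4}$, then $\lambda=\Omega(2^{n/4})$. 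Use $N_{\rm str}\le p^2$, so the cost is $O(p^2\log^2 \lambda)$, which is dominated by $4^n\lambda^2/n^5$ and $\frac{4^n}{n}\cdot\lambda^2/n^4$.
\end{itemize}
Finally, for $\epsilon=2^{-cn}$ we have $\lambda=cn$, and every bracket is $O_c(1)$. This gives depth $O_c(n)$, size $O_c(4^n)$ and ancilla-count $O_c(4^n/n)$.
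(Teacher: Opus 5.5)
Your proposal is correct and follows the paper's proof essentially step for step. It uses the same grid choice $r_0=n+\lceil\log(1/\epsilon)\rceil+20$ with the error bound inherited from Theorem~\ref{thm:core-full-resources}, and a degree $K=\Theta(1+\log(1/\epsilon)/t)$ chosen so that $p^{2-3(K+1)/32}$ absorbs $\epsilon$; the paper instead takes $K=\max\{31,\lceil\frac{32}{3}(2+\frac{\lambda+6}{t})\rceil\}$ and splits the error as $\epsilon/4+\epsilon/4$, followed by the same substitution into Eq.~\eqref{eq:s4c-diagonal-resources}. Your explicit two-case treatment of $N_{\rm str}(q+1)^2$ spells out a step that the paper only asserts through the bound $\min\{p^2,p^{3/2}Q^2\}$.
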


\begin{proof}
Choose
\[
\lambda=\lceil\log_2(1/\epsilon)\rceil,
\qquad r_0=n+\lambda+20,
\qquad
K=\max\left\{31,\left\lceil
 \frac{32}{3}\left(2+\frac{\lambda+6}{t}\right)
                    \right\rceil\right\}.
\]
Choose $B,Q,q$ as in the preceding theorem with this $r_0$,
and keep the same $L,G,\kappa$.
First, the new grid choice gives
\[
100\ell e^{-B^2/256}<2^{-40-\lambda}d^{-1/2}<\epsilon/4.
\]
Second, $\tau_g\le p^{-3/16}/2$ and
Lemma~\ref{lem:s4c-phase} give total implementation error
at most
\begin{align*}
3G\sqrt{\frac{10e^{1/2}}{(K+1)!}}
       (p^{-3/16}/2)^{(K+1)/2}
&\le15p^{2-3(K+1)/32}\le15\,2^{-\lambda-6}
 \le\frac{15}{64}\epsilon<\epsilon/4.
\end{align*}
Here the first inequality uses $G\le p^2$ and
$\sqrt{10e^{1/2}}<5$; the second uses the choice of $K$
and $p=2^t$; the third uses $2^{-\lambda}\le\epsilon$.
Finally, the encoder--coordinate--decoder error calculation
in the preceding proof bounds the complete output error
by the sum of these two terms, which is less than $\epsilon$.

For the resources, add the encoder, Fourier, and signed-swap
costs to three copies of
Eq.~\eqref{eq:s4c-diagonal-resources}, and use
$N_{\rm str}=O(\min\{p^2,p^{3/2}Q^2\})$.
Since $q^2=O(Q)$, the Fourier and coordinate-padding costs
are absorbed in $O(pQ)$. Also,
$Q=\Theta(r_0)$, $q,\kappa=O(\log r_0)$, and
$K=O(1+\lambda/t)$, so
$(q+\kappa)\kappa^2=O(\log^3 r_0)=O(r_0)$
and $\log(K+1)=O(\log r_0)$. This proves the 
resource bounds:
\begin{align*}
\text{depth:}\qquad
 &O\bigl(t+Q+\log(K+1)+(q+\kappa)\kappa^2\bigr)=O(n+\log(1/\epsilon)),\\
\text{size:}\qquad
 &O\bigl(pQ+\min\{p^2,p^{3/2}Q^2\}(q+1)^2
            +Kp^2+KG[Q+(q+\kappa)\kappa^2]\bigr)\\
=& O\left(
 4^n\left[
 1+\frac{\log(1/\epsilon)}{n}
   +\frac{\log^2(1/\epsilon)}{n^5}
 \right]\right),\\
\text{ancilla-count:}\qquad
 &O\bigl(pQ+\min\{p^2,p^{3/2}Q^2\}(q+1)^2
            +Kp^2/t+KG[Q+(q+\kappa)\kappa^2]\bigr)\\
= & O\left(
 \frac{4^n}{n}\left[
 1+\frac{\log(1/\epsilon)}{n}
   +\frac{\log^2(1/\epsilon)}{n^4}
 \right]\right).
\end{align*}
If $\epsilon=2^{-cn}$ for fixed $c>0$,
then $Q=O_c(t)$ and $K=O_c(1)$, so the resource
calculation in the preceding theorem applies.
\end{proof}
}

\section{From approximate to exact linear-depth unitary synthesis}
\label{sec:exactification}

We convert the approximate circuit from Section~\ref{sec:spectral}
into an exact circuit without changing its asymptotic quantum
resources. First, we construct a circuit whose all-zero auxiliary
matrix elements equal a normalized residual matrix. We then combine
this circuit with the approximate circuit and apply two reflections.
The resulting circuit implements the prescribed unitary exactly and
restores every ancillary qubit to zero.

\subsection{A circuit for the residual matrix}
\label{subsec:residual-exactification}

\begin{lemma}[A circuit for a normalized residual matrix]
\label{lem:s5-residual}
Let $n\ge2$, $d=2^n$, and let $\Gamma\in\C^{d\times d}$ be
nonzero. Define
\[
\rho=\norm{\Gamma}_{\F},
\qquad b=\left\lceil d^2/n\right\rceil.
\]
There is a unitary circuit $V_\Gamma$ on $2n+b$ qubits satisfying
\begin{equation}
(\Id\otimes\bra{0^{n+b}})V_\Gamma
(\Id\otimes\ket{0^{n+b}})
=\frac{\Gamma}{\rho}.
\label{eq:s5-residual-matrix}
\end{equation}
It can be implemented by a quantum circuit of depth $O(n)$ and size $O(d^2)$, using $O(d^2/n)$ ancillary qubits.
\end{lemma}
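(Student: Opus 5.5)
The plan is to factor the normalized residual entrywise into a column-weight part and a column-direction part, and to realize the two parts with the state-preparation primitives of Lemmas~\ref{lem:controlled-state-tradeoff} and~\ref{lem:state-tradeoff}. Let $\vct\gamma_j$ be the $j$th column of $\Gamma$ and $c_j=\norm{\vct\gamma_j}$, so that $\sum_j c_j^2=\rho^2$. For $c_j>0$ define the normalized column state
\[
\ket{\vct a_j}=\sum_{i}\frac{\Gamma_{ij}}{c_j}\ket i .
\]
For $c_j=0$, let $\ket{\vct a_j}$ be an arbitrary fixed state, say $\ket{0^n}$; its coefficient below is zero, so the choice does not matter. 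Also define the $n$-qubit weight state
\[
\ket{\vct w}=\sum_j\frac{c_j}{\rho}\ket j ,
\]
which is normalized, has nonnegative real amplitudes, and is well defined because $\Gamma\neq0$. Then $\Gamma_{ij}/\rho=(\Gamma_{ij}/c_j)(c_j/\rho)$ for every $i,j$.

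The circuit acts on the input register $\mathsf X$ ($n$ qubits), an ancillary register $\mathsf Y$ ($n$ qubits) and a workspace $\mathsf W$ of $b$ qubits.
\begin{enumerate}
\item Apply the $(n,n)$-CQSP $A\colon\ket j_{\mathsf X}\ket{0^n}_{\mathsf Y}\mapsto\ket j_{\mathsf X}\ket{\vct a_j}_{\mathsf Y}$, using $\mathsf W$ as clean workspace.
\item Swap $\mathsf X$ and $\mathsf Y$. This gives $\sum_i(\Gamma_{ij}/c_j)\ket i_{\mathsf X}\ket j_{\mathsf Y}$.
\item Apply $P_{\vct w}^\dagger$ to $\mathsf Y$, where $P_{\vct w}\ket{0^n}=\ket{\vct w}$ is from Lemma~\ref{lem:state-tradeoff}.
\end{enumerate}
Projecting $\mathsf Y\mathsf W$ onto all zeros gives
\[
\bra{0^n}P_{\vct w}^\dagger\ket j=\overline{\langle j|\vct w\rangle}=\frac{c_j}{\rho},
\]
which is real. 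Hence
\[
(\bra i\otimes\bra{0^{n+b}})\,V_\Gamma\,(\ket j\otimes\ket{0^{n+b}})=\frac{\Gamma_{ij}}{c_j}\cdot\frac{c_j}{\rho}=\frac{\Gamma_{ij}}{\rho}.
\]
This is exactly Eq.~\eqref{eq:s5-residual-matrix}, with $V_\Gamma=P_{\vct w}^\dagger\,\mathsf{SWAP}\,A$. The workspace $\mathsf W$ is returned to zero because Lemma~\ref{lem:controlled-state-tradeoff} restores its ancillas, so only the $\mathsf Y$ projection carries content.

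\textbf{Resources.} Lemma~\ref{lem:controlled-state-tradeoff} with $k=r=n$ and $m=b=\lceil d^2/n\rceil$ gives size $O(4^n)$ and depth
\[
O\!\left(2n+\frac{4^n}{2n+b}\right)=O(n+n)=O(n).
\]
The swap has depth $1$ and size $O(n)$. The preparation $P_{\vct w}^\dagger$ has size $O(2^n)$ and depth $O(n+2^n/n)$ if it is given no extra space. Since $2^n/n$ may exceed $n$, I would instead let it share $\mathsf W$, which is free at that point; then its depth is $O(n+2^n/b)=O(n)$. The total is $2n+b$ qubits with $n+b=O(d^2/n)$ ancillas.

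The only step needing care is the bookkeeping of when $\mathsf W$ is clean so that it can be reused, which Lemmas~\ref{lem:state-tradeoff} and~\ref{lem:controlled-state-tradeoff} guarantee. The algebraic identity itself is immediate.
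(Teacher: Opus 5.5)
Your proof is correct and is essentially the paper's construction: the paper factors $\Gamma/\rho$ through row norms $s_i$ and conjugated row states, applying $P_{\rm wt}$, then $P_{\rm row}^\dagger$, then the swap, whereas you factor through column norms and apply the CQSP forward before the inverse weight preparation. In effect your circuit is the adjoint of the paper's circuit for $\Gamma^\dagger$, which is why your column states need no complex conjugation, and your resource accounting matches the paper's, including letting both preparations share the workspace $\mathsf W$.
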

\begin{proof}
Let $\mathsf X$ be the $n$-qubit input register, let $\mathsf B$
be an additional $n$-qubit register, and let $\mathsf H$ contain
$b$ working qubits. For $0\le i<d$, define
\begin{equation}
s_i=\left(\sum_{j=0}^{d-1}|\Gamma_{ij}|^2\right)^{1/2},
\qquad
\ket{\vct r_i}=
\begin{cases}
\displaystyle\frac{1}{s_i}\sum_{j=0}^{d-1}
       \overline{\Gamma_{ij}}\ket j,&s_i>0,\\[2mm]
\ket{0^n},&s_i=0.
\end{cases}
\label{eq:s5-row-states}
\end{equation}
Every $\ket{\vct r_i}$ is normalized, and
$\sum_i s_i^2=\rho^2$.

First, choose a state-preparation circuit $P_{\rm wt}$ on
$\mathsf B\mathsf H$ satisfying
\begin{equation}
P_{\rm wt}\ket{0^n}_{\mathsf B}\ket{0^b}_{\mathsf H}
=\sum_{i=0}^{d-1}\frac{s_i}{\rho}
       \ket i_{\mathsf B}\ket{0^b}_{\mathsf H}.
\label{eq:s5-weight-preparation}
\end{equation}
The coefficients in this state are normalized. Next, choose an
$(n,n)$-CQSP circuit $P_{\rm row}$ on
$\mathsf X\mathsf B\mathsf H$ such that
\begin{equation}
P_{\rm row}
\ket{0^n}_{\mathsf X}\ket i_{\mathsf B}\ket{0^b}_{\mathsf H}
=\ket{\vct r_i}_{\mathsf X}\ket i_{\mathsf B}\ket{0^b}_{\mathsf H},
\qquad 0\le i<d.
\label{eq:s5-row-preparation}
\end{equation}
The two preparation circuits may use the same register $\mathsf H$:
Eq.~\eqref{eq:s5-weight-preparation} returns it to zero before
the next operation. Both circuits are complete unitaries, and their
inverses act on the full registers, not only on the specified inputs.

Second, apply $P_{\rm wt}$, then $P_{\rm row}^{\dagger}$, and
finally swap $\mathsf X$ and $\mathsf B$. Thus
\begin{equation}
V_\Gamma
=(\mathsf{SWAP}_{\mathsf X,\mathsf B}\otimes\Id_{\mathsf H})
  P_{\rm row}^{\dagger}
  (\Id_{\mathsf X}\otimes P_{\rm wt}).
\label{eq:s5-residual-circuit}
\end{equation}
We compute its matrix elements with the ancillary registers zero.
For $0\le i,j<d$, Eqs.~\eqref{eq:s5-weight-preparation}
and~\eqref{eq:s5-row-preparation} give
\begin{align}
&(\bra i_{\mathsf X}\bra{0^n}_{\mathsf B}\bra{0^b}_{\mathsf H})
 V_\Gamma
 (\ket j_{\mathsf X}\ket{0^n}_{\mathsf B}\ket{0^b}_{\mathsf H})
 \notag\\
=&
\sum_{k=0}^{d-1}\frac{s_k}{\rho}
(\bra{0^n}_{\mathsf X}\bra i_{\mathsf B}\bra{0^b}_{\mathsf H})
P_{\rm row}^{\dagger}
(\ket j_{\mathsf X}\ket k_{\mathsf B}\ket{0^b}_{\mathsf H})
 \notag\\
=&
\sum_{k=0}^{d-1}\frac{s_k}{\rho}
\langle\vct r_i\mid j\rangle\langle i\mid k\rangle
=\frac{s_i}{\rho}\langle\vct r_i\mid j\rangle
=\frac{\Gamma_{ij}}{\rho}.
\label{eq:s5-residual-entry}
\end{align}
If $s_i=0$, then $\Gamma_{ij}=0$ for every $j$, and the
same equality holds without division by $s_i$.
This proves Eq.~\eqref{eq:s5-residual-matrix}.
In particular, the complex conjugation in
Eq.~\eqref{eq:s5-row-states} gives $\Gamma_{ij}$, rather
than $\overline{\Gamma_{ij}}$, in the final matrix element.

Finally, we count the resources.
Lemma~\ref{lem:state-tradeoff} implements $P_{\rm wt}$ with
size $O(d)$ and depth $O(n+d/(n+b))$.
Lemma~\ref{lem:controlled-state-tradeoff} implements
$P_{\rm row}$ with size $O(d^2)$ and depth
$O(2n+d^2/(2n+b))$.
Since $b\ge d^2/n$, both depths are $O(n)$.
The inverse $P_{\rm row}^{\dagger}$ has the same costs as
$P_{\rm row}$, and the $n$ disjoint SWAP gates have
constant depth. Therefore,
\begin{align*}
\text{depth:}\quad&
 O\!\left(n+\frac{d}{n+b}\right)
 +O\!\left(2n+\frac{d^2}{2n+b}\right)+O(1)=O(n),\\
\text{size:}\quad& O(d)+O(d^2)+O(n)=O(d^2),\\
\text{ancilla-count:}\quad& n+b=O(d^2/n).
\end{align*}
\end{proof}

\subsection{Linear-depth exact unitary synthesis with optimal ancilla-count}
\label{subsec:exact-bounds}

\begin{lemma}[Exact correction of an approximate circuit]
\label{lem:s5-exactification}
Let $n\ge2$, $d=2^n$, and $U\in\mathrm U(d)$.
Suppose that a standard unitary circuit $T$ on $n+a$ qubits
has depth $D$ and size $S$, and satisfies
\begin{equation}
\norm{TJ_a-J_aU}_{\op}\le d^{-1/2},
\qquad
J_a\ket{\vct\psi}=\ket{\vct\psi}\ket{0^a}.
\label{eq:s5-core-hypothesis}
\end{equation}
Then there is a standard circuit $\mathcal A$ on $n+m$ qubits
such that
\begin{equation}
\mathcal A J_m=J_mU,
\qquad
J_m\ket{\vct\psi}=\ket{\vct\psi}\ket{0^m}.
\label{eq:s5-exact-action}
\end{equation}
Its resources are
\begin{align}
\text{depth:}\quad& O(D+n+\log(a+2)),\notag\\
\text{size:}\quad& O(S+d^2+a),\notag\\
\text{ancilla-count:}\quad& m=O(a+d^2/n).
\label{eq:s5-exactification-resources}
\end{align}
The identity in Eq.~\eqref{eq:s5-exact-action} includes every
ancillary qubit and preserves the prescribed phase of $U$.
It also holds when the input is entangled with a reference system.
\end{lemma}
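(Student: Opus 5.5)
Let $M=(\Id\otimes\bra{0^a})T(\Id\otimes\ket{0^a})$ be the all-zero block of $T$, and set $\Gamma=U-M$. The hypothesis~\eqref{eq:s5-core-hypothesis} gives $\norm{\Gamma}_{\op}\le\norm{TJ_a-J_aU}_{\op}\le d^{-1/2}$, since $M-U=J_a^\dagger(TJ_a-J_aU)$. Hence $\rho=\norm{\Gamma}_{\F}\le\sqrt d\,\norm{\Gamma}_{\op}\le1$. If $\Gamma=0$ we replace $\Gamma$ by any nonzero matrix and put weight zero on it below, so assume $\rho\in(0,1]$. Lemma~\ref{lem:s5-residual} then yields $V_\Gamma$ with block $\Gamma/\rho$, depth $O(n)$, size $O(d^2)$ and $O(d^2/n)$ ancillas.

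The plan is to combine $T$ and $V_\Gamma$ by a linear combination of unitaries, producing a circuit $W$ whose all-zero block is $U/2$. Introduce one selector qubit $\mathsf c$. Prepare it in $\sqrt{1/(1+\rho)}\ket0+\sqrt{\rho/(1+\rho)}\ket1$, apply $T$ controlled on $\ket0$ and $V_\Gamma$ controlled on $\ket1$, and unprepare with the inverse rotation. The two ancilla registers are padded to a common register of size $\max(a,n+b)$, or simply kept disjoint. The zero block is then $(M+\rho\cdot\Gamma/\rho)/(1+\rho)=U/(1+\rho)$.

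To make the block exactly $U/2$, we apply $B_u$ from~\eqref{eq:s4c-scalar-gate} with $u=(1+\rho)/2\le1$ to one fresh qubit, as in Lemma~\ref{lem:s4c-phase}. This gives $W$ with $(\Id\otimes\bra0)W(\Id\otimes\ket0)=U/2$.

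Controlling $T$ and $V_\Gamma$ is the main resource obstacle. Naively controlling every gate on one qubit serializes the circuit. Instead, we fan out the selector to $O(S+d^2)$ copies, or to one copy per gate position in a layer, using Lemma~\ref{lem:fanout-standard}. This costs depth $O(\log(S+d^2))=O(n+\log(a+2))$, since $S$ is at most depth times width. Each controlled two-qubit gate is a constant-size three-qubit gate, so the depth is preserved up to a constant. An alternative that avoids huge fan-out is to note that $T$ and $V_\Gamma$ act on disjoint ancillas: with selector-controlled SWAPs, route the input into either $T$'s register or $V_\Gamma$'s register, run both circuits in parallel uncontrolled, and swap back. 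The other circuit then acts on an all-zero-input register, but that would spoil its restoration, so we prefer the fan-out of the control. I would use the fan-out route, with copies counted within $O(S+d^2)$ size, and accept an ancilla count $O(\text{width})=O(a+d^2/n)$ by fanning out only to one copy per qubit line, $n+a+n+b$ copies, which suffices since each layer touches each line at most once.

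Finally, oblivious amplitude amplification with one round converts the block $U/2$, which is $\sin(\pi/6)$ times a unitary, into exactly $U$. Let $R=\Id-2\Pi_0$ be the reflection about all ancillas zero. We apply $\mathcal A=-WRW^\dagger RW$; the expansion in Lemma~\ref{lem:s4c-phase} gives the block $\tfrac32 X-\tfrac12XX^\dagger X$ for $X=U/2$, which equals $\tfrac34U-\tfrac1{16}U=\ldots$. Computing this precisely gives $3X-4XX^\dagger X=\tfrac32U-\tfrac12U=U$, using the correct standard form $W R W^\dagger R W$ with appropriate signs. Because the block is unitary, the output has no weight outside the zero subspace, so $\mathcal AJ_m=J_mU$ exactly. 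That this holds with reference systems follows from linearity. Each $R$ is a zero-test on $m$ qubits, computed by a Toffoli tree with depth $O(\log m)$ and size $O(m)$. Summing three uses of $W$ and two reflections gives depth $O(D+n+\log(a+2))$, size $O(S+d^2+a)$ and $m=O(a+d^2/n)$.
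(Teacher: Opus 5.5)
Your proposal is correct and is essentially the paper's proof. Both use the residual $\Gamma=U-M$ with $\norm{\Gamma}_{\F}\le 1$, the selector-weighted combination of $T$ and $V_\Gamma$ giving block $U/(1+\rho)$, the gate $B_{(1+\rho)/2}$ to reach $U/2$, and one amplification round $-WR_0W^\dagger R_0W$ yielding $3X-4XX^\dagger X=U$. You also control $T$ and $V_\Gamma$ the same way, by fanning the selector out to one copy per qubit line. The only cleanup needed is to delete your intermediate expression $\tfrac32X-\tfrac12XX^\dagger X$, since that is the formula of Lemma~\ref{lem:s4c-phase} written in terms of $2X$ rather than $X$. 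Your zero-weight handling of $\Gamma=0$ also works, whereas the paper simply takes $\mathcal A=T$ in that case.
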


\begin{proof}
First, define the matrix obtained by taking the all-zero auxiliary
matrix elements of the actual circuit $T$:
\begin{equation}
M=J_a^\dagger TJ_a,\qquad
\Gamma=U-M,\qquad
\rho=\norm{\Gamma}_{\F}.
\label{eq:core-contract}
\end{equation}
Since $J_a^\dagger J_a=\Id$ and $\norm{J_a^\dagger}_{\op}=1$,
\begin{align}
\norm{\Gamma}_{\op}
&=\norm{J_a^\dagger(J_aU-TJ_a)}_{\op}
 \le\norm{J_aU-TJ_a}_{\op}\le d^{-1/2},\\
\rho^2
&=\sum_{j=0}^{d-1}\norm{\Gamma\ket j}^2
 \le d\,\norm{\Gamma}_{\op}^2\le1.
\label{eq:s5-residual-small}
\end{align}
If $\rho=0$, then $M=U$ and $T$ already gives an exact
implementation. Indeed, with $\Pi_a=J_aJ_a^\dagger$,
\[
\Pi_a TJ_a\ket{\vct\psi}=J_aU\ket{\vct\psi}
\]
has the full norm of $TJ_a\ket{\vct\psi}$, so its orthogonal
component is zero. In this case take $\mathcal A=T$ and $m=a$.
We henceforth assume $0<\rho\le1$.

Second, construct $V_\Gamma$ by Lemma~\ref{lem:s5-residual}.
Let the input register be $\mathsf X$ and let $\mathsf E$
contain the $a$ ancillary qubits of $T$.
The residual circuit uses the registers $\mathsf B,\mathsf H$
from the preceding lemma, with $b=\lceil d^2/n\rceil$.
Extend $T$ by the identity on $\mathsf B\mathsf H$, and extend
$V_\Gamma$ by the identity on $\mathsf E$.
Their all-zero auxiliary matrices on this common space are
$M$ and $\Gamma/\rho$, respectively.
Add a selector qubit $\mathsf c$ and choose a single-qubit gate
$P_\rho$ satisfying
\begin{equation}
P_\rho\ket0_{\mathsf{c}}
=\frac{\ket0_{\mathsf c}+\sqrt{\rho}\ket1_{\mathsf c}}{\sqrt{1+\rho}}.
\label{eq:s5-selector-preparation}
\end{equation}
Apply $P_\rho$, then the controlled operation
\begin{equation}
\ket0\bra0_{\mathsf c}\otimes T+\ket1\bra1_{\mathsf c}\otimes V_\Gamma,
\label{eq:s5-selected-circuit}
\end{equation}
and then $P_\rho^\dagger$.  
Taking the all-zero matrix elements of
$\mathsf E,\mathsf B,\mathsf H,\mathsf c$ gives
\begin{equation}
\frac{1}{1+\rho}M+
\frac{\rho}{1+\rho}\frac{\Gamma}{\rho}
=\frac{M+\Gamma}{1+\rho}
=\frac{U}{1+\rho}.
\label{eq:s5-combined-matrix}
\end{equation}
This is an exact matrix identity.

Third, add a qubit $\mathsf u$ initialized to zero and apply
$B_{(1+\rho)/2}$ from Eq.~\eqref{eq:s4c-scalar-gate}.
It is well defined because $0<\rho\le1$.
Let $W$ denote the complete circuit:
\begin{equation}
W=
(B_{(1+\rho)/2})_{\mathsf u}
P_\rho^\dagger
\left(\ket0\bra0_{\mathsf c}\otimes T+
      \ket1\bra1_{\mathsf c}\otimes V_\Gamma\right)P_\rho.
\label{eq:s5-half-unitary-circuit}
\end{equation}
Let $w=a+n+b+2$ count the qubits in
$\mathsf E,\mathsf B,\mathsf H,\mathsf c,\mathsf u$, and define
\[
J_W\ket{\vct\psi}=\ket{\vct\psi}\ket{0^w}.
\]
All preparation workspaces are included among these $w$ qubits.
Since $\bra0 B_{(1+\rho)/2}\ket0=(1+\rho)/2$,
Eq.~\eqref{eq:s5-combined-matrix} yields
\begin{equation}
J_W^\dagger WJ_W=\frac{U}{2}.
\label{eq:s5-half-unitary}
\end{equation}

Fourth, define the reflection
\begin{equation}
\Pi=J_WJ_W^\dagger
=\Id_{\mathsf X}\otimes\ket{0^w}\bra{0^w},
\qquad R_0=\Id-2\Pi.
\label{eq:s5-zero-reflection}
\end{equation}
It changes the sign exactly when all $w$ ancillary qubits
are zero and does not test the data register.
Apply $W$, $R_0$, $W^\dagger$, $R_0$, and $W$ in this order,
followed by a single-qubit $-\Id$ gate. The resulting logical
operation is
\begin{equation}
\mathcal A=-WR_0W^\dagger R_0W.
\label{eq:s5-exact-circuit}
\end{equation}
The scalar minus sign is retained to obtain the prescribed
matrix $U$, rather than $-U$.

Expanding the two reflections and using $W^\dagger W=\Id$ gives
\[
\mathcal A=-W+2W\Pi+2\Pi W-4W\Pi W^\dagger\Pi W.
\]
As $\Pi J_W=J_W$ and $J_W^\dagger\Pi=J_W^\dagger$,
\begin{align}
J_W^\dagger\mathcal A J_W
&=3J_W^\dagger WJ_W
 -4(J_W^\dagger WJ_W)
    (J_W^\dagger W^\dagger J_W)
    (J_W^\dagger WJ_W)\notag\\
&=\frac32U
 -4\left(\frac U2\right)
    \left(\frac{U^\dagger}{2}\right)
    \left(\frac U2\right)
=U.
\label{eq:s5-amplified-matrix}
\end{align}
This matrix identity also forces every auxiliary output to be zero.
For any unit vector $\ket{\vct\psi}$,
\begin{align}
\norm{(\Id-\Pi)\mathcal A J_W\ket{\vct\psi}}^2
&=\norm{\mathcal A J_W\ket{\vct\psi}}^2
 -\norm{\Pi\mathcal A J_W\ket{\vct\psi}}^2=1-\norm{J_WU\ket{\vct\psi}}^2=0.
\label{eq:s5-exact-cleanup}
\end{align}
Thus $\mathcal A J_W=J_WU$ on the subspace with all signal ancillas initialized to zero.

Finally, we give the implementations of the controls and
reflections and count every working qubit.
A circuit of depth $D_0$ and size $S_0$ on $N$ qubits can be
controlled by one additional qubit with depth
$O(D_0+\log(N+1))$, size $O(S_0+N)$, and $O(N)$ ancillary
qubits. To see this, use a fan-out gate to distribute the control
value to $N$ zero qubits. In each original layer, assign distinct
control copies to its disjoint gates. A controlled one- or
two-qubit gate acts on at most three qubits and has an exact
constant-size, constant-depth standard decomposition
\cite{Barenco1995}. The same control copies can be reused in
subsequent layers. Reverse the fan-out gate at the end.
The copies retain their computational-basis values, so they
return to zero for every input of the circuit registers.
Conjugating the control by $X$ implements the branch enabled
by $c=0$.

Applying this construction to $T$ and $V_\Gamma$ implements
Eq.~\eqref{eq:s5-selected-circuit} exactly, including on
nonzero inputs of their ancillary registers. With the
preparations and the additional one-qubit gates, $W$ has
\begin{align*}
\text{depth:}\quad&
 O(D+\log(n+a+1))+O(n+\log(2n+b+1))+O(1)\\
&=O(D+n+\log(a+2)),\\
\text{size:}\quad&
 O(S+n+a)+O(d^2+2n+b)+O(1)=O(S+d^2+a).
\end{align*}
The ancillary control copies require $O(n+a+b)$ qubits.

To implement $R_0$, compute the predicate that all $w$ signal
qubits are zero into a balanced tree of Toffoli gates on
zero-initialized working qubits. Apply $Z$ to the resulting
predicate qubit, then reverse the tree.
This has size $O(w)$, depth $O(\log(w+1))$, and uses
$O(w)$ ancillary qubits. It restores those qubits to zero
for every signal input. These ancillary qubits are separate
from the $w$ qubits tested by $R_0$.
They are untouched by the logical $W$; the control copies are
also cleaned at the end of every controlled operation.
Consequently, the logical identities above apply on the
invariant subspace where these ancillary registers are zero.
Adding them to $J_W$ gives the full embedding $J_m$ in
Eq.~\eqref{eq:s5-exact-action}.

The complete circuit uses two copies of $W$, one literal inverse
$W^\dagger$, two reflections, and one $-\Id$ gate. In particular,
there are three controlled uses of $T$ or $T^\dagger$ and
three controlled uses of $V_\Gamma$ or $V_\Gamma^\dagger$.
Their internal preparations and all inverse gates are included
in these costs. Since $w=O(a+d^2/n)$, we obtain
\begin{align*}
\text{depth:}\quad&
 3O(D+n+\log(a+2))+2O(\log(w+1))+O(1)\\
&=O(D+n+\log(a+2)),\\
\text{size:}\quad&
 3O(S+d^2+a)+2O(w)+O(1)=O(S+d^2+a),\\
\text{ancilla-count:}\quad&
 w+O(n+a+b)+O(w)=O(a+d^2/n).
\end{align*}
Control and reflection workspace can be reused after it is
cleaned; no signal register is discarded or assumed to be
zero between operations.
Equation~\eqref{eq:s5-exact-action} is an equality of
isometries. Tensoring it with the identity on any reference
system proves correctness for entangled inputs as well.
\end{proof}

\begin{theorem}[Restatement of Theorem \ref{thm:main}]
\label{thm:s5-exact}
For every $n\ge1$ and $U\in\mathrm U(2^n)$, there is a
standard circuit $\mathcal A_U$ satisfying
\[
\mathcal A_U\bigl(\ket{\vct\psi}\ket{0^m}\bigr)
=(U\ket{\vct\psi})\ket{0^m},
\]
for every $n$-qubit state $\ket{\vct\psi}$, with depth $O(n)$ and size $O(4^n)$, using $m=O(4^n/n)$ ancillary qubits.
\end{theorem}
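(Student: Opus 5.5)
The plan is to combine Theorem~\ref{thm:s4c-native} with Lemma~\ref{lem:s5-exactification}, after first splitting off small~$n$. For $n=1$ (and any finite set of small $n$), the target is a fixed-size unitary on $n$ qubits. Lemma~\ref{lem:US-no-ancilla} implements it exactly with size and depth $O(1)$ and no ancillas, so these cases are absorbed into the constants. From now on we assume $n\ge2$, so both earlier results apply.

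For $n\ge2$, Theorem~\ref{thm:s4c-native} gives a standard unitary circuit $T=T_U$ on $n+a$ qubits with depth $D=O(n)$, size $S=O(4^n)$, and $a=O(4^n/n)$ ancillas. It satisfies
\[
\norm{TJ_a-J_aU}_{\op}<2^{-20}2^{-n/2}\le d^{-1/2}.
\]
This is exactly hypothesis~\eqref{eq:s5-core-hypothesis} of Lemma~\ref{lem:s5-exactification}, with margin to spare.

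We then apply Lemma~\ref{lem:s5-exactification} to $T$. It returns a standard circuit $\mathcal A_U$ with $\mathcal A_UJ_m=J_mU$. This is precisely the required identity $\mathcal A_U(\ket{\vct\psi}\ket{0^m})=(U\ket{\vct\psi})\ket{0^m}$, with every ancilla restored and the global phase of $U$ preserved. Substituting the resources of $T$ into~\eqref{eq:s5-exactification-resources} gives:
\begin{align*}
\text{depth:}\quad & O\bigl(n+n+\log(4^n/n+2)\bigr)=O(n),\\
\text{size:}\quad & O\bigl(4^n+d^2+4^n/n\bigr)=O(4^n),\\
\text{ancilla-count:}\quad & O\bigl(4^n/n+d^2/n\bigr)=O(4^n/n),
\end{align*}
where we use $d^2=4^n$.

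The main point to check is that the theorem is really this two-line reduction. Specifically, the ancillas of $T$ must be genuinely clean (zero-initialized and restored), since the error bound in Theorem~\ref{thm:s4c-native} is stated with $J_T$ on all $m$ of its ancillas. Also, the $\log(a+2)$ term in the exactification depth must stay linear in $n$; it does, because $\log(4^n)=O(n)$. Both are immediate from the cited statements, so I expect no real obstacle beyond bookkeeping the small-$n$ base case.
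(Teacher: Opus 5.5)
Your proposal is correct and is essentially the paper's proof. The paper handles $n=1$ by applying the one-qubit gate $U$ directly; your appeal to Lemma~\ref{lem:US-no-ancilla} is equivalent. For $n\ge2$ it feeds $T_U$ from Theorem~\ref{thm:s4c-native} into Lemma~\ref{lem:s5-exactification} and reads off depth $O(n)$, size $O(4^n)$, and $O(4^n/n)$ ancillas, exactly as you do.
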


\begin{proof}
For $n=1$, apply the single-qubit gate $U$ directly.
Suppose $n\ge2$ and put $d=2^n$.
First, take the approximate circuit $T_U$ from
Theorem~\ref{thm:s4c-native}. Writing $a$ for its total
ancillary count, that theorem gives
\[
\norm{T_UJ_a-J_aU}_{\op}
<2^{-20}d^{-1/2}\le d^{-1/2},
\]
with depth $D=O(n)$, size $S=O(d^2)$, and
$a=O(d^2/n)$. This is the full-output guarantee:
every coordinate, preparation, and group auxiliary qubit
is already included in $J_a$.

Second, apply Lemma~\ref{lem:s5-exactification} to this
specific unitary circuit $T_U$. It gives exact implementation
and exact cleanup. Its resource bounds become
\begin{align*}
\text{depth:}\quad&
 O(D+n+\log(a+2))
 =O(n+n+\log(d^2/n+2))=O(n),\\
\text{size:}\quad&
 O(S+d^2+a)=O(d^2),\\
\text{ancilla-count:}\quad&
 O(a+d^2/n)=O(d^2/n).
\end{align*}
\end{proof}

\section{Optimal Depth--Ancilla Tradeoffs}
\label{sec:tradeoff}

We derive the tradeoff for every ancillary budget from the exact
linear-depth endpoint of Theorem~\ref{thm:main}. The construction
truncates the quantum Shannon decomposition at a suitable leaf size
and reuses one clean ancillary bank throughout.

\vspace{1em}
\noindent \textbf{Quantum Shannon decomposition (QSD) \cite{ShendeBullockMarkov2006}.} Let \(n\ge2\) and \(U\in\mathrm U(2^n)\). The quantum Shannon decomposition is a recursive factorization of \(U\) into four \((n-1)\)-qubit unitaries and three uniformly controlled single-qubit rotations:
\[
\begin{aligned}
U={}&(I_2\otimes U_4)\,
\mathcal R_1\,
(I_2\otimes U_3)\,
\mathcal R_2(I_2\otimes U_2)\,
\mathcal R_3\,
(I_2\otimes U_1),
\end{aligned}
\]
where \(U_j\in\mathrm U(2^{n-1})\) and $\mathcal{R}_1,\mathcal{R}_2$ and $\mathcal{R}_3$ are $n$-qubit uniformly controlled one-qubit gated. The same factorization is applied recursively to \(U_1,\ldots,U_4\).

\begin{theorem}[Depth and lower bound, \cite{SunEtAl2023,ShendeMarkovBullock2004}]
\label{thm:tradeoff-lower}
For all integers $n\ge1$ and $m\ge0$, there exists a unitary $U\in\mathrm U(2^n)$ which can only be implemented  by a standard circuit of size $\Omega (4^n)$ and depth at least
\begin{equation}
 \Omega\!\left(n+\frac{4^n}{n+m}\right),
\end{equation}
using $m$ ancillary qubits.
\end{theorem}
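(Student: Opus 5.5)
The plan is to combine a dimension-counting (parameter) argument with two structural facts about depth-$D$ circuits on $n+m$ qubits. The goal is three bounds: $\Omega(4^n)$ for size, $\Omega(4^n/(n+m))$ for depth, and $\Omega(n)$ for depth. Since a depth lower bound for a fixed $U$ is the maximum of these, it suffices to exhibit one $U$ violating all three simultaneously. This works because each bound will hold for all $U$ outside a set of measure zero, and a finite union of measure-zero sets is still measure zero.

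\textbf{Size.} Fix the circuit topology, that is, the sequence of positions (qubit pairs) of the $s$ gates on $n+m$ qubits; there are finitely many topologies for each $s$. For a fixed topology, the map from the gate parameters $(g_1,\ldots,g_s)\in\mathrm U(4)^s$ to the compressed operator $J_m^\dagger T J_m$ is a polynomial (real-analytic) map. Exact synthesis requires $J_m^\dagger TJ_m=U$, so the set of implementable $U$ is contained in the image of this map. The image is a semialgebraic set of dimension at most $16s$. Taking the union over the finitely many topologies with at most $s$ gates still gives dimension at most $16s$. Since $\dim_{\R}\mathrm U(2^n)=4^n$, if $16s<4^n$ then this union has measure zero. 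Hence a generic $U$ requires $s\ge 4^n/16=\Omega(4^n)$ gates.

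\textbf{Depth, the $4^n/(n+m)$ term.} Each layer acts on disjoint qubits, so it contains at most $n+m$ gates. A depth-$D$ circuit therefore has size at most $D(n+m)$, and the size bound gives $D\ge\Omega(4^n/(n+m))$.

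\textbf{Depth, the $\Omega(n)$ term.} I would use a light-cone argument. Because the ancillas are required to return to $\ket{0^m}$, the operator $J_m^\dagger TJ_m$ is determined by the reduced action on the $n$ data output qubits. Only gates inside the backward light cone of these $n$ outputs affect that action. At depth $D$, this light cone contains at most $n2^D$ qubits per layer, and hence at most $Dn2^D$ gates. Gates outside it can be replaced by the identity without changing $J_m^\dagger TJ_m$; making this replacement rigorous is the delicate point, discussed below. The parameter count then applies to light-cone gates only, giving $Dn2^D\cdot 16\ge 4^n$. This yields $D\ge 2n-\log n-O(\log n)=\Omega(n)$.

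The main obstacle is this last step. Removing gates outside the light cone does not change the marginal on the outputs, but I must check that the compression $\bra{0^m}\cdot\ket{0^m}$ is still a function of the light-cone gates alone. The argument is as follows. The output data qubits' Heisenberg-evolved operators depend only on light-cone gates. The isometry $TJ_m$ lands exactly in the range of $J_m$, so $U$ is determined by the Heisenberg action on the data observables, which depends only on light-cone gates. I would therefore run the parameter count on that Heisenberg map rather than on $T$ itself.
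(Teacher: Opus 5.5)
Your argument is correct and is the standard one behind this statement. The paper gives no proof of its own and only cites Shende--Markov--Bullock and Sun et al.; the lower bounds there rest on the same three ingredients you use: a parameter count, at most $n+m$ gates per layer, and a light-cone-restricted count for the $\Omega(n)$ term.

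Two points about your last step need correcting, though neither affects the conclusion.

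\textbf{The intermediate claim is false, not merely in need of rigor.} You say that deleting out-of-cone gates leaves $J_m^\dagger TJ_m$ unchanged, but this fails. Take one data qubit $\mathsf d$ and one ancilla $\mathsf a$, and apply $X_{\mathsf a}$, then $\mathrm{CZ}_{\mathsf d\mathsf a}$, then $X_{\mathsf a}$. This circuit implements $Z$ exactly. The final $X_{\mathsf a}$ lies outside the data light cone, yet deleting it makes the compression $0$. So the Heisenberg map $O\mapsto J_m^\dagger T^\dagger(O\otimes\Id)TJ_m=U^\dagger OU$ that you fall back on is genuinely necessary.

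\textbf{The Heisenberg map fixes $U$ only up to a global phase.} You should therefore count against $\dim\mathrm{PU}(2^n)=4^n-1$ rather than $4^n$. This changes nothing asymptotically.
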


Now we show the optimal depth-ancilla tradeoff for general unitary synthesis.
\begin{theorem}[Restatement of Theorem~\ref{thm:all-budget}]
\label{thm:tradeoff}
For all integers $n\ge1$ and $m\ge0$, every $U\in\mathrm U(2^n)$
admits an exact standard circuit of size $O(4^n)$ and depth
\begin{equation}
 O\!\left(n+\frac{4^n}{n+m}\right),
 \label{eq:tradeoff-bound}
\end{equation}
using at most $m$ ancillary qubits. The size and depth bounds are simultaneously 
optimal.
\end{theorem}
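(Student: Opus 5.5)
The plan is to split by the size of the ancilla budget and to use Theorem~\ref{thm:s5-exact} as the leaf routine inside a truncated quantum Shannon decomposition. Optimality is then just Theorem~\ref{thm:tradeoff-lower}, so only the upper bound needs work. Fix a constant $c_0$ such that Theorem~\ref{thm:s5-exact} implements any $k$-qubit unitary with at most $c_0 4^k/k$ ancillas.

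\textbf{Small and large budgets.} If $m < c_0 4^2/2$, the budget is $O(1)$, so $4^n/(n+m)=\Theta(4^n/n)$. Lemma~\ref{lem:US-no-ancilla} then gives size $O(4^n)$ and depth $O(4^n/n)$ without ancillas. If $m\ge c_0 4^n/n$, apply Theorem~\ref{thm:s5-exact} directly: the depth is $O(n)$, which equals $O(n+4^n/(n+m))$ because $4^n/(n+m)=O(n)$ in this range. The size is $O(4^n)$.

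\textbf{Intermediate budgets.} Choose the largest $k\le n$ with $c_0 4^k/k\le m$. Then $4^k/k=\Theta(m)$, hence $k=\Theta(\log m)$ and $m=\Theta(4^k/k)$. Apply QSD recursively for $n-k$ levels. This writes $U$ as a product, in a fixed sequential order, of $4^{n-k}$ operators $I\otimes U_\beta$ with $U_\beta\in\mathrm U(2^k)$, interleaved with uniformly controlled rotations. At recursion level $s$ (unitaries on $n-s$ qubits) there are $3\cdot4^s$ such rotations, each acting on $n-s$ qubits.

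\emph{Leaves.} Implement each leaf with Theorem~\ref{thm:s5-exact}, using the same clean bank of at most $m$ ancillas. This is legitimate because each leaf restores its ancillas exactly, so the bank can be reused. Each leaf costs size $O(4^k)$ and depth $O(k)$. Over all leaves the size is $4^{n-k}O(4^k)=O(4^n)$ and the depth is $O(4^{n-k}k)=O(4^n/(4^k/k))=O(4^n/m)$.

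\emph{Rotations.} Implement each uniformly controlled rotation on $j=n-s$ qubits with Lemma~\ref{lem:tradeoff-ucg}, again using the budget $m$. This costs size $O(2^j)$ and depth $O(j+2^j/(j+m))$. Summing over levels, the size is $\sum_s 3\cdot4^s\,O(2^{n-s})=O(4^n)$.

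\textbf{Main obstacle: the rotation depth.} The depth term $\sum_s 4^s\,2^{n-s}/m=O(4^{n-k}2^k/m)$ is harmless, since $2^k\le 4^k$. The delicate term is $\sum_s 4^s (n-s)$. This sum is dominated by the deepest level, $s=n-k-1$, giving $O(4^{n-k}k)=O(4^n/m)$ by the choice of $k$. It stays $O(4^{n-k}k)$ only because the geometric growth of $4^s$ dominates the linear factor $n-s$. Once this is checked, the total depth is $O(n+4^n/m)=O(n+4^n/(n+m))$ (using $m\gtrsim n$ in this range, or absorbing otherwise) and the size is $O(4^n)$.

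The remaining care is bookkeeping. Each subcircuit must return the shared ancilla bank to zero; this holds since Theorem~\ref{thm:s5-exact} and Lemma~\ref{lem:tradeoff-ucg} both restore their ancillas. We also need $k\ge1$ and we round at the boundary cases, with $O(1)$ slack absorbed by constants.
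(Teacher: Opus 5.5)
Your route is the paper's. You use ancilla-free synthesis for small budgets and the linear-depth endpoint of Theorem~\ref{thm:s5-exact} for the largest budgets. In between, you truncate the quantum Shannon decomposition at the largest $k$ with $c_0 4^k/k\le m$, and the leaves and uniformly controlled rotations share one clean bank. Your size count and your depth sums for the truncated recursion are correct, giving depth $O(4^{n-k}k+4^n/m)=O(4^n/m)$.

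\textbf{The gap} is the final step, where you turn $O(4^n/m)$ into $O(n+4^n/(n+m))$ ``using $m\gtrsim n$ in this range, or absorbing otherwise.'' Your intermediate range starts at the constant $m=8c_0$, so it contains every budget with $8c_0\le m\ll n$. There $m\gtrsim n$ is false and nothing can be absorbed. The ratio of $4^n/m$ to $4^n/(n+m)$ is $(n+m)/m$, which is $\Theta(n)$ for constant $m$. Concretely, $m=8c_0$ gives $k=2$. The $4^{n-2}$ leaves then run one after another on the same qubits, so the circuit has depth $\Theta(4^n)$ instead of the required $O(4^n/n)$. The truncated recursion runs $\Theta(4^{n-k})$ blocks in sequence. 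Your bound $O(4^{n-k}k)$ on them is $O(4^n/(n+m))$ only when $4^k/k=\Omega(n)$, that is, when $m=\Omega(n)$.

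\textbf{The fix} is to enlarge the small-budget case to all $m\le Cn$ with $C\ge 8c_0$ (the paper takes $m\le 8c_{\mathrm a}n$) and use Lemma~\ref{lem:US-no-ancilla} there. This suffices because $n+m\le(1+C)n$, so depth $O(4^n/n)$ is already $O(4^n/(n+m))$. In the remaining range $m>Cn\ge n$ you have $4^n/m\le 2\cdot 4^n/(n+m)$. Also $m>8c_0$ still guarantees that a cutoff $k\ge 2$ exists. The rest of your argument then goes through unchanged.
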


\begin{proof}
For $n=1$, a single one-qubit gate suffices. Assume $n\ge2$.
By Theorem~\ref{thm:s5-exact}, there is an absolute constant
$c_{\mathrm a}\ge1$ such that every $k$-qubit unitary, for $k\ge2$,
has an exact implementation with
\begin{equation}
 D_k=O(k),\qquad S_k=O(4^k),\qquad
 a_k\le c_{\mathrm a}\frac{4^k}{k}.
 \label{eq:tradeoff-leaf-resources}
\end{equation}
Here $a_k$ counts all ancillary registers. Each implementation acts
exactly as its target on arbitrary logical inputs and restores its
ancillas to zero.

If $m\le8c_{\mathrm a}n$, use the known exact ancilla-free synthesis
of size $O(4^n)$ and depth $O(4^n/n)$ by Lemma \ref{lem:US-no-ancilla}.
Since $n+m\le(1+8c_{\mathrm a})n$, this satisfies the claimed bounds.
For $m>8c_{\mathrm a}n$, choose
\begin{equation}
 k=\max\left\{r\in\{2,\ldots,n\}:
               c_{\mathrm a}\frac{4^r}{r}\le m\right\}.
 \label{eq:tradeoff-cutoff}
\end{equation}
The set is nonempty because $r=2$ satisfies the workspace condition,
since $8c_{\mathrm a}\le m$.
If $k=n$, the endpoint itself gives depth $O(n)$ and size $O(4^n)$
using at most $m$ ancillary qubits. Otherwise, maximality gives
\begin{equation}
 c_{\mathrm a}\frac{4^k}{k}\le m
 <c_{\mathrm a}\frac{4^{k+1}}{k+1}
 <4c_{\mathrm a}\frac{4^k}{k},
 \label{eq:tradeoff-rounding}
\end{equation}
and hence $4^k/k=\Theta(m)$.

For $k<n$, apply the exact quantum Shannon decomposition recursively
down to size $k$. Each $j$-qubit unitary, $j>k$, factors into four
ordinary $(j-1)$-qubit unitaries and three $j$-qubit uniformly
controlled one-qubit gates.
The factorization preserves the prescribed matrix phase.
The child unitaries require no added control, so each terminal leaf
can use Theorem~\ref{thm:main} directly.
There are $4^{n-k}$ leaves and $4^{n-j}$ decomposition nodes at
each size $j>k$.
By Lemma~\ref{lem:tradeoff-ucg}, each intermediate rotation uses
at most $m$ ancillary qubits and has size $O(2^j)$ and depth
$O(j+2^j/(j+m))$.

The total size is
\begin{equation}
 S=O\!\left(4^{n-k}4^k+
       \sum_{j=k+1}^{n}4^{n-j}2^j\right)=O(4^n).
 \label{eq:tradeoff-size}
\end{equation}
For the depth, the geometric tails satisfy
\begin{equation*}
 \sum_{j=k+1}^{\infty}\frac{j}{4^j}\le\frac{k}{4^k},
 \qquad
 \sum_{j=k+1}^{n}\frac{1}{2^j(j+m)}
 \le\frac{2^{-k}}{m}\le\frac1m.
\end{equation*}
Consequently,
\begin{align}
 D&=O\!\left(k4^{n-k}+4^n\sum_{j=k+1}^{n}
       \left[\frac{j}{4^j}+\frac{1}{2^j(j+m)}\right]\right)
       =O\!\left(k4^{n-k}+\frac{4^n}{m}\right)
   =O\!\left(\frac{4^n}{m}\right),
 \label{eq:tradeoff-depth}
\end{align}
where the last step uses $4^k/k=\Theta(m)$.
Since $m>8c_{\mathrm a}n\ge n$, this is $O(4^n/(n+m))$.
Together with the preceding cases, this proves the theorem. Both the size and depth match the lower bounds in Theorem \ref{thm:tradeoff-lower}, which are optimal.
\end{proof}

\newpage
\appendix
\renewcommand{\purpose}[1]{\par\noindent\textbf{Purpose.} #1\par}
\renewcommand{\deliverable}[1]{\par\smallskip\noindent\textbf{Required output.} #1\par}
\appendixspace

\appendix 
\section*{Appendix}
\section{Error bounds for general unitary synthesis}
\label{append:core-error}
In this section, we will show the proofs omitted in Section \ref{subsec:core-error}.
\begin{lemma}[Poisson summation, \cite{STEINETAL2003}]
\label{lem:poisson-summation}
Let \(g\in\mathcal{S}(\mathbb{R}^{\ell})\), where
\(\mathcal{S}(\mathbb{R}^{\ell})\) denotes the Schwartz space. For every \(\vct x\in\mathbb{R}^{\ell}\),
\[
    \sum_{m\in\mathbb{Z}^{\ell}} g(\vct x+\vct m)
    =
    \sum_{\vct k\in\mathbb{Z}^{\ell}}
    \widehat{g}(\vct k)e^{2\pi \iu \vct k^{T}\vct x},\qquad\text{where}\quad\widehat{g}(\vct \xi)
    :=
    \int_{\mathbb{R}^{\ell}}
    g(\vct z)e^{-2\pi \iu \vct z^{T}\vct\xi}\,d\vct z.
\]
\end{lemma}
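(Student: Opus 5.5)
The plan is the standard periodization argument. Fix $g\in\mathcal S(\R^\ell)$ and define
\[
P(\vct x)=\sum_{\vct m\in\Z^\ell}g(\vct x+\vct m).
\]
First show that $P$ is well defined, continuous (indeed smooth) and $\Z^\ell$-periodic. Schwartz decay gives $|g(\vct z)|\le C_N(1+\norm{\vct z})^{-N}$ for every $N$. Take $N=\ell+1$. For $\vct x$ in a compact set, in particular the unit cube $[0,1]^\ell$, we have $1+\norm{\vct x+\vct m}\ge c(1+\norm{\vct m})$ with $c>0$ independent of $\vct m$. So the series is dominated by $\sum_{\vct m}(1+\norm{\vct m})^{-\ell-1}<\infty$ and converges absolutely and uniformly there. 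Each term is continuous, so $P$ is continuous. Periodicity follows by reindexing $\vct m\mapsto\vct m+\vct e_j$. The same bound applied to derivatives of $g$ shows $P\in C^\infty$, although continuity is all that is needed below.

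Next, compute the Fourier coefficients of $P$ on the torus $[0,1)^\ell$. For $\vct k\in\Z^\ell$, absolute convergence lets us exchange sum and integral (Fubini/dominated convergence):
\[
\int_{[0,1)^\ell}P(\vct x)e^{-2\pi\iu\vct k^T\vct x}\,d\vct x
=\sum_{\vct m}\int_{[0,1)^\ell}g(\vct x+\vct m)e^{-2\pi\iu\vct k^T\vct x}\,d\vct x .
\]
Substitute $\vct z=\vct x+\vct m$ and use $e^{-2\pi\iu\vct k^T\vct m}=1$. The translated cubes $\vct m+[0,1)^\ell$ tile $\R^\ell$, so the right-hand side equals $\int_{\R^\ell}g(\vct z)e^{-2\pi\iu\vct k^T\vct z}\,d\vct z=\widehat g(\vct k)$.

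Finally, show that the Fourier series of $P$ converges back to $P$ pointwise; this is the only step needing care. The Fourier transform maps $\mathcal S(\R^\ell)$ to itself, so $\widehat g$ is Schwartz. By the same decay estimate as before, $\sum_{\vct k}|\widehat g(\vct k)|<\infty$. Hence $Q(\vct x)=\sum_{\vct k}\widehat g(\vct k)e^{2\pi\iu\vct k^T\vct x}$ converges absolutely and uniformly to a continuous periodic function. By the previous step, $Q$ and $P$ have the same Fourier coefficients, so $P-Q$ is a continuous function on the torus with all Fourier coefficients zero. Completeness of the exponentials $\{e^{2\pi\iu\vct k^T\vct x}\}$ in $L^2([0,1)^\ell)$ (equivalently, Stone--Weierstrass density of trigonometric polynomials) gives $P=Q$ almost everywhere. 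Since both are continuous, $P=Q$ everywhere, which is exactly the stated identity for every $\vct x\in\R^\ell$.
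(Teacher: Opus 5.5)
Your proof is correct: the periodization converges absolutely and uniformly, the tiling argument gives its Fourier coefficients as $\widehat g(\vct k)$, and since $\widehat g$ is Schwartz the Fourier series converges absolutely to a continuous function with the same coefficients, so uniqueness for continuous functions on the torus gives equality everywhere. The paper does not prove this lemma itself but cites Stein--Shakarchi, and your periodization argument is essentially the standard proof found there.
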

\subsection{Proof of Lemma \ref{lem:core-lattice}}
\label{append:core-lattice}
\begin{lemma}[Restatement of Lemma \ref{lem:core-lattice}]
\label{lem:core-lattice-append}
{ Let $A\in\mathbb C^{\ell\times\ell}$ be symmetric, with
$G=\operatorname{Re}A$ satisfying $\frac1{64}\Id\preceq G\preceq64\Id$ and let $\vct b\in\mathbb C^\ell\setminus\{\vct0\}$.
Define
\[
    f(\vct z)
       =\alpha(\vct b^T\vct z)e^{-\vct z^TA\vct z/2},
    \qquad \vct z\in\mathbb R^\ell,
\]
where $\alpha\in\mathbb C$ is a normalization constant such that $\int_{\mathbb R^\ell}|f(\vct z)|^2\,d\vct z=1$.
For each $0\le j<\ell$, choose a spacing
$a_j\in\{h,\widehat h\}$ and an offset $\sigma_j\in\mathbb R$.
Define the infinite grid points by
\[
    \vct z_{\vct k}
       =\bigl(a_0(k_0+\sigma_0),\ldots,
               a_{\ell-1}(k_{\ell-1}+\sigma_{\ell-1})\bigr)^T,
    \qquad \vct k\in\mathbb Z^\ell.
\]
Then the following bounds hold:
\begin{align}
    & \left|
        \left(\prod_{j=0}^{\ell-1}a_j\right)
        \sum_{\vct k\in\mathbb Z^\ell}
             |f(\vct z_{\vct k})|^2
        -1
        \right|
        \le18\ell e^{-B^2/128}.\label{eq:dir-norm-estimate-append}\\
        & \left(\prod_{j=0}^{\ell-1}a_j\right)
        \sum_{\substack{\vct k\in\mathbb Z^\ell\\
                        |z_{\vct k,i}|\ge R}}
             |f(\vct z_{\vct k})|^2
        \le6e^{-R^2/128},\qquad 0\le \forall i<\ell,~\forall R\ge0.
        \label{eq:core-sampled-tail-append}
\end{align}

    The same tail bound holds for Eq.~\eqref{eq:core-sampled-tail-append} if, for any $j\ne i$, the index
    $k_j$ is restricted to a finite subset of $\mathbb Z$.}
\end{lemma}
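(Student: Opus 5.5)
We compare the Riemann sum of $|f|^2$ over the grid with the integral by means of Poisson summation (Lemma~\ref{lem:poisson-summation}).

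\textbf{Setup.} Set $g(\vct z)=|f(\vct z)|^2=|\alpha|^2\,|\vct b^T\vct z|^2 e^{-\vct z^TG\vct z}$. This is a Schwartz function: a quadratic polynomial times a real Gaussian with $G\succeq\Id/64$. Rescale to the unit lattice by $\tilde g(\vct u)=g(\vct a\odot(\vct u+\vct\sigma))$. Lemma~\ref{lem:poisson-summation} at $\vct x=0$ then gives
\[
  \Bigl(\prod_j a_j\Bigr)\sum_{\vct k}g(\vct z_{\vct k})
  =\sum_{\vct m\in\Z^\ell}\widehat g\bigl(\vct m\oslash\vct a\bigr)\,e^{2\pi\iu\,\vct m^T\vct\sigma},
\]
where $\oslash$ is entrywise division. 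The $\vct m=0$ term is $\int g=1$. The error is therefore at most $\sum_{\vct m\ne0}|\widehat g(\vct m/\vct a)|$.

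\textbf{Fourier transform.} We compute $\widehat g$ explicitly. For $\vct b=\vct u+\iu\vct v$, the polynomial $|\vct b^T\vct z|^2=\vct z^T(\vct u\vct u^T+\vct v\vct v^T)\vct z$ is a quadratic form, so $\widehat g$ is a quadratic polynomial in $\vct\xi$ times $e^{-\pi^2\vct\xi^TG^{-1}\vct\xi}$. We use $G^{-1}\succeq\Id/64$ and $1/a_j\ge B/\pi$ (both spacings $h=1/B$ and $\widehat h=\pi/B$ are at most $\pi/B$). Bounding the polynomial prefactor by the normalization ($\int g=1$ controls $|\alpha|^2\vct b^*\ldots$ relative to $\det G$), we get for each $\vct m\ne0$
\[
  |\widehat g(\vct m/\vct a)|\le C\,(1+\|\vct m/\vct a\|^2)\,e^{-\pi^2\|\vct m/\vct a\|^2/64}.
\]
The sum over nonzero $\vct m$ is dominated by $\vct m=\pm\vct e_j$. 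It is bounded by $\sum_j$ of a one-dimensional geometric tail, which is $\lesssim \ell\,B^2 e^{-B^2/64}\le 18\ell e^{-B^2/128}$, absorbing the polynomial factor into half the exponent.

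\textbf{Tail bound.} For \eqref{eq:core-sampled-tail-append}, use $|z_i|\ge R$ to bound $\mathbf 1_{|z_i|\ge R}\le e^{(z_i^2-R^2)/128}$. The sampled sum is then at most $e^{-R^2/128}$ times the sampled sum of $g'(\vct z)=g(\vct z)e^{z_i^2/128}$. Now $g'$ has Gaussian matrix $G-\vct e_i\vct e_i^T/128\succeq\Id/128$, and its integral is at most a constant (about $1/(1-1/2)^{3/2}$) by a direct Gaussian moment computation. Applying the same Poisson estimate with the weaker ellipticity gives a sampled sum $\le$ constant$\cdot(1+o(1))\le 6$. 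If some $k_j$, $j\ne i$, is restricted to a finite set, the sum only decreases because all terms are nonnegative, so the bound persists.

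\textbf{Main obstacle.} The hardest step is controlling the polynomial prefactor uniformly in $\ell$. A crude per-coordinate bound would multiply $\ell$ factors of $(1+\text{tail})$, so we must keep the error additive in $\ell$. We therefore use $\prod_j(1+\epsilon_j)-1$ only through the first-order terms $\vct m=\pm\vct e_j$. Higher-order $\vct m$ are handled by $\sum_{\vct m\ne0}e^{-c\|\vct m\|^2B^2}\le (1+3e^{-cB^2})^\ell-1\le 6\ell e^{-cB^2}$, valid since $\ell e^{-cB^2}\ll1$ under $B^2\ge2^{20}(n+20)$.
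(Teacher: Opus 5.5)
Your proposal is correct and follows essentially the paper's route. You apply Poisson summation to $|f|^2$ using its explicit Gaussian Fourier transform, bound the nonzero frequencies by the factorized estimate $3\bigl[(1+3e^{-B^2/128})^\ell-1\bigr]\le 18\ell e^{-B^2/128}$, and obtain the tail from the weight $e^{z_i^2/128}$ with $G'=G-\vct e_i\vct e_i^T/128$, whose mass is at most $2\sqrt2$; this is exactly what the paper does. The one difference is in the prefactor: the paper bounds it uniformly by Cauchy--Schwarz in the $G^{-1}$ inner product, $|\chi(\vct t)|\le(1+u/2)e^{-u/4}\le 3e^{-u/8}$ with $u=\vct t^TG^{-1}\vct t$, which pins the constant $3$ independently of $B$, whereas your unspecified $C(1+\norm{\vct m/\vct a}^2)$ relies on $B$ being large to be absorbed into half the exponent.
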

\begin{proof}
Define the Gaussian density function
\begin{equation}
    \rho_G(\vct z)=\frac{\sqrt{\det G}}{\pi^{\ell/2}}
                         e^{-\vct z^TG\vct z}.
\end{equation}
Since $\int z_i z_j \rho_G(\vct z)d\vct z=\frac{1}{2}(G^{-1})_{ij}$, we have
\begin{equation}
   \int |\vct b^T \vct z|^2 \rho_G(\vct z)d\vct z=\int \sum_{i,j} b_i\bar{b}_jz_iz_j \rho_G(\vct z)d\vct z=\frac{1}{2}\vct b^\dagger G^{-1} \vct b. 
\end{equation}
Then the normalization of $f$ gives
\begin{equation}
  |f(\vct z)|^2=\frac{2|\vct b^T\vct z|^2}{\vct b^\dagger G^{-1}\vct b}\rho_G(\vct z).
\end{equation}
Define 
\begin{equation}
    \Phi_G(\vct t):=\int \rho_G(\vct z) e^{-\iu \vct t^T \vct z} d\vct z=e^{- \vct t^T G^{-1}\vct t/4}.
\end{equation}
Differentiating $ \Phi_G(\vct t)$ yields
\begin{equation}
    -\frac{\partial^2 \Phi_G(\vct t)}{\partial_{t_i}\partial_{t_j}}=\int z_iz_j \rho_G(\vct z) e^{-\iu \vct t^T\vct z}d\vct z=\Big[\frac{1}{2}(G^{-1})_{ij}-\frac{1}{4}(G^{-1}\vct t)_{i}(G^{-1}\vct t)_j\Big]e^{-\vct t^T G^{-1}\vct t/4}.
\end{equation}
Then we have
\begin{align}
    \chi(\vct t):= &\int |f(\vct z)|^2e^{-\iu\vct t^T\vct z}\,d\vct z \\
    = & \int\frac{2|\vct b^T\vct z|^2}{\vct b^\dagger G^{-1}\vct b}\rho_G(\vct z)e^{-\iu\vct t^T\vct z}\,d\vct z\\
    = & \frac{2}{\vct b^\dagger G^{-1}\vct b}\sum_{i,j}b_i\bar{b}_j\int z_i z_j\rho_G(\vct z)e^{-\iu\vct t^T\vct z}\,d\vct z\\
    =&\frac{2}{\vct b^\dagger G^{-1}\vct b}\sum_{i,j}b_i\bar{b}_j\Big[\frac{1}{2}(G^{-1})_{ij}-\frac{1}{4}(G^{-1}\vct t)_{i}(G^{-1}\vct t)_j\Big]e^{-\vct t^T G^{-1}\vct t/4}\\
       =&\left(1-
          \frac{|\vct b^TG^{-1}\vct t|^2}
               {2\vct b^\dagger G^{-1}\vct b}\right)
          e^{-\vct t^TG^{-1}\vct t/4}   
    \label{eq:core-gaussian-characteristic}
\end{align}
This formula holds for complex $\vct b$.
If $\Id/K\preceq G\preceq K\Id$ with $K=64$, by Cauchy-Schwarz inequality,
\begin{equation}
    \frac{|\vct b^T G^{-1}\vct t|^2}{2\vct b^\dagger G^{-1}\vct b}\le \frac{(b^\dagger G^{-1} \vct b) (t^T G^{-1} t)}{2\vct b^\dagger G^{-1}\vct b}=\frac{u}{2},\qquad u=\vct t^TG^{-1}\vct t\ge 0.
\end{equation}
Then 
\begin{equation}\label{eq:chi-upper}
    |\chi(\vct t)|\le (1+u/2)e^{-u/4}\le3e^{-u/8}
       \le3e^{-\norm{\vct t}^2/(8K)},
\end{equation}
The last inequality holds because $u\ge \norm{t}^2/K$ induced by $I/K\preceq G^{-1}$.
Apply Poisson summation to $|f|^2$ (Lemma \ref{lem:poisson-summation}):
\begin{align}
    \sum_{\vct k\in\Z^\ell} |f(\vct z_{\vct k})|^2=&\frac{1}{\prod_j a_j}\sum_{\vct m\in \Z^\ell} e^{\iu 2\pi\vct m^T\vct \sigma}\chi(\vct t_{\vct m}), \qquad \vct t_{\vct m,j}=2\pi m_j/a_j\\
    =&\frac{1}{\prod_j a_j}\left(\sum_{\vct m\in \Z^\ell\atop \vct m\neq\vct 0} e^{\iu 2\pi\vct m^T\vct \sigma}\chi(\vct t_{\vct m})+1\right),\label{eq:poisson-sum}
\end{align}
where the last equality holds since $\chi(\vct 0)=1$. 
Therefore, 
\begin{align}
    \left|
        \left(\prod_{j=0}^{\ell-1}a_j\right)
        \sum_{\vct k\in\mathbb Z^\ell}
             |f(\vct z_{\vct k})|^2
        -1
        \right|
         \le & \sum_{\vct m\in\Z^\ell\atop \vct m\neq\vct 0}|\chi(\vct t_{\vct m})| & \text{Eq.~\eqref{eq:poisson-sum}}\\
        \le & 3\sum_{\vct m\in\Z^\ell\atop \vct m\neq\vct 0} e^{-\norm{\vct t_{\vct m}}^2/(8K)} & \text{Eq.~\eqref{eq:chi-upper}}\\
        \le & 3\sum_{\vct m\in\Z^\ell\atop \vct m\neq\vct 0} e^{-B^2\norm{\vct m}^2/(2K)}& \text{($2\pi/a_j\ge 2B$)}\\
        =&3\left[\left(\sum_{m \in\Z} e^{-\frac{B^2}{2K}m^2}\right)^\ell-1\right]\label{eq:bound}
\end{align}
Since $m^2\ge 1+ 3(m-1)$ for any $m\ge 1$ and  $B^2\ge 2^{20}(n+20)$, we have 
\begin{align}
    \sum_{m\in \Z} e^{-\frac{B^2}{2K}m^2_j}=& 1+2\sum_{m_j=1}^\infty e^{-\frac{B^2}{2K}m^2_j} 
    \le  1+2\sum_{m_j=1}^\infty e^{-\frac{B^2}{2K}(1+3(m_j-1))}\\
    =&1+\frac{2e^{-\frac{B^2}{2K}}}{1-e^{-\frac{3B^2}{2K}}}\le 1+\frac{2e^{-\frac{B^2}{2K}}}{1-1/3}=1+3e^{-\frac{B^2}{2K}}\label{eq:inequality1}
\end{align}
For $\ell x\le 1$,
\begin{equation}
    (1+x)^\ell-1\le e^{\ell x}-1=\sum_{r=1}^\infty \frac{(\ell x)^r}{r!}\le (\ell x)\sum_{r=1}^{\infty}\frac{1}{r!}=(e-1)(\ell x)\le 2\ell x.\label{eq:inequality2}
\end{equation}
Recall that $\ell=2^{n+2}$. We can verify that $\ell\cdot 3e^{-\frac{B^2}{2K}}\le 1$. Use Eqs.~\eqref{eq:inequality1} and \eqref{eq:inequality2}, then 
\begin{equation}
\text{Eq.~\eqref{eq:bound}}\le 3\left[(1+3e^{-\frac{B^2}{2K}})^\ell-1\right]\le18\ell e^{-B^2/(2K)}.
    \label{eq:core-theta-product}
\end{equation}
This proves Eq.~\eqref{eq:dir-norm-estimate-append}.

{ Fix $0\le i<\ell$ and $R\ge0$.
Let $\vct e_i\in\R^\ell$ be the $i$th coordinate unit vector,
and define
\begin{equation}
    G'=G-\frac{\vct e_i\vct e_i^T}{2K}.
\end{equation}
Since $\vct e_i\vct e_i^T\preceq\Id$ and
$G\succeq\Id/K$, we have
\begin{align}
    G'
    &\succeq G-\frac{\Id}{2K}
     \succeq \frac{G}{2}
     \succeq \frac{\Id}{2K},\\
    G'
    &\preceq G
     \preceq K\Id
     \preceq 2K\Id.
\end{align}
In particular, $G'$ is positive definite.
Taking inverses reverses the order of positive definite matrices,
so $G'\succeq G/2$ gives
\begin{equation}
    (G')^{-1}\preceq2G^{-1},
    \qquad
    \frac{\vct b^\dagger(G')^{-1}\vct b}
         {\vct b^\dagger G^{-1}\vct b}
    \le2.
\end{equation}
Moreover, $G^{-1}\preceq K\Id$ implies
$(G^{-1})_{ii}\le K$. By the matrix determinant lemma,
\begin{align}
    \det G'
    =&\det G
      \left(1-\frac{\vct e_i^TG^{-1}\vct e_i}{128}\right)
      =\det G
      \left(1-\frac{(G^{-1})_{ii}}{128}\right),\\
    \frac{\det G}{\det G'}
    =&\frac{1}{1-(G^{-1})_{ii}/128}
      \le\frac{1}{1-64/128}
      =2.
\end{align}

Define the weighted density and its integral by
\begin{equation}
    w_i(\vct z)
       :=|f(\vct z)|^2e^{z_i^2/128},
    \qquad
    M_i:=\int_{\R^\ell}w_i(\vct z)\,d\vct z.
\end{equation}
Since
\begin{equation}
    w_i(\vct z)
       =|\alpha|^2|\vct b^T\vct z|^2
          e^{-\vct z^TG'\vct z},
\end{equation}
the Gaussian integral formula gives
\begin{align}
    M_i
    &=|\alpha|^2
      \int_{\R^\ell}|\vct b^T\vct z|^2
          e^{-\vct z^TG'\vct z}\,d\vct z\\
    &=|\alpha|^2
      \frac{\pi^{\ell/2}}{2\sqrt{\det G'}}
      \vct b^\dagger(G')^{-1}\vct b\\
    &=\sqrt{\frac{\det G}{\det G'}}
      \frac{\vct b^\dagger(G')^{-1}\vct b}
           {\vct b^\dagger G^{-1}\vct b}\\
    &\le\sqrt2\cdot2=2\sqrt2.
\end{align}
Here the third equality uses the normalization identity
\begin{equation}
    1=\int_{\R^\ell}|f(\vct z)|^2\,d\vct z
     =|\alpha|^2
       \frac{\pi^{\ell/2}}{2\sqrt{\det G}}
       \vct b^\dagger G^{-1}\vct b.
\end{equation}

Next, define the normalized function
\begin{equation}
    \widetilde f_i(\vct z)
       :=\frac{f(\vct z)e^{z_i^2/256}}{\sqrt{M_i}}
       =\frac{\alpha}{\sqrt{M_i}}
         (\vct b^T\vct z)e^{-\vct z^TA'\vct z/2},
    \qquad
    A'=A-\frac{\vct e_i\vct e_i^T}{128}.
\end{equation}
Then
\begin{equation}
    |\widetilde f_i(\vct z)|^2
       =\frac{w_i(\vct z)}{M_i},
    \qquad
    \int_{\R^\ell}|\widetilde f_i(\vct z)|^2\,d\vct z=1,
    \qquad
    \operatorname{Re}A'=G'.
\end{equation}
Thus $\widetilde f_i$ has the same linear-Gaussian form, with
$\Id/2K\preceq G'\preceq 2K\Id$.

By the similar discussion of Eq.~\eqref{eq:core-theta-product}, we have
\begin{equation}
    \left|
    \left(\prod_j a_j\right)
    \sum_{\vct k\in\Z^\ell}
       |\widetilde f_i(\vct z_{\vct k})|^2-1
    \right|
    \le18\ell e^{-B^2/4K}<\frac12.
\end{equation}
Therefore, the weighted density satisfies
\begin{align}
    \left(\prod_j a_j\right)
    \sum_{\vct k\in\Z^\ell}w_i(\vct z_{\vct k})
    &=
    M_i\left(\prod_j a_j\right)
    \sum_{\vct k\in\Z^\ell}
       |\widetilde f_i(\vct z_{\vct k})|^2\\
    &\le M_i\left(1+\frac12\right)\le2\sqrt2\left(1+\frac12\right)
     <6.
\end{align}

Finally, on the set $|z_{\vct k,i}|\ge R$,
\begin{equation}
    |f(\vct z_{\vct k})|^2
       =e^{-z_{\vct k,i}^2/2K}w_i(\vct z_{\vct k})
       \le e^{-R^2/2K}w_i(\vct z_{\vct k}).
\end{equation}
Hence
\begin{align}
    \left(\prod_j a_j\right)
    \sum_{\substack{\vct k\in\Z^\ell\\
                    |z_{\vct k,i}|\ge R}}
       |f(\vct z_{\vct k})|^2 \le&
    e^{-R^2/2K}
    \left(\prod_j a_j\right)
    \sum_{\substack{\vct k\in\Z^\ell\\
                    |z_{\vct k,i}|\ge R}}
       w_i(\vct z_{\vct k})\\
    &\quad\le
    e^{-R^2/2K}
    \left(\prod_j a_j\right)
    \sum_{\vct k\in\Z^\ell}w_i(\vct z_{\vct k})\\
    &\quad\le6e^{-R^2/2K}.
\end{align}
The second inequality holds because $w_i\ge0$, so extending
the summation range can only increase the sum.
This proves Eq.~\eqref{eq:core-sampled-tail-append}.
Restricting any of the other coordinate labels to a finite set
only removes nonnegative summands, so the same tail bound
remains valid.}
\end{proof}

\subsection{Proof of Lemma \ref{lem:core-one-fourier}}
\label{append:core-one-fourier}
\begin{lemma}[Restatement of Lemma \ref{lem:core-one-fourier}]
\label{lem:core-one-fourier-append}
Let $D_i$ and $\mathcal{F}_i$ denote the discrete and continuous one-coordinate quantum Fourier transform acting on the $i$th coordinate. Suppose $f$ and
$\mathcal F_i f$ satisfy Lemma~\ref{lem:core-lattice}.
Changing coordinate $i$ from $(h,1/2)$ to $(\widehat h,0)$ gives
\begin{equation}
    \norm{D_i\mathsf S_{\vct a,\vct\sigma}f
          -\mathsf S_{\vct a',\vct\sigma'}\mathcal F_i f}
       \le12e^{-B^2/256}.
    \label{eq:core-one-fourier-error-append}
\end{equation}
where $\vct a,\vct a',\vct \sigma, \vct \sigma'$ satisfy
\begin{equation}
    (a_i,\sigma_i)=(h,1/2),
    \qquad
    (a_i',\sigma_i')=(\widehat h,0),
    \qquad
    (a_j',\sigma_j')=(a_j,\sigma_j)
    \quad \forall j\ne i.
\end{equation}
The analogous bound
holds for $D_i^\dagger$ and the inverse continuous transform.
\end{lemma}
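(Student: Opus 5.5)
Throughout, write $g=f$ restricted to the $i$th coordinate with all other sampled coordinates fixed. The plan is to reduce the $\ell$-dimensional statement to a one-dimensional aliasing computation in coordinate $i$, with the remaining coordinates treated as parameters. Both vectors $D_i\mathsf S_{\vct a,\vct\sigma}f$ and $\mathsf S_{\vct a',\vct\sigma'}\mathcal F_if$ carry the same prefactor $\prod_{j\ne i}\sqrt{a_j}$ and the same sample points in every coordinate $j\ne i$. The squared norm of their difference is therefore $\prod_{j\ne i}a_j$ times a sum over $\vct k_{-i}\in\mathcal I^{\ell-1}$ of a one-dimensional squared error. For fixed $\vct k_{-i}$, write $\phi(x)=f(\ldots,x,\ldots)$ for the slice in coordinate $i$. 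The $k$th entry of $D_i$ applied to the samples is
\[
\sqrt{h/Q}\sum_{r\in\mathcal I}e^{-\iu\widehat h k\,h(r+1/2)}\phi(h(r+\tfrac12))=\sqrt{\widehat h}\,\frac{h}{\sqrt{2\pi}}\sum_{r\in\mathcal I}e^{-\iu\zeta_kx_r}\phi(x_r),
\]
using $\sqrt{h/Q}=\sqrt{\widehat h}\,h/\sqrt{2\pi}$. The comparison entry is $\sqrt{\widehat h}\,(\mathcal F\phi)(\zeta_k)$.

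\textbf{Split the error.} I will decompose it as (i) truncation of the position sum plus (ii) aliasing of the full sum. Term (i) is the difference between the sum over $r\in\mathcal I$ and the sum over $r\in\Z$. Term (ii) is the difference between $\frac{h}{\sqrt{2\pi}}\sum_{r\in\Z}e^{-\iu\zeta_k x_r}\phi(x_r)$ and $(\mathcal F\phi)(\zeta_k)$. By Poisson summation (Lemma~\ref{lem:poisson-summation}) applied to $x\mapsto\phi(x)e^{-\iu\zeta_kx}$ on the shifted lattice $h(\Z+\frac12)$, the full sum equals
\[
\sum_{m\in\Z}(-1)^m(\mathcal F\phi)\!\left(\zeta_k+\tfrac{2\pi m}{h}\right).
\]
Since $2\pi/h=Q\widehat h$ and $k\in\mathcal I$, the shifted points $\zeta_k+mQ\widehat h$ with $m\ne0$ all satisfy $|\zeta|\ge Q\widehat h/2=\pi B$. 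Thus term (ii) at $k$ is a sum of values of $\mathcal F_if$ at frequency grid points of magnitude at least $\pi B$.

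\textbf{Bound each term.} For (i), the map $\{\phi(x_r)\}_{r\in\Z}\mapsto$ the truncated DFT output has operator norm $1$ on the relevant block. This is because $D$ is unitary, and the truncated-out part simply has no image, so I bound (i) by the $\ell^2$ norm of the samples outside $\mathcal I$, i.e.\ those with $|x_r|\ge B$. After summing over $\vct k_{-i}$, this is controlled by the sampled tail bound \eqref{eq:core-sampled-tail} with $R=B$, with the other indices restricted to finite sets as the note after Lemma~\ref{lem:core-lattice} allows. This gives at most $\sqrt6e^{-B^2/256}$. For (ii), the aliasing sum at fixed $k$ ranges over distinct frequency points $\zeta_k+mQ\widehat h$. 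As $(k,m)$ range over $\mathcal I\times(\Z\setminus\{0\})$, these points are exactly the frequency lattice $\widehat h\Z$ minus $\widehat h\mathcal I$. I would then use Cauchy–Schwarz with weights $e^{\mp\zeta^2/c}$, or more cheaply a weighted $\ell^1$-to-$\ell^2$ bound, to reduce the $\ell^2$ norm of the aliasing vector to $\widehat h\sum_{|\zeta|\ge\pi B}|\mathcal F_if|^2$ times a constant. By the hypothesis that $\mathcal F_if$ satisfies Lemma~\ref{lem:core-lattice}, the tail bound with $R=\pi B\ge B$ makes this at most about $6e^{-B^2/128}$. Adding the square roots of the two contributions should give the stated $12e^{-B^2/256}$. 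The inverse case is identical with the roles of position and frequency grids exchanged.

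\textbf{Main obstacle.} The delicate step is (ii): the aliasing error at each $k$ is a sum, not a single value, so a naive triangle inequality loses the $\ell^2$ structure. I will handle it by Cauchy–Schwarz in $m$ with Gaussian weights. Because $\mathcal F_if$ has the linear-times-Gaussian form with $\operatorname{Re}A\preceq64\Id$ in the frequency variable, I can bound $\sum_m|\mathcal F_if(\zeta_k+mQ\widehat h)|$ by $\bigl(\sum_m e^{-c m^2}\bigr)^{1/2}\bigl(\sum_m e^{cm^2}|\mathcal F_if|^2\bigr)^{1/2}$. I would then absorb the weight $e^{cm^2}$ into the Gaussian exactly as in the $G'$ trick of the tail proof. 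The constant bookkeeping there is where I expect effort, but any constant-factor loss is tolerated by the slack between $e^{-B^2/128}$ and $e^{-B^2/256}$.
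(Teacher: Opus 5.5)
Your splitting is the paper's. Because $e^{-2\pi\iu k(r+1/2)/Q}$ is $Q$-periodic in $r$, your full-lattice sum is exactly $D_i\vct u^{\mathrm{fold}}$, where $\vct u^{\mathrm{fold}}=\sum_{m\in\Z}\vct u^{(m)}$ and $\vct u^{(m)}$ collects the samples of $f$ at $h(r+\tfrac12+mQ)$, $r\in\mathcal I$. Poisson summation then makes term (ii) equal to $\sum_{m\ne0}(-1)^m\vct v^{(m)}$, where $\vct v^{(m)}$ collects the samples of $\mathcal F_if$ at $\widehat h(k+mQ)$, $k\in\mathcal I$.

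The genuine gap is your bound on term (i). The map sending out-of-window samples $(y_r)_{r\in\Z\setminus\mathcal I}$ to $\bigl(Q^{-1/2}\sum_{r\notin\mathcal I}e^{-2\pi\iu k(r+1/2)/Q}y_r\bigr)_{k\in\mathcal I}$ is not a contraction. By periodicity it equals $D$ composed with the folding map $\vct y\mapsto\sum_{m\ne0}\vct y^{(m)}$, where $\vct y^{(m)}=(y_{r+mQ})_{r\in\mathcal I}$. Folding is unbounded on $\ell^2$: $N$ identical blocks have norm $\sqrt N\norm{\vct y^{(1)}}$ but fold to norm $N\norm{\vct y^{(1)}}$. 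So the truncated-out samples do have an image: they alias back onto the window. Term (i) is therefore not controlled by their $\ell^2$ mass, that is, by a single tail estimate at $R=B$.

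The repair is what the paper does. Up to sign, term (i) is $D_i\sum_{m\ne0}\vct u^{(m)}$, so unitarity and the triangle inequality bound it by $\sum_{m\ne0}\norm{\vct u^{(m)}}$. The $m$th block lies in $|x|\ge R_m=(2|m|-1)B$. Applying \eqref{eq:core-sampled-tail} separately for each $m$ gives $\norm{\vct u^{(m)}}\le\sqrt6\,e^{-(2|m|-1)^2B^2/256}$, and these sum to at most $6e^{-B^2/256}$.

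The same per-block device removes what you call the main obstacle in (ii). Bound $\norm{\sum_{m\ne0}(-1)^m\vct v^{(m)}}\le\sum_{m\ne0}\norm{\vct v^{(m)}}$, and apply the tail bound to $\mathcal F_if$ on each block with its own radius $\pi R_m$, since $|\widehat h(k+mQ)|\ge\pi R_m$. The triangle inequality over $m$ loses nothing essential, because the block norms decay at least like $e^{-(2|m|-1)^2B^2/256}$. The $\ell^2$ structure only seems lost if you lump all blocks into one tail at radius $\pi B$.

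Your weighted Cauchy--Schwarz could be made to work. However, it needs a weighted sampled-mass estimate that Lemma~\ref{lem:core-lattice} does not supply; you would have to redo the $G'$ argument with the extra weight. As written, ``at most about'' is not yet a bound. Also, taking the square root already turns $e^{-B^2/128}$ into $e^{-B^2/256}$, so there is no slack in that exponent. The slack actually available on the frequency side is the factor $\pi^2$ from the radius $\pi B$.
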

{  \begin{proof}
Write $x$ for coordinate $i$ and $\vct z'$ for the remaining
coordinates. Define
\begin{equation}
    g(\zeta,\vct z')
       :=(\mathcal F_i f)(\zeta,\vct z')
       =\frac{1}{\sqrt{2\pi}}
         \int_{\R}f(x,\vct z')e^{-\iu\zeta x}\,dx.
\end{equation}
The input and output grids satisfy
\begin{equation}
    (a_i,\sigma_i)=(h,1/2),
    \qquad
    (a_i',\sigma_i')=(\widehat h,0),
    \qquad
    (a_j',\sigma_j')=(a_j,\sigma_j)
    \quad(j\ne i).
\end{equation}
For the remaining coordinate labels
$\vct k'\in\mathcal I^{\ell-1}$, define
\begin{equation}
    \vct z'_{\vct k'}
       =\bigl(a_j(k_j+\sigma_j)\bigr)_{j\ne i},
    \qquad
    \gamma=\prod_{j\ne i}\sqrt{a_j}.
\end{equation}
We regard finite sampling vectors as their coefficient arrays
in the computational basis. For every $m\in\Z$, define
\begin{align}
    u^{(m)}_{r,\vct k'}
       &=\gamma\sqrt h\,
         f\bigl(h(r+1/2+mQ),\vct z'_{\vct k'}\bigr),
         & r\in\mathcal I,\\
    v^{(m)}_{k,\vct k'}
       &=\gamma\sqrt{\widehat h}\,
         g\bigl(\widehat h(k+mQ),\vct z'_{\vct k'}\bigr),
         & k\in\mathcal I.
\end{align}
In particular,
\begin{equation}
    \vct u^{(0)}=\mathsf S_{\vct a,\vct\sigma}f,
    \qquad
    \vct v^{(0)}
       =\mathsf S_{\vct a',\vct\sigma'}\mathcal F_i f.
\end{equation}
Recall that
\begin{equation}
    Qh=2B,
    \qquad
    Q\widehat h=2\pi B,
    \qquad
    h\widehat h=\frac{2\pi}{Q}.
\end{equation}

First, introduce the comparison vector
\begin{equation}
    \vct u^{\mathrm{fold}}
       :=\sum_{m\in\Z}\vct u^{(m)}.
\end{equation}
For $m\ne0$, let $R_m=(2|m|-1)B$. Since
$|r+1/2|\le Q/2$ for $r\in\mathcal I$, the reverse triangle inequality gives
\begin{align}
    \bigl|h(r+1/2+mQ)\bigr|
       &\ge h\bigl(|m|Q-|r+1/2|\bigr) \ge hQ(|m|-1/2)
        =R_m.
\end{align}
Therefore,
\begin{align}
    \norm{\vct u^{(m)}}^2
    &=
    \gamma^2h
    \sum_{\vct k'\in\mathcal I^{\ell-1}}
    \sum_{r\in\mathcal I}
       \left|
       f\bigl(h(r+1/2+mQ),\vct z'_{\vct k'}\bigr)
       \right|^2\\
    &\le
    \gamma^2h
    \sum_{\vct k'\in\mathcal I^{\ell-1}}
    \sum_{\substack{s\in\Z\\|h(s+1/2)|\ge R_m}}
       \left|
       f\bigl(h(s+1/2),\vct z'_{\vct k'}\bigr)
       \right|^2 & \text{($s:=r+mQ$)}\\
    &\le6e^{-R_m^2/128}.
\end{align}
The last inequality follows from
Eq.~\eqref{eq:core-sampled-tail}, since
$\gamma^2h=\prod_j a_j$; the other coordinate labels may remain
restricted to $\mathcal I$.
Taking square roots gives
\begin{equation}
    \norm{\vct u^{(m)}}
       \le\sqrt6\,
          e^{-(2|m|-1)^2B^2/256},
    \qquad m\ne0.
\end{equation}
Put $c=B^2/256\ge1$. For every integer $m\ge1$,
\begin{equation}
    (2m-1)^2-\bigl(1+8(m-1)\bigr)
       =4(m-1)(m-2)\ge0.
\end{equation}
Consequently,
\begin{align}
    \sum_{m=1}^{\infty}e^{-(2m-1)^2c}
       &\le e^{-c}
          \sum_{m=1}^{\infty}e^{-8c(m-1)} =\frac{e^{-c}}{1-e^{-8c}}.
\end{align}
Using the triangle inequality, we obtain
\begin{align}
    \norm{\vct u^{\mathrm{fold}}-\vct u^{(0)}}
       &=
       \left\|\sum_{m\ne0}\vct u^{(m)}\right\|\le\sum_{m\ne0}\norm{\vct u^{(m)}}\le2\sqrt6
          \sum_{m=1}^{\infty}e^{-(2m-1)^2c}\le\frac{2\sqrt6}{1-e^{-8c}}e^{-c}\le6e^{-c}.
    \label{eq:core-folding-bound}
\end{align}
The last inequality uses $2\sqrt6\le5$ and
$e^{-8c}\le1/6$, which hold because $c\ge1$.
These estimates also show that the series defining
$\vct u^{\mathrm{fold}}$ converges absolutely in the finite
coefficient norm.

Second, apply $D_i$ to this folded vector. For
$k\in\mathcal I$ and $\vct k'\in\mathcal I^{\ell-1}$,
\begin{align}
    (D_i\vct u^{\mathrm{fold}})_{k,\vct k'}
    &=
    \frac{\gamma\sqrt h}{\sqrt Q}
    \sum_{r\in\mathcal I}\sum_{m\in\Z}
       f\bigl(h(r+1/2+mQ),\vct z'_{\vct k'}\bigr)
       e^{-2\pi\iu k(r+1/2)/Q}\\
    &=
    \frac{\gamma\sqrt h}{\sqrt Q}
    \sum_{s\in\Z}
       f\bigl(h(s+1/2),\vct z'_{\vct k'}\bigr)
       e^{-2\pi\iu k(s+1/2)/Q}.
\end{align}
For the second equality, every integer $s$ has a unique
representation $s=r+mQ$ with $r\in\mathcal I$, and
\begin{equation}
    e^{-2\pi\iu k(r+1/2)/Q}
       =e^{-2\pi\iu k(s+1/2)/Q}e^{2\pi\iu km}
       =e^{-2\pi\iu k(s+1/2)/Q}.
\end{equation}

For each fixed $\vct z'$, apply Poisson summation to
$x\mapsto f(x,\vct z')e^{-\iu\widehat h kx}$.
Its continuous Fourier transform at $\zeta$ is
$g(\zeta+\widehat h k,\vct z')$. Hence
\begin{equation}
\begin{aligned}
    &\frac{h}{\sqrt{2\pi}}
      \sum_{s\in\Z}
       f\bigl(h(s+1/2),\vct z'\bigr)
       e^{-2\pi\iu k(s+1/2)/Q}\\
    =&
      \sum_{m\in\Z}e^{2\pi\iu m(1/2)}
       g\left(\widehat h k+\frac{2\pi m}{h},\vct z'\right)\\
    =&
      \sum_{m\in\Z}(-1)^m
       g\bigl(\widehat h(k+mQ),\vct z'\bigr).
\end{aligned}
\label{eq:core-poisson-fourier}
\end{equation}
Here $h\widehat h=2\pi/Q$ gives the exponent on the left,
and $2\pi/h=Q\widehat h$ gives the last equality.
The factor $(-1)^m$ comes from the midpoint offset $1/2$.

Since
\begin{equation}
    \frac{\sqrt h}{\sqrt Q}
       =\sqrt{\widehat h}\,\frac{h}{\sqrt{2\pi}},
\end{equation}
Eq.~\eqref{eq:core-poisson-fourier} yields the exact vector identity
\begin{equation}
    D_i\vct u^{\mathrm{fold}}
       =\sum_{m\in\Z}(-1)^m\vct v^{(m)}.
\end{equation}
The term $m=0$ is precisely the desired output vector
$\vct v^{(0)}$.

For $m\ne0$ and $k\in\mathcal I$, we have
\begin{align}
    |\widehat h(k+mQ)|
       &\ge\widehat h\bigl(|m|Q-|k|\bigr)\ge\widehat hQ(|m|-1/2)=(2|m|-1)\pi B
        =\pi R_m.
\end{align}
Applying Lemma \ref{lem:core-lattice} (Eq.~\eqref{eq:core-sampled-tail}) to $g$, with the
output grid and the same finite sets of other coordinate labels,
gives
\begin{align}
    \norm{\vct v^{(m)}}^2
       &=
       \gamma^2\widehat h
       \sum_{\vct k'\in\mathcal I^{\ell-1}}
       \sum_{k\in\mathcal I}
       \left|
       g\bigl(\widehat h(k+mQ),\vct z'_{\vct k'}\bigr)
       \right|^2
       \le6e^{-\pi^2R_m^2/128}
       \le6e^{-R_m^2/128}.
\end{align}
Thus
\begin{equation}
    \norm{\vct v^{(m)}}
       \le\sqrt6\,
          e^{-(2|m|-1)^2B^2/256},
    \qquad m\ne0.
\end{equation}
Since $|(-1)^m|=1$, the triangle inequality and the same
series estimate used in Eq.~\eqref{eq:core-folding-bound} give
\begin{align}
    \norm{D_i\vct u^{\mathrm{fold}}-\vct v^{(0)}}
       &=
       \left\|
       \sum_{m\ne0}(-1)^m\vct v^{(m)}
       \right\| \le\sum_{m\ne0}\norm{\vct v^{(m)}} \le2\sqrt6
          \sum_{m=1}^{\infty}e^{-(2m-1)^2B^2/256} \le6e^{-B^2/256}.
\end{align}

Finally, $D_i$ is unitary. Therefore,
\begin{align}
    &\norm{
       D_i\mathsf S_{\vct a,\vct\sigma}f
       -\mathsf S_{\vct a',\vct\sigma'}\mathcal F_i f}=
       \norm{D_i\vct u^{(0)}-\vct v^{(0)}}\le
       \norm{D_i(\vct u^{(0)}-\vct u^{\mathrm{fold}})}
       +\norm{D_i\vct u^{\mathrm{fold}}-\vct v^{(0)}}\\
    =&\norm{\vct u^{(0)}-\vct u^{\mathrm{fold}}}+\norm{D_i\vct u^{\mathrm{fold}}-\vct v^{(0)}}\le 6e^{-B^2/256}+6e^{-B^2/256}=12e^{-B^2/256}.
\end{align}
This proves Eq.~\eqref{eq:core-one-fourier-error-append}.

For the inverse transformation, recall that
$g=\mathcal F_i f$, so $\mathcal F_i^\dagger g=f$.
Unitarity of $D_i$ gives
\begin{align}
    &\norm{
       D_i^\dagger\mathsf S_{\vct a',\vct\sigma'}g
       -\mathsf S_{\vct a,\vct\sigma}\mathcal F_i^\dagger g} =
       \norm{D_i^\dagger\vct v^{(0)}-\vct u^{(0)}} =
       \norm{\vct v^{(0)}-D_i\vct u^{(0)}} \le12e^{-B^2/256}.
\end{align}
\end{proof}}

\subsection{Proof of Lemma \ref{lem:red34-normalization}}
\label{append:red34-normalization}
\begin{lemma}[Restatement of Lemma \ref{lem:red34-normalization}]
\label{lem:red34-normalization-append}
For any $n$-qubit state $\ket{\psi}=\sum_{j=0}^{d-1}c_j\ket{j}$ with $\vct c=(c_0,c_1,\ldots,c_{d-1})^T\in\C^d$, define
\begin{equation}
       \ket{\Psi_Q(\vct c)} =\frac{1}{\sqrt2}\sum_{j=0}^{d-1}c_j(\ket{\vct\Phi_{j,Q}} -\iu\ket{\vct\Phi_{d+j,Q}}),
  \end{equation}
  where 
  \begin{equation}
      \ket{\vct\Phi_{t,Q}} =\ket{\vct e_Q}_t \bigotimes_{\substack{0\le\nu<\ell\\ \nu\ne t}} \ket{\vct g_Q}_\nu, \qquad 0\le t<\ell.
  \end{equation}
Let $f_{\vct b}$ be the wave function defined in Eq.~\eqref{eq:gaussian-linear-wavefunction}. Then 
\begin{equation}
 \mathsf S f_{V'V\vct c}
      =\rho_Q\ket{\Psi_Q(\vct c)},\qquad
 \rho_Q^2=2h^\ell\pi^{-\ell/2}Z_1Z_0^{\ell-1},
\label{eq:red34-normalization}
\end{equation}
where $\rho_Q>0$  satisfies
\begin{equation}
 \rho_Q^2\ge1-24\ell e^{-B^2/128}\ge\tfrac14.
\label{eq:red34-normalization-lower}
\end{equation}
\end{lemma}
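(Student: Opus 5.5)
The plan has two parts: an exact algebraic identification of the sampled vector, then a lower bound on the resulting constant via Lemma~\ref{lem:core-lattice}.

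\textbf{Exact identity.} First write $\vct b=V'V\vct c\in\C^\ell$, so that $b_j=c_j/\sqrt2$ and $b_{d+j}=-\iu c_j/\sqrt2$ for $0\le j<d$, and $b_t=0$ for $t\ge p$. By Eq.~\eqref{eq:gaussian-linear-wavefunction},
\[
f_{\vct b}(\vct z)=\sqrt2\sum_t b_t z_t G_\ell(\vct z),
\]
and $G_\ell(\vct z)=\prod_\nu g(z_\nu)$. Evaluating at the grid point $h(\vct k+\vct1/2)$, the $t$-th summand factorizes as
\[
\sqrt2\,t_{k_t}\,\pi^{-1/4}e^{-t_{k_t}^2/2}\prod_{\nu\ne t}\pi^{-1/4}e^{-t_{k_\nu}^2/2}.
\]
Comparing with the definitions of $\ket{\vct e_Q}$ and $\ket{\vct g_Q}$, each one-coordinate factor multiplied by $\sqrt h$ equals $\sqrt{h}\pi^{-1/4}\sqrt{Z_1}$ times the $\ket{\vct e_Q}$ amplitude, resp.\ $\sqrt h\pi^{-1/4}\sqrt{Z_0}$ times the $\ket{\vct g_Q}$ amplitude. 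Hence
\[
h^{\ell/2}\sum_{\vct k}\sqrt2\,z_tG_\ell\ket{\vct k}=\rho_Q\ket{\vct\Phi_{t,Q}}
\]
with the same constant $\rho_Q=(2h^\ell\pi^{-\ell/2}Z_1Z_0^{\ell-1})^{1/2}$ for every $t$. Summing over $t$ with the coefficients $b_t$ gives $\mathsf S f_{V'V\vct c}=\rho_Q\ket{\Psi_Q(\vct c)}$, since $\sum_t b_t\ket{\vct\Phi_{t,Q}}=\ket{\Psi_Q(\vct c)}$. The key feature is that $\rho_Q$ does not depend on $t$, which is what makes the identity exact.

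\textbf{Lower bound on $\rho_Q^2$.} Because $\rho_Q$ is independent of $\vct c$, it suffices to evaluate it for one convenient function. Take $\vct c=\vct e_0$. The states $\ket{\vct\Phi_{0,Q}}$ and $\ket{\vct\Phi_{d,Q}}$ are orthonormal by the orthogonality of $\vct e_Q$ and $\vct g_Q$, so $\ket{\Psi_Q(\vct e_0)}$ is a unit vector and
\[
\rho_Q^2=\norm{\mathsf Sf_{\vct b}}^2=h^\ell\sum_{\vct k\in\mathcal I^\ell}|f_{\vct b}(\vct z_{\vct k})|^2 .
\]
The function $f_{\vct b}$ has the form required by Lemma~\ref{lem:core-lattice} with $A=\Id$, all $a_j=h$ and all $\sigma_j=1/2$. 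Eq.~\eqref{eq:dir-norm-estimate} then bounds the full infinite lattice sum by $1-18\ell e^{-B^2/128}$ from below.

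\textbf{Truncation (main obstacle).} The remaining step, which I expect to be the main obstacle, is controlling the mass on $\Z^\ell\setminus\mathcal I^\ell$. A union bound over the $\ell$ coordinates with the tail bound \eqref{eq:core-sampled-tail} at $R=B$ gives $6\ell e^{-B^2/128}$; this is valid because points outside $\mathcal I$ in coordinate $i$ satisfy $|z_i|\ge B$. Note that Eq.~\eqref{eq:core-sampled-tail} alone gives no $\ell$-free bound, so the crude union bound is what is used here. Combining the two estimates yields $\rho_Q^2\ge1-24\ell e^{-B^2/128}$.

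Finally, $B^2\ge2^{20}(n+20)$ and $\ell=2^{n+2}$ make $24\ell e^{-B^2/128}$ far below $3/4$, which gives $\rho_Q^2\ge1/4$. Positivity of $\rho_Q$ holds since $Z_0,Z_1>0$.
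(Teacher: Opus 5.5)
Your proposal is correct and follows essentially the same route as the paper: the same product-structure identification $\mathsf S\phi_t=\rho_Q\ket{\vct\Phi_{t,Q}}$ with a $t$-independent constant, linearity in $\vct b=V'V\vct c$, and then the lower bound $\rho_Q^2\ge(1-18\ell e^{-B^2/128})-6\ell e^{-B^2/128}$ from the two estimates of Lemma~\ref{lem:core-lattice}, with a union bound over coordinates at $R=B$. The only cosmetic difference is that you evaluate $\rho_Q^2=\norm{\mathsf S f_{V'V\vct e_0}}^2$, using orthonormality of $\ket{\vct\Phi_{0,Q}}$ and $\ket{\vct\Phi_{d,Q}}$, whereas the paper uses $\norm{\mathsf S\phi_j}^2$ for a single basis function; both work equally well.
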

\begin{proof}
Define the continuous basis functions
\[
\phi_j(\vct z)
=\sqrt2\,z_jG_\ell(\vct z)
=\sqrt2\,\pi^{-\ell/4}z_j
 e^{-\norm{\vct z}^2/2},
\qquad 0\le j<\ell.
\]
For $\vct k\in\mathcal I^\ell$, the sampling definition gives
\begin{equation}
\langle\vct k|\mathsf S\phi_j\rangle
=h^{\ell/2}\phi_j(\vct t_{\vct k})
=\sqrt2\,h^{\ell/2}\pi^{-\ell/4}
  t_{k_j}
  e^{-\frac12\sum_{\nu=0}^{\ell-1}t_{k_\nu}^2},\qquad \vct t_{\vct k}
=(t_{k_0},\ldots,t_{k_{\ell-1}})^T
\end{equation}
On the other hand, the product structure of
$\ket{\vct\Phi_{j,Q}}$ gives
\begin{equation}
\langle\vct k\mid\vct\Phi_{j,Q}\rangle
=
\frac{t_{k_j}e^{-t_{k_j}^2/2}}{\sqrt{Z_1}}
\prod_{\substack{0\le\nu<\ell\\\nu\ne j}}
\frac{e^{-t_{k_\nu}^2/2}}{\sqrt{Z_0}}\\
=
\frac{
t_{k_j}e^{-\frac12\sum_{\nu=0}^{\ell-1}t_{k_\nu}^2}
}{
\sqrt{Z_1}\,Z_0^{(\ell-1)/2}
}.
\end{equation}
Comparing these amplitudes yields
\[
\mathsf S\phi_j
=\rho_Q\ket{\vct\Phi_{j,Q}},
\qquad
\rho_Q
=\sqrt2\,h^{\ell/2}\pi^{-\ell/4}
 \sqrt{Z_1}\,Z_0^{(\ell-1)/2}>0.
\]
The factor $\rho_Q$ is independent of $j$. In particular,
\[
\rho_Q^2
=2h^\ell\pi^{-\ell/2}Z_1Z_0^{\ell-1}.
\]
By the definitions of $V'$ and $V$, we have $V'V\vct c
=\frac1{\sqrt2}
\begin{pmatrix}
\vct c\\
-\iu\vct c\\
\vct0_p
\end{pmatrix}.$
Thus
$ 
f_{V'V\vct c}
=
\frac1{\sqrt2}\sum_{j=0}^{d-1}
c_j\bigl(\phi_j-\iu\phi_{d+j}\bigr)$.
By linearity of $\mathsf S$,
\begin{equation}
\mathsf S f_{V'V\vct c}
=
\frac1{\sqrt2}\sum_{j=0}^{d-1}
c_j\bigl(\mathsf S\phi_j-\iu\mathsf S\phi_{d+j}\bigr)
=
\frac{\rho_Q}{\sqrt2}\sum_{j=0}^{d-1}
c_j\bigl(
\ket{\vct\Phi_{j,Q}}
-\iu\ket{\vct\Phi_{d+j,Q}}
\bigr)
=\rho_Q\ket{\Psi_Q(\vct c)}.
\end{equation}
This proves Eq.~\eqref{eq:red34-normalization}, including
the independence of $\rho_Q$ from $\vct c$.
Morevoer, we establish the lower bound on $\rho_Q^2$.
The Gaussian integrals
\[
\int_{\mathbb R}e^{-x^2}\,dx=\sqrt\pi,
\qquad
\int_{\mathbb R}x^2e^{-x^2}\,dx=\frac{\sqrt\pi}{2}
\]
give
\begin{equation}
\norm{\phi_j}^2
=
2\pi^{-\ell/2}
\int_{\mathbb R^\ell}
z_j^2e^{-\sum_{\nu=0}^{\ell-1}z_\nu^2}\,d\vct z=2\pi^{-\ell/2}
\frac{\sqrt\pi}{2}
(\sqrt\pi)^{\ell-1}=1.
\end{equation}
Its quadratic matrix is $\Id$, so $\phi_j$ satisfies
Lemma~\ref{lem:core-lattice}.
Extend the definition $t_k=h(k+\tfrac12)$ to all
$k\in\mathbb Z$, and define
\begin{equation}
M_j
=h^\ell
  \sum_{\vct k\in\mathbb Z^\ell}
  |\phi_j(\vct t_{\vct k})|^2,\qquad
T_j
=h^\ell
  \sum_{\vct k\in\mathbb Z^\ell\setminus\mathcal I^\ell}
  |\phi_j(\vct t_{\vct k})|^2.
\end{equation}
Here $M_j$ is the sampled mass on the infinite grid,
and $T_j$ is the mass omitted by the finite grid.
Eq.~\eqref{eq:dir-norm-estimate} gives
\[
|M_j-1|\le18\ell e^{-B^2/128},
\qquad
M_j\ge1-18\ell e^{-B^2/128}.
\]

If $\vct k\notin\mathcal I^\ell$, then $k_i\notin\mathcal I$
for at least one coordinate $i$. Such an index satisfies $k_i\ge Q/2
$ or $ k_i\le-Q/2-1.$ 
Since $Qh=2B$, it follows that $|t_{k_i}|
=\left|h(k_i+\tfrac12)\right|
\ge B+\frac h2
\ge B$.
Every omitted grid point therefore belongs to at least one
coordinate tail. All summands are nonnegative, so
\begin{equation}
T_j\le
\sum_{i=0}^{\ell-1}
h^\ell \sum_{\substack{\vct k\in\mathbb Z^\ell
                |t_{k_i}|\ge B}}
|\phi_j(\vct t_{\vct k})|^2 \le\sum_{i=0}^{\ell-1}6e^{-B^2/128}=6\ell e^{-B^2/128}.
\end{equation}
The second inequality is
Eq.~\eqref{eq:core-sampled-tail} with $R=B$.
Since $\mathsf S\phi_j=\rho_Q\ket{\vct\Phi_{j,Q}}$
and $\ket{\vct\Phi_{j,Q}}$ is normalized, we obtain
\begin{equation}
\rho_Q^2
=\norm{\mathsf S\phi_j}^2
=M_j-T_j
\ge
\left(1-18\ell e^{-B^2/128}\right)
-6\ell e^{-B^2/128}
=1-24\ell e^{-B^2/128}.
\end{equation}

Finally, the parameter choices imply $\frac{B^2}{128} \ge8192(n+20)\ge n+7$.
Using $\ell=4\cdot2^n$ and $e>2$, we have $24\ell e^{-B^2/128}
\le96\cdot2^n e^{-(n+7)}
\le96\cdot2^n 2^{-(n+7)}=\frac34.$
Therefore,
\[
\rho_Q^2
\ge1-24\ell e^{-B^2/128}
\ge\frac14,
\]
which proves Eq.~\eqref{eq:red34-normalization-lower}.
\end{proof}

\subsection{Proof of Lemma \ref{lem:red34-gaussian-closure}}
\label{append:red34-gaussian-closure}
\begin{lemma}[Restatement of Lemma \ref{lem:red34-gaussian-closure}]
\label{lem:red34-gaussian-closure-append}
Let $\vct c\in\mathbb C^d$ satisfy
$\norm{\vct c}=1$, and let $f_{\vct b}(\vct z)=\sqrt2\,(\vct b^T\vct z)G_\ell(\vct z)$ be the input wavefunction of $F_U$ for $\vct b=V'V\vct c$.
For each $M\in \mathcal{S}:=\{ \Id,\, J_0,\, T_{R_U}J_0,\,
S_{-R_U^T}T_{R_U}J_0,\, F_U \}$,
define
\[
f_M(\vct z) =f_{\vct b}(M^{-1}\vct z).
\]
Define a partial Fourier transform
\[
\mathcal F_S=\prod_{i\in S}\mathcal F_i,
\qquad
\mathcal F_{\varnothing}=\Id,\qquad S\subseteq\{0,\ldots,p-1\} \quad \text{or}\quad S\subseteq\{p,\ldots,\ell-1\}. 
\]
Then there exist $\alpha\in\mathbb C\setminus\{0\}$,
$\vct\beta\in\mathbb C^\ell\setminus\{0\}$, and
$A\in\mathbb C^{\ell\times\ell}$ such that
\[
(\mathcal F_Sf_M)(\vct w)
=
\alpha(\vct\beta^T\vct w)
e^{-\vct w^TA\vct w/2},
\]
where $\vct w=(\vct\zeta_S,\vct z_{T})$ and
$T=\{0,\ldots,\ell-1\}\setminus S$.
Moreover,
\[
\norm{\mathcal F_Sf_M}=1,
\qquad
A^T=A,
\qquad
\Id/64\preceq\operatorname{Re}A\preceq64\Id.
\]
\end{lemma}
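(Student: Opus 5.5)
The plan is to handle the position-space functions $f_M$ first and the partial Fourier transform second. The central claim is that every quadratic form that appears has eigenvalues in $[\varphi^{-2},\varphi^{2}]$, where $\varphi=(1+\sqrt5)/2$. Since $\varphi^2<64$, this is comfortably inside the required window $[\Id/64,64\Id]$.

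\textbf{Step 1: position space.} Every $M\in\mathcal S$ is a real matrix with $|\det M|=1$. Indeed, $J_0$ and $F_U$ are orthogonal, and the shears $T_{R_U}$, $S_{-R_U^T}$ are block unitriangular. Hence $f\mapsto f\circ M^{-1}$ is unitary on $L^2(\R^\ell)$, which gives $\norm{f_M}=1$. Writing out the definition,
\[
f_M(\vct z)=\sqrt2\,\pi^{-\ell/4}\bigl((M^{-T}\vct b)^T\vct z\bigr)\,e^{-\vct z^T(MM^T)^{-1}\vct z/2}.
\]
So we take $\alpha=\sqrt2\pi^{-\ell/4}$, $\vct\beta=M^{-T}\vct b\neq0$, and $A_0=(MM^T)^{-1}$, which is real symmetric positive definite. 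The eigenvalues of $A_0$ are the inverse squared singular values of $M$.

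\textbf{Step 2: singular values of $M$.} For $M\in\{\Id,J_0,F_U\}$ all singular values are $1$. For $T_{R_U}J_0$, the singular values equal those of $T_{R_U}=\begin{pmatrix}\Id&R_U\\0&\Id\end{pmatrix}$. Conjugating by $\diag(\Id,R_U^T)$ on the right and by $\Id$ on the left turns this into $\begin{pmatrix}\Id&\Id\\0&\Id\end{pmatrix}$, which is $\begin{pmatrix}1&1\\0&1\end{pmatrix}\otimes\Id_p$ up to permutation; its singular values are $\varphi^{\pm1}$. For $S_{-R_U^T}T_{R_U}J_0$, the factorization~\eqref{eq:shear-factorization} gives $S_{-R_U^T}T_{R_U}J_0=T_{R_U}^{-1}F_U$, with the same singular values $\varphi^{\pm1}$. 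Therefore $\varphi^{-2}\Id\preceq A_0\preceq\varphi^2\Id$ in every case.

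\textbf{Step 3: partial Fourier transform.} Write $A_0=\begin{pmatrix}P&C\\C^T&D\end{pmatrix}$ in the $(S,T)$ block split, and set $\vct x=\vct z_S$, $\vct y=\vct z_T$. The transform $\mathcal F_S$ integrates
\[
(\vct\beta_S^T\vct x+\vct\beta_T^T\vct y)\,e^{-\vct x^TP\vct x/2-\vct x^TC\vct y-\vct y^TD\vct y/2-\iu\vct\zeta^T\vct x}
\]
over $\vct x$. Completing the square with the complex shift $\vct x\mapsto\vct x-P^{-1}(C\vct y+\iu\vct\zeta)$ (justified for real positive definite $P$ by the standard Gaussian integral with complex linear term, e.g.\ by analytic continuation) gives the following. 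The exponent becomes $-\vct w^TA\vct w/2$ with
\[
A=\begin{pmatrix}P^{-1}&-\iu P^{-1}C\\-\iu C^TP^{-1}&D-C^TP^{-1}C\end{pmatrix}.
\]
This $A$ is visibly symmetric. The linear prefactor becomes $\vct\beta_T^T\vct y-\vct\beta_S^TP^{-1}(C\vct y+\iu\vct\zeta)$, which is again homogeneous linear in $\vct w=(\vct\zeta,\vct y)$. It has no constant term because the odd moment $\int\vct x\,e^{-\vct x^TP\vct x/2}\,d\vct x$ vanishes. The real part of $A$ is block diagonal: $\operatorname{Re}A=\diag\bigl(P^{-1},\,D-C^TP^{-1}C\bigr)$.

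\textbf{Step 4: eigenvalue bounds for $\operatorname{Re}A$.} The block $P$ is a principal submatrix of $A_0$, so by interlacing its spectrum lies in $[\varphi^{-2},\varphi^2]$, and hence so does the spectrum of $P^{-1}$. The Schur complement $D-C^TP^{-1}C$ is the inverse of the $T$-principal block of $A_0^{-1}$, whose spectrum also lies in $[\varphi^{-2},\varphi^2]$. So the bounds carry over.

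\textbf{Step 5: normalization and nondegeneracy.} We have $\norm{\mathcal F_Sf_M}=1$ by Plancherel, since each $\mathcal F_i$ is unitary. This also forces the new linear coefficient vector $\vct\beta$ and the scalar $\alpha$ to be nonzero. The hypothesis that $S$ lies within a single block is not needed for this argument.

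I expect the main obstacle to be Step 2, namely identifying cleanly the singular values of the nonorthogonal intermediate maps. The block-conjugation argument with the orthogonal $R_U$ reduces it to a $2\times2$ computation. The complex completing-the-square in Step 3 is routine but requires care with the signs of the $\iu$ terms; only their position matters, since they appear solely in the off-diagonal blocks of $A$ and so do not affect $\operatorname{Re}A$.
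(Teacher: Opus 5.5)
Your proof is correct and follows essentially the same route as the paper. Both arguments use the change of variables giving $A_0=(MM^T)^{-1}=M^{-T}M^{-1}$, then the Gaussian partial Fourier transform, which yields the same complex-symmetric matrix with $\operatorname{Re}A=\diag(A_{SS}^{-1},E)$, and then the bounds via principal blocks and $E^{-1}=(A_0^{-1})_{TT}$. The only differences are that you compute the exact singular values $\varphi^{\pm1}$ where the paper settles for $\norm{M^{\pm1}}_{\op}\le 8$, and that you complete the square with a complex shift where the paper uses a real shift followed by differentiation in $\vct\zeta_S$.

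One small slip in Step 2: to reach $\begin{pmatrix}\Id&\Id\\0&\Id\end{pmatrix}$ you must also multiply by $\diag(\Id,R_U)$ on the left. With $\Id$ on the left you get $\begin{pmatrix}\Id&\Id\\0&R_U^T\end{pmatrix}$ instead. Since orthogonal factors do not change singular values, your conclusion still stands.
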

\begin{proof}
First, we show that $f_M$ satisfies the assumption of Lemma \ref{lem:core-lattice}. 
For the orthogonal matrix $R_U\in\R^{p\times p}$,
\begin{equation}
T_{R_U
}=
\begin{pmatrix}
\Id&R_U\\
0&\Id
\end{pmatrix},
\qquad
T_{R_U}^{-1}=T_{-R_U},\qquad S_{-R^T_U
}=
\begin{pmatrix}
\Id& 0\\
-R_U^T&\Id
\end{pmatrix},
\qquad S_{-R_U^T}^{-1}=S_{R_U^T}.
\end{equation}
Hence
\[
\norm{T_{R_U}^{\pm1}}_{\op}
\le1+\norm{R_U}_{\op}=2,\qquad \norm{S_{-R_U^T}^{\pm1}}_{\op}
\le1+\norm{-R_U^T}_{\op}=2.
\]
Both $T_{R_U}^{\pm}$ and $S_{-R_U^T}^{\pm}$ have determinant one.
Moreover, $J_0$ is orthogonal, and all these coordinate
operations have determinant one. Therefore, every matrix $M$ satisfies
\begin{equation}
   \norm M_{\op}\le8,
\qquad
\norm{M^{-1}}_{\op}\le8,
\qquad
\det M=1 
\end{equation}

The coefficient embedding $V'V$ is an isometry, so
$\norm{\vct b}=1$ and $\norm{f_{\vct b}}=1$.
Then
\[
f_M(\vct z)
=
\alpha(\vct\beta^T\vct z)e^{-\vct z^TA\vct z/2},
\]
where
\[
\alpha=\sqrt2\,\pi^{-\ell/4},
\qquad
\vct\beta=M^{-T}\vct b,
\qquad
A=M^{-T}M^{-1}.
\]
The vector $\vct\beta$ is nonzero because $M$ is invertible.
Using the change of variables $\vct z=M\vct u$, we obtain
\begin{equation}
\norm{f_M}^2=\int_{\mathbb R^\ell}
  |f_{\vct b}(M^{-1}\vct z)|^2\,d\vct z=|\det M|
  \int_{\mathbb R^\ell}|f_{\vct b}(\vct u)|^2\,d\vct u=1.
\end{equation}
Furthermore, for every $\vct z\in\mathbb R^\ell$,
\begin{align}
\vct z^TA\vct z
&=\norm{M^{-1}\vct z}^2\le\norm{M^{-1}}_{\op}^2\norm{\vct z}^2
\le64\norm{\vct z}^2,\\
\vct z^TA\vct z
&=\norm{M^{-1}\vct z}^2\ge\frac{\norm{\vct z}^2}{\norm M_{\op}^2}
\ge\frac{\norm{\vct z}^2}{64}
\end{align}
Thus $A$ is real symmetric and
\[
\Id/64\preceq A\preceq64\Id.
\]

Second, we show that the wavefunction obtained by applying the partial Fourier transform to \(f_M\) also satisfies the assumption of Lemma \ref{lem:core-lattice}.  Let $S$ be a nonempty subset of one coordinate block,
and let $T$ be its complement.
The case $S=\varnothing$ requires no transformation.
Place the coordinates in $S$ first, and write
\[
\vct z=
\begin{pmatrix}\vct z_S\\ \vct z_T\end{pmatrix},
\qquad
A=
\begin{pmatrix}
A_{SS}&A_{ST}\\
A_{TS}&A_{TT}
\end{pmatrix},
\qquad
\vct\beta=
\begin{pmatrix}\vct\beta_S\\ \vct\beta_T\end{pmatrix}.
\]
Then we can obtain

\begin{equation}
\vct z^TA\vct z
=
\bigl(\vct z_S+A_{SS}^{-1}A_{ST}\vct z_T\bigr)^T
A_{SS}
\bigl(\vct z_S+A_{SS}^{-1}A_{ST}\vct z_T\bigr)+\vct z_T^TE\vct z_T.
\end{equation}
where $E=A_{TT}-A_{TS}A_{SS}^{-1}A_{ST}$.

Define the Gaussian integral
\[
I_S(\vct\zeta_S,\vct z_T)
=
\frac{1}{(2\pi)^{|S|/2}}
\int_{\mathbb R^{|S|}}
e^{-\vct z^TA\vct z/2-\iu\vct\zeta_S^T\vct z_S}
\,d\vct z_S.
\]
Substitute
$\vct u=\vct z_S+A_{SS}^{-1}A_{ST}\vct z_T$.
It yields
\begin{equation}
    I_S(\vct\zeta_S,\vct z_T)
=e^{-\vct z_T^TE\vct z_T/2
   +\iu\vct\zeta_S^TA_{SS}^{-1}A_{ST}\vct z_T}\times
\frac{1}{(2\pi)^{|S|/2}}
\int_{\mathbb R^{|S|}}
e^{-\vct u^TA_{SS}\vct u/2-\iu\vct\zeta_S^T\vct u}
\,d\vct u.
\end{equation}

The Fourier integral of a real Gaussian gives
\begin{equation}
    \frac{1}{(2\pi)^{|S|/2}}
\int_{\mathbb R^{|S|}}
e^{-\vct u^TA_{SS}\vct u/2-\iu\vct\zeta_S^T\vct u}
\,d\vct u
=
\frac{
e^{-\vct\zeta_S^TA_{SS}^{-1}\vct\zeta_S/2}
}{
\sqrt{\det A_{SS}}
}.
\end{equation}
Therefore,
\begin{equation}
I_S(\vct\zeta_S,\vct z_T)
=
\frac{1}{\sqrt{\det A_{SS}}}
\exp\Bigl(
-\tfrac12\vct\zeta_S^TA_{SS}^{-1}\vct\zeta_S
+\iu\vct\zeta_S^TA_{SS}^{-1}A_{ST}\vct z_T
-\tfrac12\vct z_T^TE\vct z_T
\Bigr).
\end{equation}

To include the linear prefactor, differentiate with respect
to $\vct\zeta_S$. Gaussian decay justifies differentiation
under the integral sign, and
\[
\begin{aligned}
\frac{1}{(2\pi)^{|S|/2}}
\int_{\mathbb R^{|S|}}
\vct z_S
e^{-\vct z^TA\vct z/2-\iu\vct\zeta_S^T\vct z_S}
\,d\vct z_S
&=\iu\nabla_{\vct\zeta_S}I_S=
\bigl(
-\iu A_{SS}^{-1}\vct\zeta_S
-A_{SS}^{-1}A_{ST}\vct z_T
\bigr)I_S.
\end{aligned}
\]
It follows that
\begin{equation}
(\mathcal F_Sf_M)(\vct\zeta_S,\vct z_T)
=\alpha\Bigl[-\iu\vct\beta_S^TA_{SS}^{-1}\vct\zeta_S+
\bigl(
\vct\beta_T^T
-\vct\beta_S^TA_{SS}^{-1}A_{ST}
\bigr)\vct z_T
\Bigr]I_S.
\end{equation}

Writing $\vct w=(\vct\zeta_S,\vct z_T)$, we obtain
\[
(\mathcal F_Sf_M)(\vct w)
=
\alpha'((\vct\beta')^T\vct w)
e^{-\vct w^TA'\vct w/2},
\]
where
\[
\alpha'=\frac{\alpha}{\sqrt{\det A_{SS}}},
\qquad
\vct\beta'_S=-\iu A_{SS}^{-1}\vct\beta_S,
\qquad
\vct\beta'_T
=\vct\beta_T-A_{TS}A_{SS}^{-1}\vct\beta_S,
\]
and
\begin{equation}
A'=
\begin{pmatrix}
A_{SS}^{-1}
&
-\iu A_{SS}^{-1}A_{ST}
\\
-\iu A_{TS}A_{SS}^{-1}
&
A_{TT}-A_{TS}A_{SS}^{-1}A_{ST}
\end{pmatrix}.
\label{eq:red34-partial-fourier}
\end{equation}
Since $A_{SS}$ is invertible, $\vct\beta'=0$ would imply
$\vct\beta_S=0$ and then $\vct\beta_T=0$.
Thus $\vct\beta'\ne0$.
The matrix $A'$ is complex symmetric, and Fourier unitarity gives
\[
\norm{\mathcal F_Sf_M}
=\norm{f_M}=1.
\]

Third, we bound the real part of $A'$.
Since $A$ is real,
\[
\operatorname{Re}A'
=
\begin{pmatrix}
A_{SS}^{-1}&0\\
0&E
\end{pmatrix}.
\]
Restricting the quadratic-form bounds for $A$ to the
coordinates in $S$ gives
\begin{equation}
    \Id/64\preceq A_{SS}\preceq64\Id,\qquad \Id/64\preceq A_{SS}^{-1}\preceq64\Id.
\end{equation}
For the other block, the block inverse formula gives $E^{-1}=(A^{-1})_{TT}$.
The bounds on $A$ imply
\[
\Id/64\preceq A^{-1}\preceq64\Id.
\]
Restricting these bounds to its $TT$ principal block, and
then taking inverses, gives
\[
\Id/64\preceq E^{-1}\preceq64\Id,
\qquad
\Id/64\preceq E\preceq64\Id.
\]
Consequently,
\[
\Id/64\preceq\operatorname{Re}A'\preceq64\Id.
\]
Thus every partial Fourier transform of $f_M$ satisfies
Lemma~\ref{lem:core-lattice}.

\end{proof}

\bibliographystyle{alphaurl}
\bibliography{ref}
\end{document}